\newtheorem{theorem}{Theorem}[section]
\newtheorem{Lemma}[theorem]{Lemma}
\newtheorem{Def}[theorem]{Definition}
\newtheorem{Prop}[theorem]{Proposition}
\newtheorem{Cor}[theorem]{Corollary}
\newtheorem{Remark}[theorem]{Remark}
\numberwithin{equation}{section}
\DeclareMathOperator{\sign}{sign}
\DeclarePairedDelimiter{\ceil}{\lceil}{\rceil}
\DeclarePairedDelimiter\floor{\lfloor}{\rfloor}
\newcommand{\overDot}[1]{
	\scalebox{#1}{
    	\tikz{\draw[fill] (1, 0) circle [radius=0.025];}
    }
}
\DeclareMathOperator*{\sumDot}{%
\mathchoice%
  {\ooalign{\phantom{$\displaystyle\sum$}\cr\hidewidth\raisebox{6.5\height}{$\mkern-6mu\overDot{1}$}\hidewidth\cr%
                                  \hidewidth$\displaystyle\sum$}}
  {\ooalign{$\textstyle\sum$\cr%
                                \hidewidth\raisebox{6\height}{$\mkern-6mu\overDot{0.8}$}\hidewidth\cr}}
  {\ooalign{\raisebox{0\height}{\scalebox{.6}{$\scriptstyle\sum$}}\cr%
                                \hidewidth\raisebox{1.6\height}{$\mkern7.5mu\overDot{0.2}$}\hidewidth\cr
                                \hidewidth\raisebox{-0.2\height}{$\mkern7.5mu\overDot{0.2}$}\hidewidth\cr}}
  {\ooalign{\raisebox{.2\height}{\scalebox{.6}{$\scriptstyle\sum$}}\cr%
                                \hidewidth\raisebox{2.2\height}{$\mkern7.5mu\overDot{0.2}$}\hidewidth\cr
                                \hidewidth\raisebox{0.4\height}{$\mkern7.5mu\overDot{0.2}$}\hidewidth\cr}}
}
\DeclareMathAlphabet{\mathpzc}{OT1}{pzc}{m}{it}
\newcommand\smallO{
  \mathchoice
    {{\scriptstyle\mathcal{O}}}
    {{\scriptstyle\mathcal{O}}}
    {{\scriptscriptstyle\mathcal{O}}}
    {\scalebox{.7}{$\scriptscriptstyle\mathcal{O}$}}
  }
\newcommand{\mysquare}[1]{\tikz{\node[draw=#1,fill=#1,rectangle,minimum
width=0.15cm,minimum height=0.15cm,inner sep=0pt] at (0,0) {};}}
\newcommand{\mytriangle}[1]{\tikz{\node[draw=#1,fill=#1,isosceles
triangle,isosceles triangle stretches,shape border rotate=90,minimum
width=0.15cm,minimum height=0.15cm,inner sep=0pt] at (0,0) {};}}
\newcommand\marksymbol[2]{\tikz[#2,scale=1]\pgfuseplotmark{#1};}
\DeclareSymbolFont{Letters} {U}{zeur}{m}{n}
\DeclareMathSymbol{\polygamma}    {\mathalpha}{Letters}{"1D}
\title{Stationary Stochastic Higher Spin Six Vertex Model and $q$-Whittaker measure}
\author{Takashi Imamura \thanks{Department of mathematics and informatics, Chiba University, E-mail: imamura@math.s.chiba-u.ac.jp
} ,
Matteo Mucciconi\thanks{Department of physics, Tokyo Institute of Technology, E-mail: matteomucciconi@gmail.com} , Tomohiro Sasamoto \thanks{
Department of physics, Tokyo Institute of Technology, E-mail: sasamoto@phys.titech.ac.jp}}
\date{}
\begin{document}

\maketitle

\begin{abstract}
In this paper we consider the Higher Spin Six Vertex Model on the lattice $\mathbb{Z}_{\geq 2} \times \mathbb{Z}_{\geq 1}$. We first identify a family of translation invariant measures and subsequently we study the one point distribution of the height function for the model with certain random boundary conditions. Exact formulas we obtain prove to be useful in order to establish the asymptotic of the height distribution in the long space-time limit for the stationary Higher Spin Six Vertex Model. In particular, along the characteristic line we recover Baik-Rains fluctuations with size of characteristic exponent $1/3$. We also consider some of the main degenerations of the Higher Spin Six Vertex Model and we adapt our analysis to the relevant cases of the $q$-Hahn particle process and of the Exponential Jump Model.
\end{abstract}

\tableofcontents
\section{Introduction}
\subsection{Background}

During the last two decades, the study of one dimensional integrable systems, related to random growth of interfaces or to particle transport, 
has produced a number of fundamental results. This wave of interest was surely fostered by breakthroughs like that of Johansson \cite{Johansson2000}, which around year 2000 successfully provided an exact description of the current of particles in the Totally Asymmetric Simple Exclusion Process (TASEP).  Methods used by Johansson, which were drawing inspiration from combinatorics and random matrix
theory, soon proved to be parallel to a more algebraic framework in terms of free fermions\cite{Okounkov2001InfiniteWedge}, leading to the definition of the Schur processes in \cite{Okounkov2003SchurProcesses}.
These last are probability measures weighting sequences of partitions of integers expressed in terms of Schur functions, a class of special symmetric functions more commonly used in representation theory. The richness of techniques and possibilities given by the intersection of so many apparently distinct fields gathered immediately the attention of the community of mathematicians and physicists and gave rise to a new field of its own that today bears the name of Integrable Probability \cite{BorodinGorin2016IntegrableProbability}. Over the years, purely determinantal processes like the TASEP or the Schur processes have seen a number of generalizations and methods introduced in 
\cite{Johansson2000,Okounkov2003SchurProcesses} have been extended and applied to these more general models, which are not
necessarily free fermionic.

Conceptually relevant deformations of the TASEP are exclusion processes like the $q$-TASEP \cite{BorodinCorwin2014Mac}, the $q$-Hahn TASEP \cite{PovolotskyQHahn} or also the long standing Asymmetric Simple Exclusion Process (ASEP) \cite{Spitzer1970}. The first of these models was introduced first by Borodin and Corwin as a marginal projection of the Macdonald processes \cite{BorodinCorwin2014Mac}, which as of today, happen also to be the among the richest generalization of the Schur processes. On the other hand, models like the $q$-Hahn TASEP or the ASEP had not been immediately identified as particular cases of general integrable models, like the Macdonald processes, and the study of their properties was carried out by different authors employing different techniques \cite{TracyWidom2009ASEP}, \cite{bcs2014}, \cite{Corwin2015qHahn}, \cite{BorCor2013qTaseps}. 

A unifying picture was offered by Corwin and Petrov in \cite{CorwinPetrov2016HSVM} using the language of vertex models. Here, authors, taking advantage of recent developments on algebraic theories concerning the Yang-Baxter equation \cite{Mangazeev2014}, introduced the Higher Spin Six Vertex Model, which they used to construct a random dynamics of particles on the lattice where the update rules of position of particles at each time were given in terms of what are usually called stochastic $\mathcal{R}$-matrices. These are operators, which we denote with the symbol $\mathsf{L}$, solving the celebrated Yang-Baxter equation \cite{JimboYBE} 
\begin{equation} \label{YBE}
    \mathsf{L}^{(1,2)} \mathsf{L}^{(1,3)} \mathsf{L}^{(2,3)} = \mathsf{L}^{(1,3)} \mathsf{L}^{(2,3)}
    \mathsf{L}^{(1,2)},
\end{equation}
which also satisfy the property of having positive entries and sum-to-one condition for rows. In \eqref{YBE}, the stochastic $\mathcal{R}$-matrices depend on a number of parameters, by specializing which one can degenerate the Higher Spin Six Vertex Model to previously mentioned models including the $q$-TASEP, the $q$-Hahn TASEP or the ASEP. 
In \cite{BorodinPetrov2016HS6VM} a description of the Higher Spin Six Vertex Model complementary to that of \cite{CorwinPetrov2016HSVM} was offered using a language closer to that of the Schur processes. Here, the model was studied through a family of symmetric rational functions whose properties descended from the commutation relation \eqref{YBE} and that are in fact multi-parameter generalizations of the Schur functions.

On top of offering a unified theory embracing the majority of methods used in previously studied models, spanning from asymmetric exclusion processes to random partitions, the Higher Spin Six Vertex Model also admits as particular cases new integrable systems, including inhomogeneous traffic models as the Exponential Jump Model \cite{BorodinPetrov2017ExpoJump} or the Hall-Littlewood Push-TASEP \cite{Ghosal2017}.

The idea of utilizing the Yang-Baxter integrability to produce and solve stochastic particle systems can be traced back to the early work \cite{GwaSpohn1992S6VM} by Gwa and Spohn. There authors interpreted a particular degeneration of the Six Vertex Model, of which the Higher Spin Six Vertex Model is a generalization, as a cellular automata in the same fashion as in \cite{CorwinPetrov2016HSVM} and they were able to compute the roughening exponent of a random interface associated with the current of particles. More recently, following the example of \cite{CorwinPetrov2016HSVM}, the formalism of vertex models has shown other promising applications producing dynamical version of the Higher Spin Six Vertex Model \cite{Aggarwal2018DynamicalHSVM} or multi-species integrable particle processes in the very recent work \cite{BorodinWheeler2018Coloured}.

\subsection{KPZ universality, integrability and initial conditions} \label{section: Kpz universality, integrability and initial conditions}

What drives the field of Integrable Probability is the broader context of the KPZ universality class \cite{Corwin2011KPZ}. This is often vaguely defined as a class of random processes describing the stochastic evolution of interfaces that, in the long time limit, possess a characteristic 3:2:1 scaling. That is to say that the profile of the random surfaces in question grows linearly in the time of the system $t$, while the range of spatial correlations and the size of fluctuations around its expected shape scale asymptotically as $t^{2/3}$ and $t^{1/3}$.

The principal example of a model exhibiting such properties is the KPZ equation itself, that is the simplest stochastic differential equation describing the random profile of an interface where both relaxation and lateral growth are allowed. It was first introduced in \cite{KPZ1986}, where authors after deducing its characteristic scaling, conjectured that its "nontrivial relaxation patterns" must be shared by a large class of growth processes. During the last 30 years extensive work has been done in order to understand the properties and the boundaries of this universality class, including also experimental confirmations \cite{TakeuchiSano2010Universal,TakeuchiSano2012Evidence} of the predictions of \cite{KPZ1986}.

The first result of this sort is the one obtained in \cite{Johansson2000}, which predicts that, under narrow wedge initial conditions (step initial conditions for the TASEP), the limiting fluctuations of the height function obey the GUE Tracy-Widom distribution. Confirmations of the fact that this limiting law is in fact universal for the class are given in a number of other papers (see \cite{FerrariVeto2015TWLimit}, \cite{OrrPetrov2016}, \cite{BorCor2013qTaseps}, \cite{Corwin2015qHahn} and references therein) 
and in particular this result is established for the solution of the KPZ equation in \cite{SaSp2010}, \cite{ACQ2010}.

For flat initial conditions the limiting fluctuations of the height profile are believed to be ruled by the GOE Tracy-Widom distribution and this again is based on results on the polynuclear growth model \cite{BaikRains2001Symmetrized,PrSp2000Universal} and on the TASEP \cite{Sa05}. The validation of this conjecture for other models and in particular for the KPZ equation has proven to be rather troublesome, although some steps forward were made recently in \cite{OQR2016} for the particular case of the ASEP with alternating initial data.

From the point of view of non-equilibrium statistical physics or stochastic interacting particle systems, arguably the most relevant class of initial conditions one can consider is represented by the stationary ones and this is indeed the case we pursue in this paper. For continuous models these can be regarded as Brownian motions and the height function possesses in the long time limit two different regimes. Around the characteristic line of the Burgers equation associated with the dynamics, that one can understand as the direction of the growth, the one point distribution of the height is believed to be governed by the Baik-Rains distribution $F_0$ (see Definition \ref{Baik-Rains definition}). Alternatively, when we move away from the characteristic line the size of fluctuations coming from the stationary initial data overwhelms any possible nontrivial behavior produced by the random dynamics and the height performs a gaussian process. The special law $F_0$ was introduced first in \cite{BaikRains2000Png}, where authors identified it as limiting distribution of the height of a polynuclear growth model with critical boundary conditions. An analogous result was obtained for the TASEP in \cite{FerrariSpohn2006sTASEP}. More recently, using certain limiting properties of the $q$-TASEP, Baik-Rains fluctuations were established for the solution of the KPZ equation first in \cite{BoCoFeVe2015} and then in \cite{ImaSasStationaryoConnellYor}. 

Our main result is the confirmation of the KPZ scaling theory for the stationary Higher Spin Six Vertex Model (that we define in the Section \ref{subs: the Model}) and hence for the hierarchy of models it generalizes. Prior to our work, in \cite{Aggarwal2016FluctuationsASEP}, the stationary Six Vertex Model was studied by Aggarwal, who was also able to solve the long standing problem of characterizing the asymptotic fluctuations of the current in the stationary ASEP. 

Very much related to this paper is the work by two of the authors on the stationary continuous time $q$-TASEP \cite{q-TASEPtheta}, which can in fact be considered as a part one of a two parts effort. 

\subsection{The Model} \label{subs: the Model}
In this subsection we give a definition of the Higher Spin Six Vertex Model \cite{CorwinPetrov2016HSVM}. This represents a generalization of the Six Vertex Model, a classical integrable system in statistical physics first introduced by Pauling in 1935 \cite{PaulingIceTypeModel}. Although the Six Vertex Model served originally to study geometric configurations of water molecules in an ice layer, it is today often presented in literature along with other exactly solvable models describing different physical phenomena, such as one dimensional quantum spin chains (see \cite{baxter2007exactly}). The reason behind this association is that such models all share the same integrability structure, that pivots around the notion of Yang-Baxter equation. In particular such algebraic structures allow for general constructions, leading for example to higher spin generalizations of the Heisenberg XXZ model \cite{KirillovReshetkin1987XXZ1}. We refer to \cite{quantumGroupsIn2DPhysics} for an extended review of these results.

We will consider the Stochastic Higher Spin Six Vertex Model, presented here as an ensemble of directed paths in a quadrant of the two dimensional lattice $\mathbb{Z} \times \mathbb{Z}$. Define the lattice $\Lambda_{1,0}$ and its boundary $\partial \Lambda_{1,0}$ as the sets
\begin{equation*}
    \Lambda_{1,0} = \left( \mathbb{Z}_{\geq 1} \times \mathbb{Z}_{\geq 0} \right) \setminus (1,0) \qquad \text{and} \qquad \partial \Lambda_{1,0} = \left( \mathbb{Z}_{\geq 2} \times \{0\} \right) \cup \left( \{1\} \times \mathbb{Z}_{\geq 1} \right).
\end{equation*}
We see that $\Lambda_{1,0}$ is the union of the quadrant $\mathbb{Z}_{\geq 2} \times \mathbb{Z}_{\geq 1}$ with the set of its nearest neighbor vertices $\partial \Lambda_{1,0}$ and we refer to the interior $\mathring{\Lambda}_{1,0} = \Lambda_{1,0} \setminus \partial \Lambda_{1,0} $ as the bulk of the lattice. We use the symbol $\mathfrak{P}(\Lambda_{1,0})$ to denote the set of up right directed paths in $\Lambda_{1,0}$. That is the generic element $\mathfrak{p}$ of $\mathfrak{P}(\Lambda_{1,0})$ is a collection of up right directed paths emanating from the boundary $\partial \Lambda_{1,0}$, as those represented in Figure \ref{figure paths}. 
 
\begin{figure}[t]
\centering
\includegraphics[scale=1.3]{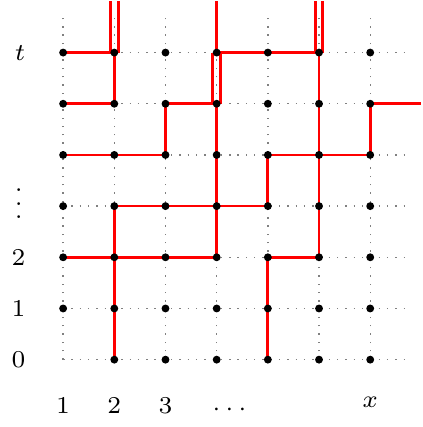}
\caption{\small{A possible configuration of up right directed paths in the lattice $\Lambda_{1,0}$.}}
\label{figure paths}
\end{figure}

A natural way to encode the information contained in the single configuration $\mathfrak{p}$ is to record how many times each edge of the lattice is shared by its paths. For this we introduce the collection of occupancy numbers
\begin{gather}
\mathsf{m}_x^t = \text{number of paths exiting the vertex $(x,t)$ in the upward direction},\label{m rv}\\
\mathsf{j}_x^t = \text{number of paths exiting the vertex $(x,t)$ in the rightward direction}, \label{j rv}
\end{gather}
of which a graphical representation is given in Figure \ref{one vertex Figure}. We think at the specific $\mathfrak{p}$ as a realization of sequences \eqref{m rv}, \eqref{j rv}, that we express with the notation $\{m_x^t,j_x^t\}_{(x,t)\in \Lambda_{1,0}}$. Since paths only generate at $\partial \Lambda_{1,0}$, quantities $\mathsf{m}^0_2, \mathsf{m}^0_3, \dots, \mathsf{j}^1_1, \mathsf{j}^2_1, \dots$ describe the boundary conditions of configurations $\mathfrak{p}$, while in the bulk, around the generic vertex $(x,t)$, we necessarily have the conservation law
\begin{equation} \label{conservation law}
\mathsf{m}_x^{t-1} + \mathsf{j}_{x-1}^t = \mathsf{m}_x^{t} + \mathsf{j}_{x}^t.
\end{equation}

\begin{figure}[t]
    \centering
\includegraphics{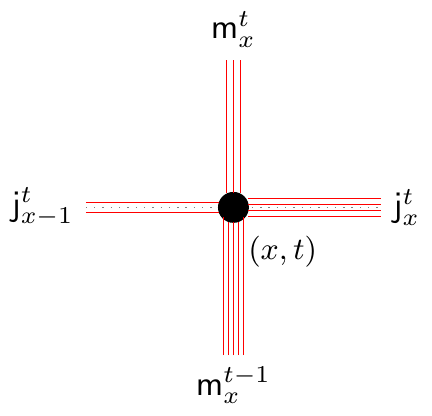}
\caption{\small{An arrangement of paths across the vertex $(x,t)$. Random variables $\mathsf{m}_x^{t-1}, \mathsf{j}_{x-1}^t$ represent paths entering respectively from below and from the left. Random variables $\mathsf{m}_x^{t}, \mathsf{j}_{x}^t$ describe the number of paths exiting the vertex respectively in the upward direction and to the right.}}
\label{one vertex Figure}
\end{figure}
We associate now, to each vertex $(x,t)$ in $\mathring{\Lambda}_{1,0}$ a \emph{stochastic weight} $L_{(x,t)}$. This is a non-negative valued function
\begin{equation} \label{stochastic vertex weight}
    L_{(x,t)}(m_x^{t-1}, j_{x-1}^t|\ m_x^t, j_x^t)
\end{equation}
of the vertex configuration, that is zero when the occupancy numbers do not fulfill \eqref{conservation law} and that satisfies, for any fixed $m_x^{t-1},j_{x-1}^t$, the sum-to-one condition
\begin{equation} \label{eq: sum to one}
    \sum_{m_x^t, j_x^t \geq 0} L_{(x,t)}(m_x^{t-1}, j_{x-1}^t|\ m_x^t, j_x^t) =1.
\end{equation}
An \emph{up right directed path stochastic vertex model} on $\Lambda_{1,0}$ is a probability measure $\mathcal{P}$ on the set $\mathfrak{P}(\Lambda_{1,0})$ where, for all vertices $(x,t)$ in the bulk $\mathring{\Lambda}_{1,0}$, the joint law of $\mathsf{m}_x^t, \mathsf{j}_x^t$, conditioned to $\mathsf{m}_x^{t-1}, \mathsf{j}_{x-1}^t$ is written as
\begin{equation} \label{eq: vertex model rule}
\mathcal{P}(\mathsf{m}_x^t=i', \mathsf{j}_x^t=j' |\ \mathsf{m}_x^{t-1}=i, \mathsf{j}_{x-1}^t=j)= L_{(x,t)}(i,j|\ i',j'),
\end{equation}
for all 4-tuples of non-negative integers $i,j,i',j'$. When this is the case, we interpret \eqref{stochastic vertex weight} as the probabilities ruling how paths propagate in the lattice in the up-right direction while crossing single vertices. To complete the definition of the measure $\mathcal{P}$ we need to assign a probability law to boundary random variables $\mathsf{m}^0_2, \mathsf{m}^0_3, \dots, \mathsf{j}^1_1, \mathsf{j}^2_1, \dots$. For the sake of this paper we will always consider them as mutually independent random variables that are a.s. finite and we can denote their joint law with the symbol $\mathcal{P}_B$. It is rather clear that, once we specify $\mathcal{P}_B$, the \emph{vertex model rule} \eqref{eq: vertex model rule} uniquely defines the measure $\mathcal{P}$ on any bounded subset of $\Lambda_{1,0}$. Through this observation, by making use of standard techniques of measure theory we could at this point prove that, given stochastic vertex weights $\{ L_{(x,t)} \}_{(x,t) \in \mathring{\Lambda}_{1,0} }$ and boundary conditions $\mathcal{P}_B$, the measure $\mathcal{P}$ is uniquely defined on the full set $\mathfrak{P}(\Lambda_{1,0})$. This is a consequence of the fact that paths are up-right directed and the distribution of $\mathsf{m}_x^t,\mathsf{j}_x^t$ only depends on $\mathsf{m}_x^{t-1},\mathsf{j}_{x-1}^t$ and therefore one can view the generic configuration $\mathfrak{p}$ as result of a markovian propagation as done in \cite{CorwinPetrov2016HSVM}.

In this paper we focus on the particular class of vertex weights $\mathsf{L}_{\xi_x u_t, s_x}$, defined in Table \ref{weights table}. Collectively, the family $\{ \mathsf{L}_{\xi_x u_t, s_x} \}_{(x,t) \in \mathring{\Lambda}_{1,0}}$ depends on a number $q$ and on the sets of values 
$$
\Xi=(\xi_2, \xi_3, \dots), \qquad \mathbf{S}=(s_2, s_3, \dots), \qquad \mathbf{U}=(u_1, u_2, \dots),
$$
which are called respectively \emph{inhomogeneity}, \emph{spin} and \emph{spectral} parameters. Unless otherwise specified we will always assume that 
\begin{equation}\label{HS6VM parameters}
0\leq q <1, \qquad
0 \leq s_x < 1 , \qquad \xi_x > 0, \qquad u_t < 0, \qquad \text{for all }x,t.
\end{equation}
Under condition \eqref{HS6VM parameters} we can easily verify that $\mathsf{L}_{\xi_x u_t, s_x}$ are positive quantities fulfilling the sum-to-one condition \eqref{eq: sum to one} and hence we regard them as bona fide stochastic vertex weights. We refer to the directed path stochastic vertex model with choice $L_{(x,t)}=\mathsf{L}_{\xi_x u_t, s_x}$ as the \emph{Stochastic Higher Spin Six Vertex Model}. 

From Table \ref{weights table} we see that the only vertex configurations having positive probability are those where different paths do not share any of the horizontal edges of the lattice. This limitation can be removed by means of a procedure called \emph{fusion}, that consists in collapsing together a number of different rows of vertices and that we review in Section \ref{subsection fused dynamics}. By fusing together a column of $J$ vertices one can construct the vertex weight $\mathsf{L}^{(J)}_{\xi_x u_t, s_x}$, which in this case takes a rather complicated form, stated below in \eqref{fused weights L}. We refer to the directed path stochastic vertex model with choice of weights $L_{(x,t)}=\mathsf{L}^{(J)}_{\xi_x u_t, s_x}$ as the \emph{fused Stochastic Higher Spin Six Vertex Model}.

\begin{table}[t]
  \centering
  \begin{tabular}{l||c|c|c|r}
     & \includegraphics{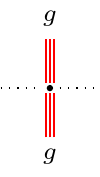}  &  \includegraphics{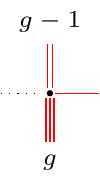} & 
     \includegraphics{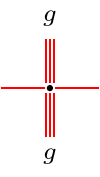} & \includegraphics{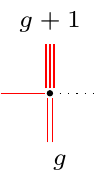}\\
    \hhline{-----}
    $\mathsf{L}_{\xi_x u_t,s_x}$ & $\frac{1-q^{g} \xi_x s_x u_t}{1-s_x \xi_x u_t}$ & $\frac{-s_x \xi_x u_t+ q^{g} \xi_x s_x u_t}{1-s_x \xi_x u_t}$ & $\frac{- s_x \xi_x u_t+ s_x^2q^{g}}{1-s_x \xi_x u_t}$ & $\frac{1- s_x^2q^{g}}{1-s_x \xi_x u_t}$\\
  \end{tabular}
  \caption{\small{In the top row we see all acceptable configurations of paths entering and exiting a vertex; below we reported the corresponding stochastic weights $\mathsf{L}_{\xi_x u_t,s_x}(m, j |\  m', j')$. Spin and inhomogeneity parameters $s$ and $\xi$ will depend on the $x$ coordinate of a vertex, whereas spectral parameters $u$ will depend on the $t$ coordinate.}}
  \label{weights table}
\end{table}

Boundary conditions for the model we study in this paper are given in terms of a special family of probability distribution, that we call $q$-\emph{negative binomial}. A random variable $X$ is said to be $q$-negative binomial if its probability mass function is expressed as
\begin{equation} \label{q negative binomial}
    \mathbb{P}(X=n) = p^n \frac{(b;q)_n}{(q;q)_n} \frac{(p;q)_\infty}{(p b ; q)_\infty}, \qquad \text{for all } n \in \mathbb{Z}_{\geq 0},
\end{equation}
for some parameters $p,b$ and we use the notation $X \sim q$NB$(b,p)$. In case $p,b$ belong to the interval $[0,1)$, then $X$ is supported on $\mathbb{Z}_{\geq 0}$, whereas if $p<0$ and $b=q^{-L}$ for some positive integer $L$, then $X$ only takes values on the set $\{0, \dots, L\}$. We define the \emph{double sided} $q$-\emph{negative binomial Higher Spin Six Vertex Model} as the (fused) Higher Spin Six Vertex Model where boundary random variables $\mathsf{m}_2^0, \mathsf{m}_3^0, \dots, \mathsf{j}_1^1, \mathsf{j}_1^2, \dots$ are independently distributed with laws
\begin{equation} \label{bc two sid q nb}
    \mathsf{m}_x^0 \sim q \text{\text{NB}}(s_x^2 , v/(\xi_x s_x)), \qquad \mathsf{j}_1^t \sim q \text{\text{NB}}(q^{-J} , q^J u_t \mathpzc{d} ),
\end{equation}
for parameters $\mathpzc{d},v$ satisfying $\mathpzc{d}>0$ and 
\begin{equation} \label{condition v}
0 \leq v < \inf_x \{ \xi_x s_x \}.
\end{equation} 
A special case of conditions \eqref{bc two sid q nb} will be given setting $v= \mathpzc{d}$ and we will refer to the model with this choice as the \emph{stationary Higher Spin Six Vertex Model}.

Exact properties of the model will be described by means of the \emph{height function} $\mathcal{H}$, an observable that we define as
\begin{equation}\label{height correct}
\mathcal{H}(x,t)= - \mathsf{m}_2^0 - \dots - \mathsf{m}_{x}^0 + \mathsf{j}_{x}^{1} + \cdots + \mathsf{j}_{x}^{t}.
\end{equation}
Pictorially we might look at paths in the generic configuration $\mathfrak{p}$ as the contours of an irregular staircase with steps one unit length tall, that we climb down as we move in the down-right direction. By centering the value of the height of the staircase $\mathcal{H}(1,0)=0$, we see that \eqref{height correct} describes the vertical displacement that we encounter going from $(1,0)$ to $(x,t)$ (see Figure \ref{fig: height centered}). The main results of this this paper is the exact description of the one point distribution of $\mathcal{H}$ under a restricted class of $q$-negative binomial boundary conditions, that includes the stationary case.

\begin{figure}[t]
\centering
\includegraphics[scale=1.3]{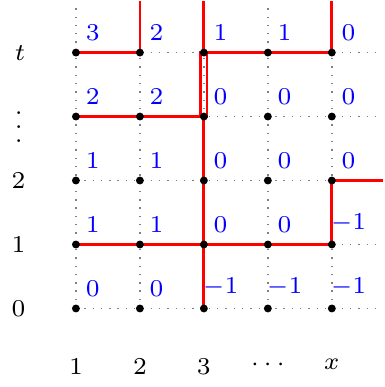}
\caption{\small{A possible set of up right paths in the Higher Spin Six Vertex Model. The numbers in blue reported next to each vertex are the values of the height function $\mathcal{H}$ defined in \eqref{height correct}.}}
\label{fig: height centered}
\end{figure}

\subsection{Methods}
In this paper we study the double sided $q$-negative binomial Higher Spin Six Vertex Model by expressing the distribution of the height function $\mathcal{H}$ in terms of certain $q$-Whittaker measures. These measures arise from the formalism of the Macdonald processes and are known to describe the joint law of a class of dynamics for the $q$-TASEP \cite{BorodinCorwin2014Mac}, \cite{MatveevPetrov2015}. Under a more restricted set of boundary conditions analogies between the Higher Spin Six Vertex Model and the $q$-TASEPs were studied first in \cite{OrrPetrov2016}, where authors provided a coupling between the height function and the position of tagged particles along \emph{time-like paths}, that are up right paths in $\Lambda_{0,1}$ with the vertical direction read as the time for the $q$-TASEP dynamics. Here we extend part of Orr and Petrov's argument \cite{OrrPetrov2016} to the case of $q$-negative binomial boundary conditions.

Random boundary conditions we will consider for our model will be produced through a fusion procedure. In the case of the Stochastic Six Vertex Model, such techniques were adopted first in \cite{Aggarwal2016FluctuationsASEP} to generate independent Bernoulli random entries from the horizontal axis. In this regard we find that the $q$-negative binomial boundary conditions represent the natural generalization of the independent Bernoulli ones for a higher spin version of the model.

Information describing the probability distribution of $\mathcal{H}$ are encoded in the $q$-Laplace transform (see Appendix \ref{appendix qfunctions}) 
\begin{equation} \label{eq: q Laplace H}
    \mathbb{E}_{\text{HS}(v, \mathpzc{d})}\left( \frac{1}{(\zeta q^{\mathcal{H}(x,t)};q)_\infty} \right),
\end{equation}
where the subscript HS$(v, \mathpzc{d})$ refers to boundary conditions \eqref{bc two sid q nb}. Ever since the introduction of the $q$-Whittaker processes in \cite{BorodinCorwin2014Mac}, the study of $q$-Laplace transforms such as \eqref{eq: q Laplace H} has proven to be successful in order to derive rigorous asymptotic analysis in a number of examples, especially for cases corresponding to step boundary conditions $(v=0)$ \cite{FerrariVeto2015TWLimit}, \cite{Barraquand2015qTASEP}. In the language of the $q$-TASEP, determinantal structures for the $q$-Laplace transform of the probability density of a tagged particle $y_x$ were obtained in \cite{q-TASEPtheta} employing elliptic analogs of the Cauchy determinants after considering the expansion
\begin{equation*}
     \mathbb{E} \left( \frac{1}{(\zeta q^{y_x +x};q)_\infty} \right) = \sum_{l \in \mathbb{Z}} \mathbb{P} (y_x + x =l)  \frac{1}{(\zeta q^{l};q)_\infty}.
\end{equation*}
This is the strategy we follow here, bringing computations and asymptotic analysis of \cite{q-TASEPtheta} to the more general setting of the Higher Spin Six Vertex Model.

A different approach and also a more established one in the context of Integrable Probability, would have been that of studying the $q$-Laplace transform \eqref{eq: q Laplace H} by means of a $q$-moments expansion as done in \cite{Aggarwal2016FluctuationsASEP}, \cite{BoCoFeVe2015}. Although in the case of random boundary conditions such type of expansion would be ill posed as the height $\mathcal{H}$ assumes also negative values and its $q$-moments diverge, one can still make sense of it after employing certain analytic continuation in parameters governing the initial measure. 

An interesting finding is that, in terms of exact determinantal formulas, these two strategies produce similar yet different results. In particular the Cauchy determinants approach offers a Fredholm determinant representation for \eqref{eq: q Laplace H}, where the kernel has finite rank and it admits a biorthogonal expansion reminiscent of those found in random matrix theory for the study of gaussian ensembles. Due to the finiteness of the rank of the kernel, formulas we obtain are rather easy to manipulate in concrete examples, when for instance one wants to compute the $q$-Laplace transform \eqref{eq: q Laplace H} at a vertex $(x,t)$ reasonably close to the origin. On the other hand, through a $q$-moments approach one would obtain a representation of \eqref{eq: q Laplace H} in terms of a Fredholm determinant of an infinite rank operator. Such procedure, which often requires a certain amount of guesswork for the choice of the kernel, has nevertheless shown its advantages too as expressions one gets are amenable to rigorous asymptotics with relatively weak assumptions on parameters defining the system. The question on how to move from one representation to the other is indeed an interesting one and it remains open, although we plan to address this issue in a forthcoming paper \cite{ImaMucSasASEPqTASEP}.

In order to establish Baik-Rains asymptotic fluctuations for the height function we employ techniques closer to works on the TASEP with deterministic initial conditions \cite{BFS2007}, than to more recent ones \cite{FerrariVeto2015TWLimit}, \cite{Barraquand2015qTASEP} on more general models. The Baik-Rains limit, compared to GOE or GUE Tracy-Widom limits often requires an extra amount of care, conceptually because the procedure involves the exchange of a limit and of a derivative sign. Throughout Section \ref{subsection the baik rains limit} we take care of such technical difficulties by a detailed analysis of the remainder terms in the asymptotic limit that presents some novel aspects.

\subsection{Results} \label{section results}
Our first result is a characterization of the stationary Higher Spin Six Vertex Model, which we recall was defined above as the model with double sided $q$-negative binomial boundary conditions \eqref{bc two sid q nb} with parameters $v=\mathpzc{d}$. We find that the probability measures with these particular choices of boundary conditions are the only one to satisfy a certain translation invariance that we call \emph{Burke's property}.
\begin{Def} \label{def: Burke property}
We say that a probability measure $\mathcal{P}$ on $\mathfrak{P}(\Lambda_{1,0})$ satisfies the \emph{Burke's property} if there exist families $\{ P^{(x)} \}_{x \geq 2 }, \{ \tilde{P}^{(t)} \}_{t \geq 1 }$ of probability distributions such that, for all $(x',t')\in \mathbb{Z}_{\geq 1} \times \mathbb{Z}_{\geq 0}$ we have, independently 
\begin{gather*}
    \mathsf{m}_{x'+k}^{t'} \sim P^{(x'+k)}, \qquad \text{for all } k \geq 1, \\
    \mathsf{j}_{x'}^{t'+k} \sim \tilde{P}^{(t'+k)}, \qquad \text{for all } k \geq 1.
\end{gather*}
\end{Def}
In words, the Burke's property states that, for any choice of a vertex $(x',t')$, the Higher Spin Six Vertex Model on the shifted lattice $\Lambda_{x',t'}=(\mathbb{Z}_{\geq x'} \times \mathbb{Z}_{\geq t'}) \setminus (x',t')$, obtained as a marginal process of the model on $\Lambda_{1,0}$, possesses boundary conditions that are always described by the same family of probability laws 
$ P^{(x)} , \tilde{P}^{(t)}$ after appropriately shifting indices $x,t$ (see Figure \ref{figure Burke's property}).

\begin{figure}[t]
\centering
\includegraphics[scale=1.3]{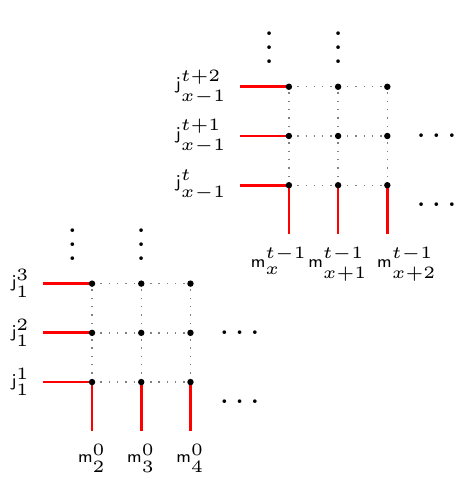}
\caption{
\small{An illustration of the Burke's property.}
} \label{figure Burke's property}
\end{figure}

\begin{Prop} \label{prop translation invariant}
The Higher Spin Six Vertex Model on the lattice $\Lambda_{1,0}$ satisfies the Burke's property if and only if boundary conditions are taken as
\begin{equation} \label{bc Burke}
\mathsf{m}^{0}_{x} \sim q\text{\emph{NB}}( s^2_x , \mathpzc{d}/(\xi_x s_x) ), \qquad \mathsf{j}_{1}^{t} \sim q\text{\emph{NB}}( q^{-J} , q^{J} \mathpzc{d} u_t ),
\end{equation}
independently of each other, for all $x\geq 2, t\geq 1$, where $\mathpzc{d}$ is a parameter that meets the condition
\begin{equation} \label{bound curvy d}
0 \leq \mathpzc{d} < \inf_{x} \{ \xi_x s_x  \}.
\end{equation}
\end{Prop}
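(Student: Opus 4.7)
The plan is to reduce Burke's property to a single-vertex invariance statement and then handle each direction in turn. The key local claim to isolate is the following: \emph{if $M$ and $N$ are independent with $M \sim q\text{NB}(s^2, \mathpzc{d}/(\xi s))$ and $N \sim q\text{NB}(q^{-J}, q^J \mathpzc{d} u)$, and $(M', N')$ is sampled from the fused vertex kernel $\mathsf{L}^{(J)}_{\xi u, s}(M, N \mid \cdot, \cdot)$, then $M'$ and $N'$ are again independent with the same marginals.} This local invariance drives both directions.

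For sufficiency I would do a double induction on $(x', t')$. The base case $(x', t') = (1, 0)$ is exactly the assumption \eqref{bc Burke}. The inductive step Burke$(x', t') \Rightarrow$ Burke$(x', t'+1)$ is established by a left-to-right sweep across row $t'+1$ of the submodel living on $\Lambda_{x', t'}$: the inputs at $(x'+1, t'+1)$ are independent $q$NB by Burke$(x', t')$, so the local claim gives $(\mathsf{m}_{x'+1}^{t'+1}, \mathsf{j}_{x'+1}^{t'+1})$ independent $q$NB with the right marginals and jointly independent of the untouched boundary variables. Iterating along the row yields that the boundary of $\Lambda_{x', t'+1}$ is again independent $q$NB, which is Burke$(x', t'+1)$. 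The companion step Burke$(x', t') \Rightarrow$ Burke$(x'+1, t')$ is analogous, sweeping downward along column $x'+1$.

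The local claim itself is a $q$-hypergeometric identity. Using the $q$-binomial theorem one has $\mathbb{E}(z^X) = (p;q)_\infty (pbz;q)_\infty / \bigl((pb;q)_\infty (pz;q)_\infty\bigr)$ for $X \sim q\text{NB}(b, p)$, so the joint transform $\mathbb{E}(z^{M'} w^{N'})$ becomes a $q$-sum that one evaluates by inserting the explicit fused weights (see \eqref{fused weights L}) and applying a $q$-Chu--Vandermonde summation. With the parameters tied together by the single constant $\mathpzc{d}$, the sum factors as $\mathbb{E}(z^{M'}) \mathbb{E}(w^{N'})$ with the target marginals; the main computational subtlety is tracking the $q^{-J}$ factor which truncates the support of $N$ to $\{0, \ldots, J\}$.

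For necessity, assume $\mathcal{P}$ satisfies Burke's property with some marginal families $\{P^{(x)}\}, \{\tilde P^{(t)}\}$. Burke's property applied at $(x-1, t-1)$ makes the inputs $(\mathsf{m}_x^{t-1}, \mathsf{j}_{x-1}^t)$ at vertex $(x, t)$ independent with laws $(P^{(x)}, \tilde P^{(t)})$, while applied at $(x-1, t)$ and $(x, t-1)$ it fixes the output marginals. To extract independence of the output pair one propagates further: any residual correlation between $\mathsf{m}_x^t$ and $\mathsf{j}_x^t$ would transmit through the vertex at $(x+1, t+1)$ and contradict Burke at a corner past $(x, t)$. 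Thus $(P^{(x)}, \tilde P^{(t)})$ must be a product fixed point of the vertex kernel $\mathsf{L}^{(J)}_{\xi_x u_t, s_x}$, and a generating-function computation mirroring that of the local claim forces the $q$NB form, with the sharing of the parameter $\mathpzc{d}$ across $x, t$ coming from comparing fixed-point relations at neighboring vertices and the bound \eqref{bound curvy d} ensuring the resulting measure is supported on $\mathbb{Z}_{\geq 0}$. I expect the hardest part to be this uniqueness step — ruling out non-$q$NB product fixed points — where the $q$-hypergeometric machinery enters most crucially.
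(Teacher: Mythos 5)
Your plan is genuinely different from the paper's: you reduce everything to a single-vertex Burke-type invariance (the ``local claim'') and run sweeps, whereas the paper (Lemmas \ref{lemma bc propagation horiz} and \ref{lemma bc propagation vert}) propagates boundary conditions across whole rows and columns at once, proving the global joint independence directly by a tensor-product computation with $2\times 2$ matrices, the eigenrelation \eqref{eigenrelation T}. Your reduction is appealing, but as sketched it has two genuine gaps, not merely unfinished computations.

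First, the local claim as you state it does not suffice to run the sweep. At vertex $k+1$ in row $t'+1$ the horizontal input $\mathsf{j}_{x'+k}^{t'+1}$ was \emph{produced} at vertex $k$; after the update you need the new pair $(\mathsf{m}_{x'+k+1}^{t'+1},\mathsf{j}_{x'+k+1}^{t'+1})$ to be jointly independent of all the previously generated outputs $\mathsf{m}_{x'+1}^{t'+1},\ldots,\mathsf{m}_{x'+k}^{t'+1}$, not merely of each other. That stronger statement does hold, but only if you formulate and carry a row-level invariant (``after processing $k$ vertices, the vector $(\mathsf{m}_{x'+1}^{t'+1},\ldots,\mathsf{m}_{x'+k}^{t'+1},\mathsf{j}_{x'+k}^{t'+1})$ is jointly independent with the prescribed marginals''). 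That invariant is exactly what the paper's tensor computation establishes; your sketch assumes it follows automatically from the single-vertex claim, which it does not without an additional argument.

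Second, the necessity direction is both under-justified and harder than it needs to be. You invoke ``a product fixed point of the vertex kernel'' and argue for output independence by ``propagates further.'' But $\mathsf{m}_x^t$ and $\mathsf{j}_x^t$ are never simultaneously in a boundary set $\partial\Lambda_{x',t'}$, so Definition \ref{def: Burke property} does not directly assert their independence, and your transmission heuristic is not a proof. The paper's route (Lemmas \ref{density recurrence} and \ref{Lemma local density J=1}) avoids the issue entirely: Burke's property already gives independent $q$NB inputs and forces the \emph{marginals} of $\mathsf{m}_x^{t-1}$ and $\mathsf{m}_x^{t}$ to coincide; this alone yields the three-term recurrence \eqref{pi recurrence} for $\pi_{M,x}$, which is solved explicitly via Al--Salam--Chihara polynomials, pinning down the $q$NB form and forcing the combination $p_t/(-u_t(1-p_t))$ to equal a single constant $\mathpzc{d}$. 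That is precisely the uniqueness step you flag as the ``hardest part,'' and it never needs output independence. Also note that you work with $\mathsf{L}^{(J)}$ directly, while the paper proves everything at $J=1$ and then fuses (Proposition \ref{prop BP q exch}); your route is cleaner in spirit but would require manipulating the full ${}_4\bar\phi_3$ expression \eqref{fused weights L} rather than the elementary weights of Table~\ref{weights table}.
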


Result of Proposition \ref{prop translation invariant} can be compared to the well known characterization of translation invariant measures for a general class zero range processes on $\mathbb{Z}$ obtained in \cite{andjel1982}, that states that these come in the form of factorized measures. In our case we define a notion of translation invariance for factorized measures on vertex models inhomogeneous both in the spatial and time coordinates and subsequently we describe the entire family of measures satisfying such properties.

The next result we present offers a Fredholm determinant representation of the $q$-Laplace transform \eqref{eq: q Laplace H} in a model with double sided $q$-negative binomial boundary conditions with parameters $v < \mathpzc{d}$. In the following we refer to a $q$-Poisson random variable with parameter $p$, in short $q$Poi($p$), as a $q$-negative binomial \eqref{q negative binomial} with parameters $0<p<1$ and $b=0$. For the sake of the following analytical statements we will assume that parameters $\Xi, \mathbf{S}$ are placed in such a way that 
\begin{equation} \label{eq: placements Xi S}
q \sup_{i} \{ \xi_i s_i \} < \mathpzc{d} < \inf_{i } \{ \xi_i s_i \} \leq \sup_{i } \{ \xi_i s_i \} < \inf_{i } \{ \xi_i / s_i \}.
\end{equation}

\begin{theorem} \label{theorem: q laplace shifted introduction}
Consider a double sided $q$-negative binomial Higher Spin Six Vertex Model on $\Lambda_{1,0}$ with parameters $v, \mathpzc{d}, \Xi, \mathbf{S}$ satisfying \eqref{eq: placements Xi S} and $v < \mathpzc{d}$. Also set $\mathsf{m}$ to be an independent $q$-Poisson random variable of parameter $v/\mathpzc{d}$. 
Then we have
\begin{equation} \label{fredholm determinant introduction}
\mathbb{E}_{\emph{HS}(v, \mathpzc{d}) \otimes \mathsf{m}} \left( \frac{1}{(\zeta q^{\mathcal{H}(x,t)-\mathsf{m}};q)_\infty}  \right) = \det ( \mathbf{1} - f K)_{l^2(\mathbb{Z})}.
\end{equation}
The kernel $f K$ on the right hand side is given in \eqref{f}, \eqref{kernel} and it is finite dimensional.
\end{theorem}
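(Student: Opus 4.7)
The plan is to combine three ingredients: the fusion construction of the $q$-negative binomial boundary data, the Orr--Petrov time-like coupling between the Higher Spin Six Vertex Model and the $q$-TASEP, and a Cauchy-determinant manipulation in the spirit of \cite{q-TASEPtheta}. Each piece exists in the literature under more restrictive assumptions (mostly $v=0$), so the work is both to upgrade these tools to the double sided $q$-negative binomial setting and to check that the resulting sum collapses to a finite-rank Fredholm determinant on $l^2(\mathbb{Z})$.

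First, I would realize the boundary conditions \eqref{bc two sid q nb} as the fusion of $J$ rows of an unfused Higher Spin Six Vertex Model whose horizontal entries are independent Bernoullis and whose vertical entries at row $0$ are already fused $q$-negative binomials. This generalizes the Bernoulli-fusion trick of \cite{Aggarwal2016FluctuationsASEP} and allows me to work with a single integrable object. Second, I would invoke (in the version that matches $q$-negative binomial vertical data) the Orr--Petrov coupling of \cite{OrrPetrov2016}, which reads the height $\mathcal{H}(x,t)$ along a time-like path as the position of a tagged particle $y_x$ in a $q$-TASEP whose step rates and initial data are parametrized by $\Xi,\mathbf{S},\mathbf{U},\mathpzc{d}$. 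The auxiliary independent $q$-Poisson variable $\mathsf{m}\sim q\mathrm{Poi}(v/\mathpzc{d})$ in the statement is exactly the random shift needed so that the joint law of $(y_x,\mathsf{m})$ aligns with a marginal of a $q$-Whittaker measure built from the parameters of the model; without this shift the horizontal boundary parameter $v$ would sit outside the $q$-Whittaker framework, and with it the two families of boundary parameters become symmetric in the underlying Macdonald-type specialization.

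Third, I would use the elementary expansion
\begin{equation*}
\mathbb{E}_{\mathrm{HS}(v,\mathpzc{d})\otimes\mathsf{m}}\!\left(\frac{1}{(\zeta q^{\mathcal{H}(x,t)-\mathsf{m}};q)_\infty}\right)
=\sum_{\ell\in\mathbb{Z}}\mathbb{P}\bigl(\mathcal{H}(x,t)-\mathsf{m}=\ell\bigr)\,\frac{1}{(\zeta q^\ell;q)_\infty},
\end{equation*}
substitute the explicit $q$-Whittaker expression for the probability on the right (a ratio of finite products depending on $\Xi,\mathbf{S},\mathbf{U}$ and on the principal part of a partition), and identify a Cauchy-type kernel in the summation variables. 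An application of the elliptic/Cauchy determinant identity used in \cite{q-TASEPtheta} then rewrites the sum as $\det(\mathbf{1}-fK)_{l^2(\mathbb{Z})}$; the finite rank of $K$ is inherited from the finite number of particles / Macdonald parameters that survive the projection, which is what produces the biorthogonal-expansion structure alluded to in the introduction.

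The main obstacle I expect is the joint treatment of the two random boundary families when $v>0$. The standard fusion and Orr--Petrov arguments handle the vertical and horizontal boundaries separately and were set up with $v=0$ in mind; to produce an \emph{identity} (and not merely an equality in distribution) one must carefully match the $q$-Poisson shift $\mathsf{m}$ with the horizontal boundary weights. Parallel to this, the Cauchy determinant step requires exchange of summations, contour deformations, and absolute convergence of the Fredholm series, and one has to verify that \eqref{eq: placements Xi S} together with $v<\mathpzc{d}$ is exactly the parameter window in which all of these manipulations are simultaneously legitimate. Once those two points are settled, the finite-rank formula and the explicit kernel $fK$ of \eqref{f}--\eqref{kernel} follow by unwinding the determinantal identity.
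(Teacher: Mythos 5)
Your high-level arc — fusion to generate random boundary data, matching with $q$-Whittaker measures, then a Cauchy/elliptic-determinant manipulation in the spirit of \cite{q-TASEPtheta} — is the right shape, and it does parallel the paper's Sections 3--5. The paper's actual proof of this theorem is one line: set $\wp=0$ in Proposition \ref{fredholm determinant}. All of the work lives upstream.

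The gap in your plan is precisely the step you flag as ``the main obstacle'' and then postpone: how to establish the $q$-Whittaker matching for a double sided boundary with $v>0$. The paper's resolution is a specific analytic continuation mechanism that you do not mention. Concretely, one first works on the enlarged lattice $\Lambda_{0,-K}$ with \emph{step} boundary conditions and spectral parameters $(q/v,\dots,q^{K}/v)$ on the first $K$ rows and a singular first-column limit $s_1=1/N$, $\xi_1=\mathpzc{d}N$, $N\to\infty$ (Proposition \ref{prop: bc general}). Fusing those $K$ rows and reading off the induced marginal on $\Lambda_{1,0}$ produces a \emph{coupled} law $\mathbb{P}_{\wp,v,\mathpzc{d}}$ with $\wp=q^{-K}$ and $v<0$, in which the boundary occupation numbers $\mathsf{m}_2^0,\mathsf{m}_3^0,\dots$ are \emph{not} independent: they are Markov chained through the weights $\boldsymbol\ell^{(i)}_{\wp,v}$ in \eqref{weight l}, and $\mathsf{m}$ is simply $\mathsf{m}_1^0$, the occupation number at the first column. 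In this restricted parameter range the model genuinely \emph{is} a step-initial-condition model after relabeling, so the $q$-moment / $q$-Whittaker matching of Proposition \ref{matching proposition} applies verbatim. The heavy lifting is Proposition \ref{analytical extension probability proposition}: one proves that both sides of the resulting integral formula for $\mathbb{P}(\overline{\mathcal{H}}=l)$ are analytic in $v$ near $0$ with Taylor coefficients that are \emph{polynomials} in $\wp$, and since they agree at infinitely many points $\wp=q^{-1},q^{-2},\dots$ they agree identically. Only after this continuation does setting $\wp=0$ make the chain of weights $\boldsymbol\ell^{(i)}_{\wp,v}$ factorize, turning $\mathsf{m}_2^0,\mathsf{m}_3^0,\dots$ into independent $q$-negative binomials and $\mathsf{m}$ into an independent $q\mathrm{Poi}(v/\mathpzc{d})$ — which is what the theorem asserts.

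So your invocation of ``substitute the explicit $q$-Whittaker expression for the probability'' is circular as stated: for $v>0$ that expression is not available directly, only through the analytic continuation in $\wp$ from integer $q^{-K}$, and that is the genuinely new ingredient. Your framing of $\mathsf{m}$ as a ``random shift needed to align with a $q$-Whittaker marginal'' gets the right intuition but misses that $\mathsf{m}$ has a concrete origin ($\mathsf{m}_1^0$ in the unfused/prefused model) and only becomes an independent $q$-Poisson shift after the $\wp\to 0$ degeneration. The Cauchy-determinant step you sketch then follows as in Proposition \ref{fredholm determinant}, essentially by importing Theorem 4.3 of \cite{q-TASEPtheta} with the modified specialization allowed by Proposition \ref{probability 2sided q-Whittaker proposition}.
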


In order to employ the statement of Theorem \ref{theorem: q laplace shifted introduction} for the study of the stationary model one needs to remove from expression \eqref{fredholm determinant introduction} the contribution of the $q$-Poisson random variable $\mathsf{m}$, that becomes a.s. infinite in the limit $v \to \mathpzc{d}$. In Section \ref{section regularizations} it is shown how such decoupling procedure provides us with determinantal formulas which describe the height function $\mathcal{H}$ in the case of $q$-negative binomial boundary conditions with parameters $v,\mathpzc{d}$ satisfying $qv < \mathpzc{d}< v/q$.
This is in fact the most general range of boundary conditions we will state exact results for.
\begin{theorem} \label{theorem q laplace double sided}
Consider the double sided $q$-negative binomial Higher Spin Six Vertex Model with parameters $v, \mathpzc{d}, \Xi, \mathbf{S}$ satisfying \eqref{eq: placements Xi S} and
\begin{equation} \label{condition v d analytical continuation}
qv < \mathpzc{d} <v/q. 
\end{equation}
Then we have
\begin{equation} \label{eq: q laplace H expansion}
    \mathbb{E}_{\emph{HS}(v,\mathpzc{d})}\left( \frac{1}{(\zeta q^{\mathcal{H}(x,t)};q)_\infty} \right) = \frac{1}{(q v /\mathpzc{d};q)_\infty} \sum_{k \geq 0} \frac{(-1)^k q^{\binom{k}{2}}}{(q;q)_k} \left( \frac{v}{\mathpzc{d}} \right)^k V_{x;v,\mathpzc{d}}(\zeta q^{-k}),
\end{equation}
where the function $V_{x;v, \mathpzc{d}}$ is defined as
\begin{equation*}
    V_{x;v, \mathpzc{d}}(\zeta) = \frac{1}{1 - v/\mathpzc{d}} \det (\mathbf{1} - f K)_{l^2(\mathbb{Z})}.
\end{equation*}
\end{theorem}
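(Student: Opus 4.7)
The plan is to derive the claim from Theorem \ref{theorem: q laplace shifted introduction} by expanding the auxiliary $q$-Poisson variable $\mathsf{m}$ and performing a $q$-binomial inversion, then to extend the resulting identity from the range $0 \leq v < \mathpzc{d}$ of that theorem to the wider range $qv<\mathpzc{d}<v/q$ by analyticity in $v$.

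First, since $\mathsf{m} \sim q$Poi$(v/\mathpzc{d})$ has PMF $\mathbb{P}(\mathsf{m}=k)=(v/\mathpzc{d};q)_\infty (v/\mathpzc{d})^k/(q;q)_k$, the independence of $\mathsf{m}$ from the Higher Spin model expands the left-hand side of Theorem \ref{theorem: q laplace shifted introduction} as $(v/\mathpzc{d};q)_\infty \sum_{k \geq 0} \frac{(v/\mathpzc{d})^k}{(q;q)_k} F(\zeta q^{-k})$, where I abbreviate $F(\zeta) := \mathbb{E}_{\text{HS}(v,\mathpzc{d})}(1/(\zeta q^{\mathcal{H}(x,t)};q)_\infty)$. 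Equating this with $\det(\mathbf{1}-fK) = (1-v/\mathpzc{d})V_{x;v,\mathpzc{d}}(\zeta)$ coming from the definition of $V$ and cancelling the common factor $(1-v/\mathpzc{d})$ yields, for $0\leq v<\mathpzc{d}$,
\begin{equation*}
\sum_{k \geq 0}\frac{(v/\mathpzc{d})^k}{(q;q)_k}\,F(\zeta q^{-k}) = \frac{V_{x;v,\mathpzc{d}}(\zeta)}{(qv/\mathpzc{d};q)_\infty}.
\end{equation*}

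Next I would invert this for $F$ via the $q$-binomial orthogonality $\sum_{k=0}^n(-1)^k q^{\binom{k}{2}}\begin{bmatrix}n\\k\end{bmatrix}_q=\delta_{n,0}$: applying $\sum_j (-1)^j q^{\binom{j}{2}}(v/\mathpzc{d})^j/(q;q)_j\cdot(\cdot)|_{\zeta\to \zeta q^{-j}}$ to both sides of the above identity and exchanging the order of summation (legitimate by absolute convergence guaranteed by $q^{\binom{j}{2}}$), the resulting double sum organized by $n=j+k$ collapses to its $n=0$ term and one obtains the stated identity of Theorem \ref{theorem q laplace double sided} in the range $0 \leq v < \mathpzc{d}$.

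The final step is analytic continuation in $v$ from $[0,\mathpzc{d})$ to the larger interval where $v/\mathpzc{d}\in(q,1/q)$. For the left-hand side, each boundary PMF $\mathsf{m}_x^0 \sim q$NB$(s_x^2,v/(\xi_x s_x))$ is analytic in $v$ on a complex neighborhood of $[0,\inf_x\xi_x s_x)$, and since $\mathcal{H}(x,t)$ is determined by only finitely many boundary and bulk variables in the rectangle $[2,x]\times[0,t]$, $F(\zeta)$ is analytic in $v$ on that same neighborhood. For the right-hand side, the Gaussian-type decay $q^{\binom{k}{2}}$ dominates any growth of $(v/\mathpzc{d})^k V_{x;v,\mathpzc{d}}(\zeta q^{-k})$ in $k$, so the series defines a holomorphic function of $v$ on the wider domain; the identity theorem then extends the equation of Theorem \ref{theorem q laplace double sided} to the full range \eqref{condition v d analytical continuation}. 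The main obstacle I expect is this last step: obtaining uniform-in-$k$ bounds on $V_{x;v,\mathpzc{d}}(\zeta q^{-k})$ ensuring absolute convergence of the right-hand series throughout $\{v/\mathpzc{d}\in(q,1/q)\}$ requires a quantitative analysis of the explicit kernel $fK$ from \eqref{f}, \eqref{kernel} rather than merely its finite rank.
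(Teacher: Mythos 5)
Your first two steps are exactly the paper's Lemma~\ref{lemma shifting argument} in disguise: expanding the $q$\textendash Poisson average and inverting by $q$\textendash binomial orthogonality $\sum_{k=0}^n(-1)^kq^{\binom k2}\binom nk_q=\delta_{n,0}$ recovers the identity \eqref{eq: q laplace H expansion} for $0\leq v<\mathpzc{d}$, so the opening of your argument is correct and matches the paper's route.

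Where you go astray is in locating the real obstacle. You flag the uniform-in-$k$ bounds on $V_{x;v,\mathpzc{d}}(\zeta q^{-k})$ as the hard part, but the more basic problem is that the quantity $V_{x;v,\mathpzc{d}}(\zeta)=\frac{1}{1-v/\mathpzc{d}}\det(\mathbf 1-fK)$ carries an \emph{explicit} factor $\frac{1}{1-v/\mathpzc{d}}$, and the point $v=\mathpzc{d}$ lies squarely in the interior of the claimed range $qv<\mathpzc{d}<v/q$. Before one can even speak of analytic continuation of the right\textendash hand side, one must show that $\det(\mathbf 1-fK)$ has a first\textendash order zero at $v=\mathpzc{d}$, so that $V_{x;v,\mathpzc{d}}$ is well defined there. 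This is the entire technical content of the paper's proof, and your proposal contains no mechanism for establishing it. The paper accomplishes this by first proving (Proposition~\ref{lemma invertible}, relying on the biorthogonality of Lemma~\ref{lemma scalar product}) that $\mathbf 1-fA$ is invertible throughout the region \eqref{condition v d analytical continuation}, which legitimizes the factorization $\det(\mathbf 1-fK)=\det(\mathbf 1-fA)\det\bigl(\mathbf 1-(\mathpzc{d}-v)(\varrho f\Phi_x)\Psi_x\bigr)$ with $\varrho=(\mathbf 1-fA)^{-1}$. It then splits $\Phi_x,\Psi_x$ as in \eqref{Phi1 + Phi2 }\textendash\eqref{Psi1 + Psi2}, isolates the contribution of $\Phi_x^{(1)}\Psi_x^{(1)}$, and evaluates $\sum_{n\in\mathbb Z}f(n)\Phi_x^{(1)}(n)\Psi_x^{(1)}(n)$ in closed form by Ramanujan's ${}_1\psi_1$ summation. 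That closed form carries a $(v/\mathpzc{d};q)_\infty$ in the denominator, and its Taylor expansion at $v=\mathpzc{d}$ produces precisely a $\frac{1}{\mathpzc{d}-v}$ pole which cancels the outer $\frac{1}{1-v/\mathpzc{d}}$. Only after this cancellation is $V_{x;v,\mathpzc{d}}$ manifestly analytic in the full region, at which point your analytic continuation argument can be run. Without the ${}_1\psi_1$ evaluation and the factorization that sets it up, your approach cannot reach the stationary point $v=\mathpzc{d}$, which is the case the theorem is chiefly designed to serve.
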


Remarkably, expression \eqref{eq: q laplace H expansion} is amenable to rigorous asymptotic analysis and in particular we pursue the case when the model is stationary. By taking the limit $v \to \mathpzc{d}$, the expression of function $V_{x;v, \mathpzc{d}}$ takes a rather complicated form, stated below in \eqref{V} and we devote Section \ref{subsection the baik rains limit} to establish its behavior in the large $x$ limit. As already mentioned in Section \ref{section: Kpz universality, integrability and initial conditions}, when the measure is stationary, the characteristic 3:2:1 scaling of the model is only observed along a specific direction, which is usually referred to as the \emph{characteristic line}. The scaling of the height function along this line is conjectured to be universal and it is described by the KPZ scaling theory \cite{SpohnScaling}, that we explain briefly in Section \ref{section: kpz scaling}.

For the stationary Higher Spin Six Vertex Model we now want to give the exact expression of scaling parameters defining the characteristic line and the expected behavior of the height function $\mathcal{H}$. We make use of $q$-polygamma type functions $\polygamma_k$ defined in Appendix \ref{appendix qfunctions}. For non-negative integers $k$ consider the functions
\begin{equation} \label{eq: a_k h_k}
    a_k(\mathpzc{d})=\polygamma_k(q^{J} u \mathpzc{d}) - \polygamma_k( u \mathpzc{d}), \qquad h_k(\mathpzc{d})=\frac{1}{x} \sum_{y=2}^x \left( \polygamma_k( \mathpzc{d}/(\xi_y s_y)) - \polygamma_k(\mathpzc{d} s_y / \xi_y)\right)
\end{equation}
and, depending on the parameter $\mathpzc{d}$, define the quantities
\begin{equation} \label{eq: k0 eta0 gamma0}
    \kappa_0 = \frac{h_1(\mathpzc{d})}{a_1(\mathpzc{d})}, \qquad \eta_0 = \kappa_0 a_0(\mathpzc{d}) - h_0(\mathpzc{d}), \qquad \gamma = -\left(\frac{1}{2}( \kappa_0 a_2(\mathpzc{d}) - h_2(\mathpzc{d}) )   \right)^{1/3}.
\end{equation}
We assume that the functions $h_k$ always converge in the large $x$ limit and we refer to the curve $(x, \kappa_0 x)$ as the \emph{characteristic line} of the stationary Higher Spin Six Vertex Model.
For random growth models usually the characteristic line is expressed as a function of the time $t$, rather than of the coordinate $x$, but in our case, since the system exhibits spatial inhomogeneities we find more natural to adopt the notation $(x, \kappa_0 x)$. The parameter $\eta_0$ multiplied by $x$ is readily understood as the expectation $\mathbb{E}(\mathcal{H}(x, \kappa_0 x))$, whereas $\gamma$ will be used to describe the size of the characteristic fluctuations of $\mathcal{H}$ around $\eta_0$. By slightly perturbing quantities $\kappa_0, \eta_0$ we can analyze the asymptotic behavior of $\mathcal{H}$ in a region of size $x^{2/3}$ of the characteristic line. For this we extend the definitions given in  \eqref{eq: k0 eta0 gamma0} setting
\begin{equation}\label{k perturbed}
    \kappa_\varpi = \kappa_0 + \frac{h_2 a_1 - h_1 a_2}{a_1^2} \frac{\varpi}{\gamma x^{1/3}} 
\end{equation}
\begin{equation}\label{eta perturbed}
    \eta_\varpi = \eta_0 + \frac{a_0 (h_2 a_1 - h_1 a_2) }{a_1^2}
 \frac{\varpi}{\gamma x^{1/3}} + \frac{h_2 a_1 - h_1 a_2}{a_1} \frac{\varpi^2}{\gamma^2 x^{2/3}},
\end{equation}
where $\varpi$ is a real number parameterizing the displacement from the characteristic line and functions $a_k, h_k$ are evaluated at $\mathpzc{d}$. We come now to state our main result.
\begin{theorem} \label{theorem: baik rains limit H}
Consider the stationary Higher Spin Six Vertex Model with parameters $q, \mathpzc{d}, \Xi, \mathbf{S}$ fulfilling conditions stated in Definition \ref{conditions on parameters}. Then, for any real numbers $\varpi, r$ we have
\begin{equation} \label{eq: baik rains limit intro}
    \lim_{x\to \infty} \mathbb{P}_{\emph{HS}(\mathpzc{d},\mathpzc{d})} \left( \frac{\mathcal{H}(x, \kappa_\varpi x ) - \eta_\varpi x}{\gamma x^{1/3}} < -r \right) = F_\varpi (r),
\end{equation}
where $F_\varpi (r)$ is the Baik-Rains distribution presented in Definition \ref{Baik-Rains definition}.
\end{theorem}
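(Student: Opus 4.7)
The plan is to read off the distribution of $\mathcal{H}(x,\kappa_\varpi x)$ from the $q$-Laplace transform formula of Theorem \ref{theorem q laplace double sided}, choosing the spectral variable $\zeta=-q^{-\eta_\varpi x - r \gamma x^{1/3}}$ so that the function $u\mapsto 1/(-q^u;q)_\infty$ acts as a smooth approximation to an indicator of the event $\{\mathcal{H}(x,\kappa_\varpi x)-\eta_\varpi x<-r\gamma x^{1/3}\}$. Standard $q$-calculus estimates (stated in Appendix \ref{appendix qfunctions}) reduce the theorem to showing that, in the double limit $v\to\mathpzc{d}$ and $x\to\infty$, the right-hand side of \eqref{eq: q laplace H expansion} at $t=\kappa_\varpi x$ converges to $F_\varpi(r)$. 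Thus the core analytic object is the finite-rank Fredholm determinant $V_{x;v,\mathpzc{d}}(\zeta q^{-k})$ and, after summing the series, its $v\to\mathpzc{d}$ regularization.

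The first nontrivial step is to handle the stationary limit $v\to\mathpzc{d}$. The prefactor $1/(qv/\mathpzc{d};q)_\infty$ has a simple pole at $v=\mathpzc{d}$, compensated by the factor $1/(1-v/\mathpzc{d})$ appearing inside $V_{x;v,\mathpzc{d}}$. The natural way to isolate this cancellation is to replace the series in $k$ by a contour integral of Mellin–Barnes type in a variable conjugate to $v/\mathpzc{d}$; the pole structure then collapses to a $v$-derivative acting on the kernel, mirroring the mechanism already exploited in \cite{q-TASEPtheta} for the stationary $q$-TASEP. This produces an exact representation of \eqref{eq: q laplace H expansion} in the stationary case as a sum of two terms: a bulk Fredholm determinant plus a rank-one correction coming from the derivative, which is the finite-$x$ analogue of the well-known Baik–Rains decomposition.

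Next I would carry out steepest descent on the resulting integral representation of the kernel $fK$ along $t=\kappa_\varpi x$. The exponential weight has a natural action $G(z;\mathpzc{d})$ whose critical point equation reproduces $\partial_\mathpzc{d} G=0$; using the explicit $q$-polygamma identities (Appendix \ref{appendix qfunctions}) this equation becomes $\kappa_0 a_1(\mathpzc{d})-h_1(\mathpzc{d})=0$, recovering $\kappa_0$ in \eqref{eq: k0 eta0 gamma0}. Along the characteristic direction the second derivative also vanishes and the third derivative equals $2\gamma^3/x$, fixing the cube-root scale. The perturbation $\kappa_\varpi$ shifts the saddle by $O(x^{-1/3})$; expanding $G$ to cubic order near the saddle produces exactly the shift $(\eta_\varpi-\eta_0)x$ computed in \eqref{eta perturbed}, so that the rescaled integration variable lives on a standard Airy contour. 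The bulk Fredholm determinant then converges to the Airy kernel determinant $F_{\mathrm{GUE}}$, and the rank-one correction converges to the Airy rank-one term that, together with the bulk factor, assembles into $F_\varpi(r)$ as given in Definition \ref{Baik-Rains definition}.

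The main obstacle is controlling the remainder in this steepest descent so tightly that the $v\to\mathpzc{d}$ derivative can be exchanged with the $x\to\infty$ limit. For the bulk Fredholm determinant this is routine, but the rank-one piece involves a derivative of exponentially large factors, so naive pointwise estimates are insufficient. I would address this by constructing the descent contours uniformly in a neighbourhood of $v=\mathpzc{d}$, obtaining Gaussian tail bounds on the kernel that are uniform in both $v$ and in the integration variable on the steepest descent path, and then invoking dominated convergence on the $v$-derivative of the trace-class norm. This is the delicate part flagged in Section \ref{subsection the baik rains limit} and where the assumptions of Definition \ref{conditions on parameters} enter to prevent the emergence of spurious critical points and to guarantee the convergence of the $h_k(\mathpzc{d})$. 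Once this uniform remainder bound is in place, the identification of the limit with $F_\varpi$ follows by comparison with the integral representation of the Baik–Rains distribution.
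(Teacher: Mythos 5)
Your overall framework (determinantal $q$-Laplace formula, steepest descent with the double critical point $\varsigma$, identification of the scaling coefficients, limiting $F_\varpi$) is the right one, and your observation that the essential analytic difficulty is a limit--derivative exchange agrees with what the paper says explicitly. But the specific mechanism you propose for producing the derivative is not the paper's, and where it is speculative it is likely to fail.

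In the paper, \emph{no $v$-derivative is ever taken}. The pole at $v=\mathpzc{d}$ is cancelled already at finite $x$ in Theorem \ref{theorem q laplace double sided}/Corollary \ref{Corollary stationary q laplace}: the offending $1/(1-v/\mathpzc{d})$ inside $V_{x;v,\mathpzc{d}}$ combines with the explicit sum $\sum_n f(n)\Phi_x^{(1)}(n)\Psi_x^{(1)}(n)$, evaluated via the Ramanujan ${}_1\psi_1$ identity, to give a finite $\polygamma_0$ expansion (equation \eqref{star 1}); the resulting $V_x = V_{x;\mathpzc{d},\mathpzc{d}}$ is a concrete, regular expression with no derivative in it. The derivative in $F_\varpi = \partial_r\chi_\varpi$ is then generated by a telescoping rearrangement of the residual $k$-series in \eqref{stationary q laplace 2}: writing $V_x(\zeta q^{-k}) = q^k V_x(\zeta q^{-k}) + (1-q^k)V_x(\zeta q^{-k})$ and rearranging yields $\sum_k \frac{(-1)^k q^{\binom{k}{2}+k}}{(q;q)_k(q;q)_\infty}\bigl(V_x(\zeta q^{-k})-V_x(\zeta q^{-k-1})\bigr)$, so the role of the derivative is played by a \emph{discrete difference} in $r$ of increment $-1/(\gamma x^{1/3})$. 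Your proposed Mellin--Barnes transform of the $k$-series "collapsing to a $v$-derivative of the kernel" does not match this structure; the $k$-dependence sits not only in $(v/\mathpzc{d})^k$ but also in $V_{x;v,\mathpzc{d}}(\zeta q^{-k})$, so the transform would not isolate the $v=\mathpzc{d}$ pole in the clean way you describe. As written, this step of your proposal is a gap.

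Even granting a derivative representation, your remaining plan is missing two things the paper spends most of Section \ref{subsection the baik rains limit} doing, and which it flags as novel. First, the convergence of the kernel to the Airy kernel is proved on sets $[-L, x^{\delta/3}]$ that \emph{grow} with $x$ (Lemma \ref{convergence on moderately large sets} and Remark \ref{remark choice sets}); truncation to a fixed compact set would leave an $x$-independent error term that would be fatal for the needed $O(x^{-1/3})$ precision. Second, to justify the limit exchange, each asymptotic statement in the paper (Propositions \ref{prop discrete Airy}, \ref{proposition limit V 2}, \ref{prop Upsilon}, \ref{prop Upsilon 2}) carries not just a boundedness bound on the remainder $R_x^{(i)}$ but also the ``continuity'' property $R_x^{(i)}(r^*) - R_x^{(i)}(r^*-\tfrac{1}{\gamma x^{1/3}})\to 0$ uniformly in $r^*$; this is exactly what allows the finite differences of $V_x$ to converge pointwise to $\partial_r\chi_\varpi$ and allows bounded convergence to be invoked on the $k$-sum after the tail $k>\epsilon\gamma x^{1/3}$ is bounded by monotonicity of $V_x$ in $\zeta$. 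These two technical ingredients are not cosmetic: without the first the error estimates do not decay, and without the second the derivative cannot be pulled through the limit. Your route would need analogues of both, formulated for the continuous $v$-derivative, which would additionally require uniform control of $\partial_v$ acting on exponentially large factors inside the kernel — a harder task than the finite-difference estimates the paper uses. So: different route, with a plausible intuition, but two concrete gaps that need to be filled before the argument closes.
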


Assumptions on parameters made in the statement of Theorem \ref{theorem: baik rains limit H} are technical and they substantially require $q$ to be sufficiently close to zero. These arise while establishing the steep descent property of integration contours of the integral kernel $K$ (see Appendix \ref{appendix contours}). Such conditions can be considerably weakened employing certain determinant preserving transformations of the kernel that involve deforming integration contours to regions containing poles of the integrand function. As such procedures are rather technical, we postpone their description to a future work   \cite{ImaMucSasASEPqTASEP} and for the sake of this paper we stick to the small $q$ assumption.

Techniques used in the proof of Theorem \ref{theorem: baik rains limit H} can be employed also to establish Tracy-Widom asymptotic fluctuations of the height function $\mathcal{H}$ when the model has step Bernoulli boundary conditions ($v=0$). This result was already proved in \cite{OrrPetrov2016} using a certain matching between $q$-Whittaker measures and Schur measures.

Additional results we obtain are stated in Section \ref{section specializations} and they are adaptations of Fredholm determinant formulas \eqref{fredholm determinant introduction}, \eqref{eq: q laplace H expansion} and of the universal limit \eqref{eq: baik rains limit intro} to two of the main degenerations of the Higher Spin Six Vertex Models, the $q$-Hahn TASEP and the Exponential Jump Model.
\subsection{Outline of the paper}
In Section \ref{section HS} we describe some further properties of the Higher Spin Six Vertex Model, that were left out in Section \ref{subs: the Model}. Especially we recall nested contour integral formulas for $q$-moments of the model with step boundary conditions. In Section \ref{section qwhittaker} we recall the definition and main properties of the $q$-Whittaker process. In Section \ref{About the initial conditions} we proof the Burke's property of the stationary Higher Spin Six Vertex Model and we establish its integrability. In Section \ref{section matching and fredholm} we employ elliptic determinant computations from \cite{q-TASEPtheta} in order to compute the $q$-Laplace transform of the probability mass function of the height function in the case of double sided $q$-negative binomial boundary conditions. In Section \ref{section time asymptotics} we specialize determinantal expression obtained in Section \ref{section matching and fredholm} to the stationary model and we compute the asymptotics of the one point distribution of the height function along the critical line. Finally, in Section \ref{section specializations} we consider the main degenerations of the Higher Spin Six Vertex Model and and we establish determinantal formulas and Baik-Rains fluctuations for these models.
\subsection{Acknowledgements}
M.M. is very grateful to Patrik Ferrari and Alexander Garbali for helpful discussions. The work of T.S. is supported by JSPS KAKENHI Grant Numbers JP15K05203, JP16H06338, JP18H01141, JP18H03672. The work of T.I. is supported by JSPS KAKENHI Grant Number
JP16K05192.

\section{Stochastic Higher Spin Six Vertex Model} 
\label{section HS}

In this section we give a review on the Higher Spin Six Vertex Model. We take the chance to fix some notations and recall major results which will be used throughout the rest of the paper.
\subsection{Directed paths picture} \label{subsection directed paths picture}
A description of the Higher Spin Six Vertex Model as an up right directed path ensemble in the lattice $\Lambda_{1,0}$ was given in Section \ref{subs: the Model}. The choice of the set $\Lambda_{1,0}$ was made only to keep out notation consistent with that introduced in previous works \cite{CorwinPetrov2016HSVM},\cite{BorodinPetrov2016HS6VM} and we could extend the notion of the model to the generic lattice $\Lambda_{x',t'}$ with boundary $\partial \Lambda_{x',t'}$, defined as
\begin{equation*}
    \Lambda_{x',t'} = \left( \mathbb{Z}_{\geq x'} \times \mathbb{Z}_{\geq t'} \right) \setminus (x',t') \qquad \text{and} \qquad \partial \Lambda_{x',t'} = \left( \mathbb{Z}_{\geq x'+1} \times \{t'\} \right) \cup \left( \{x'\} \times \mathbb{Z}_{\geq t'+1} \right),
\end{equation*}
for a generic $(x',t')$ in $\mathbb{Z}\times \mathbb{Z}$. When this is the case boundary conditions are given specifying the laws of $\mathsf{m}_{x'+1}^{t'}, \mathsf{m}_{x'+2}^{t'}, \dots, \mathsf{j}_{x'}^{t'+1}, \mathsf{j}_{x'}^{t'+2}, \dots$ and the measure depends on parameters $q, \{ u_{t} \}_{t>t'}, \{\xi_{x} , s_{x} \}_{x > x'}$. In Section \ref{subs: the Model} such parameters were assumed to satisfy condition \eqref{HS6VM parameters}, so to guarantee the stochasticity of vertex weights $\mathsf{L}_{\xi_{x} u_{t}, s_{x}}$ of Table \ref{weights table} and the same assumption is made now. Although in this paper we will not investigate range of parameters different than \eqref{HS6VM parameters}, we want to point out that there exist also different conditions that would make all $\mathsf{L}_{\xi_{x} u_{t}, s_{x}}$ non-negative quantities. The full list of stochasticity conditions is given imposing for all $x,t$ one of the following:
\begin{enumerate}

\item $0<q<1$, $-1<s_x< 1$ and $s_x \xi_x u_t <0$, \label{HS6VM parameters list}

\item $0<q<1$, $q^{-G}=s_x^2 \leq s_x \xi_x u_t$  with $G \in \mathbb{Z}_{>0}$ and $g =0,1,\dots, G$, \label{6VM parameters}

\item $-1<q<0$, $q^{-1} \leq \xi_x s_x u_t \leq 0$ and $\xi_x s_x u_t \leq s_x^2 \leq \min \big( 1, q^{-1} \xi_x s_x u_t \big)$,

\item $q>1$, $0 \leq s_x \xi_x u_t \leq s_x^2= q^{-G}$, with $G\in \mathbb{Z}_{>0}$ and $g =0,1,\dots, G$,  \label{q^-1 tasep conditions}
\end{enumerate}
where the integer $g$ appearing in \ref{6VM parameters}, \ref{q^-1 tasep conditions} is the number of vertical path entering the vertex as in Table \ref{weights table}. Choice \ref{HS6VM parameters list} corresponds to \eqref{HS6VM parameters}, where in \eqref{HS6VM parameters}, with no loss of generality, we fixed the signs of $u_t,\xi_x,s_x$, as the weights $\mathsf{L}$ only depend on the product $\xi_x s_x u_t$ and $s_x^2$. Choice \ref{6VM parameters} produces the Six Vertex Model and the only path configurations with positive measure are those where both horizontal and vertical edges are crossed at most by one path. A list of stochasticity conditions analogous to that presented above appeared in \cite{CorwinPetrov2016HSVM}, where authors used a slightly different notation.

In Section \ref{subs: the Model} we described the double sided $q$-negative binomial boundary conditions, introduced in \eqref{bc two sid q nb} for the model in the lattice $\Lambda_{1,0}$. Clearly the same definition can be adapted also for a Higher Spin Six Vertex Model in $\Lambda_{x',t'}$. A very special case of boundary conditions is obtained when we set $J=1, v=0$ and $\mathpzc{d}=\infty$, obtaining
\begin{equation} \label{step boundary bc}
\mathsf{j}_{x'}^t=1 \quad \text{a.s.}, \qquad \mathsf{m}_x^{t'}=0 \quad \text{a.s.},
\end{equation}
for all $t>t'$, $x>x'$. This is to say that from the horizontal axis no path originates and at each vertex $(x',t)$ exactly one path enters the system. We refer to \eqref{step boundary bc} as \emph{step boundary conditions}, as they are an analogous version of the step initial conditions for simple exclusion processes on the infinite lattice, where vertical segments of paths $ \mathsf{m}_x^t $ are interpreted as gaps between consecutive particles when the time of the system is $t$. (more in Section \ref{subsection observables HS6VM}).

Other relevant choices of boundary conditions are given setting $v=0$, but leaving $\mathpzc{d}$ as a finite quantity. In this case too, paths can only enter the system from the vertical axis and they do so randomly with $q$-negative binomial distribution of parameters $(q^{-J}, q^J \mathpzc{d} u_t).$ We refer to these as \emph{step q-negative binomial boundary conditions} and when $J=1$ we also use the name \emph{step Bernoulli boundary conditions}, considered in \cite{Aggarwal2016FluctuationsASEP}, \cite{OrrPetrov2016}.

So far we considered a model where at each vertex the weight $\mathsf{L}$ was depending both on the $x$ and $t$ coordinate. A slight simplification is given by choosing parameters $u_t, s_x$ to be constant numbers $u, s$ for all $x,t$ and to set the inhomogeneity parameters $\xi_x = 1$ for all $x$. This specialization takes the name of \emph{Homogeneous Higher Spin Six Vertex Model} and it was considered in the original paper \cite{CorwinPetrov2016HSVM}.

\subsection{One line dynamical picture} \label{subsection one line dynamical picture} 
In this Section we focus on the Higher Spin Six Vertex Model with $J=1$. A possible alternative to presenting it as a static ensemble of directed paths is to interpret the arrangements of paths along each row of vertices as a dynamical process. As in Section \ref{subs: the Model}, denote with $m_x^t$ the number of paths exiting vertex $(x,t)$ from above in a particular realization of the model. The information contained in the sequence $\{ m_x^t \}_x$ can be encoded in the symbol
\begin{equation} \label{lambda partition}
\lambda(t) = \prod_{x} x^{m_x^t} .
\end{equation}
When $\sum_{x} m_x^t$ is a finite number, or equivalently, when only finitely many paths populate the region with ordinate $t' \leq t$, then $\lambda(t)$ can be thought as a signature written in multiplicative notation.

If $\lambda(t+1)$ is the symbol generated by the path configuration on vertical edges joining vertices $(x,t+1)$ and $(x,t+2)$ for $x \geq 1$, we say that $\lambda(t)$ transitions to $\lambda(t+1)$ and we want, at least formally, to describe the probability of such transition to take place. For this assume first that the probability of the event $\{\mathsf{j}_{x}^t = 1, \text{ eventually for }x \gg 0 \}$ is zero, which is to say that paths traveling on horizontal lines will almost surely turn upward. To ensure this condition we can take parameters $\xi, s, u$ such that
\begin{equation} \label{no horizontal lines}
\sup_{ x, t}\mathsf{L}_{\xi_x u_t, s_x} (0, 1|\ 0, 1) < 1.
\end{equation}
In case we consider a model in the lattice $\Lambda_{0,0}$, conservation law \eqref{conservation law} implies that, when we specify the number $j_0^{t+1}$ of paths emanating the boundary vertex $(0,t+1)$, there exist at most one choice of $\{j_x^{t+1}\}$ such that $\lambda(t)$ transitions to $\lambda(t+1)$ and the probability of such transition is formally given by
\begin{equation} \label{stochastic matrix X}
\mathfrak{X}_{u_{t+1}}(j_0^{t+1 }; \lambda(t) \rightarrow \lambda(t+1)) = \mathbb{P}(\mathsf{j}_0^{t+1} = j_0^{t+1}) \prod_{x \geq 1} \mathsf{L}_{\xi_x u_{t+1}, s_x} (m_x^{t}, j_{x-1}^{t+1}|\ m_x^{t+1}, j_x^{t+1}).
\end{equation}
Operator $\mathfrak{X}$ takes the name of \emph{transfer operator} and one can possibly define it rigorously through an inverse limit procedure of its action on finite path configurations as done in \cite{CorwinPetrov2016HSVM}, Definition 2.6. For example, when $\lambda(t)$ and $\lambda(t+1)$ describe configurations of a finite number of paths, expression \eqref{stochastic matrix X} is well posed, as the infinite product on weights $\mathsf{L}$ on the right hand side contains almost surely only finitely many factors different from $\mathsf{L}_{\xi u, s} (0,1|\ 0,1)=1$.

\begin{figure}[ht]
\centering
\includegraphics{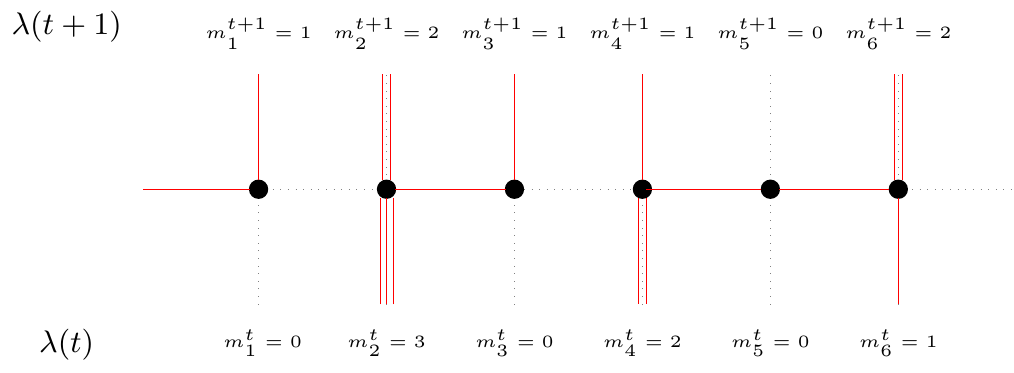}
\caption{\small{
As the paths cross the horizontal line of vertices the signature $\lambda(t)=1^0 2^3 3^0 4^2 5^0 6^1$ transitions to $\lambda(t+1)=1^12^23^14^15^06^2$.}}
\label{transition single line Picture}
\end{figure}

We can possibly remove the dependence of $\lambda(t+1)$ from the boundary value $j_0^{t+1}$. To do so we simply need to set $\mathsf{m}_0^1=\infty$ a.s., which is to say that each vertex $(1,t)$ is vertically crossed by infinitely many paths. From Table \ref{weights table} we see that, assuming \eqref{HS6VM parameters}, when $g=\infty$, we have
\begin{gather}
\mathsf{L}_{\xi_1 u_t, s_1}(\infty,0|\ \infty,0) = \mathsf{L}_{\xi_1 u_t, s_1}(\infty,1|\ \infty,0) = \frac{1}{1-\xi_1 s_1 u_t} \label{limit L J=1 g=inf 1}\\
\mathsf{L}_{\xi_1 u_t, s_1}(\infty,0|\ \infty,1) = \mathsf{L}_{\xi_1 u_t, s_1},(\infty,1|\ \infty,1) = \frac{-\xi_1 s_1 u_t}{1-\xi_1 s_1 u_t} \label{limit L J=1 g=inf 2}.
\end{gather}
This means that the choice $\mathsf{m}_0^1= \infty$ a.s. implies that random variables $\{ \mathsf{j}_1^t \}_{t \geq 1}$ become mutually independent Bernoulli distributed as
\begin{equation} \label{j bernoulli}
\mathsf{j}_1^t \sim \text{Ber}\left( \frac{-\xi_1 s_1 u_t}{1-\xi_1 s_1 u_t} \right) \qquad \text{for all }t \geq 1.
\end{equation}
By looking at configurations of paths in the resticted lattice $\Lambda_{1,0}$, \eqref{j bernoulli} can be regarded as a boundary condition, so that setting $\mathsf{m}_x^0=0$ a.s. for each $x \geq 2$ we produce the step Bernoulli boundary conditions, considered in above (see Figure \ref{figure step Bernoulli}.a).  

\subsection{Fused transfer operator \texorpdfstring{$\mathfrak{X}^{(J)}$}{TEXT}}  \label{subsection fused dynamics}
In Section \ref{subsection one line dynamical picture} we considered the unfused Higher Spin Six Vertex Model and each horizontal edge of the lattice could be crossed by no more than one path. We see now how it is possible to exploit combinatorial properties of weights $\mathsf{L}_{\xi_x u_t, s_x}$ in order to take away this restriction. The strategy consists of collapsing together multiple horizontal lines of vertices.

Suppose we aim to allow up to $J$ paths to travel an edge horizontally. For a given probability distribution $P$ on $\{ 0,1 \}^J$, consider the quantity
\begin{equation} \label{J column}
\sum_{\substack{\mathbf{h}, \mathbf{h}' \in \{0,1\}^J \\ 
|\mathbf{h}|=j, |\mathbf{h}'|=j',
}} P(\mathbf{h}) \prod_{k=1}^J \mathsf{L}_{\xi_x u_k, s_x}(i_{k-1}, h_k| \ i_k, h_k'),
\end{equation}
where 
\begin{equation} \label{conservation law i j}
i_k=i_{k-1} + h_k- h_k', \qquad i_0=i,\qquad i_J=i' \qquad \text{and} \qquad i+j=i'+j'.
\end{equation}
Naturally, \eqref{J column} is the probability that in a column of $J$ vertices $i$ paths enter from below, $i'$ exit from above and, independently on their arrangement $j$ of them enter from the left and $j'$ of them exit from the right. We ask under what conditions on $P$, expression \eqref{J column} can be written in the form
\begin{equation} \label{J column collapsed}
\tilde{P}(j) \mathsf{L}^{(J)}_{\xi_x u, s_x}(i, j| \ i', j'),
\end{equation}
for some probability distribution $\tilde{P}$ on $\{1, \dots, J\}$ and some weight $\mathsf{L}^{(J)}_{\xi_x u, s_x}$. A possible answer is essentially contained in the following
\begin{Def}
A probability distribution $P$ on $\{0,1\}^J$ is said to be $q$-\emph{exchangeable} if it is of the form
\begin{equation} \label{q exchang}
P(\mathbf{\emph{h}})=\tilde{P}(|\mathbf{\emph{h}}|) \frac{q^{\sum_{k=1}^Jh_k(k-1)}}{Z_J(|\mathbf{\emph{h}}|)},
\end{equation}
where $\tilde{P}$ is a probability distribution on $\{1, \dots, J\}$, $|\mathbf{\emph{h}}|=h_1+\cdots+h_J$ and 
$$
Z_J(j)=q^{\frac{j(j-1)}{2}}\frac{(q;q)_J}{(q;q)_j (q;q)_{J-j}}.
$$
\end{Def}
In the previous definition we made use of the common notation of $q$-Pochhammer symbol $(x,q)_n$, whose definition is recalled in Appendix \ref{appendix qfunctions}.
For the next result we set the parameters $(u_1, \dots, u_J)=(u, qu, \dots, q^{J-1}u)$.
\begin{Prop} \label{prop BP q exch} [\cite{BorodinPetrov2016HS6VM}, Proposition 5.4]
Fix non-negative integers $i,i'$ and let $P$ be a $q$-exchangeable probability distribution on $\{0,1\}^J$. Then also
\begin{equation} \label{q exchang P'}
P'(\mathbf{\emph{h}}')=\sum_{\mathbf{\emph{h}}\in \{0,1\}^J} P(\mathbf{\emph{h}}) \prod_{k=1}^J \mathsf{L}_{\xi_x q^{k-1}u, s_x}(i_{k-1}, h_k| \ i_k, h_k')
\end{equation}
is $q$-exchangeable. Here numbers $i_k$ are defined as in \eqref{conservation law i j}.
\end{Prop}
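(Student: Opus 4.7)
The plan is to verify the $q$-exchangeability of $P'$ by checking its defining swap invariance. A distribution $Q$ on $\{0,1\}^J$ takes the form \eqref{q exchang} if and only if it satisfies the adjacent swap relation $Q(\ldots, h_k = 0, h_{k+1} = 1, \ldots) = q \cdot Q(\ldots, h_k = 1, h_{k+1} = 0, \ldots)$ at every position $k \in \{1, \dots, J-1\}$: this ratio fixes $Q(\mathbf{h})$ up to a function of $|\mathbf{h}|$, and the normalization $Z_J(|\mathbf{h}|)$ is tailored so that the resulting expression matches \eqref{q exchang} with $\tilde{Q}(j) = \sum_{|\mathbf{h}|=j} Q(\mathbf{h})$. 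Thus the goal reduces to checking this adjacent swap relation for $P'$.

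Next I would reduce to a two-row identity. Fix $k$, freeze $h_l'$ for $l \notin \{k, k+1\}$, and split the sum over $\mathbf{h}$ in \eqref{q exchang P'} into outer variables $(h_l)_{l \neq k, k+1}$ and inner variables $(h_k, h_{k+1})$. The $\mathsf{L}$-factors at rows $l \notin \{k, k+1\}$ depend only on the outer data, and the intermediate occupations $i_{k-1}$ and $i_{k+1}$ are likewise functions of the outer data alone; crucially, both are preserved under the swap $h_k' \leftrightarrow h_{k+1}'$, since $h_k' + h_{k+1}'$ is invariant. Therefore, the full swap relation for $P'$ reduces to the pointwise assertion that for every admissible pair $(i_{k-1}, i_{k+1})$ and with $u' = \xi q^{k-1} u$, the two-row partial expression
\begin{equation*}
\sum_{h_k, h_{k+1}, i_k} P(\mathbf{h}) \, \mathsf{L}_{u', s}(i_{k-1}, h_k|i_k, h_k') \, \mathsf{L}_{qu', s}(i_k, h_{k+1}|i_{k+1}, h_{k+1}')
\end{equation*}
satisfies the swap relation between the outputs $(h_k', h_{k+1}') = (0, 1)$ and $(1, 0)$.

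This remaining identity is an instance of the Yang-Baxter equation \eqref{YBE} at the spectral-parameter specialization $u_2/u_1 = q$. At this value the intertwining $\mathcal{R}$-matrix acting on the pair of horizontal spaces degenerates to (a scalar multiple of) the $q$-symmetrizer, which exchanges $(0,1) \leftrightarrow (1,0)$ with exactly the ratio $q$; this is the same specialization underlying the fusion procedure reviewed in Section \ref{subsection fused dynamics}. Combining this relation with the assumed $q$-exchangeability of the input $P$---which provides $P(\ldots, 0, 1, \ldots) = q P(\ldots, 1, 0, \ldots)$ on the inner pair $(h_k, h_{k+1})$---yields the desired swap relation for $P'$. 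The main obstacle is justifying the two-row YBE identity: the conceptually clean route invokes the YBE at the operator level as in \cite{BorodinPetrov2016HS6VM}, while a direct verification using the four weights in Table \ref{weights table} splits into the three non-empty cases $i_{k+1} - i_{k-1} \in \{-1, 0, +1\}$, each reducing to a short but finite matching of vertex weights together with the input ratio $P(0,1)/P(1,0) = q$.
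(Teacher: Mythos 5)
The paper does not prove this proposition; it is stated verbatim as Proposition 5.4 of \cite{BorodinPetrov2016HS6VM} and then used. There is therefore no in-paper argument to compare against, so I assess your proposal on its own merits.

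Your proof is correct and self-contained. The characterization of $q$-exchangeability by the adjacent-swap ratio $Q(\dots,0,1,\dots) = q\,Q(\dots,1,0,\dots)$ is exactly right: within a fixed weight class $|\mathbf{h}|=j$ the distribution is then forced to be proportional to $q^{\sum_k h_k(k-1)}$, and the normalizer $Z_J(j)=q^{\binom{j}{2}}\binom{J}{j}_q$ is precisely $\sum_{|\mathbf{h}|=j}q^{\sum_k h_k(k-1)}$, so $\tilde Q(j)=\sum_{|\mathbf{h}|=j}Q(\mathbf{h})$ as claimed. Your reduction to the two-row identity is also sound: the key observation that $i_{k-1}$ is purely outer data and $i_{k+1}=i_{k-1}+(h_k+h_{k+1})-(h_k'+h_{k+1}')$ is invariant under swapping $(h_k',h_{k+1}')=(0,1)\leftrightarrow(1,0)$ (since $h_k'+h_{k+1}'=1$ in both) guarantees that all $\mathsf{L}$-factors at rows $l\notin\{k,k+1\}$ and the quantities $i_l$ for $l>k+1$ match between the two sides, once one conditions on $(h_l)_{l\neq k,k+1}$ and $s=h_k+h_{k+1}$. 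I verified the resulting two-row identity by hand with the weights of Table \ref{weights table} (setting $a=s\xi u_1$, $u_2=qu_1$); for each of the cases $s=0,1,2$ (equivalently $i_{k+1}-i_{k-1}\in\{-1,0,1\}$) the ratio comes out exactly $q$, with the $s=1$ case being the only one that actually consumes the hypothesis $P(\dots,0,1,\dots)=qP(\dots,1,0,\dots)$. The framing via the Yang--Baxter equation at $u_2/u_1=q$, where the horizontal-space $\mathcal R$-matrix degenerates, is a fair conceptual gloss and indeed is the same degeneration that drives the fusion machinery in Section \ref{subsection fused dynamics}, but for the proof itself the direct three-case check that you describe is already complete and does not need the operator-level YBE. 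One small caveat worth stating explicitly: when $i_{k-1}=0$ the case $s=0$ would call for $i_{k+1}=-1$; both sides of the two-row identity then vanish identically because the corresponding $\mathsf{L}$-weights are zero, so the swap relation $0=q\cdot0$ holds trivially.
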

A way to rephrase result of Proposition \ref{prop BP q exch} is to say that, in expression \eqref{q exchang P'}, assuming $|\mathbf{h}'|=j'$ and assuming that the probability distribution $P$ has the form \eqref{q exchang}, then we can write
\begin{equation} \label{ p tilde L^J}
P'(\mathbf{h}') = \tilde{P}(i' + j' - i) \mathsf{L}^{(J)}_{\xi_x u, s_x}(i, i'+j'-i | \ i', j') \frac{q^{\sum_{k=1}^J h_k'(k-1)}}{Z_J(j')}
\end{equation}
where the exact form of the probability weight $\mathsf{L}^{(J)}_{\xi_x u, s_x}$ can be also computed and it is given below in \eqref{fused weights L}. This can be easily exploited to collapse together $J$ different rows of vertices. Assume that in the leftmost column paths enter with $q$-exchangeable distribution $P$ (see Figure \ref{figure collapse J}), then
\begin{equation} \label{fusion J rows}
\begin{split}
\sum_{ \substack{ \mathbf{h} \in \{0,1\}^J \\ |\mathbf{h}|=j_0 } } \sum_{ \nu_1 , \nu_2, \dots, \nu_{J-1} } & P(\mathbf{h}) \mathfrak{X}_u (h_1 ; \lambda \to \nu_1) \cdots \mathfrak{X}_{q^{J-1}u} (h_J ; \nu_{J-1} \to \lambda')\\
&= \tilde{P}(j_0) \prod_{x \geq 1} \mathsf{L}^{(J)}_{\xi_x u, s_x}(m_x, j_{x-1}| \ m_x', j_x),
\end{split}
\end{equation}
where $\lambda = 1^{m_1} 2^{m_2} \cdots$ and $\lambda'=1^{m_1'} 2^{m_2'} \cdots $ are symbols indicating configurations entering and exiting respectively the bottom and the top row. In analogy with expression \eqref{stochastic matrix X}, we formally define the fused transfer operator
\begin{equation} \label{fused transfer matrix}
\mathfrak{X}^{(J)}_{u_{t+1}} ( j_0^{t+1}; \lambda(t) \to \lambda(t+1) ) = \mathbb{P}(\mathsf{j}^{t+1}_0 = j_0^{t+1}) \prod_{x \geq 1} \mathsf{L}^{(J)}_{\xi_x u_{t+1},s_x} (m_x^t, j_{x-1}^{t+1} | \ m_x^{t+1} , j_x^{t+1}),
\end{equation}
where again, numbers $j_x^{t+1}$ satisfy the conservation law
\begin{equation} \label{conservation law J}
m_x^t + j_{x-1}^{t+1} = m_x^{t+1} + j_{x}^{t+1}.
\end{equation}
Here, with a little abuse of notation, we assume that $\mathsf{j}_0^{t+1}$ is a random variable taking values in the set $\{0,1, \dots, J\}$. It is clear that, thanks to \eqref{conservation law J}, once we specify $\{ j_0^t \}_{t\geq 1}$ and $\{m_x^t\}_{x,t \geq 1}$, we automatically obtain quantities $\{j_{x}^t\}_{x,t \geq 1}$, which we interpret as occupancy numbers of collapsed horizontal edges. In this way, definition of random variables $\mathsf{m}_x^t, \mathsf{j}_x^t$ given in \eqref{m rv}, \eqref{j rv} has been extended to include the case where multiple paths can share horizontal edges.

The closed expression of weights $\mathsf{L}^{(J)}$
(\cite{BorodinPetrov2016HS6VM}, Formula 5.6 or \cite{CorwinPetrov2016HSVM}, Theorem 3.15) is considerably more involved than that of the $J=1$ case presented in Table \ref{weights table} and it is
\begin{equation}\label{fused weights L}
\begin{split}
\mathsf{L}^{(J)}_{u, s}(i_1, j_1| \ i_2, j_2)=& \mathbbm{1}_{i_1+j_1=i_2+j_2}\frac{(-1)^{i_1}q^{\frac{1}{2}i_1(i_1+2j_1-1)}u^{i_1}s^{j_1+j_2-i_2}(us^{-1};q)_{j_2-i_1}}{(q;q)_{i_2}(su;q)_{i_2+j_2}(q^{J+1-j_1};q)_{j_1-j_2}}\\
& \qquad \qquad \times \setlength\arraycolsep{1pt}
{}_4 \bar{\phi}_3\left(\begin{matrix}q^{-i_2},& q^{-i_1}, & suq^{J}, & qs/u &\\&s^2,&q^{1+j_2-i_1},
&q^{J+1-i_2-j_2} &\end{matrix} \Big| q,\ q\right).
\end{split}
\end{equation}
Here the function $
{}_4 \bar{\phi}_3$ is a particular instance of the regularized $q$-hypergeometric series defined in Appendix \ref{appendix qfunctions}, \eqref{regularized q hypergeometric series}.

\begin{figure}[ht]
\centering

\includegraphics{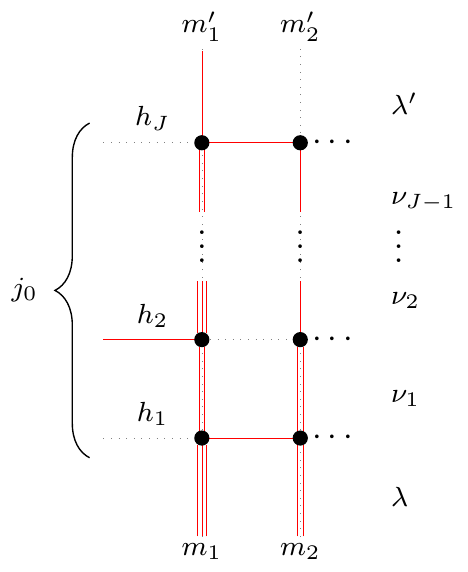}

\caption{\small{A schematic representation of the fusion of $J$ rows reported in \eqref{fusion J rows}. Here from the leftmost $J$ vertices the total number of entering paths is $j_0=h_1 + h_2+ \cdots + h_J$. Symbols $\lambda, \lambda'$ indicate the initial and final configurations, whereas intermediate ones are described by symbols $\nu_1, \dots, \nu_{J-1}$.}} \label{figure collapse J}
\end{figure}

In expression \eqref{fused weights L}, we notice the rational dependence of $\mathsf{L}^{(J)}_{u, s}$ on $q^J$, so that one can provide an analytic continuation in this parameter. Substituting $q^J$ with a generic complex number not belonging to the set  $q^{\mathbb{Z}}$ , we see that the fused weights are well defined for each choice of $j_1, j_2$, so that paths of the Higher Spin Six Vertex Model no more undergo any limitation as far as number of horizontal edges they can simultaneously cross. 

As explained in the last paragraph of subsection \ref{subsection one line dynamical picture}, we can decouple boundary random variables $\{\mathsf{j}_0^t\}_t$ and $\{ \lambda(t) \}_t$ by setting $\mathsf{m}_1^0=\infty$ a.s.. This is still true after the fusion of rows procedure and what we obtain is a fused version of the step Bernoulli boundary conditions for the Higher Spin Six Vertex Model on the restricted lattice $\Lambda_{1,0}$. In this case, to express the probability distribution of random variables $\mathsf{j}_1^t$ we need the following 

\begin{Prop} \label{prop sum bernoulli geom prog}
Consider $Y_1, \dots Y_J$ independent Bernoulli random variables respectively of mean $p/(1+p), qp/(1+qp), \dots , q^{J-1}p/(1 + q^{J-1}p)$, with $p \in \mathbb{R}_{>0}$. Then, defining $X_J=Y_1 + \cdots  + Y_J$ we have $X_J \sim q$\emph{NB}$(q^{-J}, -q^{J}p)$.
\end{Prop}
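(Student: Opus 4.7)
The plan is to compute the probability mass function of $X_J$ directly from independence and then match the result against the $q$-negative binomial density \eqref{q negative binomial} with parameters $(b, p') = (q^{-J}, -q^J p)$. No deep ingredients are needed: everything reduces to the classical $q$-binomial theorem together with the standard identity expressing $(q^{-J};q)_n$ in terms of $(q;q)_J/(q;q)_{J-n}$.

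First I would write
\begin{equation*}
\mathbb{P}(X_J = n) = \frac{1}{\prod_{k=1}^J (1 + q^{k-1} p)} \sum_{\substack{\mathbf{\epsilon}\in \{0,1\}^J \\ |\mathbf{\epsilon}|=n}} p^n \, q^{\sum_{k=1}^J \epsilon_k (k-1)},
\end{equation*}
by expanding each Bernoulli weight $(q^{k-1}p)^{\epsilon_k}/(1+q^{k-1}p)$ and separating out the common denominator. The inner sum is exactly the coefficient of $p^n$ in the product $\prod_{k=0}^{J-1}(1+q^k p)$, so applying the Gauss--Euler identity
\begin{equation*}
\prod_{k=0}^{J-1}(1 + q^k x) = \sum_{n=0}^J q^{\binom{n}{2}} \binom{J}{n}_{\!q} x^n, \qquad \binom{J}{n}_{\!q} = \frac{(q;q)_J}{(q;q)_n (q;q)_{J-n}},
\end{equation*}
yields the compact expression
\begin{equation*}
\mathbb{P}(X_J = n) = \frac{p^n\, q^{\binom{n}{2}}}{\prod_{k=0}^{J-1}(1+q^k p)} \binom{J}{n}_{\!q}.
\end{equation*}

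Next I would manipulate the candidate $q$-negative binomial density. Using $1 - q^{-m} = -q^{-m}(1 - q^m)$ one gets
\begin{equation*}
(q^{-J};q)_n = (-1)^n q^{-nJ + \binom{n}{2}} \frac{(q;q)_J}{(q;q)_{J-n}},
\end{equation*}
so that $(-q^J p)^n (q^{-J};q)_n/(q;q)_n = p^n q^{\binom{n}{2}} \binom{J}{n}_q$. The infinite product ratio telescopes as
\begin{equation*}
\frac{(-q^J p;q)_\infty}{(-p;q)_\infty} = \prod_{k=0}^{J-1} \frac{1}{1+q^k p},
\end{equation*}
and multiplying these two pieces reproduces exactly the formula for $\mathbb{P}(X_J=n)$ derived above. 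This matches \eqref{q negative binomial} with $b=q^{-J}$ and $p' = -q^J p$, proving the claim.

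There is no real obstacle here; the only care needed is the sign bookkeeping when rewriting $(q^{-J};q)_n$, which is where an arithmetic error could most easily slip in.
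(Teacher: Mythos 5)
Your proof is correct and follows essentially the same route as the paper: compute the mass function of $X_J$ by expanding the Bernoulli weights, use the finite $q$-binomial ($q$-Gauss/Euler) expansion of $\prod_{k=0}^{J-1}(1+q^k p)$, and match against the $q$-negative binomial density after rewriting $(q^{-J};q)_n$. The paper folds the last two steps into a single comparison of two expansions of $(z;q)_J$, but this is just a cosmetic reorganization of the same calculation.
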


\begin{proof}
For any $k$ in the set $\{0,1, \dots, ,J\}$, we have
\begin{equation} \label{prob sum bernoulli}
    \mathbb{P}(X_J =k) = \frac{p^k}{(-p;q)_J
    }  \sum_{\substack{ \mathbf{\emph{h}} \in\{0,1\}^J:\\ |\mathbf{\emph{h}}|=k}} q^{\sum_{i=1}^J(i-1)h_i}.
\end{equation}
The sum involving powers of $q$ in the right hand side of \eqref{prob sum bernoulli} is easily expressed as 
\begin{equation} \label{sum power q}
    \sum_{\substack{ \mathbf{\emph{h}} \in\{0,1\}^J:\\ |\mathbf{\emph{h}}|=k}} q^{\sum_{i=1}^J(i-1)h_i} = (-1)^k q^{Jk} \frac{(q^{-J};q)_k}{(q;q)_k},
\end{equation}
as a result of the two different notable expansions for the $q$-Pochhammer symbol
\begin{equation*}
    (z;q)_J = \sum_{k=0}^J z^k (-1)^k \sum_{\substack{ \mathbf{\emph{h}} \in\{0,1\}^J:\\ |\mathbf{\emph{h}}|=k}} q^{\sum_{i=1}^J(i-1)h_i} \qquad \text{and} \qquad (z;q)_J = \sum_{k=0}^J (z q^J)^k \frac{(q^{-J};q)_k}{(q;q)_k}.
\end{equation*}
Combining \eqref{prob sum bernoulli} and \eqref{sum power q} we conclude the proof.
\end{proof}

As we already made clear, the fusion of rows procedure consists in taking the Higher Spin Six Vertex Model as defined in subsections \ref{subsection directed paths picture}, \ref{subsection one line dynamical picture}, specializing spectral parameters in geometric progressions of ratio $q$ and tracing out over configurations of paths sharing the same number of occupied horizontal edges at a each column of vertices. When we do so, after the choice $\mathsf{m}_1^0 = \infty$ a.s., recalling \eqref{j bernoulli} and utilizing result of Proposition \ref{prop sum bernoulli geom prog}, we obtain
\begin{equation} \label{j fused bernoulli}
\mathbb{P}( \mathsf{j}_1^t = k) = (q^J s_1 \xi_1 u_t)^k \frac{(q^{-J};q)_k}{(q;q)_k} \frac{(q^J \xi_1 s_1 u_t;q)_\infty}{(\xi_1 s_1 u_t; q)_\infty}.
\end{equation}
Therefore we can conclude that in the Higher Spin Six Vertex Model, after a fusion of $J$ rows, when in the leftmost column of vertices flow infinitely many paths, $\mathsf{j}_1^1, \mathsf{j}_1^2, \mathsf{j}_1^3, \dots$, become mutually independent $q$-negative binomial random variables respectively with parameters $(q^{-J}, q^J s_1 \xi_1 u_t)$ for $t = 1, 2, \dots$. 
The special case when $\mathsf{m}_0^x =0$ a.s. for $x \geq 2$ generates the step $q$-negative binomial boundary conditions introduced in Section \ref{subsection directed paths picture}  (see Figure \ref{figure step Bernoulli}.b).

\begin{figure}[t]
\centering
\begin{minipage}{.45 \textwidth}
\includegraphics[scale=1.2]{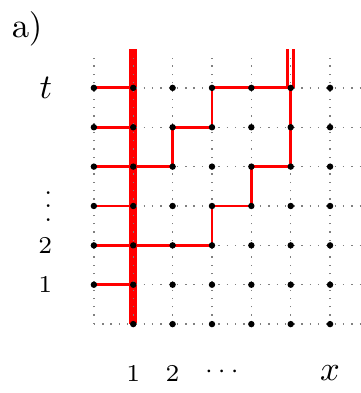}
\end{minipage}
\begin{minipage}{.45 \textwidth}
\includegraphics[scale=1.2]{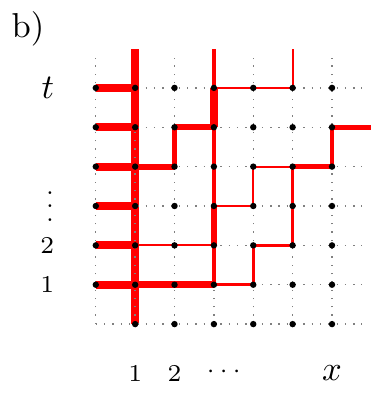}
\end{minipage}
\caption{\small{a) A possible configuration of paths in the Higher Spin Six Vertex Model on $\Lambda_{0,0}$ and $J=1$. Here boundary conditions are takes as $\mathsf{m}_1^0=\infty$ and $\mathsf{j}_0^t=1$ $\mathsf{m}_x^0=0$ a.s. for all $x,t>0$. Such choice produces the step Bernoulli boundary conditions in $\Lambda_{1,0}$.
\\
\qquad \qquad b) A possible configuration of paths in the Higher Spin Six Vertex Model on $\Lambda_{0,0}$ and $J>1$. The thickness of red traits indicates multiple occupations at edges. Here boundary conditions are takes as $\mathsf{m}_1^0=\infty$ and $\mathsf{j}_0^t=J$ $\mathsf{m}_x^0=0$ a.s. for all $x,t>0$. Such choice produces the step $q$-negative binomial boundary conditions in $\Lambda_{1,0}$.
}
}
\label{figure step Bernoulli}
\end{figure}

\vline

We close this subsection remarking that the history of expression \eqref{fused weights L} is actually longer than how it might seem from reading this brief overview of results. More complicated expressions for a quantity analogous to the transition matrix $\mathsf{L}^{(J)}$ had been known in the context of quantum integrable systems for almost three decades since \cite{KirillovReshetkin1987XXZ1}. Relatively compact expressions such as that presented in \eqref{fused weights L} became available only in more recent times after the work \cite{Mangazeev2014}. A detailed probabilistic derivation of the stochastic weight $\mathsf{L}^{(J)}$ can be found in \cite{CorwinPetrov2016HSVM}.

\subsection{Observables in the Higher Spin Six Vertex Model with step boundary conditions} \label{subsection observables HS6VM}
After discussing the fusion procedure in subsection \ref{subsection fused dynamics} we come back to the unfused model, with $J = 1$, defined in the lattice $\Lambda_{0,0}$. Recall that with step boundary conditions \eqref{step boundary bc}, no path enters from the $x$ axis and each vertex on the $t$ axis has a path entering to its left. When this is the case $\lambda(t)$ is at each level associated with a signature in the set
\begin{equation*}
\text{Sign}_t^{>0} = \{ \nu = (\nu_1 \geq \nu_2 \geq \cdots \geq \nu_t > 0) |\ \nu_i \in \mathbb{Z}_{>0} \}.
\end{equation*}
A relevant quantity for which we possess exact formulas is given in the following

\begin{Def} Consider the Higher Spin Six Vertex Model on $\Lambda_{0,0}$ with step boundary conditions. We refer to the quantity
\begin{equation} \label{height definition}
\mathfrak{h}(x,t)=\sum_{y \geq x}\mathsf{m}_y^t=\#\{j|\ \lambda_j(t) \geq x \}
\end{equation}
as the \emph{height} function $\mathfrak{h}$ at the vertex $(x,t)$ (for an example see Figure \ref{fig: height step}).
\end{Def}

\begin{figure}[ht]
\centering
\includegraphics[scale=1.2]{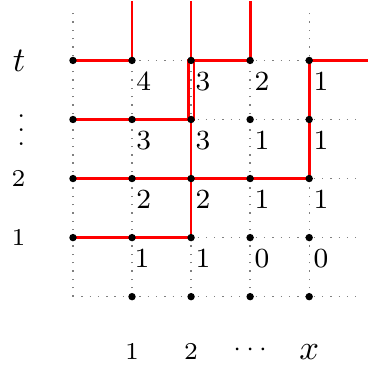}
\caption{\small{A possible set of up right paths in the Higher Spin Six Vertex Model with step boundary conditions. The numbers reported next to each vertex are the values of the height function $\mathfrak{h}$ defined in \eqref{height definition}.}}
\label{fig: height step}
\end{figure}

The two different height functions $\mathfrak{h}$ and $\mathcal{H}$, defined respectively in \eqref{height definition} and \eqref{height correct} are clearly related quantities. In particular they are connected by the trivial relation
\begin{equation*}
\mathcal{H}(x,t) = \mathfrak{h}(x+1,t).
\end{equation*} 
We like to keep their notation distinct as $\mathfrak{h}$ only refers to a model with step boundary conditions, while the definition of $\mathcal{H}$ also makes sense when paths emanate randomly from the horizontal axis.

In \cite{CorwinPetrov2016HSVM} the height function is thought of as the position of a specific particle evolving in a certain totally asymmetric exclusion process, or as the current of a totally asymmetric zero range process. Authors derive a closed expression for the multi point $q$-moments in case of step initial conditions (for the exclusion process), exploiting Markov self duality of the Higher Spin Six Vertex Model. The same result is achieved in \cite{BorodinPetrov2016HS6VM} following a rather algebraic approach.

What follows is a nested contour integral expression for the single point $q$-moments of the height function of the Higher Spin Six Vertex Model with step boundary conditions.

\begin{Prop}[\cite{BorodinPetrov2016HS6VM}, Proposition 9.5] \label{height q moments} Assume conditions\footnote{In \cite{BorodinPetrov2016HS6VM} authors consider parameters $s_i, u_j$ which have opposite sign compared to our choice \eqref{HS6VM parameters}. This is just a convention and hence not a problem, as the stochastic weights $\mathsf{L}$ depend on $s_iu_j$ and $s_i^2$.}
\eqref{HS6VM parameters}.
Let $u_i \neq q u_j$ for all $i,j$ and let products $\xi_i s_i$ be strictly positive for all $i$. Then for all $l \in \mathbb{Z}_{\geq 0}$ and $x \in  \mathbb{Z}_{\geq 0}$ we have
\begin{equation}\label{HS moments}
\begin{split}
\mathbb{E}\left( q^{l\mathfrak{h}(x+1,t)} \right)=\frac{q^{\frac{l(l-1)}{2}}}{(2\pi i)^l} \oint_{\overline{\gamma_1}[\overline{\mathbf{U}}|1]} \cdots \oint_{\overline{\gamma_l}[\overline{\mathbf{U}}|l]} \prod_{1\leq A < B \leq l} \frac{z_A - z_B}{z_A - qz_B}\\
\prod_{i=1}^l\left( \prod_{j=1}^{x}\frac{\xi_j - s_jz_i}{\xi_j - s_j^{-1}z_i} \prod_{j=1}^t\frac{1-qu_jz_i}{1-u_jz_i} \frac{dz_i}{z_i}\right)
\end{split}
\end{equation}
where the variable $z_i$ is integrated along the path $\overline{\gamma_i}[\overline{\mathbf{U}}|i]=\gamma_i[\overline{\mathbf{U}}] \cup r^iC_0$ where $r>q^{-1}$, $C_0$ is a small contour around 0 and $\overline{\gamma_i}[\overline{\mathbf{U}}]$ encircles the set $\{ u_1^{-1}, \dots, u_t^{-1} \}$ , $q^{-1}\overline{\gamma_{i-1}}[\overline{\mathbf{U}}]$ and no other singularities. Moreover $q\overline{\gamma_i}[\overline{\mathbf{U}}]$ doesn't intersect any $r^jC_0$ and $r^lC_0$ doesn't contain any point of the set $\{\xi_j s_j \}_j$ (Figure \ref{intgration contours HS moments}).
\end{Prop}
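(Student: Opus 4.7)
The plan is to pass through Markov self-duality of the Higher Spin Six Vertex Model, as established in \cite{CorwinPetrov2016HSVM}. The self-duality exchanges the observable $q^{l\mathfrak{h}(x+1,t)}$ with an observable of an auxiliary $l$-particle process, and under step boundary conditions the problem reduces to a matrix element
$$\mathbb{E}\bigl(q^{l\mathfrak{h}(x+1,t)}\bigr)=\bigl(\mathcal{M}^{t}\Phi_{x+1}\bigr)(x+1,\dots,x+1),$$
where $\mathcal{M}$ is a discrete evolution operator on weakly ordered $l$-tuples of non-negative integers and $\Phi_{x+1}(\vec n)=\prod_i\mathbf{1}_{n_i\geq x+1}$ encodes the trivial initial data on the dual side.

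First I would diagonalise $\mathcal{M}$ by a nested Bethe ansatz. Its single-particle eigenvalue must equal
$$\varepsilon(z)=\prod_{j=1}^{t}\frac{1-qu_j z}{1-u_j z}\prod_{j=1}^{x}\frac{\xi_j-s_j z}{\xi_j-s_j^{-1}z},$$
since this is precisely the spectral weight accumulated by a single particle traversing the $x\times t$ rectangle, and the $l$-particle eigenfunctions take the coordinate Bethe form
$$\Psi_{\vec z}(\vec n)=\sum_{\sigma\in S_l}\prod_{A<B}\frac{z_{\sigma(A)}-q z_{\sigma(B)}}{z_{\sigma(A)}-z_{\sigma(B)}}\prod_{i}z_{\sigma(i)}^{n_i};$$
the Hecke-type scattering factor here is exactly the source of the kernel $(z_A-z_B)/(z_A-qz_B)$ appearing in \eqref{HS moments}. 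Next I would expand the delta-source at $(x+1,\dots,x+1)$ in this Bethe basis using a Plancherel-type inversion, apply the eigenvalue relation $t$ times to inject $\prod_i\varepsilon(z_i)$ into the integrand, and perform the geometric summation against $\Phi_{x+1}$; this produces the trailing $\prod dz_i/z_i$ together with the symmetrisation prefactor $q^{l(l-1)/2}$.

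The main technical obstacle is the justification of the contours $\overline{\gamma_i}[\overline{\mathbf{U}}|i]=\gamma_i[\overline{\mathbf{U}}]\cup r^{i}C_0$. Inversion of the Bethe transform is most easily performed on small circles $r^{i}C_0$ about the origin, and these are precisely what remain as the additive piece $r^{i}C_0$ of the contour. To surface the physical poles at $\{u_j^{-1}\}$ one then deforms outwards, crossing the exchange poles $z_A=qz_B$ en route; the hypothesis $u_i\neq qu_j$ is exactly what keeps physical and exchange poles disjoint, and the nesting $q^{-1}\overline{\gamma_{i-1}}\subset\overline{\gamma_i}$ is the combinatorial bookkeeping required so that the exchange residues accumulated during successive deformations cancel pairwise. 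Carrying out this inductive cancellation on $l$ is the delicate part of the argument.
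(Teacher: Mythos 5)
The paper does not prove this proposition. It is imported (modulo the sign conventions on $s_i,u_j$ noted in the footnote) from Borodin--Petrov's Proposition~9.5, and the surrounding text describes that proof as ``a rather algebraic approach'' built on the symmetric rational functions attached to the model and their Cauchy-type identities --- not on duality. So there is no proof in this paper to compare you against; what the paper does prove is the contour-deformed reformulation in Corollary~\ref{analytic continuation Borodin moments}.

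Your sketch follows the other standard route --- Corwin--Petrov's Markov self-duality followed by a Bethe ansatz diagonalisation --- which is a legitimate plan, but two of its steps are off as stated. First, the eigenvalue assignment is conflated: if $\mathcal{M}$ is the one-step dual transfer operator (so that applying $\mathcal{M}^{t}$ makes sense), its single-particle eigenvalue can only produce the temporal factor $\prod_{j=1}^t (1-qu_jz)/(1-u_jz)$; the spatial factor $\prod_{j=1}^x (\xi_j-s_jz)/(\xi_j-s_j^{-1}z)$ cannot sit inside the eigenvalue --- it arises from expanding the source $\Phi_{x+1}$ against the Bethe eigenfunctions (a Plancherel-type overlap), and writing it into $\varepsilon(z)$ double-counts it and misplaces the poles that drive the contour nesting. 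Second, the identity $\mathbb{E}(q^{l\mathfrak{h}(x+1,t)}) = (\mathcal{M}^{t}\Phi_{x+1})(x+1,\dots,x+1)$ is ambiguous as written: it is not clear which of $\Phi_{x+1}$ and the evaluation point $(x+1,\dots,x+1)$ is the dual initial state and which is the duality observable. You need a precise form of the duality relation before any spectral expansion can be set up. Your remarks on contour nesting and on the role of the hypothesis $u_i\neq qu_j$ are directionally right, but they sit downstream of these repairs.
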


\begin{figure}[ht]
\centering
\includegraphics{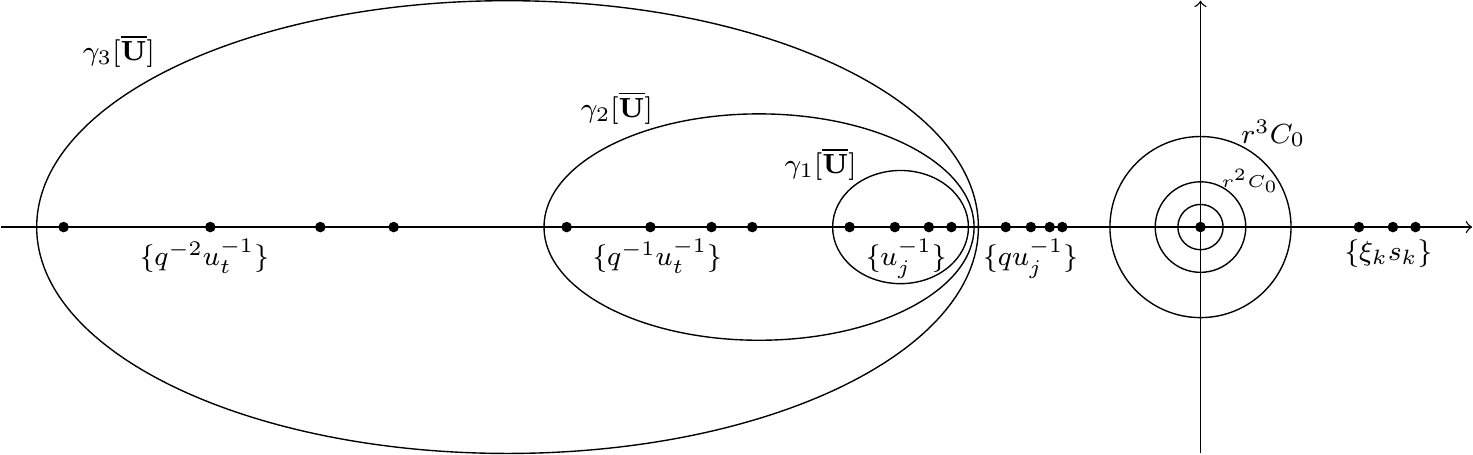}
\caption{\small{An example of nested integration contours for \eqref{HS moments} in the case $l=3$.}}
\end{figure} \label{intgration contours HS moments}

The next Corollary is a reformulation of Proposition \ref{height q moments} with a different choice of integration contours in \eqref{HS moments}.

\begin{Cor} \label{analytic continuation Borodin moments}
Assume conditions \eqref{HS6VM parameters} and let
\begin{equation*}
q \sup_{i} \{ \xi_i s_i \} < \inf_{i } \{ \xi_i s_i \}.
\end{equation*}
Then we have
\begin{equation}\label{moment HS Matteo}
\begin{split}
\mathbb{E} \left( q^{l\mathfrak{h}(x+1,n)} \right)=\frac{(-1)^lq^{\frac{l(l-1)}{2}}}{(2\pi \mathrm{i})^l} \oint_{\overline{C}[\Xi \mathbf{S}|1]} \cdots  \oint_{\overline{C}[\Xi \mathbf{S}|l]} \prod_{1\leq A < B \leq l} \frac{z_A - z_B}{z_A - qz_B}\\
\prod_{i=1}^l\Big( \prod_{j=1}^{x}\frac{\xi_js_j - s_j^2z_i}{\xi_js_j - z_i} \prod_{j=1}^n\frac{1-qu_jz_i}{1-u_jz_i} \frac{dz_i}{z_i}\Big)
\end{split}
\end{equation}
where the integration contour of $z_i$ is $\overline{C}[\Xi \mathbf{S}|i]$ and is the disjoint union of two curves $C[\Xi \mathbf{S}|i]$ and $r^{i-1}\partial D$. Here $C[\Xi \mathbf{S}|i]$ is a counterclockwise contour encircling the set $\{ \xi_i s_i \}_i$ and $qC[\Xi \mathbf{S}|i+1]$, but not 0 or any number $u_i^{-1}$. The disk $D$ is centered at 0 and it is sufficiently large to contain every $C[\Xi \mathbf{S}|i]$ and the set $\{ u_i^{-1}\}_i$.The coefficient $r$ is bigger than $q^{-1}$ and $r^{i-1}\partial D$ is clockwise oriented. A visualization of such contours is given in Figure \ref{integration contours + infinity figure}.
\end{Cor}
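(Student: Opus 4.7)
The plan is to deduce \eqref{moment HS Matteo} from \eqref{HS moments} by a contour deformation argument. First, an algebraic observation: the two integrands are literally the same, since $\frac{\xi_j - s_j z_i}{\xi_j - s_j^{-1} z_i} = \frac{\xi_j s_j - s_j^2 z_i}{\xi_j s_j - z_i}$ by multiplying numerator and denominator by $s_j$. Hence the corollary reduces to verifying the identity
\begin{equation*}
\oint_{\overline{\gamma_1}[\overline{\mathbf{U}}|1]} \cdots \oint_{\overline{\gamma_l}[\overline{\mathbf{U}}|l]} F \;=\; (-1)^l \oint_{\overline{C}[\Xi\mathbf{S}|1]} \cdots \oint_{\overline{C}[\Xi\mathbf{S}|l]} F
\end{equation*}
for the common integrand $F$. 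The added hypothesis $q\sup_i\{\xi_is_i\}<\inf_i\{\xi_is_i\}$ is exactly what is needed in order to be able to construct the nested family $C[\Xi\mathbf{S}|i]\supset qC[\Xi\mathbf{S}|i+1]$ around the points $\{\xi_js_j\}_j$ without crossing any of those poles.

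Next I would inspect the pole structure of $F$ as a function of each variable $z_i$, the others being held fixed. The finite poles lie at $z_i=0$, $z_i=u_k^{-1}$, $z_i=\xi_js_j$, and at the cross-term points $z_i=qz_B$ for $B>i$ and $z_i=q^{-1}z_A$ for $A<i$ arising from $\prod_{A<B}\frac{z_A-z_B}{z_A-qz_B}$. Each rational factor tends to a non-zero constant as $|z_i|\to \infty$, so $F$ behaves like $\mathrm{const}/z_i$ at infinity and the residue theorem on the Riemann sphere applies. The deformation is then carried out variable by variable: the contour $\overline{\gamma_i}[\overline{\mathbf{U}}|i]=\gamma_i\cup r^iC_0$, which encloses $\{u_k^{-1}\}$ and $0$ counterclockwise, is transformed into $\overline{C}[\Xi\mathbf{S}|i]=C[\Xi\mathbf{S}|i]\cup r^{i-1}\partial D$, whose counterclockwise piece catches the poles $\{\xi_js_j\}$ while its clockwise outer loop captures the contribution at infinity. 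Since the two systems enclose complementary poles on the Riemann sphere with opposite net orientation, the single-variable identity
\begin{equation*}
\oint_{\overline{C}[\Xi\mathbf{S}|i]}F\,dz_i \;=\; -\oint_{\overline{\gamma_i}[\overline{\mathbf{U}}|i]}F\,dz_i
\end{equation*}
is expected to hold at each step, and iterating through $i=1,\ldots,l$ produces the global factor $(-1)^l$. The equality for $l=1$ is essentially immediate from the residue theorem on $\mathbb{CP}^1$.

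The main obstacle, and the only delicate point, is to control the cross-term poles $z_i=qz_B$ and $z_i=q^{-1}z_A$, which could \emph{a priori} contribute spurious residues when the contour is moved. The two nesting conditions — $\overline{\gamma_i}\supset q^{-1}\overline{\gamma_{i-1}}$ for the original family and $C[\Xi\mathbf{S}|i]\supset qC[\Xi\mathbf{S}|i+1]$ for the new one — together with $r>q^{-1}$ on the auxiliary radii, are designed precisely so that these cross-term poles lie on the correct side of both contours throughout the deformation. A careful check based on the inclusions $q^{-1}\gamma_A\subset\gamma_i$ for $A<i$ and $qC[\Xi\mathbf{S}|B]\subset C[\Xi\mathbf{S}|i]$ for $B>i$, plus the estimate $qr^{B-i}>1$ comparing the auxiliary circles, shows that the deformation proceeds cleanly in each of the $l$ variables. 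Combined with the algebraic identity at the start, this delivers \eqref{moment HS Matteo}.
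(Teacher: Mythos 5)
Your proposal follows the same route as the paper: rewrite the integrand algebraically, then transfer each contour of Proposition~\ref{height q moments} to the corresponding $\overline{C}[\Xi\mathbf{S}|i]$ by the residue theorem on $\mathbb{CP}^1$, picking up one sign per variable and using the nesting conditions to keep the cross-term poles $z_i=qz_B$, $z_i=q^{-1}z_A$ on the right side at every step.

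One point that deserves to be made explicit is the \emph{order} of the one-variable deformations, which is not a matter of taste here. Your text says ``iterating through $i=1,\ldots,l$,'' but the inclusions you then invoke — $q^{-1}\gamma_A\subset\gamma_i$ for $A<i$ and $qC[\Xi\mathbf{S}|B]\subset C[\Xi\mathbf{S}|i]$ for $B>i$ — already presuppose that the variables with index $B>i$ live on the \emph{new} contours while those with index $A<i$ are still on the \emph{old} ones. That is precisely the state of affairs after deforming $z_l, z_{l-1},\ldots,z_{i+1}$, i.e.\ the paper's outer-to-inner order. If one literally begins with $z_1$, the cross-pole $z_1=qz_B$ with $z_B\in\overline{\gamma_B}$ (still undeformed) lies near $q\gamma_B$ or $qr^BC_0$, which is enclosed neither by $\overline{\gamma_1}$ (it sits outside, as you note) nor by $\overline{C}[\Xi\mathbf{S}|1]$ ($C[\Xi\mathbf{S}|1]$ encircles only $\{\xi_js_j\}$ and $qC[\Xi\mathbf{S}|2]$, while $q\gamma_B$ lies inside $D$, so the clockwise $\partial D$ also misses it); that residue would then be dropped. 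Going from $z_l$ down to $z_1$ avoids this: at step $i$ the only cross-poles created by already-moved variables are $qz_B$ with $z_B\in\overline{C}[\Xi\mathbf{S}|B]$, and these are captured by $C[\Xi\mathbf{S}|i]\supset qC[\Xi\mathbf{S}|B]$ and by $r^{i-1}\partial D$ versus $qr^{B-1}\partial D$ (using $qr>1$), while $q^{-1}z_A$ with $z_A\in\overline{\gamma_A}$ remains inside $\overline{\gamma_i}$ by the original nesting. With the order corrected to match the inclusions you already wrote, your argument is the paper's proof.
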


\begin{figure}[h!]
\centering
\includegraphics{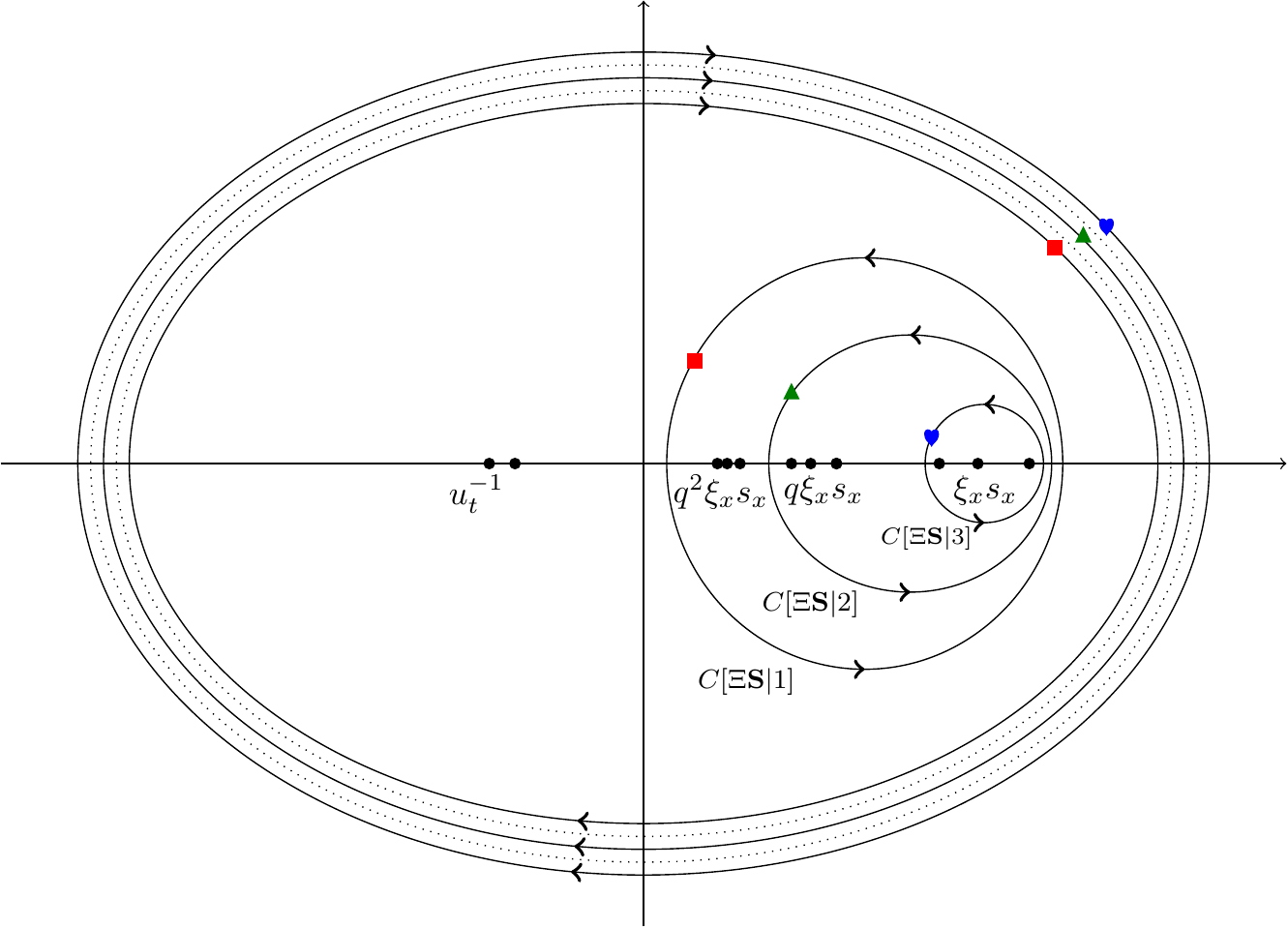}
\caption[]{\small An example of nested integration contours in \eqref{moment HS Matteo} for the case $l=3$. We see the integration contours for $z_1, z_2, z_3$ labeled respectively with \mysquare{red}, \mytriangle{green!50!black}, \marksymbol{heart}{blue}. Dotted lines between the external clockwise oriented contours $\partial D, r \partial D, r^2 \partial D$ are the shifted paths $q^{-1} \partial D, q^{-2} \partial D$. } \label{integration contours + infinity figure}
\end{figure} 

\begin{proof}
We use inductively the residue theorem to turn the integrations around contours $\overline{\gamma_i}[\overline{\mathbf{U}}|i]$ in \eqref{HS moments} into integrations around $\overline{C}[\Xi\mathbf{S}|i]$. Let's start with the most external contour $\overline{\gamma_l}[\overline{\mathbf{U}}|l]$. We see that the integrand in the rhs of \eqref{HS moments} has in the variable $z_l$ poles at
$$
q^{-1}z_1, \dots, q^{-1}z_{l-1}, \xi_1 s_1, \dots, \xi_x s_x, u_1^{-1}, \dots, u_t^{-1}, 0, \infty,
$$
so that since $\overline{\gamma_l}[\overline{\mathbf{U}}|l]$ only leaves outside $\xi_1 s_1, \dots, \xi_x s_x$ and $\infty$ and we have 
$$
\oint_{\overline{\gamma_l}[\overline{\mathbf{U}}|l]}=-\oint_{\overline{C}[\Xi\mathbf{S}|l]}.
$$
We move now to the integration in $z_{l-1}$. Here poles are at
$$
q z_l,q^{-1}z_1, \dots, q^{-1}z_{l-2}, \xi_1 s_1, \dots, \xi_x s_x, u_1, \dots, u_t, 0, \infty
$$
and $\overline{\gamma_{l-1}}[\overline{\mathbf{U}}|l-1]$ leaves out $\infty, \xi_1 s_1, \dots, \xi_x s_x$ and the shifted contours $qC[\Xi\mathbf{S}|l], qr^{l-1}\partial D$ where $q z_l$ lies. Since $C[\Xi \mathbf{S}|l-1]$ contains $\xi_1 s_1, \dots, \xi_x s_x,$ $qC[\Xi\mathbf{S}|l]$ and no other pole  and $r^{l-2}\partial D$ encircles $qr^{l-1}\partial D$ and $\infty$ we also get
$$
\oint_{\overline{\gamma_{l-1}}[\overline{\mathbf{U}}|l-1]}=-\oint_{\overline{C}[\Xi\mathbf{S}|l-1]}.
$$
By repeating the same procedure for all the remaining integration variables we are done.
\end{proof}

Following a consolidated approach developed in \cite{bcs2014,BorodinCorwin2014Mac}, in the particular case of step initial conditions, by using the result of Proposition \ref{height q moments} or Corollary \ref{analytic continuation Borodin moments}, one should eventually be able to obtain a determinantal expression for the quantity
$$
\mathbb{E}_{} \left( \frac{1}{(\zeta q^{\mathfrak{h}(x+1,t)};q)_\infty} \right),
$$
which is known to be the $q$-Laplace transform of the probability mass function of $\mathfrak{h}(x+1,t)$. This type of result has proven to be fruitful (see \cite{FerrariVeto2015TWLimit}) when it comes to the study of asymptotics of $\mathfrak{h}(x+1,t)$ as $x$ and $t$ go to infinity.

We remark that results like those of Proposition 
\ref{height q moments} or Corollary \ref{analytic continuation Borodin moments} essentially hold only for step boundary conditions, which at the current state are the only boundary conditions exhibiting a nice enough underlying algebraic structure to derive exact formulas for observables.
It is nonetheless possible, after a suitable choice of parameters $\Xi, \mathbf{S}, \mathbf{U}$, to produce other initial conditions, as those considered in \cite{OrrPetrov2016} or \cite{Aggarwal2016FluctuationsASEP}. In particular we will use a similar approach to that developed in \cite{Aggarwal2016FluctuationsASEP} to produce and study the stationary model.

\section{\texorpdfstring{$q$-Whittaker processes}.} \label{section qwhittaker}
In this section we give a brief review on $q$-Whittaker processes and present main results which will be used in the remainder of the paper.
\subsection{\texorpdfstring{Macdonald Processes and $q$-Whittaker processes}.} \label{subsection Mac and qWhittaker}
The $q$-Whittaker processes and the $q$-Whittaker measure have been first introduced in \cite{BorodinCorwin2014Mac} as particular cases of the more general Macdonald processes. This is a family of measures on the Gelfand-Tsetlin\footnote{the notation $\mathbb{GT}^{\geq 0}$ refers to Gelfand-Tsetlin cones of partitions. One can define the same object with generic signatures instead of partitions. We refer to the two sided Gelfand-Tsetlin cone with the notation $\mathbb{GT}$ as in \eqref{two sided GT cone}.} cone $\mathbb{GT}^{\geq 0}_n$ (Figure \ref{GT cone}), the set of sequences of partitions of integers
$$
\emptyset \prec \lambda^{(1)} \prec \lambda^{(2)} \prec \dots \prec \lambda^{(n)},
$$
where every $\lambda^{(i)}$ is an element of 
$$
\text{Part}^{\geq 0}_i=\{ \mu = (\mu_1 \geq \cdots \geq \mu_i \geq 0) |\ \mu_j \in \mathbb{Z}_{\geq 0} \}.
$$
The interlacing relation between two partitions $\lambda$, $\mu$, denoted with  $\mu \prec \lambda$, means that 
$$
\lambda_{k+1} \leq \mu_k \leq \lambda_k \qquad \qquad \text{for each $k$}.
$$
The Macdonald process is defined as the measure
$$
\mathbb{M}\big( \emptyset \prec \lambda^{(1)} \prec \dots \prec \lambda^{(n)} \big) = \frac{1}{\Pi( \mathbf{a}, \rho)} \prod_{i=1}^n P_{\lambda^{(i)}/\lambda^{(i-1)}}(a_i;q,t) Q_{\lambda^{(n)}}(\rho;q,t),
$$
where $P$ and $Q$ are Macdonald functions (\cite{macdonald1998symmetric}, Chapter IV) and
$\Pi$ is a normalization constant and its value can be expressed through the known Cauchy sums for symmetric functions.
Here, $q,t$ are parameters in [0,1), $\mathbf{a}=(a_1, \dots a_n)$ denotes a set of numerical values at which Macdonald polynomials are evaluated, whereas $\rho$ can be a generic complex algebra homomorphism on the algebra of symmetric functions. In this sense the quantity $\Pi(\mathbf{a}, \rho)$
can be thought as the generating function of functions $Q_\lambda(\rho;q,t)$.
\begin{figure}[ht]
\centering
\includegraphics{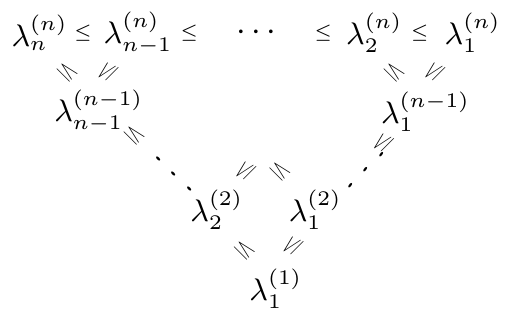}
	\caption{\small{ A triangular array in the Gelfand-Tsetlin cone $\mathbb{GT}^{\geq 0}_n$.}}\label{GT cone}
\end{figure}

The Macdonald measure is a particular case of the Macdonald process, obtained by projecting the measure $\mathbb{M}$ on the last partition $\lambda^{(n)}$. The branching rule
\begin{equation*}
\sum_{\varkappa} P_{\varkappa/\mu}(a_1, \dots, a_{k-1} ;q,t) P_{\lambda / \varkappa}(a_k;q,t) = P_{\lambda / \mu} (a_1, \dots, a_k;q,t),
\end{equation*}
of Macdonald functions allows us to write 
\begin{equation*}
\begin{split}
\mathbb{M}(\lambda) = \sum_{\lambda^{(1)}, \dots , \lambda^{(n-1)} } \mathbb{M} \big( \emptyset \prec \lambda^{(1)} \prec \dots \prec \lambda^{(n-1)} \prec \lambda \big) 
=\frac{1}{\Pi(\mathbf{a}, \rho)} P_{\lambda}(\mathbf{a};q,t) Q_{\lambda}(\rho;q,t).
\end{split}
\end{equation*}
The definition itself of the Macdonald functions depends on two parameters $q,t$ and letting these parameters vary one can obtain numerous other families of symmetric functions such as the Schur polynomials or the Hall-Littlewood functions. The $q$-Whittaker functions arise when we set $t=0$. An exact expression for the one variable skew $q$-Whittaker polynomials $P_{\lambda / \mu}$, which we denote dropping the explicit dependence on $q$, is
$$
P_{\lambda/\mu}(a)= \prod_{i=1}^N a^{\lambda_i} \prod_{i=1}^{N-1}a^{-\mu_i} \binom{\lambda_i-\lambda_{i+1}}{\lambda_i - \mu_i}_q,
$$
and the generic $q$-Whittaker polynomial is
$$
P_{\lambda}(\mathbf{a})=\sum_{\substack{\lambda^{(k)}_i:1\leq i\leq k\leq N-1 \\ \lambda^{(k+1)}_{i+1} \leq \lambda^{(k)}_i \leq \lambda^{(k+1)}_i}} \prod_{j=1}^N P_{\lambda^{(j)}/\lambda^{(j-1)}}(a_j),
$$
where $\lambda^{(n)}$ is meant to be $\lambda$. In case the Macdonald measure is considered with $t=0$ we use the notation
\begin{equation} \label{q whittaker measure}
    \mathbb{W}(\lambda) = \frac{1}{\Pi (\mathbf{a} , \rho)} P_\lambda (\mathbf{a}) Q_\lambda (\rho).
\end{equation}
Macdonald functions possess an important orthogonality property with respect to the so called \emph{torus scalar product}. For $t=0$, it is defined as the $n$-fold integral
$$
\langle f, g \rangle_n= \int_{\mathbb{T}^n}\prod_{j=1}^n\frac{dz_j}{z_j}f(\mathbf{z}^{-1})g(\mathbf{z})m_n^q(\mathbf{z}),
$$
where $\mathbb{T}=\{z\in \mathbb{C} : |z| = 1\}$ is the complex circle and
\begin{equation} \label{q-Sklyanin measure}
m_n^q(\mathbf{z})=\frac{1}{(2 \pi \mathrm{i})^n n!} \prod_{1 \leq i \neq j \leq n} (z_i/ z_j;q)_{\infty}
\end{equation}
is the $q$-Sklyanin weight. For the $q$-Whittaker polynomials this relation reads as
$$
\langle P_\lambda, P_\mu \rangle_n= c_{\lambda} \delta_{\lambda, \mu} 
$$
and $c_{\lambda}$ is some non zero constant. This property can be exploited to give an indirect definition of the dual $q$-Whittaker function $Q$ as
\begin{equation}\label{definition Q}
\begin{split}
Q_{\lambda}(\rho)&=\frac{1}{c_\lambda}\langle P_{\lambda}, \sum_{\mu} P_{\mu}Q_{\mu}(\rho) \rangle_n \\
&=\frac{1}{c_\lambda} \langle P_\lambda, \Pi(\bullet, \rho) \rangle_n.
\end{split}
\end{equation}
An additional remarkable structural feature of Macdonald polynomials is the shifting property,
$$
P_{\lambda+r^n}(\mathbf{x};q,t)= (x_1 \cdots x_n)^r P_\lambda(\mathbf{x};q,t),
$$
where $r$ is a non-negative integer and $\lambda + r^n = (\lambda_1 + r \geq \dots \geq \lambda_n +r )$. Combining this shifting invariance with \eqref{definition Q}, we are allowed to extend the notion of $q$-Whittaker functions $P_\lambda$ and $Q_\lambda$ to the set of signatures
$$
\text{Sign}_i = \{ \lambda = (\lambda_1 \geq \cdots \geq \lambda_i ) |\ \lambda_j \in \mathbb{Z} \}.
$$
This observation is used in \cite{q-TASEPtheta} to extend the definition of the $q$-Whittaker processes on the two sided Gelfand-Tsetlin cone 
\begin{equation} \label{two sided GT cone}
\mathbb{GT}_n = \{ \{\lambda^{(k)}_i\}_{i=1,\dots, k}^{k=1, \dots n} \in \mathbb{Z}^{\binom{n+1}{2}} |\ \lambda^{(k)}_{i+1} \leq \lambda^{(k-1)}_i \leq \lambda^{(k)}_i \}.
\end{equation}
\subsection{\texorpdfstring{$q$-moments of the corner coordinate}.}
Macdonald polynomials can be constructed as eigenfunctions of the Macdonald operator $D_n$ (\cite{macdonald1998symmetric}, Sect. VI.3-4), whose action on the generic symmetric function $F$ is
$$
D_n F(\mathbf{x}) = \sum_{j=1}^n \prod_{\substack{ i=1\\i \neq j}}^n \frac{x_i-t x_j}{x_i-x_j} F(x_1, \dots, x_{j-1}, qx_j, x_{j+1}, \dots x_n)
$$
and the eigenvalue relative to the function $P_{\lambda}$ is
$$
\sum_{j=1}^n q^{\lambda_j}t^{n-j}.
$$
As such eigenvalues are all distinct, the function $P_{\lambda}$ is uniquely identified imposing the condition
$$
P_{\lambda}(\mathbf{x};q,t) = m_\lambda (\mathbf{x}) + \sum_{\mu < \lambda} C_{\lambda, \mu} m_{\mu}(\mathbf{x}),
$$
where $m_\lambda$ is the monomial symmetric function, the summation in the right hand side is taken over all partitions $\mu < \lambda$ in the lexicographic order and $C_{\lambda, \mu}$ are constants. In the simple case where the Macdonald operator acts on a product function $F(\mathbf{x})=f(x_1) \cdots f(x_n)$, it can be written in the integral form
$$
\frac{D_n F(\mathbf{x})}{F(\mathbf{x})}=\frac{(t-1)^{-1}}{2 \pi \mathrm{i}} \oint \prod_{j=1}^n\frac{x_j-tz}{x_j-z}\frac{f(qz)}{f(z)} \frac{dz}{z},
$$
being the integration contour a path encircling $x_1, \dots, x_n$ and no other singularity. This last formula has been used in \cite{BorodinCorwin2014Mac} to determine the $q$-moments of the observable $\lambda^{(n)}_n$ in the $q$-Whittaker process. We report this result in the next
\begin{Prop}[\cite{BorodinCorwin2014Mac}, Proposition 3.1.5] \label{q-moments q-Whittaker Prop}
Let $\rho$ be the specialization of symmetric functions for which the normalization constant $\Pi$ of the $q$-Whittaker measure \eqref{q whittaker measure} assumes the form
\begin{equation*}
\Pi(\mathbf{a};\rho)= \prod_{i=1}^n e^{\tau a_i}\prod_{i\geq 1} \frac{(1+\beta_i a_j)}{(\alpha_i a_j;q)_\infty},
\end{equation*}
for some sets of non-negative real numbers $\tau, \{ \alpha_i \}_{i\geq 1}, \{ \beta_i \}_{i \geq 1}$ satisfying
$$
\sum_{i\geq 1} (\alpha_i + \beta_i) < \infty, \qquad \qquad \sup_{i,j} |\alpha_i a_j| < 1.
$$
Then, for any non-negative integer $k$ we have
\begin{equation} \label{q-moments q-Whittaker}
\begin{split}
\mathbb{E}_{\mathbb{W}} \left( q^{k \lambda^{(n)}_n} \right) =\frac{(-1)^k q^{\binom{k}{2}}}{(2 \pi \mathrm{i})^k} \oint_{C[\mathbf{a}|k]} \cdots \oint_{C[\mathbf{a}|1]} & \prod_{1\leq i < j \leq k} \frac{z_i - z_j}{z_i - q z_j} \prod_{j=1}^k \left( \prod_{m=1}^n \frac{a_m}{a_m -z_j} \right.\\
& \left. \qquad \times  \prod_{i \geq 1} (1-\alpha_i z_j) \frac{1+q \beta_i z_j}{1+\beta_i z_j} \frac{dz_j}{z_j} \right),
\end{split}
\end{equation}
where $C[\mathbf{a}|j]$ is the integration contour for the complex variable $z_j$ and contains $a_1, \dots, a_n$, each shifted contour $q C[\mathbf{a}|l]$ for $l>j$ and no other pole of the integrand.
\end{Prop}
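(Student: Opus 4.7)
The starting observation is that at $t=0$ the $q$-Whittaker polynomial $P_\lambda$ is an eigenfunction of the Macdonald operator $D_n$ with eigenvalue $\sum_{j=1}^n q^{\lambda_j} t^{n-j}\big|_{t=0} = q^{\lambda_n}$, since every $j<n$ term carries a positive power of $t$ and vanishes. Iterating,
\begin{equation*}
D_n^k P_\lambda(\mathbf{a}) = q^{k\lambda_n} P_\lambda(\mathbf{a}),
\end{equation*}
and using the explicit form \eqref{q whittaker measure} of $\mathbb{W}$ the $k$-th $q$-moment of the corner coordinate becomes
\begin{equation*}
\mathbb{E}_\mathbb{W}\bigl(q^{k\lambda_n^{(n)}}\bigr) = \frac{D_n^k \Pi(\mathbf{a};\rho)}{\Pi(\mathbf{a};\rho)}.
\end{equation*}
The whole task therefore reduces to writing this quotient as a $k$-fold contour integral.

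Under the prescribed specialization $\rho$, the normalization factorizes as $\Pi(\mathbf{a};\rho)=\prod_{j=1}^n f(a_j)$ with $f(a)=e^{\tau a}\prod_i (1+\beta_i a)/(\alpha_i a;q)_\infty$, so I would apply the integral form of $D_n$ recalled in Section \ref{subsection Mac and qWhittaker} at $t=0$. A direct computation using the telescoping of $q$-Pochhammers gives
\begin{equation*}
\frac{f(qz)}{f(z)} = e^{\tau(q-1)z}\prod_i (1-\alpha_i z)\,\frac{1+q\beta_i z}{1+\beta_i z},
\end{equation*}
and a single application yields
\begin{equation*}
\frac{D_n\Pi}{\Pi}(\mathbf{a}) = -\frac{1}{2\pi\mathrm{i}}\oint_{C[\mathbf{a}]}\prod_m\frac{a_m}{a_m-z}\,\frac{f(qz)}{f(z)}\,\frac{dz}{z},
\end{equation*}
which matches the $k=1$ case of \eqref{q-moments q-Whittaker}, the Plancherel factor being absorbed in the integrand of the general statement.

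For $k\ge 2$ I would proceed by induction, applying one further copy of $D_n$ at each step to the integral representation already obtained. The crucial mechanism is that after the previous $k-1$ integrations the integrand, viewed as a function of $\mathbf{a}$, is no longer a pure product but carries factors $\prod_m (a_m-z_l)^{-1}$ for $l=1,\dots,k-1$. When $D_n$ now acts by the shift $a_j\mapsto qa_j$, a residue / partial-fraction computation in the $a_m$ variables produces, for every pre-existing $z_l$, exactly the Cauchy-type kernel $(z_l-z_k)/(z_l-qz_k)$, while a factor $-q^{k-1}$ is picked up at each inductive step and accumulates into the global prefactor $(-1)^kq^{\binom{k}{2}}$. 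The main obstacle is the contour bookkeeping: one must arrange $C[\mathbf{a}|k]$ so that the new $z_k$-integration encloses the $a_m$'s but stays strictly inside $q^{-1}C[\mathbf{a}|l]$ for every $l<k$, ensuring that no spurious residues at $z_k=qz_l$ are collected; this forces precisely the descending nesting $C[\mathbf{a}|1]\supset qC[\mathbf{a}|2]\supset\cdots\supset q^{k-1}C[\mathbf{a}|k]$ prescribed in the proposition. The hypotheses $\sum_i(\alpha_i+\beta_i)<\infty$ and $\sup|\alpha_i a_j|<1$ provide the absolute convergence needed to exchange sums, integrals and the operator $D_n$ at each step, and substituting the explicit $f(qz)/f(z)$ into the inductive formula yields \eqref{q-moments q-Whittaker}.
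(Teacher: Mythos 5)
The paper states this result as a direct citation of Borodin--Corwin and offers no proof of its own, so there is nothing internal to compare against; but your route is indeed the one Borodin and Corwin take (iterate the first Macdonald difference operator at $t=0$), and the skeleton --- $D_n P_\lambda = q^{\lambda_n}P_\lambda$, hence $\mathbb{E}_{\mathbb{W}}(q^{k\lambda_n^{(n)}}) = D_n^k\Pi/\Pi$, then the integral form of $D_n$ applied $k$ times --- is exactly right.

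However, your description of the inductive step misidentifies the mechanism. You claim that after $k-1$ integrations the function of $\mathbf{a}$ ``is no longer a pure product'' and that the Cauchy kernel is produced by ``a residue / partial-fraction computation in the $a_m$ variables.'' Neither is accurate. After the first $k-1$ steps the $\mathbf{a}$-dependence of $D_n^{k-1}\Pi$ is
\begin{equation*}
\prod_{m=1}^n \tilde f(a_m), \qquad \tilde f(a) := f(a)\prod_{l=1}^{k-1}\frac{a}{a-z_l},
\end{equation*}
which \emph{is} still a pure product over $m$, so the $t=0$ integral form of $D_n$ applies verbatim with $\tilde f$ in place of $f$. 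No residue analysis in the $a_m$'s is required; what is required is simply to compute the ratio
\begin{equation*}
\frac{\tilde f(qz_k)}{\tilde f(z_k)} = \frac{f(qz_k)}{f(z_k)}\prod_{l<k}\frac{q(z_k - z_l)}{qz_k - z_l} = q^{k-1}\,\frac{f(qz_k)}{f(z_k)}\prod_{l<k}\frac{z_l - z_k}{z_l - qz_k},
\end{equation*}
which simultaneously yields the new Cauchy factors and the step-$k$ coefficient $-q^{k-1}$; accumulating over $k$ steps gives $(-1)^kq^{\binom{k}{2}}$ and the full kernel $\prod_{i<j}(z_i-z_j)/(z_i-qz_j)$. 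Your contour-nesting discussion is the right way to justify that only the poles at $z_k=a_m$ contribute (the nesting keeps $z_k$ away from the new poles at $z_l/q$). Finally, your computation correctly produces the Plancherel factor $e^{\tau(q-1)z_j}$ in each integrand, yet \eqref{q-moments q-Whittaker} as reproduced in the paper omits it; rather than claim it is ``absorbed,'' you should flag this as a misprint (the factor does appear in \cite{BorodinCorwin2014Mac}, Proposition 3.1.5) or restrict to $\tau=0$, which is in any case all that Proposition \ref{matching proposition} uses.
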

\begin{figure}
\centering
\includegraphics{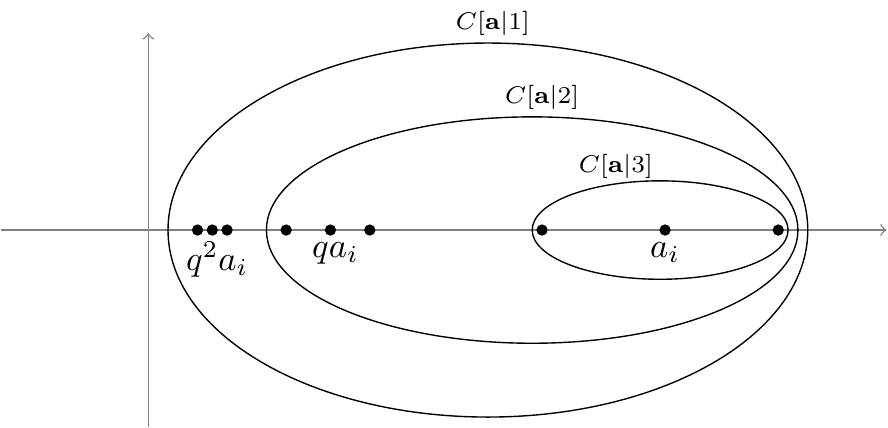}
\caption{\small{A possible choice of integration contrours for \eqref{q-moments q-Whittaker} in the case $k=3$.}}
\end{figure}
Naturally, under some suitable assumption on the growth of the $q$-moments, they completely determine the distribution of $\lambda^{(n)}_n$ as they are generated by the $q$-Laplace transform (see \eqref{q-Laplace Appendix})
$$
 \mathbb{E}_{\mathbb{W}} \left( \frac{1}{(\zeta q^{\lambda^{(n)}_n};q)_\infty} \right).
$$
This is the case for expressions \eqref{q-moments q-Whittaker}, stated only for step initial conditions, where all $q^{k \lambda_n^{(n)}}$ are positive quantities bounded above by 1. 

So far we presented the exact expression of the $q$-moments of two different observables of two different stochastic processes. Despite the difference between the Higher Spin Six Vertex Model and the $q$-Whittaker processes, expression of the $q$-moments respectively of the height function $\mathfrak{h}$ and of the corner coordinate $\lambda^{(n)}_n$, given in \eqref{moment HS Matteo} and \eqref{q-moments q-Whittaker} indeed show similarities. This fact was noticed before in \cite{OrrPetrov2016}, where authors described the correspondence between these two models under step Bernoulli boundary conditions for the Higher Spin Six Vertex Model. The matching reported in the following Proposition traces that of Theorem 4.11 of \cite{OrrPetrov2016}.
\begin{Prop} \label{matching proposition}
Set $\Xi, \mathbf{S}, \mathbf{U}$ to be as in Corollary \ref{analytic continuation Borodin moments}. Moreover let the following bounds
\begin{equation} \label{bounds q-Whittaker}
\sup_{i} \{ \xi_is_i \}<1,\qquad \sup_i \{ s_i / \xi_i \}<1, \qquad \qquad \sup_{i,j}|\xi_i s_i u_j|<1,
\end{equation}
hold. Then, we have
\begin{equation}\label{matching}
\mathbb{E}_{\text{\emph{HS}}} \left(q^{l\mathfrak{h}(x+1,n)}\right)=  \mathbb{E}_{\mathbb{W}_{\mathbf{\Xi}, \mathbf{S}, \mathbf{U}}}\left( \left( q^{\lambda_x} + q^n\prod_{j=1}^x s_j^2 \right)^l \right),
\end{equation}
where the $q$-Whittaker measure in the right hand side is given by
\begin{equation} \label{q-Whittaker}
\mathbb{W_{\mathbf{\Xi}, \mathbf{S}, \mathbf{U}}}(\lambda)=\prod_{i=1}^x\frac{\prod_{j=1}^x(s_is_j\xi_i / \xi_j;q)_{\infty}}{\prod_{j=1}^n(1-u_j\xi_is_i)}  P_{\lambda}(\xi_1 s_1, \dots, \xi_x s_x)Q_{\lambda}(\underbrace{s_1/\xi_1, \dots , s_x/ \xi_x^{-1}}_{\alpha-\text{specializations}},\underbrace{-u_1, \dots, -u_n}_{\beta-\text{specializations}}).
\end{equation}
\end{Prop}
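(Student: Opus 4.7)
The plan is to establish \eqref{matching} by comparing the nested contour integral representations of both sides and identifying their discrepancy as a residue-at-infinity contribution that produces the $q^n\prod_j s_j^2$ correction.

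I would first apply Corollary \ref{analytic continuation Borodin moments} to express the left-hand side as an $l$-fold nested contour integral over the doubled contours $\overline{C}[\Xi\mathbf{S}|i] = C[\Xi\mathbf{S}|i] \cup r^{i-1}\partial D$. On the other side, I would invoke Proposition \ref{q-moments q-Whittaker Prop} applied to the $q$-Whittaker measure \eqref{q-Whittaker}, under the identification $a_m = \xi_m s_m$ with $\alpha$-specializations $(s_m/\xi_m)_{m=1}^x$ and $\beta$-specializations $(-u_j)_{j=1}^n$. A direct algebraic check, using the identities $\xi_m s_m(1 - (s_m/\xi_m) z) = \xi_m s_m - s_m^2 z$ and $(1+q(-u_j)z)/(1+(-u_j)z) = (1-qu_j z)/(1-u_j z)$, verifies that the single-variable integrand of Proposition \ref{q-moments q-Whittaker Prop} coincides precisely with that of Corollary \ref{analytic continuation Borodin moments}.

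Given that the integrands match and the $C[\Xi\mathbf{S}|i]$ contours coincide with the $q$-Whittaker contours of Proposition \ref{q-moments q-Whittaker Prop}, the entire discrepancy between the formulas comes from the additional integration along the large clockwise contours $r^{i-1}\partial D$. I would expand $\oint_{\overline{C}_i} = \oint_{C_i} + \oint_{r^{i-1}\partial D}$ and distribute the product across all $l$ variables, obtaining a $2^l$-term sum indexed by subsets $T \subseteq \{1,\ldots,l\}$ marking which variables $z_i$ are placed on the large contour. For each $T$, the integrations over $r^{i-1}\partial D$ for $i \in T$ reduce to residues at $z_i = \infty$; the crucial observation is that the single-variable factor $f(z) = \prod_{j=1}^x(\xi_j s_j - s_j^2 z)/(\xi_j s_j - z)\prod_{k=1}^n(1-qu_k z)/(1-u_k z)$ satisfies $f(\infty) = q^n \prod_{j=1}^x s_j^2$, precisely the constant appearing in the right-hand side of \eqref{matching}. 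The remaining integral over $C_i$ for $i \in T^c$ then reproduces the $q$-Whittaker moment $\mathbb{E}_\mathbb{W}(q^{|T^c|\lambda_x})$. As a sanity check, at $l = 1$ this already yields $\mathbb{E}_{\text{HS}}(q^{\mathfrak{h}(x+1,n)}) = \mathbb{E}_\mathbb{W}(q^{\lambda_x}) + q^n \prod_{j=1}^x s_j^2 = \mathbb{E}_\mathbb{W}(q^{\lambda_x} + q^n \prod_{j=1}^x s_j^2)$.

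The main obstacle will be carefully tracking the cross terms $\prod_{A<B}(z_A - z_B)/(z_A - qz_B)$ when several variables are simultaneously placed on the large contours, since these contribute $q$-dependent factors in the iterated residue calculus that must ultimately reorganize to yield the binomial expansion of $(q^{\lambda_x} + q^n \prod_j s_j^2)^l$. A convenient tool is the change of variables $z_i \mapsto 1/w_i$, which sends each large clockwise contour to a small counterclockwise circle around $w_i = 0$ and converts the residue-at-infinity analysis into a straightforward Taylor expansion at the origin, making the combinatorial collapse more transparent. This strategy parallels the matching argument of \cite{OrrPetrov2016}, Theorem 4.11, for step Bernoulli boundary conditions, which we closely follow, adapting the contour manipulation to the fused step setting considered here.
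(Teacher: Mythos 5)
Your proposal follows essentially the same route as the paper: split each $\overline{C}[\Xi\mathbf{S}|i]=C[\Xi\mathbf{S}|i]\cup r^{i-1}\partial D$, observe that the residue at infinity along each clockwise piece contributes the constant $q^n\prod_j s_j^2$, and identify the surviving bounded-contour integral with the $q$-Whittaker moment under $a_m=\xi_ms_m$, $\alpha_m=s_m/\xi_m$, $\beta_j=-u_j$. The paper phrases the resulting expansion as a recursive peeling of the outermost contour — the $\vdots$ in \eqref{equivalence moments} is your $2^l$-subset expansion unrolled from the outside in — and both leave the reorganization of the asymmetric cross-factor limits (which yield $1$ or $q^{-1}$ depending on which variable escapes) against the prefactor $(-1)^lq^{\binom{l}{2}}$ and the nesting $r>q^{-1}$ into the binomial $\binom{l}{k}$ at the same level of detail.
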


In \eqref{q-Whittaker} we made use of the common terminology of $\alpha$-specializations and $\beta$-specializations, justified by the fact that the $q$-Whittaker measure $\mathbb{W}_{\Xi,\mathbf{S},\mathbf{U}}$ is obtained setting in Proposition \ref{q-moments q-Whittaker Prop} $\alpha_i = s_i/\xi_i$ and $\beta_j=-u_j$.

\begin{proof}
From \eqref{moment HS Matteo} we start to integrate the singularity at $\infty$ from the most external contour down to the internal ones. We have
\begin{equation}\label{equivalence moments}
\begin{split}
\mathbb{E}_{\text{HS}} \left(q^{l\mathfrak{h}(x+1,n)}\right)&=\frac{(-1)^l q^{\frac{l(l-1)}{2}}}{(2\pi \mathrm{i})^l} \oint_{\overline{C}[\Xi\mathbf{S}|1]} \cdots \oint_{ \overline{C}[\Xi\mathbf{S}|l] } \prod_{1\leq A < B \leq l} \frac{z_A - z_B}{z_A - qz_B}
\prod_{i=1}^l\left( \prod_{j=1}^{x}\frac{\xi_j - s_jz_i}{\xi_j - s_j^{-1}z_i} \prod_{j=1}^n\frac{1-qu_jz_i}{1-u_jz_i} \frac{dz_i}{z_i}\right)\\
&=\frac{(-1)^l q^{\frac{l(l-1)}{2}}}{(2\pi \mathrm{i})^l} \oint_{\overline{C}[\Xi\mathbf{S}|1]} \cdots \oint_{ C[\Xi\mathbf{S}|l] } \prod_{1\leq A < B \leq l} \frac{z_A - z_B}{z_A - qz_B}
\prod_{i=1}^l\left( \prod_{j=1}^{x}\frac{\xi_j - s_jz_i}{\xi_j - s_j^{-1}z_i} \prod_{j=1}^n\frac{1-qu_jz_i}{1-u_jz_i} \frac{dz_i}{z_i}\right)\\
&\quad+q^n\prod_{j=1}^x s_j^2 \mathbb{E}_{\text{HS}} \left(q^{(l-1)\mathfrak{h}(x+1,n)}\right)
\\ & \qquad \qquad\vdots\\
&= \sum_{k=0}^l \binom{l}{k} \left(q^n\prod_{j=1}^x s_j^2 \right)^{l-k}\frac{(-1)^k q^{\frac{k(k-1)}{2}}}{(2\pi \mathrm{i})^k} \oint_{C[\Xi\mathbf{S}|1]} \cdots \oint_{ C[\Xi\mathbf{S}|k] } \prod_{1\leq A < B \leq k} \frac{z_A - z_B}{z_A - qz_B}\\
& \qquad \qquad \qquad \qquad \times\prod_{i=1}^k\left( \prod_{j=1}^x\frac{\xi_j s_j}{\xi_j s_j - z_i}(1- s_j\xi_j^{-1}z_i) \prod_{j=1}^n\frac{1-qu_jz_i}{1-u_jz_i} \frac{dz_i}{z_i}\right)\\ \\
&= \sum_{k=0}^l \binom{l}{k} \left(q^n\prod_{j=1}^x s_j^2 \right)^{l-k} \mathbb{E}_{\mathbb{W_{\mathbf{\Xi}, \mathbf{S}, \mathbf{U}}}}
 \left( q^{k\lambda_x} \right), 
\end{split}
\end{equation}
where the contour $C[\Xi\mathbf{S}|i]$ has been defined in Proposition \ref{analytic continuation Borodin moments}. In the last  equality we used the fact that, assuming the bounds \eqref{bounds q-Whittaker}, the $q$-moments of $\lambda_x$ in the $q$-Whittaker measure $\mathbb{W}_{\Xi, \mathbf{S}, \mathbf{U}}$ defined as in \eqref{q-Whittaker} are written as
\begin{equation*}
\begin{split}
\mathbb{E}_{\mathbb{W}_{\Xi, \mathbf{S}, \mathbf{U}}} \left ( q^{k\lambda_x} \right ) =\frac{(-1)^k q^{\frac{k(k-1)}{2}}}{(2\pi \mathrm{i})^k} \oint_{ C[\Xi \mathbf{S}|1] } \cdots \oint_{ C[\Xi \mathbf{S}|k] }& \prod_{1\leq A < B \leq k} \frac{z_A - z_B}{z_A - qz_B}\\ &\times\prod_{i=1}^k\left( \prod_{j=1}^x\frac{\xi_j s_j}{\xi_j s_j - z_i}(1- s_j\xi_j^{-1}z_i) \prod_{j=1}^n\frac{1-qu_jz_i}{1-u_jz_i} \frac{dz_i}{z_i}\right),
\end{split}
\end{equation*}
in accord with \eqref{q-moments q-Whittaker}.
By bringing the binomial sum inside the expectation we obtain the result.
\end{proof} 

\subsection{\texorpdfstring{Explicit distribution of $\lambda^{(n)}_n$}.} By making use of \eqref{definition Q} and additional combinatorical properties of $q$-Whittaker functions, in \cite{q-TASEPtheta} it is shown to be possible to express in a compact form the probability
distribution of $\lambda^{(n)}_n$ in a two sided $q$-Whittaker process, which has been briefly defined at the end of section \ref{subsection Mac and qWhittaker}.
We have the following
\begin{Prop}[\cite{q-TASEPtheta}, Proposition 3.14] \label{prop integral qtasep}
Let $\overline{\rho}$ be a specialization of two sided $q$-Whittaker functions, such that 
$$
\Pi(\mathbf{a};\overline{\rho})=\prod_{i=1}^n e^{\tau a_i} \prod_{i,j=1}^n \frac{1}{(\gamma_i/a_j ; q)_{\infty}},
$$
for parameters satisfying $|\gamma_i| < |a_j|$. Then, for any integer $l$, we have
\begin{equation}\label{probability 2sided qWhittaker}
\mathbb{P}_{q\emph{W}}(\lambda^{(n)}_n = l) = (q;q)_\infty^{n-1} \int_{\mathbb{T}^n}\Big(\frac{A}{Z}\Big)^l m_n^q(\mathbf{z}) \frac{\Pi(\mathbf{z}; \overline{\rho})}{\Pi(\mathbf{a}; \overline{\rho})} \frac{(A/Z;q)_\infty}{\prod_{i,j=1}^n (a_i/z_j;q)_\infty}\prod_{j=1}^n\frac{dz_j}{z_j},
\end{equation}
where $A=a_1\cdots a_n$, $Z=z_1 \cdots z_n$ and the integration is performed over the $n$ dimensional torus $\mathbb{T}^n$.
\end{Prop}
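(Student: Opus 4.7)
The plan is to unfold the definition of the two-sided $q$-Whittaker measure, separate the shift by $l$ via the shifting property $P_{\mu+l^n}(\mathbf{a}) = A^{l} P_\mu(\mathbf{a})$, convert the remaining sum into a contour integral using the integral representation \eqref{definition Q} of $Q_\lambda$, and finally identify the resulting symmetric function sum with a restricted Cauchy-type identity. By definition,
\begin{equation*}
\mathbb{P}_{q\text{W}}(\lambda^{(n)}_n = l) = \frac{1}{\Pi(\mathbf{a},\overline{\rho})} \sum_{\substack{\lambda \in \text{Sign}_n \\ \lambda_n = l}} P_\lambda(\mathbf{a})\, Q_\lambda(\overline{\rho}).
\end{equation*}
Writing $\lambda = \mu + l^n$ with $\mu \in \text{Part}^{\geq 0}_n$, $\mu_n = 0$, the shifting property together with \eqref{definition Q} and the observation $c_{\mu+l^n} = c_\mu$ (immediate since $P_{\mu+l^n}(\mathbf{z}^{-1}) = Z^{-l} P_\mu(\mathbf{z}^{-1})$ cancels against $P_{\mu+l^n}(\mathbf{z}) = Z^{l} P_\mu(\mathbf{z})$ inside the torus norm) yield
\begin{equation*}
\mathbb{P}_{q\text{W}}(\lambda^{(n)}_n = l) = \frac{1}{\Pi(\mathbf{a},\overline{\rho})} \int_{\mathbb{T}^n} \left(\frac{A}{Z}\right)^{l} \Pi(\mathbf{z},\overline{\rho})\, m_n^q(\mathbf{z})\, S(\mathbf{a},\mathbf{z}^{-1}) \prod_{j=1}^n \frac{dz_j}{z_j},
\end{equation*}
where
\begin{equation*}
S(\mathbf{a},\mathbf{z}^{-1}) := \sum_{\substack{\mu \in \text{Part}^{\geq 0}_n \\ \mu_n = 0}} \frac{P_\mu(\mathbf{a})\, P_\mu(\mathbf{z}^{-1})}{c_\mu}.
\end{equation*}
The interchange of the sum over $\mu$ with the integral over $\mathbb{T}^n$ is justified by the hypothesis $|\gamma_i| < |a_j|$ ensuring absolute convergence of the Cauchy kernel defining $\Pi(\mathbf{z},\overline{\rho})$ uniformly on the torus.

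The remaining task, and the principal obstacle, is to establish the restricted Cauchy-type identity
\begin{equation*}
S(\mathbf{a},\mathbf{z}^{-1}) = (q;q)_\infty^{n-1}\, \frac{(A/Z;q)_\infty}{\prod_{i,j=1}^n (a_i/z_j;q)_\infty}.
\end{equation*}
I would attack this by applying the same shift invariance one level deeper: for every integer $l \geq 0$ the partial sum $\sum_{\mu:\mu_n = l} P_\mu(\mathbf{a}) P_\mu(\mathbf{z}^{-1})/c_\mu$ equals $(A/Z)^{l}\, S(\mathbf{a},\mathbf{z}^{-1})$, so summing on $l \geq 0$ telescopes the unrestricted Cauchy-type sum $\sum_{\mu \in \text{Part}^{\geq 0}_n} P_\mu(\mathbf{a}) P_\mu(\mathbf{z}^{-1})/c_\mu$ into $S(\mathbf{a},\mathbf{z}^{-1})/(1 - A/Z)$ (on the torus with $|A|<1$). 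The unrestricted sum is then computed from the standard $q$-Whittaker Cauchy identity $\sum_\mu P_\mu(\mathbf{x}) Q_\mu(\mathbf{y}) = \prod_{i,j} (x_i y_j;q)_\infty^{-1}$ combined with the explicit relation $Q_\mu = b_\mu P_\mu$ (with $b_\mu$ a product of $q$-factorials encoding the part multiplicities of $\mu$) and the Macdonald formula for the torus norm $c_\mu$; the combination produces $(q;q)_\infty^{n-1}(qA/Z;q)_\infty/\prod(a_i/z_j;q)_\infty$, which when multiplied by $(1 - A/Z)$ collapses via $(1-A/Z)(qA/Z;q)_\infty = (A/Z;q)_\infty$ to the claimed value of $S$.

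Substituting this identity into the integral representation obtained in the first step delivers the proposition. The delicate step is the closed-form evaluation of the unrestricted sum $\sum_\mu P_\mu(\mathbf{a}) P_\mu(\mathbf{z}^{-1})/c_\mu$, which depends on Macdonald's orthogonality machinery for the torus scalar product at $t=0$. Should the telescoping route prove cumbersome, an equivalent alternative is to apply the orthogonality relation $\langle P_\lambda, P_\mu\rangle_n = c_\lambda \delta_{\lambda\mu}$ directly to the Cauchy kernel $\Pi(\mathbf{a};\mathbf{z}^{-1}) = \prod (a_i/z_j;q)_\infty^{-1}$ after isolating the factor $(A/Z;q)_\infty^{-1}$, which tracks precisely the contribution of signatures lying outside the restricted set $\{\mu \in \text{Part}^{\geq 0}_n : \mu_n = 0\}$.
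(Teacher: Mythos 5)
Your high-level skeleton is sound: expand the measure, pull out the shift $l$ via $P_{\mu+l^n}(\mathbf{a}) = A^l P_\mu(\mathbf{a})$ and $c_{\mu+l^n} = c_\mu$, insert the torus-integral formula \eqref{definition Q} for $Q_\lambda(\overline{\rho})$, interchange, and reduce to the restricted Cauchy sum $S(\mathbf{a},\mathbf{z}^{-1})$; the formula you target for $S$ is the correct one. The gap is in how you propose to evaluate it. You assert that $\sum_\mu P_\mu(\mathbf{a})P_\mu(\mathbf{z}^{-1})/c_\mu = (q;q)_\infty^{n-1}(qA/Z;q)_\infty/\prod_{i,j}(a_i/z_j;q)_\infty$ by "combining" the Cauchy identity, $Q_\mu = b_\mu P_\mu$ and the torus norm. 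But at $t=0$ the Macdonald torus norm is $c_\mu = \prod_{i=1}^{n-1}(q^{\mu_i-\mu_{i+1}+1};q)_\infty^{-1}$, while $b_\mu = \prod_{i=1}^{n}(q;q)_{\mu_i-\mu_{i+1}}^{-1}$ (with $\mu_{n+1}:=0$), so $1/c_\mu = (q;q)_\infty^{n-1}\,(q;q)_{\mu_n}\,b_\mu$. The extra $\mu$-dependent factor $(q;q)_{\mu_n}$ means $\sum_\mu P_\mu P_\mu/c_\mu = (q;q)_\infty^{n-1}\sum_\mu(q;q)_{\mu_n}P_\mu Q_\mu$, which is not the Cauchy kernel; the three ingredients you list do not close the computation, and evaluating this weighted sum requires exactly the shift-invariance device you were trying to outsource, so the route as written is circular.

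What actually closes the argument is to telescope the Cauchy identity itself, using the correct shift rule for $Q$. When $\nu_n = 0$ one has $b_{\nu+l^n} = b_\nu/(q;q)_l$, hence $Q_{\nu+l^n}(\mathbf{z}^{-1}) = \frac{Z^{-l}}{(q;q)_l}Q_\nu(\mathbf{z}^{-1})$, so decomposing every $\mu\in\text{Part}^{\geq 0}_n$ as $\mu=\nu+l^n$ with $\nu_n=0$ gives
\begin{equation*}
\prod_{i,j}\frac{1}{(a_i/z_j;q)_\infty} = \sum_{l\geq 0}\frac{(A/Z)^l}{(q;q)_l}\sum_{\nu:\,\nu_n=0}P_\nu(\mathbf{a})Q_\nu(\mathbf{z}^{-1}) = \frac{1}{(A/Z;q)_\infty}\sum_{\nu:\,\nu_n=0}P_\nu(\mathbf{a})Q_\nu(\mathbf{z}^{-1}),
\end{equation*}
the $q$-exponential series $\sum_l x^l/(q;q)_l = 1/(x;q)_\infty$ supplying the $(A/Z;q)_\infty$ factor directly (this needs $|A|<1$, which one should arrange via the scaling freedom rather than leave implicit). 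Finally, the equality $1/c_\nu = (q;q)_\infty^{n-1}b_\nu$ holds \emph{exactly} on the set $\{\nu_n=0\}$, converting the restricted $P_\nu Q_\nu$ sum into $S$ and yielding $S = (q;q)_\infty^{n-1}(A/Z;q)_\infty/\prod_{i,j}(a_i/z_j;q)_\infty$. Your closing parenthetical points in this direction, but neither the $1/(q;q)_l$ weight in the $Q$-shift nor the $q$-exponential summation is identified, and those are the two pieces that make the restricted Cauchy identity drop out.
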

With very little changes in the proof of this last proposition one can allow the normalization constant $\Pi$ to be of a more general form. 
\begin{Prop} \label{probability 2sided q-Whittaker proposition}
Expression \eqref{probability 2sided qWhittaker} also holds for $\overline{\rho}$ being a specialization of two sided $q$-Whittaker functions such that 
$$
\Pi(\mathbf{a};\overline{\rho})=\prod_{i=1}^n e^{a_i \tau} \prod_{j\geq 1} \frac{(1+\beta_j a_i)}{(\alpha_j a_i;q)_\infty} \prod_{j=1}^n\frac{1}{(\gamma_j / a_i;q)_\infty},
$$
for non-negative real parameters $\tau,\{\alpha_i\},\{\beta_i\},\{\gamma_i\}$ satisfying
$$
\sum_{j \geq 1} \left( \alpha_j + \beta_j \right) < \infty, \qquad \qquad |\alpha_i a_j|<1, \qquad \qquad |\gamma_i| < |a_j|.
$$
\end{Prop}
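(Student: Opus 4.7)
The plan is to re-run the argument of Proposition 3.14 of \cite{q-TASEPtheta}, keeping track of the extra factors introduced by the Plancherel-type, $\alpha$-, and $\beta$-specializations. Concretely, I would start from the definition of the two-sided $q$-Whittaker measure,
\begin{equation*}
\mathbb{P}_{q\text{W}}(\lambda^{(n)}_n = l) = \frac{1}{\Pi(\mathbf{a};\overline{\rho})} \sum_{\lambda\in \text{Sign}_n:\, \lambda_n=l} P_\lambda(\mathbf{a}) Q_\lambda(\overline{\rho}),
\end{equation*}
and use the shifting property $P_{\mu+l^n}(\mathbf{x})=(x_1\cdots x_n)^l P_\mu(\mathbf{x})$ (valid for $q$-Whittaker functions on signatures, as recalled at the end of Section \ref{subsection Mac and qWhittaker}) to reduce the sum to signatures $\mu$ with $\mu_n=0$, pulling out a factor $A^l$.

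Next, I would replace $Q_\lambda(\overline{\rho})$ by its integral representation coming from \eqref{definition Q}, i.e.\ $Q_\lambda(\overline{\rho})=c_\lambda^{-1}\langle P_\lambda,\Pi(\bullet;\overline{\rho})\rangle_n$, expanded as a torus integral against the $q$-Sklyanin measure $m_n^q$. Applying the shifting property inside the integrand yields a factor $Z^{-l}$, so that
\begin{equation*}
\sum_{\lambda:\,\lambda_n=l} P_\lambda(\mathbf{a}) Q_\lambda(\overline{\rho}) = A^l \int_{\mathbb{T}^n} Z^{-l}\, \Pi(\mathbf{z};\overline{\rho})\, m_n^q(\mathbf{z}) \sum_{\mu:\,\mu_n=0} \frac{P_\mu(\mathbf{a})P_\mu(\mathbf{z}^{-1})}{c_\mu}\prod_j\frac{dz_j}{z_j}.
\end{equation*}
The inner sum is independent of $\overline{\rho}$: it is exactly the object evaluated in the original proof by means of the $q$-Whittaker Cauchy identity together with the residue-style identity that produces the factor $(A/Z;q)_\infty/\prod_{i,j}(a_i/z_j;q)_\infty$. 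Since this manipulation makes no use of the specific form of $\Pi(\mathbf{z};\overline{\rho})$, it survives verbatim under the more general specialization.

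The only points that genuinely need checking are analytic. I would verify that under the stated hypotheses $\sum_j(\alpha_j+\beta_j)<\infty$, $|\alpha_j a_i|<1$, $|\gamma_j|<|a_i|$, the product
\begin{equation*}
\Pi(\mathbf{z};\overline{\rho})= \prod_{i=1}^n e^{z_i\tau}\prod_{j\geq 1}\frac{1+\beta_j z_i}{(\alpha_j z_i;q)_\infty}\prod_{j=1}^n\frac{1}{(\gamma_j/z_i;q)_\infty}
\end{equation*}
is absolutely convergent and has no poles on the torus $\mathbb{T}^n$ (the $\alpha$-factor requires $|\alpha_j|<1$, which follows from $|\alpha_j a_i|<1$ after possibly rescaling $|z_i|=1\le |a_i|$; the $\gamma$-factor requires $|\gamma_j|<1$, which follows analogously), and that the resulting integrand is integrable so that Fubini allows the interchange of sum and integral. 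This is the only obstacle: once absolute convergence and pole-freeness on $\mathbb{T}^n$ are established, the algebraic identity collapsing the $\mu$-sum goes through word for word and yields \eqref{probability 2sided qWhittaker} with the general $\Pi(\mathbf{z};\overline{\rho})/\Pi(\mathbf{a};\overline{\rho})$ plugged in.
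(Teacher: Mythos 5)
Your proposal matches what the paper intends: the authors give no proof beyond the preceding one‑line remark that the argument for Proposition~\ref{prop integral qtasep} goes through with ``very little changes,'' and you correctly identify that the algebraic core --- the shift‑reduction to $\mu_n=0$, the inner‑product representation of $Q_\lambda(\overline{\rho})$, and the collapse of $\sum_{\mu:\mu_n=0}P_\mu(\mathbf{a})P_\mu(\mathbf{z}^{-1})/c_\mu$ into $(A/Z;q)_\infty\big/\prod_{i,j}(a_i/z_j;q)_\infty$ --- never uses the explicit form of $\Pi$, so the only genuine new work is analytic.

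Your parenthetical analytic check, however, has the inequality running the wrong way. Rescaling so that $|z_i|=1\le|a_i|$ would destroy the Cauchy‑type summation $\sum_{\mu:\mu_n=0}P_\mu(\mathbf{a})P_\mu(\mathbf{z}^{-1})/c_\mu$, which converges only when $|a_i/z_j|<1$, i.e.\ you need the torus to \emph{strictly enclose} the points $a_i$, not the other way round. The correct bookkeeping is to take the torus of radius $r$ with
\begin{equation*}
\max_{i,j}\big(|a_i|,|\gamma_j|\big) \;<\; r \;<\; \min_j \frac{1}{|\alpha_j|},
\end{equation*}
and such an $r$ exists precisely because the hypotheses give $|\gamma_j|<|a_i|<1/|\alpha_k|$ for all $i,j,k$, hence $\max(|a_i|,|\gamma_j|)<\min 1/|\alpha_j|$. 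On such a torus each factor of $\Pi(\mathbf{z};\overline{\rho})$ is analytic and nonvanishing, the poles $z_j=q^ka_i$ of the Cauchy factor lie strictly inside, and Fubini is justified; with that correction the rest of your argument goes through.
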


\section{Boundary conditions} \label{About the initial conditions}
The aim of this section is twofold. First, in Subsection \ref{subs: Burke property}
we prove Proposition \ref{prop translation invariant}, which characterizes the family of probability measures satisfying a certain translational symmetry, which we denoted as \emph{Burke's property} (see Definition \ref{def: Burke property}). By means of this property we define the \emph{full plane Stationary Higher Spin Six Vertex Model} and this is done in Proposition \ref{prop extension HS6VM}. Subsequently, in Subsection \ref{subsection integrable initial}, we give a description of a family of boundary conditions which one can construct from the step one \eqref{step boundary bc} and that will be suitable to study the model in the stationary case.

\subsection{Burke's property in the Higher Spin Six Vertex Model} \label{subs: Burke property}

We like to start this Subsection by giving the proof of Proposition \ref{prop translation invariant}. First we show our results for the simpler case where the model has unfused rows, corresponding to the choice $J=1$. Subsequently we extend our proof to the general $J$ case. In particular, when $J=1$, each horizontal edge is crossed by either zero or one path and therefore random variables $\mathsf{j}_x^t$ are Bernoulli distributed. We also recall the sequential update mechanism produced by the transfer operator $\mathfrak{X}_{u_t}$, which has been described in Section 2 (and in \cite{BorodinPetrov2016HS6VM}, Section 6.4.2). Let $\lambda(t-1) = 2^{m_2^{t-1}} 3^{m_3^{t-1}} \cdots $ be a configuration of paths entering the row of vertices with ordinate $t$ and assume that, conditionally to the value of $\lambda(t-1)$ and $\mathsf{j}_1^t$, random variables $ \mathsf{m}_2^t , \dots , \mathsf{m}_{x-1}^t , \mathsf{j}_2^t , \dots , \mathsf{j}_{x-1}^t $ assumed respectively the values $ m_2^t, \dots, m_{x-1}^t , j_2^t, \dots, j_{x-1}^t $. Then we have
\begin{equation} \label{update rule X}
\mathbb{P} \left( \mathsf{m}_x^t = m_x^t | \lambda(t-1) , \{m_y^t , j_y^t \}_{y<x} \right) = \mathsf{L}_{u_t \xi_x, s_x} (m_x^{t-1} , j_{x-1}^t  |\ m_x^t, j_x^t), 
\end{equation}
where the definition of $\mathsf{L}$ is given in Table \ref{weights table} and at the boundaries (that is for $x=2$ or $t=1$), the law of $\{\mathsf{m}^0_x\}_{x \geq 2}, \{ \mathsf{j}_1^t \}_{t \geq 1}$ is assumed to be known. This update is called sequential since it can be regarded as a sequence of moves propagating from the leftmost vertex to the right.

The update produced by the fused transfer operator $\mathfrak{X}_{u_t}^{(J)}$ naturally follows the same rule, with weights $\mathsf{L}$ being replaced by weights $\mathsf{L}^{(J)}$ given in \eqref{fused weights L}.

\begin{Lemma}\label{density recurrence}
Assume that the Higher Spin Six Vertex Model, with $J=1$, satisfies the Burke's property. Then, setting
\begin{equation} \label{p_t and pi_Mx}
    p_t= \mathbb{P}(\mathsf{j}_x^t = 1) \qquad \text{and} \qquad \pi_{M,x} = \mathbb{P}(\mathsf{m}_x^t = M)
\end{equation}
we have
\begin{equation}\label{pi recurrence}
\begin{split}
    \pi_{M,x}=&\pi_{M-1,x} p_{t} \frac{1-s_x^2q^{M-1}}{1-s_x\xi_x u_t}\\
    &+\pi_{M,x} \left[ (1-p_{t}) \frac{1-s_x \xi_x u_t q^M}{1-s_x \xi_x u_t}+ p_{t}\frac{-s_x \xi_x u_t + s_x^2q^M}{1-s_x \xi_x u_t} \right] \\
    &+\pi_{M+1,x}(1-p_{t}) \frac{-s_x \xi_x u_t +  s_x \xi_x u_t q^{M+1} }{1-s_x \xi_x u_t}
\end{split}
\end{equation}
for each $M \geq 0$, with $\pi_{-1,x}=0$.
\end{Lemma}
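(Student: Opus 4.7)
My plan is to derive the recurrence by conditioning on the pair of incoming occupation numbers $(\mathsf{m}_x^{t-1}, \mathsf{j}_{x-1}^t)$ at the vertex $(x,t)$ and summing over the four admissible configurations read off from Table~\ref{weights table}.

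First, I will use the Burke's property to justify two facts. Applying the definition to the shifted lattice $\Lambda_{x-1,t-1}$, the boundary random variables of that lattice, which include both $\mathsf{m}_x^{t-1}$ and $\mathsf{j}_{x-1}^t$, are independent; moreover their marginal laws are $P^{(x)}$ and $\tilde{P}^{(t)}$ respectively and do not depend on the choice of base vertex. In particular $\mathsf{m}_x^{t-1}\sim \pi_{\cdot,x}$, $\mathsf{j}_{x-1}^t\sim\mathrm{Ber}(p_t)$, and $\mathsf{m}_x^t\sim \pi_{\cdot,x}$ as well, so the notation \eqref{p_t and pi_Mx} is consistent.

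Next, I will write
\begin{equation*}
\pi_{M,x}=\sum_{i\ge 0}\sum_{j\in\{0,1\}}\sum_{j'\in\{0,1\}}\mathsf{L}_{\xi_xu_t,s_x}(i,j\mid M,j')\,\pi_{i,x}\,\mathbb{P}(\mathsf{j}_{x-1}^t=j),
\end{equation*}
using \eqref{eq: vertex model rule} together with the independence just established. Inspection of Table~\ref{weights table} shows that, for a fixed outgoing vertical occupancy $M$, exactly four input/output quadruples carry non-zero weight, namely $(i,j,j')\in\{(M,0,0),\,(M-1,1,0),\,(M,1,1),\,(M+1,0,1)\}$ (when $M=0$ the $(M-1,1,0)$ term is simply absent, which is consistent with the convention $\pi_{-1,x}=0$). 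Substituting the weights read directly off the table and grouping by $\pi_{M-1,x}$, $\pi_{M,x}$ and $\pi_{M+1,x}$ gives exactly the right-hand side of \eqref{pi recurrence}.

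There is no real obstacle; the only subtle point is the invocation of Burke's property to secure both the $t$-independence of $\pi_{M,x}$ and, more crucially, the independence of $\mathsf{m}_x^{t-1}$ and $\mathsf{j}_{x-1}^t$, which is what allows the joint law to factorize and the conditional expectation to reduce to a linear combination of the $\pi_{\cdot,x}$ with coefficients given by the single-vertex weights. The remainder is a straightforward enumeration of the four vertex configurations.
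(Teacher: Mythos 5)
Your proposal is correct and follows essentially the same route as the paper: write $\pi_{M,x}$ as a sum over the four admissible input configurations at vertex $(x,t)$, factor the joint law of $(\mathsf{m}_x^{t-1},\mathsf{j}_{x-1}^t)$ using the independence and stationarity provided by the Burke property, and substitute the weights from Table~\ref{weights table}. The four quadruples you identify coincide with those in the paper's displayed formula \eqref{eq: prob m_t}, so no gap remains.
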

\begin{proof}
From the definition \eqref{update rule X} of the update rule given by $\mathfrak{X}_{u_t}$ we have
\begin{equation} \label{eq: prob m_t}
\begin{split}
\mathbb{P}(\mathsf{m}_x^t=M) = & \mathbb{P}(\mathsf{m}_x^{t-1}=M-1, \mathsf{j}_{x-1}^t=1) \mathsf{L}_{u_t \xi_x, s_x} (M-1, 1 |\ M,0) \\
& + \mathbb{P}(\mathsf{m}_x^{t-1}=M, \mathsf{j}_{x-1}^t=0) \mathsf{L}_{u_t \xi_x, s_x} (M, 0 |\ M, 0) \\
& + \mathbb{P}(\mathsf{m}_x^{t-1}=M, \mathsf{j}_{x-1}^t=1)  \mathsf{L}_{u_t \xi_x, s_x} (M, 1 |\ M, 1) \\
& +\mathbb{P}(\mathsf{m}_x^{t-1}=M+1, \mathsf{j}_{x-1}^t=0) \mathsf{L}_{u_t \xi_x, s_x} (M+1, 0 |\ M, 1),
\end{split}
\end{equation}
which becomes \eqref{pi recurrence} when we substitute the definition of weights $\mathsf{L}$ given in Table \ref{weights table} and use the Burke's property to express the joint law of $\mathsf{m}_x^{t-1}$ and $\mathsf{j}_{x-1}^t$ through quantities $\pi_{M,x}, p_t$.
\end{proof}

An interesting feature of the recurrence relations \eqref{pi recurrence} is that it admits an exact solution in terms of the Al Salam-Chihara polynomials (\cite{IsmailClassicalAndQuantum}).

\begin{Lemma} \label{Lemma local density J=1}
Assume that the Higher Spin Six Vertex Model, with $J=1$, satisfies the Burke's property. Then, for each $(x,t) \in \Lambda_{1,0}$ random variables $\mathsf{m}_x^t , \mathsf{j}_x^t$ have laws
\begin{equation} \label{law m and j J=1}
    \mathsf{m}^{t}_{x} \sim q\text{\emph{NB}}( s^2_x , \mathpzc{d}/(\xi_x s_x) ), \qquad \mathsf{j}_{x}^{t} \sim \text{\emph{Ber}}( - \mathpzc{d} u_t /( 1 - \mathpzc{d} u_t) ),
\end{equation}
where $\mathpzc{d}$ is a parameter independent of $x$ or $t$ that satisfies \eqref{bound curvy d}.
\end{Lemma}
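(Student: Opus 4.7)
The strategy is to exploit Burke's translation invariance as a compatibility condition on the two-parameter recurrence \eqref{pi recurrence}, forcing both the Bernoulli parameter for $\mathsf{j}_x^t$ and the $q$-negative binomial structure of $\pi_{M,x}$.

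First I would rearrange the recurrence of Lemma \ref{density recurrence}: multiplying through by $1-s_x\xi_x u_t$ and collecting the $\pi_{M,x}$ contributions on the left, the coefficient of $\pi_{M,x}$ simplifies after a short expansion to $p_t(1-s_x^2 q^M)-(1-p_t)s_x\xi_x u_t(1-q^M)$. Dividing by $p_t$ (the case $p_t\equiv 0$ is treated at the end) and moving the $t$-independent contributions to the left yields
\begin{equation*}
(1-s_x^2 q^{M-1})\pi_{M-1,x}-(1-s_x^2 q^M)\pi_{M,x}=\alpha_t\, s_x\xi_x\bigl[(1-q^{M+1})\pi_{M+1,x}-(1-q^M)\pi_{M,x}\bigr],
\end{equation*}
where $\alpha_t:=u_t(1-p_t)/p_t$. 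The key observation is a \emph{separation of variables}: by the Burke property every $\pi_{M,x}$ is $t$-independent, so both the left-hand side and the $M$-bracket on the right are $t$-independent. Either the $M$-bracket vanishes identically, which a telescoping argument shows would force $\pi_{M,x}=\delta_{M,0}$ and, upon reinsertion at $M=1$, the impossible equality $1-s_x^2=0$; or else $\alpha_t$ itself must be independent of $t$. Writing $\alpha_t\equiv -1/\mathpzc{d}$ and solving for $p_t$ gives $p_t=-\mathpzc{d} u_t/(1-\mathpzc{d} u_t)$, which is exactly the Bernoulli parameter claimed in \eqref{law m and j J=1}. Because neither $p_t$ nor $u_t$ depends on $x$, the constant $\mathpzc{d}$ is automatically $x$-independent.

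With $\alpha_t$ thus fixed, the recurrence collapses to the $t$-independent three-term relation
\begin{equation*}
\mathpzc{d}(1-s_x^2 q^{M-1})\pi_{M-1,x}+s_x\xi_x(1-q^{M+1})\pi_{M+1,x}=\bigl[\mathpzc{d}(1-s_x^2 q^M)+s_x\xi_x(1-q^M)\bigr]\pi_{M,x}.
\end{equation*}
I would then insert the geometric ansatz $\pi_{M,x}=A\, r^M(s_x^2;q)_M/(q;q)_M$ suggested by the target $q$-negative binomial shape. Using the one-step identities $(s_x^2;q)_M=(1-s_x^2 q^{M-1})(s_x^2;q)_{M-1}$ and $(q;q)_{M+1}=(1-q^{M+1})(q;q)_M$, the recurrence reduces, independently of $M$, to the algebraic condition $r=\mathpzc{d}/(\xi_x s_x)$. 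The normalization $A$ is then read off from the $q$-binomial theorem $\sum_{M\geq 0}r^M(s_x^2;q)_M/(q;q)_M=(rs_x^2;q)_\infty/(r;q)_\infty$, yielding $A=(\mathpzc{d}/(\xi_x s_x);q)_\infty/(\mathpzc{d} s_x/\xi_x;q)_\infty$. Comparing with \eqref{q negative binomial} identifies $\mathsf{m}_x^t\sim q\text{NB}(s_x^2,\mathpzc{d}/(\xi_x s_x))$, while summability of the normalization series plus positivity of the parameter $\mathpzc{d}$ yield the range \eqref{bound curvy d}.

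The residual degenerate case $p_t\equiv 0$ (corresponding to $\mathpzc{d}=0$) is handled directly: substituting $p_t=0$ into \eqref{pi recurrence} and cancelling the common factor $s_x\xi_x u_t/(1-s_x\xi_x u_t)$ collapses the recurrence to $\pi_{M,x}(1-q^M)=\pi_{M+1,x}(1-q^{M+1})$, which together with $\pi_{-1,x}=0$ and normalization forces $\pi_{M,x}=\delta_{M,0}$, in agreement with the claim at $\mathpzc{d}=0$. I expect the main conceptual step to be the separation-of-variables argument in the second paragraph, which pins down the Bernoulli law of $\mathsf{j}_x^t$; the remaining algebra is a routine solution of a $q$-hypergeometric-type recurrence.
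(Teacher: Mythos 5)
Your proof is correct, and it takes a genuinely cleaner route than the paper's own argument. The paper proves this lemma by directly exhibiting the closed form \eqref{stationary density} as a function of $p_t, u_t$ and asserting (without displaying the algebra) that it solves the recurrence \eqref{pi recurrence}; the relation $p_t/(-u_t(1-p_t)) = \mathpzc{d}$ is then extracted as the condition for this formula to be $t$-independent. For motivation, the paper matches the recurrence term-by-term against the three-term relation for Al-Salam--Chihara polynomials, solves the resulting system \eqref{system Al Salam Chihara}, and observes that the relevant polynomial degenerates via the $q$-Chu--Vandermonde identity into a pure geometric factor. Your argument dispenses with the special-function machinery entirely: after rearranging \eqref{pi recurrence}, the separation-of-variables step gives a direct structural reason why $\alpha_t = u_t(1-p_t)/p_t$ must be $t$-independent (the paper's reasoning here is somewhat indirect), and the geometric ansatz then closes the recurrence by elementary one-step Pochhammer identities, with both coefficients collapsing to the same condition $r=\mathpzc{d}/(\xi_x s_x)$ — a nontrivial consistency check you correctly flag. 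You also handle the degenerate $p_t\equiv 0$ case explicitly, which the paper glosses over. (Incidentally, your normalization $A = (\mathpzc{d}/(\xi_x s_x);q)_\infty/(\mathpzc{d} s_x/\xi_x;q)_\infty$ is the correct one and matches the $q$NB definition \eqref{q negative binomial}; the fraction in the paper's \eqref{stationary density} appears to be inverted, a typographical slip.) The one point worth making explicit in a final write-up is uniqueness: the three-term recurrence in $M$ together with the boundary value $\pi_{-1,x}=0$ and normalization determine $\pi_{M,x}$ uniquely, so having found \emph{a} solution suffices; both you and the paper use this implicitly.
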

\begin{proof}
We use results of Lemma \ref{density recurrence}. We claim that 
\begin{equation} \label{stationary density}
\pi_{M,x} = \left( \frac{p_t}{-s_x \xi_x u_t (1-p_t)} \right)^M \frac{(s_x^2;q)_M}{(q;q)_M} \frac{(\frac{-s_xp_t}{(1-p_t)\xi_x u_t};q)_{\infty}}{(\frac{p_t}{-s_x \xi_x (1-p_t) u_t};q)_{\infty}}
\end{equation}
is solution of the recurrence \eqref{pi recurrence}. Such expression for $\pi_{M,x}$ is relatively simple, so that plugging it into \eqref{pi recurrence} one could easily verify that indeed our claim holds. The assumption that the probability measure satisfies the Burke's property implies that values of $\pi_{M,x}$ cannot depend on the $t$ coordinate and therefore $t$ dependent quantities $u_t, p_t$ must satisfy the relation 

\begin{equation} \label{definition curvy d}
\frac{p_t}{-u_t ( 1 - p_t )} = \mathpzc{d},
\end{equation}
for some parameter $\mathpzc{d}$ independent on $x$ or $t$, which necessarily has to meet condition \eqref{bound curvy d} as well. By inverting \eqref{definition curvy d}
and recalling the definition of $p_t$ given in \eqref{p_t and pi_Mx} we complete the proof of \eqref{law m and j J=1}.

This checking style argument might not be the most elegant, so we now quickly show how this solution was obtained. Setting $s_x=s, \xi_x u_t = u, p_t = p$ and defining the auxiliary sequence $f_M$ as 
\begin{equation*}
\pi_M=\beta^{-M}\frac{(s^2;q)_M}{(q;q)_M}f_M,
\end{equation*}
the recurrence \eqref{pi recurrence} becomes
\begin{equation} \label{recurrence fM}
(1-q^M)\frac{p\beta^2}{(1-p)s u }f_{M-1}+\big[\beta\frac{s u-p-s u p }{(1-p)s u }-\beta\frac{s u -s u p -s^2p}{(1-p)s u }q^M\big]f_M -(1-s^2q^M)f_{M+1}=0.
\end{equation}
This last expression has to be compared with the general recurrence relation 
\begin{equation} \label{Al Salam-Chihara recurrence}
-t_1^2(1-q^n)g_{n-1}+t_1\big[ z+1/z -(t_1+t_2)q^n \big]g_n-(1-t_1t_2q^n)g_{n+1}=0,
\end{equation}
with initial conditions $p_{-1}=0, p_0=1$, which is known to be satisfied by the Al Salam-Chihara polynomials (\cite{IsmailClassicalAndQuantum}, (15.1.6))
\begin{equation*}
g_n(z; t_1, t_2|q)=
\setlength\arraycolsep{1pt}
{}_3 \phi_2\left(\begin{matrix}q^{-n},&t_1z, &t_1/z & &\\&t_1t_2,&0
& &\end{matrix} \Big| q,\ q\right).
\end{equation*}
Equating term by term \eqref{recurrence fM} and \eqref{Al Salam-Chihara recurrence} we get, in terms of variables $z, t_1, t_2, \beta$, the second order system
\begin{equation}\label{system Al Salam Chihara}
\begin{cases}
   (z+1/z)t_1=\beta \frac{s u -p-s u p}{(1-p)s u },\\
   t_1^2=-\beta^2\frac{p}{(1-p)s u },\\
   t_1^2+t_1 t_2=\beta\frac{s u -s u -s^2p}{(1-p)s u },\\
   t_1 t_2 = s^2,
  \end{cases}
\end{equation}
whose solution is
\begin{equation*}
t_1=-s\ \sqrt[]{\frac{-sp}{(1-p) u }},\qquad\qquad
t_2=-s\ \sqrt[]{\frac{(1-p) u }{-sp}},\qquad\qquad z=t_2, \qquad\qquad \beta=s^2.
\end{equation*}
With these choices of values the Al Salam-Chihara polynomial assumes a simple form, so that
\begin{equation*}
      f_M = g_M(t_2; t_1, t_2|q)= \setlength\arraycolsep{1pt}
{}_2 \phi_1\left(\begin{matrix}q^{-M}, &-sp/( u - u p) & &\\&0
& &\end{matrix} \Big| q,\ q\right) = \left( \frac{-sp}{ u (1-p)} \right)^M,
\end{equation*}
where in the last equality we used the $q$-analog Chu-Vandermonde identity (see Appendix \ref{appendix qfunctions}).
Using the definition  of $f_M$ we finally obtain \eqref{stationary density}.
\end{proof}

Result of Lemma \ref{Lemma local density J=1} suffices to prove the "only if" part of the statement of Proposition \ref{prop translation invariant} in the particular case of a model with $J=1$. The next two Lemmas address the "if" part of Proposition \ref{prop translation invariant}.

\begin{Lemma} \label{lemma bc propagation horiz}
Consider the Higher Spin Six Vertex Model on $\Lambda_{1,0}$, with $J=1$ and boundary conditions
\begin{equation} \label{boundary conditions burke J=1}
\mathsf{m}^{0}_{x} \sim q\text{\emph{NB}}( s^2_x, \mathpzc{d}/(\xi_x s_x) ), \qquad \mathsf{j}_{1}^{t} \sim \text{\emph{Ber}}( - \mathpzc{d} u_t /( 1 - \mathpzc{d} u_t) ),
\end{equation}
where $\mathsf{m}_2^0 , \mathsf{m}_3^0, \mathsf{m}_4^0, \dots, \mathsf{j}_1^1, \mathsf{j}_1^2, \mathsf{j}_1^3, \dots$ are independent random variables. Then for all $x \geq 2$ the sequence $\mathsf{m}_x^0 , \mathsf{m}_{x+1}^0, \mathsf{m}_{x+2}^0, \dots, \mathsf{j}_{x-1}^1, \mathsf{j}_{x-1}^2, \mathsf{j}_{x-1}^3, \dots$ is a family of independent random variables and for each $t\geq1$ we have $\mathsf{j}_{x-1}^t \sim \text{\emph{Ber}}( - \mathpzc{d} u_t /( 1 - \mathpzc{d} u_t) )$.
\end{Lemma}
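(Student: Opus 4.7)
The plan is to proceed by induction on $x$, the base case $x=2$ being just the assumed product form \eqref{boundary conditions burke J=1}. For the inductive step, I propagate the dynamics column by column: given the statement for $x$, processing the column of vertices at horizontal coordinate $x$ converts the inputs $\mathsf{m}_x^0, \mathsf{j}_{x-1}^1, \mathsf{j}_{x-1}^2, \dots$ into the outputs $\mathsf{m}_x^1, \mathsf{m}_x^2, \dots$ (which become internal to $\Lambda_{x,0}$) and $\mathsf{j}_x^1, \mathsf{j}_x^2, \dots$ (the new left boundary of $\Lambda_{x,0}$). What needs to be shown is that the latter sequence is independent with the prescribed Bernoulli marginals and jointly independent of the untouched data $\mathsf{m}_{x+1}^0, \mathsf{m}_{x+2}^0, \dots$.

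The core ingredient is a \emph{single-vertex factorization}: if the two inputs to vertex $(x,t)$ satisfy
\begin{equation*}
\mathsf{m}_x^{t-1} \sim q\text{NB}(s_x^2, \mathpzc{d}/(\xi_x s_x)), \qquad \mathsf{j}_{x-1}^t \sim \text{Ber}(-\mathpzc{d} u_t/(1-\mathpzc{d} u_t))
\end{equation*}
and are independent, then the outputs $(\mathsf{m}_x^t, \mathsf{j}_x^t)$ are again independent with these same respective marginal laws. The marginal statement for $\mathsf{m}_x^t$ was already established in Lemma \ref{Lemma local density J=1}; what is new is the factorization of the joint distribution. I would verify this by writing
\begin{equation*}
\mathbb{P}(\mathsf{m}_x^t = M, \mathsf{j}_x^t = \epsilon) = \sum_{(m,j)} \pi_{m,x}\, \mathbb{P}(\mathsf{j}_{x-1}^t = j)\, \mathsf{L}_{\xi_x u_t, s_x}(m, j|\ M, \epsilon),
\end{equation*}
where the conservation law \eqref{conservation law} restricts the sum to two terms, substituting the explicit weights of Table \ref{weights table}, and checking that the right hand side equals $\pi_{M,x}\, \mathbb{P}(\mathsf{j}_x^t = \epsilon)$ with the aid of the ratio $\pi_{M,x}/\pi_{M-1,x} = (\mathpzc{d}/(\xi_x s_x))(1-s_x^2 q^{M-1})/(1-q^M)$ and the identity $p_t/(1-p_t) = -\mathpzc{d} u_t$.

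With the single-vertex factorization in hand, I would prove by a secondary induction on $t$ that $(\mathsf{m}_x^t, \mathsf{j}_x^1, \dots, \mathsf{j}_x^t)$ has product distribution with the stated marginals. The step from $t$ to $t+1$ exploits the Markov structure of the column: conditional on $\mathsf{m}_x^t = m$, the pair $(\mathsf{m}_x^{t+1}, \mathsf{j}_x^{t+1})$ depends only on $m$, the fresh input $\mathsf{j}_{x-1}^{t+1}$, and the internal randomness of vertex $(x,t+1)$, all of which are independent of the past $(\mathsf{j}_x^1, \dots, \mathsf{j}_x^t)$. Combining this conditional independence with the inductive hypothesis and the single-vertex factorization yields the extended product form by a routine conditioning-and-summation argument. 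Joint independence from the untouched data $\mathsf{m}_{x+1}^0, \mathsf{m}_{x+2}^0, \dots$ is automatic, since the entire processed column is a measurable function of random variables which are independent of $\mathsf{m}_{x+1}^0, \mathsf{m}_{x+2}^0, \dots$ by hypothesis.

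The principal difficulty sits in the single-vertex factorization itself: it is precisely the fine coupling between $\pi_{M,x}$ and $p_t$ through the single common parameter $\mathpzc{d}$ that causes the joint output law to factor, and a generic perturbation of either marginal would destroy the identity. Once this algebraic verification is in place, the rest of the proof is Markov-chain bookkeeping, and independence of the full infinite family follows from independence of every finite subcollection.
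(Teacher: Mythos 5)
Your proof is correct, and it takes a cleaner route than the paper's. The paper's proof encodes the action of a whole column of vertices via the $2\times 2$ matrices $\mathcal{A}_k, \mathcal{B}_k$, the tensor operators $T_k^{(m)}$ of \eqref{recursion T}, and the row vector $\langle\mathbf{v}_1|\otimes\cdots\otimes\langle\mathbf{v}_t|$, and then proves the eigenrelation \eqref{eigenrelation T} by induction on $t$. Your argument reaches the same conclusion but isolates what is really going on at a single vertex: the Burke-type factorization that ($q$NB $\otimes$ Ber) in gives ($q$NB $\otimes$ Ber) out, provided the two laws are tied together by the common parameter $\mathpzc{d}$. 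Once that algebraic check is done (and it does check out — both the $\epsilon=0$ and $\epsilon=1$ cases reduce to the identity $p_t/(1-p_t)=-\mathpzc{d} u_t$ combined with the two-term recursion for $\pi_{M,x}$), the passage to the full column is exactly the conditioning-and-summation bookkeeping you describe, using that $(\mathsf{m}_x^{t+1},\mathsf{j}_x^{t+1})$ is conditionally independent of $(\mathsf{j}_x^1,\dots,\mathsf{j}_x^t)$ given $\mathsf{m}_x^t$ because $\mathsf{j}_{x-1}^{t+1}$ and the internal randomness at vertex $(x,t+1)$ are fresh. In effect, your single-vertex lemma is the $t=1$ case of the paper's eigenrelation, \eqref{eigenrelation T t=1}, stated directly in probabilistic rather than tensor language, and the secondary induction plays the role of the tensor bookkeeping that runs through \eqref{T induction}--\eqref{T induction rhs 2}. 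The paper's formalism is heavier but packages the full column as one algebraic object, which is convenient for the fusion argument later; your version makes the "detailed balance" nature of the result much more visible.
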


\begin{proof}
We start observing that, due to the choice of boundary conditions and due to the fact that paths propagate in the lattice in the up right direction the family of random variables $\mathsf{m}_x^0, \mathsf{m}_{x+1}^0, \mathsf{m}_{x+2}^0,\dots$ is always independent of the family $\mathsf{j}_{x-1}^1, \mathsf{j}_{x-1}^2, \mathsf{j}_{x-1}^3,\dots$ and hence we only need to show that
$\mathsf{j}_{x-1}^1, \mathsf{j}_{x-1}^2, \mathsf{j}_{x-1}^3,\dots$ are mutually independent and that their distributions follow the law described in the statement of Lemma \ref{lemma bc propagation horiz}. We prove this claim for the $x=3$ case, as the general $x$ case would simply follow by induction procedure. This means that, for all $t\geq 1$ and for all choices of $(j_1, \dots, j_t) \in \{0,1\}^t$, we need to show that

\begin{equation} \label{j's are independent}
\mathbb{P}\left( \mathsf{j}_2^1=j_1, \dots, \mathsf{j}_2^t=j_t  \right) = \prod_{k=1}^t \frac{( - \mathpzc{d} u_k )^{j_k} } { 1- \mathpzc{d} u_k }.
\end{equation}
To do so we follow a rather algebraic approach. Introduce the $2 \times 2$ matrices
\begin{equation*}
\mathcal{A}_k = \frac{1}{1-s_2 \xi_2 u_k} \begin{pmatrix}
  1 & -s_2 \xi_2 u_k \\
  1 & -s_2 \xi_2 u_k \\ 
 \end{pmatrix}, \qquad 
 \mathcal{B}_k = \frac{1}{1-s_2 \xi_2 u_k} \begin{pmatrix}
  -s_2 \xi_2 u_k & s_2 \xi_2 u_k \\
  -s_2^2 & s_2^2 \\ 
 \end{pmatrix},
\end{equation*}
through which we can express the stochastic weight $\mathsf{L}$, using the classical bra-ket notation\footnote{the numbering of rows and column starts from zero rather than from one}, as
\begin{equation*}
\mathsf{L}_{\xi_2 u_k, s_2}(m,j | \ m+j-j', j') = \langle e_j| \left( \mathcal{A}_k  +q^m \mathcal{B}_k \right) | e_{j'} \rangle,
\end{equation*}
for each $j,j'=0,1$. It is not hard to convince oneself that it is possible to describe the weight of any admissible configuration of paths around a column of two vertices as 

\begin{equation*}
\begin{split}
&\mathsf{L}_{ \{\xi_2 u_1, \xi_2 u_k \},s_2} \left(
\raisebox{-23pt}{\includegraphics{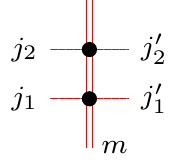}} \right) = \sum_{l_1,l_2 \geq 0} \mathsf{L}_{\xi_2 u_1, s_2}(m,j_1|\ l_1, j_1') \mathsf{L}_{\xi_2 u_2, s_2}(l_1,j_1|\ l_2, j_2') \\
&= \langle e_{j_1}| \otimes \langle e_{j_2}| \left[ \left( \mathcal{A}_1 + q^m \mathcal{B}_1 \right) \otimes \mathcal{A}_2 + q^m 
\left( \begin{smallmatrix}
1&0\\0&q
\end{smallmatrix} \right)  \left( \mathcal{A}_1 + q^m \mathcal{B}_1 \right) \left( \begin{smallmatrix}
1&0\\0&1/q
\end{smallmatrix} \right) \otimes \mathcal{B}_2 \right] |e_{j_1'}\rangle \otimes |e_{j_2'}\rangle .
\end{split}
\end{equation*}
More in general, defining the sequence
\begin{equation} \label{recursion T}
\begin{cases}
T_k^{(m)} = T_{k-1}^{(m)} \otimes \mathcal{A}_k  + q^m 
\left( \begin{smallmatrix}
1&0\\0&q
\end{smallmatrix} \right)^{\otimes (k-1)}  T_{k-1}^{(m)} \left( \begin{smallmatrix}
1&0\\0&1/q
\end{smallmatrix} \right)^{\otimes (k-1)} \otimes \mathcal{B}_k\\
T_1^{(m)} = \mathcal{A}_1  +q^m \mathcal{B}_1,
\end{cases}
\end{equation}
one can show that, for all admissible configurations of paths around a column of $k$ vertices, we have

\begin{equation*}
\mathsf{L}_{\{\xi_x u_1, \dots, \xi_x u_k \},s_x} \left(
	\raisebox{-30pt}{\includegraphics{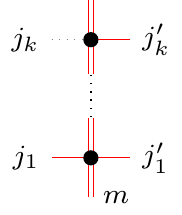}} \right) = ( \langle e_{j_1}| \otimes \cdots \otimes \langle e_{j_k}| ) \cdot T_k^{(m)} \cdot (  | e_{j_1'} \rangle \otimes \cdots \otimes | e_{j_k'} \rangle ).
\end{equation*}
Define also the vector 
\begin{equation*}
\langle \mathbf{v}_k | = \frac{1}{1- \mathpzc{d} u_k}  \langle e_0 | + \frac{ - \mathpzc{d} u_k }{1- \mathpzc{d} u_k}  \langle e_1 | ,
\end{equation*}
so that
\begin{equation*}
\mathbb{P}\left( \mathsf{j}_1^1=j_1, \dots, \mathsf{j}_1^t=j_t  \right) =  \langle \mathbf{v}_1 | \otimes \cdots \otimes \langle \mathbf{v}_t | \cdot | e_{j_1} \rangle \otimes \cdots \otimes | e_{j_t} \rangle.
\end{equation*}
Adopting this matrix notation we can translate equality \eqref{j's are independent} into the eigenrelation
\begin{equation} \label{eigenrelation T}
\langle \mathbf{v}_1 | \otimes \cdots \otimes \langle \mathbf{v}_t | \cdot \sum_{m \geq 0} \pi_m^{(2)} T_t^{(m)}  =  \langle \mathbf{v}_1 | \otimes \cdots \otimes \langle \mathbf{v}_t|, 
\end{equation}
where we used the shorthand
\begin{equation} \label{pi factor probability}
    \pi_m^{(k)} = \mathbb{P}(\mathsf{m}_k^0= m) = \left( \frac{ \mathpzc{d} }{ \xi_x s_x } \right)^m \frac{(s_k^2;q)_m}{(q;q)_m} \frac{( \mathpzc{d} / ( \xi_k s_k ) ; q)_\infty}{( \mathpzc{d} s_k / \xi_k ;q)_\infty}.
\end{equation}
We prove \eqref{eigenrelation T} by induction. When $t=1$, we have
\begin{equation} \label{eigenrelation T t=1}
\langle \mathbf{v}_1 | \cdot \sum_{m \geq 0} \pi_m^{(2)} \left( \mathcal{A}_1 + q^m \mathcal{B}_1 \right) = \langle \mathbf{v}_1 | \cdot \left(\mathcal{A}_1 + \frac{1 -  \mathpzc{d} / (s_2 \xi_2 ) }{ 1 - \mathpzc{d} s_2 / \xi_2 } \mathcal{B}_1 \right) = \langle \mathbf{v}_1 |,
\end{equation}
where the summation with respect to $m$ was performed using the expression \eqref{pi factor probability} for $\pi_\bullet^{(2)}$ and the $q$-binomial theorem \eqref{q binomial summation} and the second equality follows by direct inspection of the matrix product. We now assume that \eqref{eigenrelation T} is true for $t-1$ and from this we would like to show that the $t$ case follows. Using recursion \eqref{recursion T}, we write
\begin{equation}\label{T induction}
\begin{split}
\langle \mathbf{v}_1 | \otimes \cdots \otimes \langle \mathbf{v}_t | \cdot \sum_{m \geq 0} \pi_m^{(2)} T_t^{(m)} = \langle \mathbf{v}_1 | \otimes \cdots \otimes \langle \mathbf{v}_t | & \cdot \left( \sum_{m \geq 0} \pi_m^{(2)} T_{t-1}^{(m)} \otimes \mathcal{A}_t \right. \\
& + \left. \sum_{m \geq 0} \pi_m^{(2)} q^m 
\left( \begin{smallmatrix}
1&0\\0&q
\end{smallmatrix} \right)^{\otimes (t-1)}  T_{t-1}^{(m)} \left( \begin{smallmatrix}
1&0\\0&1/q
\end{smallmatrix} \right)^{\otimes (t-1)} \otimes \mathcal{B}_t \right)
\end{split}
\end{equation}
and we see that the more complicated term to analyze is the second addend in the right hand side, as the first one becomes
\begin{equation} \label{T induction rhs 1}
\langle \mathbf{v}_1 | \otimes \cdots \otimes \langle \mathbf{v}_{t-1} | \otimes  \left( \langle \mathbf{v}_t | \cdot \mathcal{A}_t \right), 
\end{equation}
using the inductive hypothesis and the Kronecker rule for multiplication of tensor products of matrices and vectors. From the computation of $q$ moments of the probability measure $\pi_\bullet^{(2)}$ it is easy to see that
\begin{equation*}
\sum_{m \geq 0} q^{n m} \pi_m^{(2)} =  \frac{1 -  \mathpzc{d} / (s_2 \xi_2) }{ 1 - \mathpzc{d} s_2 / \xi_2 } \sum_{m\geq 0} q^{(n-1)m} {\pi^{(2)}_m}_{\big|_{ \mathpzc{d} \to q \mathpzc{d} }},
\end{equation*}
which implies the identity
\begin{equation} \label{T shift identity}
\sum_{m \geq 0} \pi_m^{(2)} q^m T_{t-1}^{(m)} = \frac{1 -  \mathpzc{d} / (s_2 \xi_2) }{ 1 - \mathpzc{d} s_2 / \xi_2 } \sum_{m \geq 0} {\pi_m^{(2)}}_{\big|_{ \mathpzc{d} \to q \mathpzc{d} }} T_{t-1}^{(m)},
\end{equation}
where the subscript ${\big|_{ \mathpzc{d} \to q \mathpzc{d} }}$ in the previous two equations denotes that in $\pi_m^{(2)}$ we substitute every $\mathpzc{d}$ with $ q \mathpzc{d}$.
Using \eqref{T shift identity} and the inductive hypothesis again, we can evaluate the second addend in the right hand side of \eqref{T induction} as
\begin{multline} \label{T induction rhs 2}
\langle \mathbf{v}_1 | \otimes \cdots \otimes \langle \mathbf{v}_{t-1} | \cdot \left( \begin{smallmatrix}
1&0\\0&q
\end{smallmatrix} \right)^{\otimes (t-1)} \left(  \frac{1 -  \mathpzc{d} / (s_2 \xi_2 ) }{ 1 - \mathpzc{d} s_2 / \xi_2 } \sum_{m \geq 0} {\pi_m^{(2)}}_{\big|_{ \mathpzc{d} \to q \mathpzc{d} }} T_{t-1}^{(m)} \right) \left( \begin{smallmatrix}
1&0\\0&1/q
\end{smallmatrix} \right)^{\otimes (t-1)} \otimes \left( \mathbf{v}_t \cdot \mathcal{B}_t \right) \\
 = \langle \mathbf{v}_1 | \otimes \cdots \otimes \langle \mathbf{v}_{t-1} | \otimes \left( \frac{1 - \mathpzc{d} / (s_2 \xi_2) }{ 1 - \mathpzc{d} s_2 / \xi_2 } \langle \textbf{v}_t | \cdot \mathcal{B}_t \right).
\end{multline}
So far we were able to transform the right hand side of \eqref{T induction} in the sum of \eqref{T induction rhs 1} and of the right hand side of \eqref{T induction rhs 2}. Employing identity \eqref{eigenrelation T t=1} we recover \eqref{eigenrelation T} for the general $t$ case, which completes the proof.
\end{proof}

Statement of Lemma \ref{lemma bc propagation horiz} implies a certain propagation of the boundary conditions \eqref{boundary conditions burke J=1} in the horizontal direction. The next Lemma addresses their propagation in the vertical direction.

\begin{Lemma} \label{lemma bc propagation vert}
Consider the Higher Spin Six Vertex Model on $\Lambda_{1,0}$ with $J=1$ and boundary conditions as in Lemma \ref{lemma bc propagation horiz}. Then, for all $t \geq 1$ the sequence $\mathsf{m}_2^{t-1}, \mathsf{m}_3^{t-1}, \mathsf{m}_4^{t-1}, \dots, \mathsf{j}_1^{t}, \mathsf{j}_1^{t+1}, \mathsf{j}_1^{t+2}, \dots$ is a family of independent random variables and for each $x \geq 2$ we have $\mathsf{m}_x^{t-1} \sim q\text{\emph{NB}}( s^2_x, \mathpzc{d}/(\xi_x s_x) )$.
\end{Lemma}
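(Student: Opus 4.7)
The plan is to proceed by induction on $t$, the base case $t=1$ being precisely the hypothesis \eqref{boundary conditions burke J=1}. For the inductive step, the variables $\{\mathsf{j}_1^{t+k}\}_{k \geq 0}$ are boundary data, hence independent of each other and of every random variable generated in the bulk; consequently it suffices to show that if $\mathsf{m}_2^{t-2}, \mathsf{m}_3^{t-2}, \dots$ are independent with the claimed $q$-NB laws and $\mathsf{j}_1^{t-1}$ is an independent Bernoulli variable with the claimed mean, then after one row-update via $\mathfrak{X}_{u_{t-1}}$ the outputs $\mathsf{m}_2^{t-1}, \mathsf{m}_3^{t-1}, \dots$ are again mutually independent with the same $q$-NB laws.

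The key ingredient is a \emph{single-vertex Burke identity}: if the inputs $M \sim q\text{NB}(s^2, \mathpzc{d}/(\xi s))$ and $J \sim \text{Ber}(-\mathpzc{d} u/(1 - \mathpzc{d} u))$ to an isolated vertex with weight $\mathsf{L}_{\xi u, s}$ are independent, then the outputs $(M', J')$ sampled from $\mathsf{L}_{\xi u, s}(M, J \mid M', J')$ are also independent with the same marginal laws. For each pair $(M', J')$ the conservation law \eqref{conservation law} restricts the sum $\sum_{(M, J)} \mathbb{P}(M)\mathbb{P}(J)\, \mathsf{L}_{\xi u, s}(M, J \mid M', J')$ to at most two admissible terms; using the explicit expressions in Table \ref{weights table} together with the $q$-binomial theorem one checks that the sum factorises as the product of the same $q$-NB and Bernoulli marginals. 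The marginal statement for $M'$ is essentially the fixed-point computation already carried out in Lemmas \ref{density recurrence} and \ref{Lemma local density J=1}; the additional check for $J'$ is an analogous short calculation.

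Given the single-vertex identity, independence is propagated along the row of ordinate $t-1$ by a nested induction on $N \geq 2$ of the claim that $\mathsf{m}_2^{t-1}, \dots, \mathsf{m}_N^{t-1}$ together with $\mathsf{j}_N^{t-1}$ are mutually independent with the stated $q$-NB and Bernoulli distributions. The base case $N=2$ is the single-vertex identity applied at $(2, t-1)$ to the independent inputs $\mathsf{m}_2^{t-2}$ and $\mathsf{j}_1^{t-1}$ provided by the outer induction. For the step $N \to N+1$, causality together with the outer inductive hypothesis implies that $\mathsf{m}_{N+1}^{t-2}$ is independent of $(\mathsf{m}_2^{t-1}, \dots, \mathsf{m}_N^{t-1}, \mathsf{j}_N^{t-1})$, so the single-vertex identity at $(N+1, t-1)$ produces $(\mathsf{m}_{N+1}^{t-1}, \mathsf{j}_{N+1}^{t-1})$ with the correct joint law and independent of the previously exposed $\mathsf{m}$-variables.

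The main technical obstacle is establishing the full joint single-vertex Burke identity rather than just its $M'$-marginal: once the required independence is phrased as the invariance of a product measure on $\mathbb{Z}_{\geq 0} \times \{0,1\}$, a possibly cleaner route than direct summation is to mimic the matrix-eigenrelation strategy of Lemma \ref{lemma bc propagation horiz}, now with matrices acting on the infinite-dimensional vertical state space $\ell^2(\mathbb{Z}_{\geq 0})$, so that the analog of \eqref{eigenrelation T} expresses the invariance of the product measure $q\text{NB}(s_x^2, \mathpzc{d}/(\xi_x s_x)) \otimes \text{Ber}(-\mathpzc{d} u_{t-1}/(1-\mathpzc{d} u_{t-1}))$ under the row-transfer operator $\mathfrak{X}_{u_{t-1}}$. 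Either route reduces to a bookkeeping application of the $q$-binomial theorem.
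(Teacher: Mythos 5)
Your proposal is correct and takes a genuinely different route from the paper. The paper verifies the whole-row product formula \eqref{vert propagation} in one operator computation: it writes $\mathbb{P}(\mathsf{m}_2^1 = m_2, \dots, \mathsf{m}_x^1 = m_x)$ as a contraction of the one-vertex operators $\mathcal{U}^{(k)}$ against the boundary vectors, and after \eqref{ver propag w U} reduces everything to the scalar identity \eqref{rhs ver propag prob}, which it settles using closed product relations among the $2\times 2$ matrices $\hat{\mathcal{A}}_k$ and $\hat{\mathcal{B}}_k$. You instead extract a \emph{local} invariance statement — your single-vertex Burke identity — and propagate it site by site via a nested induction, carrying the horizontal occupation $\mathsf{j}_N^{t-1}$ along explicitly. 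The two arguments are dual formulations of the same mechanism: your inner induction tracks exactly the $2\times 2$ horizontal state that the paper's matrix product carries implicitly, but with the probabilistic meaning of each step made explicit. Your version is more modular and avoids the somewhat opaque algebra of the $\hat{\mathcal{A}}_k\hat{\mathcal{B}}_{k'}$ product relations, at the cost of needing the joint single-vertex identity as a standalone lemma.

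On that lemma, which you correctly flag as the main obstacle but do not carry out: it does hold, and the check is even simpler than you suggest. For a fixed output $(m',j')$ the conservation law leaves at most two admissible inputs; after inserting the weights from Table \ref{weights table} and the explicit ratios $\pi_{m' \pm 1}/\pi_{m'}$, the two terms telescope purely algebraically to $\mathbb{P}(M'=m', J'=j') = \pi_{m'}\,\mathbb{P}(J=j')$, with no appeal to the $q$-binomial theorem (which is only needed to normalize $\pi$ in the first place). This yields joint independence and preservation of both marginals in one stroke. Your suggested operator alternative is also sound, though note the paper's actual proof of Lemma \ref{lemma bc propagation vert} keeps $2\times 2$ matrices acting on the horizontal state space, exactly as in Lemma \ref{lemma bc propagation horiz}, and sums over the vertical index — rather than passing to infinite-dimensional operators on $\ell^2(\mathbb{Z}_{\geq 0})$ as you propose; both dualizations work.
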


\begin{proof}
By similar argument as in the proof of Lemma \ref{lemma bc propagation horiz} the two families of random variables $\mathsf{m}_2^{t-1}, \mathsf{m}_3^{t-1}, \mathsf{m}_4^{t-1}, \dots$ and $\mathsf{j}_1^{t}, \mathsf{j}_1^{t+1}, \mathsf{j}_1^{t+2}, \dots$ are always independent and in order to prove Lemma \ref{lemma bc propagation vert} it is sufficient to show that, for all $x \geq 2$ and for all $(m_2, \dots, m_x) \in \mathbb{Z}_{\geq 0}^{x-1}$ we have

\begin{equation}\label{vert propagation}
    \mathbb{P}(\mathsf{m}_2^1=m_2, \dots , \mathsf{m}_x^1=m_x ) = \prod_{k=2}^x \pi_{m_k}^{(k)},
\end{equation}
where the quantities $\pi_m^{(k)}$'s were introduced in \eqref{pi factor probability}. Also in this case we follow a rather algebraic approach. First introduce the operator
\begin{equation*}
    \mathcal{U}^{(k)} = \sum_{m, m' \geq 0} | e_m \rangle \begin{pmatrix}
  \mathsf{L}_{\xi_k u, s_k}(m,0|\ m',0) &  \mathsf{L}_{\xi_k u, s_k}(m,0|\ m',1) \\
  \mathsf{L}_{\xi_k u, s_k}(m,1|\ m',0) & \mathsf{L}_{\xi_k u, s_k}(m,1|\ m',1) \\ 
 \end{pmatrix} \langle e_{m'} |
\end{equation*}
and vectors
\begin{equation*}
    \langle \mathbf{w}_k | = \sum_{m \geq 0} \pi_m^{(k)} \langle e_k | , \qquad \mathbf{v}^T= \frac{1}{1 - \mathpzc{d} u } \left(\! \!
    \begin{array}{c}
      1 \\
      - \mathpzc{d} u
    \end{array}
  \! \! \right).
\end{equation*}
Notice that operator $\mathcal{U}^{(k)}$ is the one vertex analog of the transfer operator $\mathfrak{X}_u$ and that the distribution of $\mathsf{m}_2^{1}, \dots, \mathsf{m}_x^{1}$ is given by
\begin{equation} \label{ver propag prob operator form}
    \mathbb{P}(\mathsf{m}_2^1=m_2, \dots, \mathsf{m}_x^1 = m_x) = \mathbf{v} \cdot \left( \bigotimes_{k=2}^x \langle \mathbf{w}_k | \right) \cdot \left( \bigotimes_{k=2}^x  \mathcal{U}^{(k)} \right) \cdot \left( \bigotimes_{k=2}^x | e_{m_k} \rangle \right) \cdot \left(\! \!
    \begin{array}{c}
      1 \\
      1
    \end{array}
  \! \! \right).
\end{equation}
By direct inspection we easily see that 
\begin{equation} \label{ver propag w U}
    \langle \mathbf{w}_k | \cdot \mathcal{U}^{(k)} = \sum_{m' \geq 0} \left( \hat{\mathcal{A}}_k + q^{m'} \hat{\mathcal{B}}_k \right) \pi_{m_k}^{(k)} \langle e_{m'} | ,
\end{equation}
where matrices $\hat{\mathcal{A}}_k, \hat{\mathcal{B}}_k$ are given by
\begin{equation*}
\hat{\mathcal{A}}_k = \frac{1}{1-s_k \xi_k u} \begin{pmatrix}
  1 & - \mathpzc{d} u \\
  \xi_k s_k /\mathpzc{d} & - \xi_k s_k u \\ 
 \end{pmatrix}, \qquad 
 \hat{\mathcal{B}}_k = \frac{1}{1-s_k \xi_k u} \begin{pmatrix}
  - \xi_k s_k u & \mathpzc{d} u s_k^2 \\
  -\xi_k s_k / \mathpzc{d} & s_k^2 \\ 
 \end{pmatrix}.
\end{equation*}
By means of \eqref{ver propag w U} and of the Kronecker rule for multiplication of tensor products we see that the right hand side of \eqref{ver propag prob operator form} reduces to 
\begin{equation} \label{rhs ver propag prob}
    \mathbf{v} \cdot \prod_{k=2}^x \left( \hat{\mathcal{A}}_k + q^{m_k} \hat{\mathcal{B}}_k \right) \cdot \left(\! \!
    \begin{array}{c}
      1 \\
      1
    \end{array}
  \! \! \right) \times \prod_{k=2}^x \pi_{m_k}^{(k)},
\end{equation}
which would prove \eqref{vert propagation} in case the scalar product $\mathbf{v} \cdot \prod_{k=2}^x ( \hat{\mathcal{A}}_k + q^{m'} \hat{\mathcal{B}}_k ) \cdot \binom{1}{1}$ is equal to one. This is a consequence of certain product identities involving matrices $\hat{\mathcal{A}}_k, \hat{\mathcal{B}}_k$. In particular, for all $k, k'$ we have 
\begin{equation*}
    \hat{\mathcal{A}}_k \hat{\mathcal{A}}_{k'} = \hat{\mathcal{A}}_k, \qquad \hat{\mathcal{A}}_k \hat{\mathcal{B}}_{k'} = 0, \qquad \hat{\mathcal{B}}_k \hat{\mathcal{A}}_{k'} = c_{k, k'} \hat{\mathcal{C}}, \qquad \hat{\mathcal{B}}_k \hat{\mathcal{B}}_{k'} = d_{k,k'}\hat{\mathcal{B}}_{k'},
\end{equation*}
where $c_{k,k'}, d_{k,k'}$ are constants depending on parameters $s,\xi,u$ and
\begin{equation*}
    \hat{\mathcal{C}} = \begin{pmatrix}
  1 & - \mathpzc{d} u \\
  1 / (\mathpzc{d} u) & - 1 \\ 
 \end{pmatrix}.
\end{equation*}
Since $\mathbf{v} \cdot \hat{\mathcal{C}} \cdot \binom{1}{1} = \mathbf{v} \cdot \hat{\mathcal{B}}_k \cdot \binom{1}{1} = 0 $ for all $k$, we can now compute \eqref{rhs ver propag prob} as
\begin{equation*}
    \eqref{rhs ver propag prob} = \mathbf{v} \cdot \hat{\mathcal{A}}_2 \cdot \left(\! \!
    \begin{array}{c}
      1 \\
      1
    \end{array}
  \! \! \right) \times \prod_{k=2}^x \pi_{m_k}^{(k)} = \prod_{k=2}^x \pi_{m_k}^{(k)},
\end{equation*}
which proves \eqref{vert propagation}.
\end{proof}

We can now summarize results obtained so far in this Subsection and extend them to the slightly more general setting of the model with fused rows.

\begin{proof}[Proof of Proposition \ref{prop translation invariant}]
The case $J=1$ of Proposition \ref{prop translation invariant} is obtained combining Lemmas \ref{Lemma local density J=1}, \ref{lemma bc propagation horiz}, \ref{lemma bc propagation vert}. To prove the general $J$ case we need to show that $\mathsf{j}_{x-1}^1, \mathsf{j}_{x-1}^2, \mathsf{j}_{x-1}^3, \dots$ are independent $q$-negative binomial random variables respectively of parameters $(q^{-J} , q^J u_t \mathpzc{d})$ for $t= 1,2,3, \dots$, even when $J \neq 1$. Recall that the fusion of rows is obtained by collapsing together $J$ rows of vertices with spectral parameters taken in geometric progression of ratio $q$ (see Proposition \ref{prop BP q exch}). Since, as a result of Lemma \ref{lemma bc propagation horiz}, for each $x\geq 2$, random variables $\mathsf{j}_{x-1}^1, \mathsf{j}_{x-1}^{2}, \mathsf{j}_{x-1}^{3}, \dots $ are independently distributed, the proof reduces to show that, in the unfused model
\begin{equation*}
\mathbb{P} \left( \mathsf{j}_{x-1}^1 + \cdots, \mathsf{j}_{x-1}^{J} = k \right) = \left( q^J \mathpzc{d} u \right)^k \frac{(q^{-J},q)_k}{(q;q)_k} \frac{(q^J \mathpzc{d} u ;q)_\infty}{( \mathpzc{d} u;q)_\infty}, \qquad \text{for all }k\geq 0,
\end{equation*}
when $u_{1}= u,  u_2 = q u, \dots, u_{J} = q^{J-1} u$, which is the statement of Proposition \ref{prop sum bernoulli geom prog}. This concludes the proof.
\end{proof}

Employing result of Proposition \ref{prop translation invariant} we are able to extend the Higher Spin Six Vertex Model to the full lattice $\mathbb{Z} \times \mathbb{Z}$.

\begin{Prop} \label{prop extension HS6VM}
Take paramaters
\begin{equation*}
\mathbf{U}= (\dots, u_{-1}, u_0, u_1, \cdots), \qquad \mathbf{S}= (\dots, s_{-1}, s_0, s_1, \dots), \qquad \Xi = (\dots, \xi_{-1}, \xi_0, \xi_1, \dots)
\end{equation*}
fulfilling conditions \eqref{HS6VM parameters}. Take also a parameter $\mathpzc{d}$ to fulfill condition \eqref{bound curvy d}. Then there exists a probability measure on the set of directed up right paths on $\mathbb{Z} \times \mathbb{Z}$, such that, for each choice of $(x,t)$,
\begin{equation*}
\mathbb{P}\left( \mathsf{m}_x^t=m', \mathsf{j}_x^t = j' |\ \mathsf{m}_x^{t-1}=m, \mathsf{j}_{x-1}^t = j \right) = \mathsf{L}^{(J)}_{u_t \xi_x, s_x} \left( m,j|\ m', j' \right)
\end{equation*}
and $\mathsf{j}_{x-1}^{t}, \mathsf{j}_{x-1}^{t+1}, \mathsf{j}_{x-1}^{t+2}, \dots, \mathsf{m}_{x}^{t-1}, \mathsf{m}_{x+1}^{t-1}, \mathsf{m}_{x+2}^{t-1}, \dots$ are independent random variables distributed as
\begin{equation}\label{eq: boundary cond extension}
\mathsf{m}_{x+k}^{t-1} \sim q\text{\emph{NB}}( s^2_{x+k}, \mathpzc{d}/(\xi_{x+k} s_{x+k})), \qquad \mathsf{j}_{x-1}^{t+k} \sim q\text{\emph{NB}}( q^{-J}, q^{J} \mathpzc{d} u_{t+k}),
\end{equation}
for each $k$. In other words it is possible to define the Higher Spin Six Vertex Model on the lattice $\mathbb{Z} \times \mathbb{Z}$ in such a way that it satisfies the Burke's property and such that, for each choice of vertex $(x,t)$, paths entering the restricted lattice $\Lambda_{x-1,t-1}$ are distributed as in \eqref{eq: boundary cond extension}. We refer to this model as the \emph{full plane Stationary Higher Spin Six Vertex Model}.
\end{Prop}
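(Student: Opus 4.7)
The plan is to build the full-plane measure as a projective limit of the quadrant measures supplied by Proposition \ref{prop translation invariant}, using the Burke's property as the compatibility condition.

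First I would set up, for every $(x',t') \in \mathbb{Z}\times\mathbb{Z}$, the probability measure $\mathcal{P}_{x',t'}$ on $\mathfrak{P}(\Lambda_{x',t'})$ defined as the (fused) Higher Spin Six Vertex Model in $\Lambda_{x',t'}$ with vertex weights $\mathsf{L}^{(J)}_{\xi_x u_t, s_x}$ and boundary conditions
\[
\mathsf{m}_{x'+k}^{t'} \sim q\text{NB}(s_{x'+k}^{2}, \mathpzc{d}/(\xi_{x'+k}s_{x'+k})), \qquad \mathsf{j}_{x'}^{t'+k} \sim q\text{NB}(q^{-J}, q^{J}\mathpzc{d} u_{t'+k}),
\]
independently across $k\geq 1$. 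Such a measure is well-defined by the construction described after \eqref{eq: vertex model rule}, since the parameters $q,\Xi,\mathbf{S},\mathbf{U},\mathpzc{d}$ satisfy \eqref{HS6VM parameters} and \eqref{bound curvy d}.

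The next step is to establish the consistency relation: whenever $x''\geq x'$ and $t''\geq t'$, the marginal of $\mathcal{P}_{x',t'}$ on the sub-configuration restricted to $\Lambda_{x'',t''}$ coincides with $\mathcal{P}_{x'',t''}$. This is precisely what Proposition \ref{prop translation invariant} guarantees: the $q$-negative binomial boundary law \eqref{bc Burke} is preserved by the Markov propagation of the model, so that the induced distribution of $\{\mathsf{m}_{x''+k}^{t''},\mathsf{j}_{x''}^{t''+k}\}_{k\geq 1}$ under $\mathcal{P}_{x',t'}$ matches the boundary prescription of $\mathcal{P}_{x'',t''}$. Since the vertex weights in the bulk are unchanged, agreement of the boundary laws is enough to force agreement of the full joint law on $\Lambda_{x'',t''}$.

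Then I would exhaust $\mathbb{Z}\times\mathbb{Z}$ by a sequence $\Lambda_{x'_n,t'_n}$ with $x'_n,t'_n\to -\infty$; since $\Lambda_{x'_{n+1},t'_{n+1}} \supset \Lambda_{x'_n,t'_n}$ and the measures are consistent, Kolmogorov's extension theorem (applied to the countable collection of cylinder events determined by configurations in bounded windows) produces a unique probability measure $\mathbb{P}$ on the configuration space $(\mathbb{Z}\geq 0)^{\mathbb{Z}\times\mathbb{Z}} \times (\mathbb{Z}\geq 0)^{\mathbb{Z}\times\mathbb{Z}}$ whose restriction to every $\Lambda_{x',t'}$ equals $\mathcal{P}_{x',t'}$. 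From this definition, the single-vertex update $\mathbb{P}(\mathsf{m}_x^t=m',\mathsf{j}_x^t=j' \mid \mathsf{m}_x^{t-1}=m,\mathsf{j}_{x-1}^t=j) = \mathsf{L}^{(J)}_{\xi_x u_t,s_x}(m,j\mid m',j')$ is inherited directly from the corresponding property of any $\mathcal{P}_{x',t'}$ with $x'\leq x-1$, $t'\leq t-1$, and the boundary-type distributions \eqref{eq: boundary cond extension} along $\partial\Lambda_{x-1,t-1}$ are the boundary distributions of $\mathcal{P}_{x-1,t-1}$ by construction.

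The only point that requires some care is the check that the a.s. finiteness of the boundary random variables and the sum-to-one conditions \eqref{eq: sum to one} are indeed enough to guarantee that $\mathcal{P}_{x',t'}$ gives mass zero to configurations with infinitely many paths through any finite vertex, so that cylindrical events of the kind appearing in Kolmogorov's theorem form a bona fide semi-algebra. This is routine once one observes that for fixed $(x,t)$, the variables $\mathsf{m}_x^t,\mathsf{j}_x^t$ are bounded in distribution by the sums of the finitely many boundary inputs feeding into the rectangle $[x',x]\times[t',t]$ in $\Lambda_{x',t'}$, each of which is a.s. finite. With this verification in place, the measure produced above is the desired full plane Stationary Higher Spin Six Vertex Model.
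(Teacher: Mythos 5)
Your proof is correct and follows essentially the same approach as the paper: both use Proposition \ref{prop translation invariant} to establish consistency of the family of quadrant measures and then invoke an extension theorem (the paper cites Carath\'eodory, you cite Kolmogorov, but in this setting these are interchangeable). The only difference is that you spell out the semi-algebra/finiteness verification that the paper treats as implicit.
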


\begin{proof}
The procedure we follow to prove this extension result is fairly standard and it has been utilized, for the translation invariant Six Vertex Model case in \cite{Aggarwal2016FluctuationsASEP}, Appendix A.

We call $\mathbb{P}_{N}$ the probability measure of the Higher Spin Six Vertex Model defined on the lattice $\Lambda_{-N,-N}$ with boundary conditions given by
\begin{equation*}
\mathsf{m}_{-N + k}^{-N} \sim q\text{NB}(s^2_{-N+k}, \mathpzc{d}/(\xi_{-N+k} s_{-N+k})), \qquad \mathsf{j}_{-N}^{-N + k} \sim q\text{NB}( q^{-J}, q^{J} \mathpzc{d} u_{-N+k}).
\end{equation*}
Let $E_N$ be an event involving only configurations of paths in $\Lambda_{-N,-N}$. It is clear, from Proposition \ref{prop translation invariant}, that, for each $N^* >N$, 
\begin{equation*}
\mathbb{P}_{N^*}(E_N) = \mathbb{P}_{N}(E_N)
\end{equation*}
and therefore $\mathbb{P}_{N^*}$ extends $\mathbb{P}_N$. By using the Caratheodory's extension theorem we can take the limit $N^* \to \infty $ and deduce the existence of a measure $\mathbb{P}_{\infty}$ to finally define the translation invariant Higher Spin Six Vertex Model on $\mathbb{Z}^2$.
\end{proof}

We close this Subsection explaining the reason behind the use of the terminology "Burke's property" of Definition \ref{def: Burke property}. Here we rephrase a generalization by Ferrari and Fontes of a theorem by Burke \cite{Burke,FerrariFontes1994NetOutput}, stated for queuing systems in a language more familiar to us.

\begin{theorem}[Burke] Let $\{y_x\}_{x\in \mathbb{Z}}$ be a totally asymmetric simple exclusion process where, at time $t=0$, $y_1 = -1$ a.s. and consecutive particles are spaced independently with geometric distribution of parameter $\mathpzc{d}$. Then, the distribution of gaps is stationary in time and the marginal distribution of $y_1$ is that of a Poisson process with rate $1 - \mathpzc{d}$.
\end{theorem}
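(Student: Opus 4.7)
The plan is to give a direct queuing-theoretic proof in the spirit of Burke's original argument \cite{Burke} and the Ferrari-Fontes generalization \cite{FerrariFontes1994NetOutput}, noting that this theorem is also the continuous-time, $J=1$ degeneration of the Burke property proved in Proposition \ref{prop translation invariant}.

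First I would establish stationarity of the gap distribution. Writing $g_i := y_{i+1}-y_i-1$, the TASEP induces on $(g_i)_{i\geq 1}$ the following dynamics: a jump of $y_i$ (at rate $1$ when $g_i\geq 1$) decreases $g_i$ by one and increases $g_{i-1}$ by one, with the would-be $g_0$ playing the role of an infinite reservoir to the left of $y_1$. This is a tandem of $M/M/1$ queues, and a direct application of the infinitesimal generator to local test functions shows that the product of geometric measures of parameter $\mathpzc{d}$ is invariant.

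Next I would identify the process of jump times of $y_1$ as the departure process from the leftmost queue. In equilibrium this queue has effective arrival rate $1-\mathpzc{d}$ and service rate $1$; by Burke's theorem, its departure stream is a Poisson process of rate $1-\mathpzc{d}$, independent of the current queue length. The cleanest modern proof of this last property uses time-reversibility of the queue-length Markov chain: reversed in time, the chain is itself an $M/M/1$ queue, whose arrival process (that is, the original departure process read backwards) is trivially Poisson, so the forward departure process is Poisson of the same rate and independent of the current state.

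The main subtlety, and the substantive content of Burke's theorem, is exactly this independence between the output process and the current queue length; without it one would only recover a renewal process with the right one-dimensional marginal. An alternative route, which I would present as a consistency check, is to derive the theorem directly from Proposition \ref{prop translation invariant} by a $q\to 0$ six-vertex degeneration coupled with a continuous-time rescaling $u_t = -\epsilon$, $t = \lfloor \tau/\epsilon\rfloor$: the discrete Bernoulli variables $\mathsf{j}_1^t$ then aggregate over $[0,\tau]$ into a Poisson process of rate $1-\mathpzc{d}$, matching the trajectory of $y_1$, while the $q$-negative binomial laws of $\mathsf{m}^0_x$ collapse to the prescribed geometric gap measure.
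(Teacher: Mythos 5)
The paper does not prove this theorem; it restates it, citing \cite{Burke,FerrariFontes1994NetOutput}, only to justify the name ``Burke's property'' in Definition \ref{def: Burke property}, so you are supplying a proof from scratch. Your queuing-theoretic route is the right one in spirit, but there is a real gap in the central step. You write that the queue in front of $y_1$ ``has effective arrival rate $1-\mathpzc{d}$ and service rate $1$; by Burke's theorem, its departure stream is a Poisson process.'' For the bi-infinite TASEP $\{y_x\}_{x\in\mathbb{Z}}$ of the statement, the arrivals to the gap ahead of $y_1$ are the jumps of $y_0$, i.e.\ the departures of $y_0$'s queue, and so on ad infinitum; this arrival stream is not a priori Poisson, so the gap process at $y_1$ is not an autonomous Markov M/M/1 chain and the single-queue time-reversal argument you quote cannot be applied to it directly. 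The reasoning as written is circular, and breaking that circle is precisely the non-trivial content of \cite{FerrariFontes1994NetOutput}: one either time-reverses the \emph{entire} infinite TASEP (under the product geometric measure the reversal is again a TASEP with reflected jumps, so the tagged-particle jump times read backwards are manifestly Poisson), or passes to a limit of semi-infinite tandems with a genuine leading M/M/1 queue feeding the rest.

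Two secondary points worth fixing. The convention $g_i=y_{i+1}-y_i-1$ is inconsistent with the update rule you then describe ($y_i$ jumping decreases $g_i$ and increases $g_{i-1}$); with the labelling $y_{i-1}>y_i$ you want $g_i=y_{i-1}-y_i-1$. And the $q\to 0$ consistency check does not yield what you assert: with $u_t=-\epsilon$, the Bernoulli parameter in \eqref{j bernoulli} degenerates to $\approx\mathpzc{d}\epsilon$, which aggregates to a Poisson process of rate $\mathpzc{d}$, not $1-\mathpzc{d}$, while the $q\to 0$ limit of $q\text{NB}(0,\mathpzc{d})$ has $\mathbb{P}(g=0)=1-\mathpzc{d}$ (so $\mathpzc{d}$ is not the density in that limit). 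The theorem statement implicitly uses the opposite convention, with $\mathpzc{d}$ the density and $\mathbb{P}(g=n)=\mathpzc{d}(1-\mathpzc{d})^n$; you should reconcile the two conventions explicitly rather than asserting agreement.
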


In the Higher Spin Six Vertex Model on the full plane $\mathbb{Z} \times \mathbb{Z}$ defined by Proposition \ref{prop extension HS6VM}, regarding the generic occupancy number $\mathsf{m}_x^t$ as the gap between the $(x-1)$-th and the $x$-th particle of a process $\{ y_x \}_{x \in \mathbb{Z} }$ at time $t$, we obtain a discrete time generalization of the totally asymmetric simple exclusion process \cite{CorwinPetrov2016HSVM}. A consequence of Proposition \ref{prop extension HS6VM} is that, in this generalized model, when at time $t=0$ we set $y_1=-1$ a.s. and consecutive particles are independently spaced with $q$-negative binomial distribution of parameters $( s_x^2, \mathpzc{d}/(\xi_x, s_x))$, then the marginal process $y_1$ is equivalent, in distribution, to a sequence of independent jumps with $q$-negative binomial distribution of parameters $( q^{-J}, q^J \mathpzc{d} u_t)$. This analogy should justify our choice of words. 

A concept analogous to the Burke's property stated in Definition \ref{def: Burke property} appeared already in literature in the context of random polymers (\cite{balazs2006},\cite{seppalainen2012}). In particular, in \cite{seppalainen2012} the author considers the log-Gamma directed polymers model with random external sources. This sort of model is known to be described by the so called $\alpha$-Whittaker processes \cite{BorodinCorwin2014Mac}, of which the $q$-Whittaker processes presented in Section \ref{section qwhittaker} represent a "quantized" generalization. Although we do not describe here relations between the model studied in \cite{seppalainen2012} and the Stochastic Higher Spin Six Vertex Model, we will say that the role played by random external sources in polymer models is analogous to that played by boundary conditions in the Higher Spin Six Vertex Model or to that played by random initial conditions for totally asymmetric simple exclusion processes.

\subsection{Exactly solvable boundary conditions} \label{subsection integrable initial}

In the previous Subsection we characterized the family of Higher Spin Six Vertex Models satisfying the Burke's property. Here we explain how the study of the model with double sided $q$-negative binomial boundary conditions is accessible by properly specializing parameters $\Xi,\mathbf{S}, \mathbf{U}$ starting from a Higher Spin Six Vertex Model with step boundary conditions.

For later purpose we now introduce the quantities
\begin{equation} \label{weight l}
\boldsymbol\ell^{(i)}_{\wp,v}(M;\overline{M}) =  \Big( \frac{v}{\xi_i s_i} \Big)^M \frac{(s_i^2 ,\wp q^{\overline{M}};q)_M}{( v \wp q^{\overline{M}} s_i / \xi_i,q;q)_M} \frac{( v \wp q^{\overline{M}} s_i /\xi_i  ,v/(\xi_i s_i);q)_\infty}{( v \wp q^{\overline{M}} /(\xi_i s_i ) ,v s_i /\xi_i ;q)_\infty} ,
\end{equation}
which are families of probability mass functions (in $M$), provided $\overline{M}$ is a non-negative integer and parameters $\wp, v$ satisfy one of the two conditions
\begin{equation} \label{condition wp v 1}
    v \text{ as in \eqref{condition v}} \qquad \text{and} \qquad \wp < 1,
\end{equation}
or
\begin{equation}\label{condition wp v 2}
    v < 0 \qquad \text{and} \qquad \wp = q^{-K}, \qquad \text{for } K \in \mathbb{Z}_{\geq \overline{M}}.
\end{equation}
In case $v, \wp$ are taken according to \eqref{condition wp v 2}, $\boldsymbol\ell^{(i)}_{\wp,v}(\bullet;\overline{M})$ is supported on the set $\{0, K - \overline{M} \}$, whereas when they are taken as in \eqref{condition wp v 1}, expression \eqref{weight l} takes positive values  for each $M \in \mathbb{Z}_{\geq 0}$. In both cases the sum-to-one condition in $M$ is guaranteed by the $q$-Gauss summations \eqref{q Gauss sum}.

The next definition is rather technical and aims to describe the most general set of boundary conditions we will cover in the remaining part of the paper.

\begin{Def} \label{def: bc general}
Consider a random variable $\mathsf{m}\sim q\emph{NB}(\wp,v/\mathpzc{d})$, with $\mathpzc{d} > \max(0,v)$. We denote with the symbol $\mathbb{P}_{\wp, v, \mathpzc{d}}$ the probability measure of a coupling of $\mathsf{m}$ with a Higher Spin Six Vertex Model on $\Lambda_{1,0}$, in which $\mathsf{j}_1^t \sim q\emph{NB}(q^{-J}, q^{J} u_t \mathpzc{d})$ for $t = 1, 2, 3 \dots$ are independent random variables and $\mathsf{m}_2^0, \mathsf{m}_3^0, \mathsf{m}_4^0, \dots$, conditionally to $\mathsf{m}$, have law
\begin{equation} \label{eq: bc general}
    \mathbb{P}_{\wp, v, \mathpzc{d}} (\mathsf{m}_2^0=m_2, \dots, \mathsf{m}_x^0=m_x | \mathsf{m} = m_1) = \prod_{i=2}^x \boldsymbol\ell^{(i)}_{\wp,v} \left( m_i;\sum_{j=1}^{i-1}m_j \right).
\end{equation}
For this particular model we introduce the shifted height function
\begin{equation} \label{height H bar}
\overline{\mathcal{H}}(x,t) =  \mathcal{H}(x,t) -\mathsf{m}.
\end{equation}
When parameters $\wp, v$ are taken as in \eqref{condition wp v 1} and $\wp = 0$, $\mathsf{m}$ and the Higher Spin Six Vertex Model become independent processes and in this case we use the notation $\mathbb{P}_{0, v, \mathpzc{d}} = \mathbb{P}_{\text{\emph{HS}}(v,\mathpzc{d}) \otimes \mathsf{m} }$.
\end{Def}

Indeed the probability measure $\mathbb{P}_{\wp, v, \mathpzc{d}}$ introduced in Definition \ref{def: bc general} represents a generalization of the double sided $q$-negative binomial Higher Spin Six Vertex Model. The reason why the choice $\wp=0$ decouples $\mathsf{m}, \mathsf{m}_2^0, \mathsf{m}_3^0, \dots$ comes from the exact expression \eqref{weight l} of weights $\boldsymbol\ell^{(i)}_{\wp,v}$. In fact, for any $k$, the law of $\mathsf{m}_k^0$ depends on the outcome of $\mathsf{m}_2^0, \mathsf{m}_3^0, \dots, \mathsf{m}_{k-1}^0$ and $\mathsf{m}$ only when the factor $\wp q^{\overline{M}}$ is different than zero. By setting $\wp=0$ we see that $\mathsf{m}_2^0, \mathsf{m}_3^0, \dots$ become independent $q$-negative binomials of parameters respectively $(s_k^2, v/(\xi_k s_k))$ for $k=2,3,\dots$, whereas $\mathsf{m}$ becomes a $q$-Poisson random variable of parameter $v/\mathpzc{d}$ independent of the rest of the process. The reason why we consider a coupling between the Higher Spin Six Vertex Model and the random variable $\mathsf{m}$ becomes clear with the construction we present next. We claim in fact that the measure $\mathbb{P}_{\wp, v, \mathpzc{d}}$ is obtained as a marginal process of a certain specialization of the Higher Spin Six Vertex Model with step boundary conditions. We give the following 
\begin{Prop} \label{prop: bc general}
Let $K \in \mathbb{Z}_{\geq 1}$ and consider the Higher Spin Six Vertex Model on the lattice $\Lambda_{0,-K}$ with step boundary conditions (that is $\mathsf{j}_0^t=1$ a.s. for $t \geq -K+1$ and $\mathsf{m}_x^{-K}=0$ a.s. for $x \geq 1$).  Spectral parameters are taken as 
\begin{equation} \label{parameter initial conditions}
\mathbf{U}=(q/v, q^2/v, \dots, q^K/v ) \cup (u_1,u_2,\dots),
\end{equation}
where $v<0$ and at $x=1$, $\xi_1, s_1$ are given setting
\begin{equation} \label{first s xi}
s_1=1/N, \qquad \xi_1 = \mathpzc{d} N \qquad \text{and taking the limit }N \to \infty.
\end{equation}
Then, the marginal process on the lattice $\Lambda_{1,0}$ is described by the law $\mathbb{P}_{\wp, v, \mathpzc{d}}$, presented in Definition \ref{def: bc general} with parameters $v, \wp$ as is \eqref{condition wp v 2}.
\end{Prop}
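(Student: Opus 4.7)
The strategy is to realize $\mathbb{P}_{\wp,v,\mathpzc{d}}$ (with $\wp = q^{-K}$, $v<0$) as the marginal on $\Lambda_{1,0}$ of the step-BC Higher Spin Six Vertex Model on the enlarged lattice $\Lambda_{0,-K}$, by combining the fusion of the $K$ adjoined rows with the degeneration $s_1\to 0$, $\xi_1 s_1 = \mathpzc{d}$ at the leftmost column. Since the $K$ adjoined spectral parameters $q/v, q^2/v, \dots, q^K/v$ form a geometric progression of ratio $q$, Proposition \ref{prop BP q exch} (equivalently the fusion procedure of Section \ref{subsection fused dynamics}) collapses these $K$ unfused rows into a single fused row of capacity $K$ and spectral parameter $u=q/v$, governed by the weights $\mathsf{L}^{(K)}_{\xi_x q/v, s_x}$ of \eqref{fused weights L}. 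Under step boundary conditions, one path enters each of the original rows from the left at column $x=0$, so after fusion this bottom row receives the deterministic horizontal input $\mathsf{j}_0 = K$.

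\textbf{Degeneration at column 1.} With $s_1=1/N$, $\xi_1 s_1 = \mathpzc{d}$ and $N\to\infty$, two effects take place at column $x=1$. First, for the vertices in column $1$ \emph{above} the fused bottom row ($t\geq 1$), direct inspection of Table \ref{weights table} shows that in the limit $s_1\to 0$ the probabilities of a horizontal path passing through or turning up become independent of the vertical occupation, reproducing the mechanism used in Section \ref{subsection fused dynamics} under the assumption $\mathsf{m}_1^0 = \infty$ (compare \eqref{limit L J=1 g=inf 1}--\eqref{limit L J=1 g=inf 2}). Consequently $\mathsf{j}_1^t$ for $t\geq 1$ are mutually independent with law $q\mathrm{NB}(q^{-J}, q^J u_t \mathpzc{d})$ and independent of the fused bottom row. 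Second, for the $K$ vertices in column $1$ \emph{within} the bottom block, the same limit gives at vertex $(1,-K+k)$ a Bernoulli probability $v/(v-\mathpzc{d} q^k)$ that the incoming horizontal path turns upward, independent of the vertical history; summing these $K$ independent Bernoullis via Proposition \ref{prop sum bernoulli geom prog} yields $\mathsf{m}_1^0\sim q\mathrm{NB}(q^{-K}, v/\mathpzc{d})$, matching the distribution of $\mathsf{m}$ for $\wp = q^{-K}$.

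\textbf{Propagation along the fused row and main obstacle.} Conditional on $\mathsf{m}_1^0,\dots,\mathsf{m}_{x-1}^0$, the horizontal input to the fused vertex at $(x,*)$ is the deterministic quantity $K-\overline M$ with $\overline M=\sum_{i<x}\mathsf{m}_i^0$, so the conditional law of $\mathsf{m}_x^0$ equals the single fused weight $\mathsf{L}^{(K)}_{\xi_x q/v, s_x}(0, K-\overline M \,|\, m, K-\overline M - m)$. In the formula \eqref{fused weights L} the top hypergeometric parameter $q^{-i_1}=q^0=1$ forces the ${}_4\bar\phi_3$ series to collapse to its $n=0$ term, reducing the fused weight to an explicit product of $q$-Pochhammer symbols. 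The principal technical obstacle is to show, via the reflection identity and the $q$-Gauss / $q$-Chu--Vandermonde summations of Appendix \ref{appendix qfunctions}, that after relabeling this product coincides with the compact form $\boldsymbol\ell^{(x)}_{q^{-K}, v}(m;\overline M)$ of \eqref{weight l}, with positivity and the sum-to-one normalization falling within the range \eqref{condition wp v 2}. Once this algebraic identification is made, combining the three ingredients ($\mathsf{m}_1^0$ playing the role of $\mathsf{m}$, the $(\mathsf{m}_x^0)_{x\geq 2}$ satisfying \eqref{eq: bc general}, and the independent $q$NB-distributed $\mathsf{j}_1^t$'s) produces exactly the marginal $\mathbb{P}_{\wp,v,\mathpzc{d}}$ with $\wp=q^{-K}$; the sign choice $v<0$ ensures $u_t=q^k/v<0$ and hence the stochasticity condition \ref{HS6VM parameters list} for the adjoined block.
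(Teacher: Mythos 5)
Your plan matches the paper's proof in all essentials: fuse the $K$ adjoined rows (spectral parameters in geometric progression) into one fused row with $\mathsf{j}_0 = K$, degenerate $s_1\to 0$, $\xi_1 s_1 = \mathpzc{d}$ at column one, and read the resulting joint law of $\mathsf{m}_1^0,\dots,\mathsf{m}_x^0$ off the product of fused weights. You also correctly spot that the parameter $q^{-i_1}=1$ truncates the ${}_4\bar\phi_3$ to its $n=0$ term. However, you label the identification of the resulting Pochhammer product with $\boldsymbol\ell^{(i)}_{q^{-K},v}(M;\overline M)$ as ``the principal technical obstacle'' and leave it undone; this is precisely what the paper's Lemma \ref{prop fused weight l simplified} establishes (two short steps: a multiplicative rewriting to expose the factor $(v/(\xi s))^{i_2} (s^2;q)_{i_2}(q^{-j_1};q)_{i_2}\big/\big((q;q)_{i_2}(vq^{-j_1}s/\xi;q)_{i_2}\big)$, then normalization via the $q$-Gauss summation). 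Without this lemma the argument has a material gap, since it is the crux that turns ``marginal of a step-BC model'' into the specific product form \eqref{eq: bc general}. Two smaller points: the paper does not need a separate Bernoulli argument for $\mathsf{m}_1^0$ since it treats $(1,\ast)$ as just another column and applies the same lemma; your Bernoulli derivation is a valid and rather elementary alternative, but the direct citation of Proposition \ref{prop sum bernoulli geom prog} is not literal — the turn-up probabilities are $1/(1+q^k\mathpzc{d}/|v|)$, which have the complementary form, so you must first pass to $1-Y_k$ and reverse the index $k\mapsto K+1-k$ before the Proposition applies (it does, and yields $q\mathrm{NB}(q^{-K},v/\mathpzc{d})$, but this re-indexing deserves a sentence).
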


The proof of Proposition \ref{prop: bc general} boils down to finding the following simplified expression for the fused vertex weight \eqref{fused weights L}.

\begin{Lemma} \label{prop fused weight l simplified}
We have
\begin{equation} \label{fused weight L simplified}
\mathsf{L}_{\xi qv^{-1} ,s}^{(K)}(0, j_1|\ i_2, j_1-i_2) = \left( \frac{v}{\xi  s} \right)^{i_2} \frac{(s^2,q^{-j_1};q)_{i_2}}{( v q^{-j_1} s / \xi ,q;q)_{i_2} } \frac{ ( v q^{-j_1}s / \xi, v / (\xi  s) ,q)_\infty  }{ ( v q^{-j_1} / (\xi s ) , v s / \xi  ;q)_\infty } .
\end{equation}
\end{Lemma}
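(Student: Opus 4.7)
The plan is to substitute $u = \xi q v^{-1}$, $i_1 = 0$, $j_2 = j_1 - i_2$, and $J = K$ directly into the general fused weight formula \eqref{fused weights L}, and then simplify the resulting expression. The key structural observation is that, with $i_1 = 0$, the parameter $q^{-i_1}$ equals $1$, so the regularized ${}_4\bar{\phi}_3$ has one of its upper parameters equal to $1$. Using $(1;q)_k = 0$ for all $k \geq 1$, the hypergeometric series is forced to collapse to a single explicit term, whose precise value is dictated by the convention for the regularized $q$-hypergeometric function recalled in Appendix \ref{appendix qfunctions}.

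After evaluating ${}_4\bar{\phi}_3$ in this way, I would simplify the product of the rational prefactor with the ${}_4\bar{\phi}_3$ contribution using elementary $q$-Pochhammer identities. In particular, the $K$-dependent factor $(q^{K+1-j_1};q)_{i_2}$ in the denominator of the prefactor of \eqref{fused weights L} should cancel against the corresponding factor generated by the ${}_4\bar{\phi}_3$, consistently with the fact that the claimed right-hand side \eqref{fused weight L simplified} has no explicit $K$-dependence. The telescoping identity $(q^{n+1};q)_m = (q;q)_{n+m}/(q;q)_n$ then packages the surviving ratios into a $q$-binomial coefficient $\binom{j_1}{i_2}_q$.

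To match the resulting simplified expression against the claimed right-hand side \eqref{fused weight L simplified}, I would collapse the ratio of infinite products appearing there via $(aq^{-n};q)_\infty / (a;q)_\infty = (aq^{-n};q)_n$ into a ratio of finite $q$-Pochhammer symbols, then apply the inversion identities $(aq^{-n};q)_n = (-a)^n q^{-\binom{n+1}{2}}(q/a;q)_n$ and $(q^{-j_1};q)_{i_2} = (-1)^{i_2} q^{\binom{i_2}{2} - j_1 i_2}(q;q)_{j_1}/(q;q)_{j_1-i_2}$ to bring both sides into a common canonical form, after which the equality becomes manifest.

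The main obstacle is the careful interpretation of the regularized ${}_4\bar{\phi}_3$ at the specialization $q^{-i_1} = 1$, together with the precise tracking of signs and powers of $q$ produced by the inversion identities used to eliminate the $q^{-j_1}$-shifted arguments in \eqref{fused weight L simplified}. Once these algebraic points are clarified, the remaining verification is a routine piece of $q$-Pochhammer bookkeeping.
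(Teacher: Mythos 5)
Your proposal is correct, and your opening move is essentially the same one the paper uses (though the paper leaves it implicit): with $i_1=0$ the upper argument $q^{-i_1}=1$ forces $(1;q)_k=0$ for $k\geq 1$, so the regularized ${}_4\bar{\phi}_3$ in \eqref{fused weights L} collapses to its $k=0$ term $\prod_{j}(b_j;q)_{i_2}$, and the factor $(q^{K+1-j_1};q)_{i_2}$ generated there cancels the one in the prefactor's denominator, which is why the result is $K$-independent. The two routes diverge in the final step. You propose to finish by brute force: convert the infinite-product ratio on the right of \eqref{fused weight L simplified} to finite Pochhammers via $(aq^{-n};q)_\infty/(a;q)_\infty=(aq^{-n};q)_n$, apply the inversion identities $(aq^{-n};q)_n=(-a)^n q^{-\binom{n+1}{2}}(q/a;q)_n$ and $(q^{-j_1};q)_{i_2}=(-1)^{i_2}q^{\binom{i_2}{2}-j_1 i_2}(q;q)_{j_1}/(q;q)_{j_1-i_2}$, and cancel. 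I checked that this matches term by term (the powers of $s$ and $q$ indeed cancel cleanly), so the argument goes through — but it involves careful sign and exponent bookkeeping. The paper instead avoids all of that: once the simplified expression is massaged into an $i_2$-dependent factor of the form \eqref{phi21 factor}, times something independent of $i_2$, the proportionality constant is forced by the fact that $\mathsf{L}^{(K)}(0,j_1|\ \cdot,\cdot)$ must sum to one over $i_2$, and the $q$-Gauss summation \eqref{q Gauss sum} supplies the value of that sum. This sidesteps explicit manipulation of the infinite-product prefactor and is less error-prone. Your approach is a valid direct verification; the paper's is a slicker structural argument that exploits stochasticity rather than computing through it.
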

\begin{proof}
From the exact expression of the weight $\mathsf{L}^{(K)}$ in \eqref{fused weights L}, setting the number $i_1$ of paths entering the vertex from below to zero, its rather complicated formula simplifies to
\begin{equation*}
\mathsf{L}_{\xi qv^{-1} ,s}^{(K)}(0, j_1|\ i_2, j_1-i_2)=\frac{(s^2;q)_{i_2}}{(q;q)_{i_2}}\frac{(q^{1+j_1-i_2};q)_{i_2}s^{2(j_1-i_2)}(q \xi /( v s ) ;q)_{j_1-i_2}}{(q\xi s /v ;q)_{j_1}}.
\end{equation*}
By multiplying and dividing this last expression by 
$$
\prod_{l=0}^{i_2-1}(s^2-\xi s q^{j_1 -i_2 +l+1 }/v)
$$
and taking out of the product all factors depending only on $j_1$ we get a term proportional to
\begin{equation} \label{phi21 factor}
\left( \frac{v}{\xi s} \right)^{i_2} \frac{(s^2;q)_{i_2} (q^{-j_1};q)_{i_2}}{(q;q)_{i_2} (v q^{-j_1}s / \xi;q)_{i_2}}.
\end{equation}
Result \eqref{fused weight L simplified} is obtained normalizing \eqref{phi21 factor} so that its sum over all $i_2$ is one and this is done by means of the $q$-Gauss summation \eqref{q Gauss sum}.
\end{proof}

\begin{proof}[Proof of Proposition \ref{prop: bc general}]
First we observe that choice \eqref{first s xi} generates $q$-negative binomial random entries in the vertical boundary of the lattice $\Lambda_{1,0}$. By substituting the values of $\xi_1, s_1$ in the definition of transition probabilities $\mathsf{L}$, we have
\begin{equation}\label{limit L J=1}
\lim_{N \to \infty} \mathsf{L}_{u_t \mathpzc{d}N,1/N}(m,1|\ m,1) = \frac{ - u_t \mathpzc{d}}{ 1- u_t \mathpzc{d} },
\end{equation}
which also implies, using Proposition \ref{prop sum bernoulli geom prog},
\begin{equation} \label{limit LJ}
\lim_{N \to \infty} \mathsf{L}_{u_t \mathpzc{d}N,1/N}^{(J)}(m,J|\ m+J-l,l) = \left( q^J \mathpzc{d} u_t \right)^l \frac{(q^{-J},q)_l}{(q;q)_l} \frac{(q^J \mathpzc{d} u_t ;q)_\infty}{( \mathpzc{d} u_t;q)_\infty}.
\end{equation}
This procedure of obtaining independent random entries at column $\{(2,t)\}_{t \geq 1}$ is alternative to that presented in Section \ref{section HS}, where in \eqref{limit L J=1 g=inf 1}, \eqref{limit L J=1 g=inf 2} a result analogous to \eqref{limit L J=1} was achieved setting $\mathsf{m}_1^0 = \infty$ a.s.. Here the value of $\mathsf{m}_1^0$ depends on the process on the strip $\mathbb{Z}_{\geq 1} \times \{ -K+1, \dots, 0 \}$ and it is in general not infinite. From \eqref{parameter initial conditions} we see that the first $K$ spectral parameters (those related to non-positive ordinates $t$) are in geometric progression of ratio $q$ and therefore we can use the notion of fused transfer operator $\mathfrak{X}_{q/v}^{(K)}$, formally given by \eqref{fused transfer matrix}, to calculate the probability
\begin{equation} \label{probability first x vertices}
\begin{split}
\mathbb{P}\left(\mathsf{m}_1^0=m_1, \dots ,\mathsf{m}_x^0=m_x  \right)&=\sum_{m_{x+1},m_{x+2},\dots} \mathfrak{X}^{(K)}_{q/v}(K,\emptyset \to 1^{m_1}2^{m_2}\cdots)\\
&=\mathsf{L}_{q \xi_1 v^{-1},s_1}^{(K)}(0, K|\ m_1, j_1) \prod_{i=2}^x \mathsf{L}_{q \xi_i v^{-1},s_i}^{(K)}(0,j_{i-1}|\ m_i, j_i),
\end{split}
\end{equation}
where $j_i=K-m_1 -\dots -m_i$ for $i=1, \dots, x$. In the last expression we took account of the boundary conditions and we let no path enter the axis $\mathbb{Z}_{\geq 1} \times \{ -K+1 \}$ from below and exactly $K$ paths entered the region $\mathbb{Z}_{\geq 1} \times \{ - K +1 , \dots, 0\}$ from the leftmost column of vertices. All factors in the right hand side of \eqref{probability first x vertices} are of the form
$$
\mathsf{L}^{(K)}_{\xi q v^{-1},s} (0,K-\overline{M},M,K-\overline{M}-M),
$$
for some integers $M, \overline{M}$, so that using result of Lemma \ref{fused weight L simplified} and expression of weights $\boldsymbol\ell^{(i)}_{\wp, v}$ we obtain 
\begin{equation*}
    \mathbb{P} (\mathsf{m}_1^0=m_1, \mathsf{m}_2^0=m_2, \dots, \mathsf{m}_x^0=m_x) = \prod_{i=1}^x \boldsymbol\ell^{(i)}_{\wp,v} \left( m_i;\sum_{j=1}^{i-1}m_j \right),
\end{equation*}
which completes the proof, after identifying $\mathsf{m}$ with the random variable $\mathsf{m}_1^0$.
\end{proof}
Result of Proposition \ref{prop: bc general} opens the door to study the measure $\mathbb{P}_{\wp, v, \mathpzc{d}}$, at least when $\wp=q^{-K}$, using integral formulas for $q$-moments presented in Section \ref{subsection observables HS6VM}. We recall that results like \eqref{height q moments} are available only for the particular choice of step boundary conditions and following construction presented in Proposition \ref{prop: bc general} they are extended  to boundary conditions given by Definition \ref{def: bc general}. We remark that at this stage we are not yet ready to study the Higher Spin Six Vertex Model in the case of double sided $q$-negative binomial boundary conditions, but only when the distribution of $\mathsf{m}_2^0, \mathsf{m}_3^0, \dots$ is of the form \eqref{eq: bc general} in which probability weights $\boldsymbol\ell^{(i)}_{\wp,v}$ are considered with $\wp, v$ as in \eqref{condition wp v 2}. We devote the remaining part of this Subsection to extend integrability results of the measure $\mathbb{P}_{\wp, v, \mathpzc{d}}$ also to the region of parameters $\wp, v$ in \eqref{condition wp v 1}. The strategy we follow is an analytic continuation of the probability distribution of the shifted height function $\overline{\mathcal{H}}$. 

Following the construction provided in Proposition \ref{prop: bc general}, we recover the equality 
\begin{equation}\label{current H}
\mathfrak{h}(x+1,t)-K\stackrel{\mathcal{D}}{=}\overline{\mathcal{H}}(x,t),
\end{equation}
for all meaningful $x,t$, where the left hand side refers to a Higher Spin Six Vertex Model on $\Lambda_{0,-K}$ with step boundary conditions and employing relation \eqref{current H} we write the one point probability distribution of $\overline{\mathcal{H}}$ using that of $\mathfrak{h}$. Following techniques analogous to those used in \cite{BCPS2015SpectalTheory}, \cite{Aggarwal2016FluctuationsASEP}, we now provide a description of the probability mass function of $\overline{\mathcal{H}}(x,t)$ when the probability measure is considered both with choices of parameters \eqref{condition wp v 1} or \eqref{condition wp v 2}.

\begin{Prop} \label{analytical extension probability proposition}
Consider the probability measure $\mathbb{P}_{\wp, v, \mathpzc{d}}$ introduced in Definition \ref{def: bc general} and assume that parameters $\Xi, \mathbf{S,U}$ satisfy \eqref{HS6VM parameters} and the additional bounds
\begin{equation*}
    \sup_{i}\{ \xi_i s_i s_j /\xi_j \} < 1, \qquad \sup_i \{ s_i/\xi_i \} <1, \qquad |\wp| < |1/v| \times \inf_i \{ \xi_i s_i\}.  
\end{equation*}
Then, we have
\begin{equation}\label{analytical extension probability}
\begin{split}
&\mathbb{P}_{\wp, v, \mathpzc{d}} \left(\overline{\mathcal{H}}(x,t)=l \right)\\
&=(q;q)^{x-1} \int_{\mathbb{T}^x} \prod_{j=1}^x \frac{dz_j}{z_j} m^x_q(\mathbf{z})\frac{\tilde{\Pi}(\mathbf{z}, \Xi^{-1} \mathbf{S}, \hat{\mathbf{U}})}{\tilde{\Pi}(\Xi \mathbf{S}, \Xi^{-1} \mathbf{S}, \hat{\mathbf{U}})} \frac{(\frac{\Xi \mathbf{S}}{Z};q)_\infty}{\prod_{i,j=1}^x(\frac{\xi_i s_i}{z_j};q)_{\infty}} \left( \frac{\Xi \mathbf{S}}{Z} \right)^l \prod_{j=1}^x \frac{(\wp\frac{v}{z_j};q)_\infty (\frac{v}{\xi_j s_j};q)_\infty}{(\wp\frac{v}{\xi_j s_j};q)_\infty (\frac{v}{z_j};q)_\infty},
\end{split}
\end{equation}
where $\Xi S/Z=\prod_{i=1}^x \xi_i s_i/z_i$, $m_q^x$ is the $q$-Sklyanin measure \eqref{q-Sklyanin measure} and the the factor $\tilde{\Pi}$ is given by
\begin{equation} \label{Pi tilde}
\tilde{\Pi}(\emph{z}, \Xi^{-1} \mathbf{S}, \hat{\mathbf{U}})=\prod_{j=1}^x\left(\prod_{i=2}^x ( z_js_i/\xi_i;q)^{-1}_\infty\prod_{i=1}^t(z_ju_i;q)_J\right).
\end{equation}
\end{Prop}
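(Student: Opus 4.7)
The plan is to first verify \eqref{analytical extension probability} at the discrete values $\wp = q^{-K}$ for $K \in \mathbb{Z}_{\geq 1}$ via the inverse construction of Proposition \ref{prop: bc general}, and then to extend to general $\wp$ in the stated domain by an analytic continuation.

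At $\wp = q^{-K}$, Proposition \ref{prop: bc general} together with the identity \eqref{current H} rewrites the left-hand side of \eqref{analytical extension probability} as $\mathbb{P}(\mathfrak{h}(x+1,t) = l+K)$ for a fused Higher Spin Six Vertex Model on $\Lambda_{0,-K}$ with step boundary conditions, spectral parameters $\mathbf{U} = (q/v, q^2/v, \ldots, q^K/v) \cup (u_1, u_2, \ldots)$, and parameters $\xi_1 = \mathpzc{d} N$, $s_1 = 1/N$ specialized as in \eqref{first s xi} with $N \to \infty$. Next I would apply the matching of Proposition \ref{matching proposition} (adapted to fused rows and to the shifted lattice $\Lambda_{0,-K}$). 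Since $s_1 = 1/N \to 0$, the shift $q^{Jt+K}\prod_{j=1}^x s_j^2$ appearing on the right-hand side of \eqref{matching} vanishes in the limit, so the $q$-moment matching degenerates to a genuine distributional identity $\mathfrak{h}(x+1,t) \overset{d}{=} \lambda_x$ under a $q$-Whittaker measure on the $x$ variables $(\mathpzc{d}, \xi_2 s_2, \ldots, \xi_x s_x)$. This measure carries $\alpha$-specializations $\{s_i/\xi_i\}_{i=2}^{x}$ (the $i=1$ specialization is trivial in the limit) together with $\beta$-specializations $\{-q^i/v\}_{i=1}^{K} \cup \{-q^k u_j : 1 \le j \le t,\ 0 \le k \le J-1\}$.

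I would then apply Proposition \ref{probability 2sided q-Whittaker proposition} (with no $\gamma$-specializations) to express $\mathbb{P}_{q\mathrm{W}}(\lambda_x = l+K)$ as a multiple contour integral on $\mathbb{T}^x$. The $\alpha$-part and the $J$-fused $\beta$-part indexed by $j=1,\dots,t$ reproduce exactly the ratio $\tilde\Pi(\mathbf{z},\Xi^{-1}\mathbf{S},\hat{\mathbf{U}})/\tilde\Pi(\Xi\mathbf{S},\Xi^{-1}\mathbf{S},\hat{\mathbf{U}})$ of \eqref{Pi tilde}. The contribution of the remaining $K$ $\beta$-specializations $-q^i/v$, once combined with the $(\Xi\mathbf{S}/Z)^K$ factor produced by the shift $l \mapsto l+K$, can be recast into the $(\wp, v)$-form on the right-hand side of \eqref{analytical extension probability} via the elementary identity
\begin{equation*}
\prod_{j=1}^x \frac{(\wp v/z_j;q)_\infty\,(v/(\xi_j s_j);q)_\infty}{(\wp v/(\xi_j s_j);q)_\infty\,(v/z_j;q)_\infty}\bigg|_{\wp=q^{-K}} = \left(\frac{\Xi\mathbf{S}}{Z}\right)^K \prod_{j=1}^x \frac{(q z_j/v;q)_K}{(q \xi_j s_j/v;q)_K},
\end{equation*}
which is immediate from $(q^{-K}v/z;q)_K = (-v/z)^K q^{-K(K+1)/2}(qz/v;q)_K$. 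This establishes \eqref{analytical extension probability} whenever $\wp = q^{-K}$ with $K \in \mathbb{Z}_{\geq 1}$.

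Finally, both sides of \eqref{analytical extension probability} are meromorphic functions of $\wp$ on the region singled out by the hypotheses, with controlled growth in $|\wp|$ uniform in $l$. A Carlson-type uniqueness argument, in the spirit of \cite{BCPS2015SpectalTheory} and \cite{Aggarwal2016FluctuationsASEP}, then promotes agreement on the infinite discrete set $\{q^{-K}\}_{K\ge 1}$ to agreement on the full admissible domain. The main obstacle I anticipate lies precisely in this last step: certifying the growth estimates on both sides so that Carlson's theorem (or a suitable variant) applies. The moment-to-distribution collapse via $s_1 \to 0$ and the combinatorial identification of specializations are otherwise routine.
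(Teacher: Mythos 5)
Your discrete verification at $\wp = q^{-K}$ and the algebraic translation between the $\beta$-specialization $\prod_j (qz_j/v;q)_K$ form and the $(\wp,v)$-factor via the identity $(q^{-K}v/z;q)_K = (-v/z)^K q^{-K(K+1)/2}(qz/v;q)_K$ are both correct, and they match the content of the paper's Lemmas \ref{lemma probability height} and the preceding $q$-moment matching. That part of your argument is sound.

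The gap is in the final analytic-continuation step, and your own flagged concern about growth estimates points to a more basic structural obstruction. For a \emph{fixed} $v$, the domain of validity of \eqref{analytical extension probability} in $\wp$ is bounded by $|\wp| < |1/v|\,\inf_i\{\xi_i s_i\}$, so it contains only finitely many of the points $q^{-K}$ (since $q^{-K}\to\infty$). You therefore cannot invoke any uniqueness theorem---Carlson's or otherwise---that presupposes agreement on an infinite set lying inside a common domain of analyticity in $\wp$: the agreement set escapes the domain. The paper sidesteps this with a two-variable device. Both sides are Taylor-expanded in $v$ around $v=0$, and each Taylor coefficient is shown to be a \emph{polynomial} in $\wp$. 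On the left this is established by expanding the weights $\boldsymbol\ell^{(i)}_{\wp,v}$ (which truncate at order $M_i$ in $v$, with polynomial dependence on $\wp$), and on the right by expanding the $v$-dependent factor via the $q$-binomial theorem as in \eqref{expansion proof analytical}. For each fixed Taylor order $n$, one may then shrink $v$ so that both expansions are valid at the finitely many required points $\wp = q^{-1},\dots,q^{-d_n-1}$; agreement there forces the polynomials $P_n$ and $R_n$ to coincide, hence the identity holds term by term in $v$ on the entire stated domain. This polynomiality argument is essential, not an optional alternative to Carlson: it is what makes the finite data at $\wp = q^{-K}$ suffice.
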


The proof of Proposition \ref{analytical extension probability proposition} makes use of the matching of $q$-moments between the height in the Higher Spin Six Vertex Model and the corner coordinate in a $q$-Whittaker process stated in Proposition \ref{matching proposition} and based on a result of \cite{OrrPetrov2016}. We have the following 

\begin{Lemma}
Consider the probability measure $\mathbb{P}_{\wp,v,\mathpzc{d}}$ as in Proposition \ref{analytical extension probability proposition} with parameters $\wp,v$ as in \eqref{condition wp v 2}. Then, we have
\begin{equation} \label{matching useful}
\mathbb{E}_{q^{-K},v,\mathpzc{d}}\left( q^{l (\overline{\mathcal{H}} (x,t) +K) } \right) = \mathbb{E}_{\mathbb{W}_{\Xi,\mathbf{S},\mathbf{U}}} \left( q^{l \lambda_x} \right),
\end{equation}
where the right hand side refers to the $q$-Whittaker measure \eqref{q-Whittaker} specialized as in \eqref{parameter initial conditions}, \eqref{first s xi}.
\end{Lemma}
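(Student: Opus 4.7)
The plan is to chain together three previously stated results: the marginal identification in Proposition \ref{prop: bc general}, the distributional identity \eqref{current H}, and the $q$-moment matching of Proposition \ref{matching proposition}. By Proposition \ref{prop: bc general}, the measure $\mathbb{P}_{q^{-K},v,\mathpzc{d}}$ arises as a marginal on $\Lambda_{1,0}$ of a Higher Spin Six Vertex Model on the larger lattice $\Lambda_{0,-K}$ with step boundary conditions, whose parameters are obtained by prepending $(q/v,q^2/v,\ldots,q^K/v)$ to $\mathbf{U}$ and taking $s_1=1/N$, $\xi_1=\mathpzc{d} N$ with $N\to\infty$. Combined with \eqref{current H}, $\overline{\mathcal{H}}(x,t)+K \stackrel{\mathcal{D}}{=} \mathfrak{h}(x+1,t)$, this recasts the LHS of \eqref{matching useful} as $\mathbb{E}_{\text{HS}}(q^{l\mathfrak{h}(x+1,t)})$ for the auxiliary model.

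Next, after relabeling time so that the auxiliary model sits on $\Lambda_{0,0}$ with $n=t+K$ rows, applying Proposition \ref{matching proposition} yields, at finite $N$,
\begin{equation*}
\mathbb{E}_{\text{HS}}\!\left(q^{l\mathfrak{h}(x+1,t+K)}\right)
=\mathbb{E}_{\mathbb{W}^{(N)}}\!\left[\left(q^{\lambda_x}+q^{t+K}\prod_{j=1}^{x}s_j^2\right)^{l}\right],
\end{equation*}
where $\mathbb{W}^{(N)}$ denotes the $q$-Whittaker measure \eqref{q-Whittaker} with the specialized parameters at finite $N$. Sending $N\to\infty$, the correction $q^{t+K}\prod_{j=1}^{x}s_j^2=q^{t+K}N^{-2}\prod_{j=2}^{x}s_j^2$ vanishes, while $\mathbb{W}^{(N)}$ converges pointwise to $\mathbb{W}_{\Xi,\mathbf{S},\mathbf{U}}$: the $P$-specialization $P_\lambda(\xi_1 s_1,\ldots,\xi_x s_x)$ is unaffected because $\xi_1 s_1=\mathpzc{d}$ is held fixed, and the first $\alpha$-specialization $s_1/\xi_1=1/(\mathpzc{d} N^2)\to 0$ becomes trivial in the $Q$-function. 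Expanding the $l$-th power by the binomial theorem, the uniform bound $0\leq q^{\lambda_x}\leq 1$ together with dominated convergence lets the limit pass inside the expectation, producing $\mathbb{E}_{\mathbb{W}_{\Xi,\mathbf{S},\mathbf{U}}}(q^{l\lambda_x})$ — the RHS of \eqref{matching useful}.

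The main obstacle will be ensuring that the bounds \eqref{bounds q-Whittaker} required by Proposition \ref{matching proposition} — notably $\sup_{i,j}|\xi_i s_i u_j|<1$ — remain compatible with the conditions of Proposition \ref{analytical extension probability proposition} once the extra spectral values $q^k/v$ are appended: a short computation shows that for $K\geq 1$ the two sets of inequalities together force $\sup_i\{\xi_i s_i\}/\inf_i\{\xi_i s_i\}<q^{K-1}\leq 1$, which cannot hold in general. The standard remedy is to establish \eqref{matching useful} first on a small open parameter region where all hypotheses are simultaneously met and the interchange of limit and expectation is automatic, then invoke analytic continuation: with $\wp=q^{-K}$ fixed, both sides of \eqref{matching useful} are meromorphic functions of $v$, $\mathpzc{d}$, $\Xi$, $\mathbf{S}$, $\mathbf{U}$, so equality on an open set propagates to the full domain considered in the statement.
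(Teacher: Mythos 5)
Your proof chains exactly the same three ingredients as the paper: the marginal identification of Proposition \ref{prop: bc general}, the distributional identity \eqref{current H}, and the $q$-moment matching of Proposition \ref{matching proposition}, with the observation that $s_1=1/N$, $N\to\infty$ kills the correction term $q^{n}\prod_{j}s_j^2$. The extra care you take with the $N\to\infty$ limit (pointwise convergence $\mathbb{W}^{(N)}\to\mathbb{W}_{\Xi,\mathbf{S},\mathbf{U}}$, the vanishing of the first $\alpha$-specialization, dominated convergence via $0\leq q^{\lambda_x}\leq 1$) is correct and more explicit than the paper's two-sentence proof, which simply asserts that \eqref{matching useful} is "a corollary of Proposition \ref{matching proposition}" after the correction term is annihilated.

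Your compatibility concern is a real subtlety the paper glosses over, and your computation is right: after prepending $(q/v,\dots,q^K/v)$ to $\mathbf{U}$, the hypothesis $\sup_{i,j}|\xi_i s_i u_j|<1$ of Proposition \ref{matching proposition} forces $|v|>q\sup_i\{\xi_i s_i\}$, whereas the hypothesis $|\wp|<|1/v|\inf_i\{\xi_i s_i\}$ of Proposition \ref{analytical extension probability proposition} with $\wp=q^{-K}$ forces $|v|<q^K\inf_i\{\xi_i s_i\}$, and these are incompatible for $K\geq 1$. Where your write-up wobbles is the proposed remedy: having just argued that no parameter region satisfies both sets of hypotheses simultaneously, you then say to "establish \eqref{matching useful} first on a small open parameter region where all hypotheses are simultaneously met," which is internally contradictory. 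The clean statement is that the condition $|\wp|<|1/v|\inf_i\{\xi_i s_i\}$ belongs to Proposition \ref{analytical extension probability proposition} (it controls convergence of the contour-integral formula \eqref{analytical extension probability}) and is not needed for the matching itself; one proves \eqref{matching useful} on the region $|v|>q\sup_i\{\xi_i s_i\}$ where Proposition \ref{matching proposition} applies directly, notes that $\mathbb{P}_{q^{-K},v,\mathpzc{d}}$ remains a genuine probability measure for every $v<0$ so that both sides of \eqref{matching useful} are rational functions of $v$ on the larger domain, and then continues analytically in $v$ into the regime required by Proposition \ref{analytical extension probability proposition}. With that clarification your argument is complete and a bit more honest than the paper's.
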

\begin{proof}
We know, from Proposition \ref{prop: bc general} that the probability measure $\mathbb{P}_{q^{-K},v,\mathpzc{d}}$ is obtained as a marginal process from a Higher Spin Six Vertex Model on $\Lambda_{0,-K}$ with step boundary conditions and parameters specialized as \eqref{parameter initial conditions}, \eqref{first s xi}. From this equivalence of models relation \eqref{current H} follows and we see that \eqref{matching useful} is obtained as a corollary of Proposition \ref{matching proposition}, since choice \eqref{first s xi} annihilates the term $q^n \prod_{j=1}^x s_j^2$ in \eqref{matching}.
\end{proof}

\begin{Lemma} \label{lemma probability height}
Consider the probability measure $\mathbb{P}_{\wp,v,\mathpzc{d}}$ as in Proposition \ref{analytical extension probability proposition} with parameters $\wp, v$ as in \eqref{condition wp v 2}. Then, we have
\begin{equation}\label{probability height}
\mathbb{P}_{ q^{-K},v,\mathpzc{d} }\left( \overline{\mathcal{H}}(x,t) +K = l \right)=(q;q)_{\infty}^{x-1}\int_{\mathbb{T}^x}\prod_{j=1}^x \frac{dz_j}{z_j}m_q^{x}(\mathbf{z}) \frac{\Pi(\mathbf{z};\Xi^{-1} \mathbf{S},\hat{\mathbf{U}})}{\Pi(\Xi \mathbf{S}; \Xi^{-1}\mathbf{S},\hat{\mathbf{U}})} \frac{(\Xi S/Z;q)_\infty}{\prod_{i,j=1}^x(\xi_j s_j/z_i;q)_\infty}\left(\frac{\Xi S}{Z} \right)^l,
\end{equation}
where $\Xi S/Z=\prod_{i=1}^x \xi_i s_i/z_i$, $m_q^x$ is the $q$-Sklyanin measure \eqref{q-Sklyanin measure} and the factor $\Pi$ in the integrand is given by
\begin{equation} \label{partition function}
\Pi(\mathbf{z};\Xi^{-1} \mathbf{S},\hat{\mathbf{U}})=\prod_{j=1}^x\left(\prod_{i=2}^x (z_js_i/\xi_i;q)^{-1}_\infty\prod_{i=1}^t(z_ju_i)_J ( q z_j/v;q)_K\right).
\end{equation}
\end{Lemma}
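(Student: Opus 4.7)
The plan is to combine the matching of $q$-moments \eqref{matching useful} with the integral formula for the probability mass function of $\lambda_x$ supplied by Proposition \ref{probability 2sided q-Whittaker proposition}. First I would argue that the $q$-moments of a non-negative integer random variable $X$ uniquely determine its distribution: the generating function $y \mapsto \sum_{k\geq 0} y^k \mathbb{P}(X=k)$ is analytic in a neighborhood of $0$, so the countably many values $\mathbb{E}(q^{lX})$ for $l\geq 1$ (accumulating at $y=0$) pin down the coefficients by analytic continuation. Since $\overline{\mathcal{H}}(x,t)+K = \mathfrak{h}(x+1,t) \geq 0$ a.s.\ by \eqref{current H} and $\lambda_x\geq 0$ in the $q$-Whittaker measure produced by the step specialization, \eqref{matching useful} upgrades from an equality of $q$-moments to an equality of full probability mass functions.

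Next I would apply Proposition \ref{probability 2sided q-Whittaker proposition} to $\mathbb{W}_{\Xi,\mathbf{S},\mathbf{U}}$ in the specific specialization provided by Proposition \ref{prop: bc general}: the evaluation variables are $\mathbf{a}=\Xi\mathbf{S}=(\xi_1 s_1,\dots,\xi_x s_x)$ with no $\gamma$-parameters, the $\alpha$-parameters are $s_1/\xi_1,\dots,s_x/\xi_x$, and the $\beta$-parameters split into the $K$ values $-q^k/v$ for $k=1,\dots,K$ coming from the initial rows in \eqref{parameter initial conditions}, together with the $Jt$ values $-q^k u_i$ for $0\leq k\leq J-1$, $1\leq i\leq t$ coming from the $t$ fused rows above $\Lambda_{1,0}$. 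Substituting these data into \eqref{probability 2sided qWhittaker} already produces the torus integral, the $q$-Sklyanin weight $m_q^x$, the factor $(\Xi S/Z;q)_\infty/\prod_{i,j}(\xi_i s_i/z_j;q)_\infty$, and the ratio $(\Xi S/Z)^l$ appearing on the right-hand side of \eqref{probability height}.

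The remaining step is to verify that, after the degeneration $s_1\to 0$, $\xi_1\to\infty$ with $\xi_1 s_1=\mathpzc{d}$ held fixed (per \eqref{first s xi}), the ratio $\Pi(\mathbf{z};\Xi^{-1}\mathbf{S},\hat{\mathbf{U}})/\Pi(\Xi\mathbf{S};\Xi^{-1}\mathbf{S},\hat{\mathbf{U}})$ assembles exactly into \eqref{partition function}. The $\alpha$-contribution $(z_j s_1/\xi_1;q)_\infty^{-1}=(z_j/(\mathpzc{d} N^2);q)_\infty^{-1}\to 1$ disappears, which accounts for the product $\prod_{i=2}^x$ (rather than $\prod_{i=1}^x$) in \eqref{partition function}; the $K$ initial $\beta$-specializations telescope via $\prod_{k=1}^K(1-q^k z_j/v)=(qz_j/v;q)_K$, and the fused spectral $\beta$-specializations assemble into $\prod_{i=1}^t(z_j u_i;q)_J$, exactly recovering \eqref{partition function}.

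The main technical obstacle I anticipate is checking that the bounds imposed in the statement ($\sup_i\{\xi_i s_i s_j/\xi_j\}<1$, $\sup_i\{s_i/\xi_i\}<1$, $|\wp|<|1/v|\inf_i\{\xi_i s_i\}$) together with the constraint $\xi_1 s_1=\mathpzc{d}$ keep all poles $\xi_i s_i$ strictly inside $\mathbb{T}$ and guarantee uniform convergence of the infinite $q$-Pochhammer products on $\mathbb{T}^x$, so that the $N\to\infty$ limit commutes with the torus integration and with the $q$-Whittaker expansion. Once this regularity is verified, Proposition \ref{probability 2sided q-Whittaker proposition} applies in the limit and \eqref{probability height} follows.
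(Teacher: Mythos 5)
Your proof is correct and follows the same route as the paper's: you upgrade the $q$-moment matching \eqref{matching useful} to equality of probability mass functions (the paper does this via the $q$-Laplace transform, you via the probability generating function, both exploiting the non-negativity $\overline{\mathcal{H}}(x,t)+K=\mathfrak{h}(x+1,t)\geq 0$ and $\lambda_x\geq 0$), and then specialize \eqref{probability 2sided qWhittaker} according to \eqref{parameter initial conditions}, \eqref{first s xi}. Your explicit unpacking of how the $\alpha$-contribution from $i=1$ drops, how the $K$ unfused $\beta$-parameters telescope into $(qz_j/v;q)_K$, and how the fused spectral parameters give $\prod_{i=1}^t(z_ju_i;q)_J$ fills in details that the paper compresses into ``after specializing.''
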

\begin{proof}
The matching of $q$-moments reported in Lemma \ref{lemma probability height} implies that the $q$-Laplace transforms $\mathbb{E}_{q^{-K},v,\mathpzc{d}}(1/(\zeta q^{\overline{\mathcal{H}}(x,t) +K } ;q)_\infty)$ and $\mathbb{E}_{\mathbb{W}_{\Xi, \mathbf{S,U}} }(1/(\zeta q^{\lambda_x };q)_\infty)$ are equal. This implies that
\begin{equation} \label{H bar = lambda}
    \mathbb{P}_{q^{-K},v,\mathpzc{d}}\left( \overline{\mathcal{H}}(x,t) + K = l\right) = \mathbb{W}_{\Xi, \mathbf{S,U}} (\lambda_x = l)
\end{equation}
for all $l$ in $\mathbb{Z}$, from which \eqref{probability height} follows after specializing \eqref{probability 2sided qWhittaker} according to \eqref{parameter initial conditions}, \eqref{first s xi}. 
\end{proof}

\begin{proof}[Proof of Proposition \ref{analytical extension probability proposition}] Lemma \ref{lemma probability height} established the claim of Proposition \ref{analytical extension probability proposition} for the choice \eqref{condition wp v 2} of parameters $\wp,v$, so in order to conclude our argument we need to extend such result to the region \eqref{condition wp v 1} as well. We will show that both sides of \eqref{analytical extension probability} are analytic functions of the variable $v$ in a neighborhood of zero. Expanding in Taylor series the equality \eqref{analytical extension probability} can be written as
\begin{equation} \label{expansion in v}
\sum_{n\geq 0} P_n(\wp) v^n = \sum_{n \geq 0} R_n(\wp)v^n.
\end{equation}
where the radius of convergence of both series depends on the magnitude of $\wp$ and it is given by conditions 
\begin{equation}\label{constraint analyticity}
\max_i \left| \frac{v\wp}{ \xi_i s_i} \right| < 1, \qquad \qquad  \max_i \left|\frac{v}{\xi_i s_i} \right|<1.
\end{equation}
In particular, for any compact set $\mathcal{C} \subset \mathbb{C}$, there exists a small enough neighborhood of $v=0$ such that both sides of \eqref{analytical extension probability} are well defined for all $\wp \in \mathcal{C}$. For all $n$, we will prove that $P_n$ and $R_n$ are polynomials in the variable $\wp$. We can therefore set 
\begin{equation*}
d_n = \max( \deg(P_n), \deg(R_n) )
\end{equation*}
and take $v$ small enough so that the Taylor expansions in \eqref{expansion in v} hold for all $\wp$ in a disk of radius greater than $q^{ - d_n - 1 }$ and centered at the origin. As a result of Lemma \ref{lemma probability height}, $P_n, R_n$ assume the same value when $\wp = q^{-1}, \dots, q^{-d_n -1}$, since for these particular choices \eqref{analytical extension probability} holds. This means that $P_n - R_n$ is a polynomial of degree $d_n$ with $d_n +1$ zeros and hence $P_n$ and $R_n$ are the same function. Since $n$ is generic we can conclude that all Taylor coefficients of the expansion of left and right hand side of \eqref{expansion in v} coincide and this concludes our argument.

We come now to verify the claim that all expressions we deal with are analytic in $v$ and that $P_n, R_n$ are polynomials. We treat separately the left and the right hand side of \eqref{analytical extension probability}.

\emph{lhs of \eqref{analytical extension probability} }: first we write down the probability of the event $\{ \overline{\mathcal{H}}(x,t) = l \}$ as
\begin{equation*}
    \mathbb{P}_{\wp, v, \mathpzc{d}} \left( \overline{\mathcal{H}} (x,t)=l  \right) =
    \sum_{M_1, \dots, M_x \geq 0} \prod_{i=1}^x \boldsymbol\ell^{(i)}_{\wp,v}\left( M_i ; \sum_{j=1}^{i-1}M_j \right) \mathbb{P}_{M_1, \dots, M_x}\left(\overline{\mathcal{H}}(x,t)=l \right),
\end{equation*}
where the families of weights
$\boldsymbol\ell^{(i)}_{\wp,v}$ have been defined in \eqref{weight l} and the notation $\mathbb{P}_{M_1, \dots, M_x}(E)$ is a shorthand for $\mathbb{P}_{\wp,v,\mathpzc{d}}(E|\mathsf{m} = M_1, \dots \mathsf{m}_x^0 = M_x )$ for any event $E$. Naturally the probabilities $\mathbb{P}_{M_1, \dots, M_x}$ do not depend either on $v$ or $\wp$ as these are probabilities of events in the Higher Spin Six Vertex Model with deterministic boundary conditions given by occupation numbers $M_1, \dots, M_x$ and $v,\wp$ only pertain factors $\boldsymbol\ell^{(i)}_{\wp,v}$. Set a small positive number $\epsilon$ such that
\begin{equation*}
\left|\frac{v}{\xi_i s_i} \right|<1-\epsilon, \qquad \qquad \left| \frac{\wp v s_i}{\xi_i} \right| < 1-\epsilon, \qquad \text{for all }i=1, \dots, x.
\end{equation*}
With these conditions it is easy to see, from the definition \eqref{weight l} of the weight $\boldsymbol\ell^{(i)}_{\wp,v}$ that a bound as 
\begin{equation*}
\left|\boldsymbol\ell^{(i)}_{\wp,v}\big( M ; \overline{M} \big) \right| < \left| \frac{v}{\xi_i s_i} \right|^M \frac{(-|s_i|^2, -|\wp q^{\overline{M}}|,q)_\infty}{(|v \wp q^{\overline{M}}s_i/\xi_i|,q;q)_\infty} \frac{(-|v \wp q^{\overline{M}}s_i/\xi_i|, -|v/(\xi_i s_i)|; q)_\infty }{(|v \wp q^{\overline{M}}/(\xi_i s_i)| , |v s_i / \xi_i| ; q)_\infty } < C' \left| \frac{v}{\xi_i s_i} \right|^{M_i}
\end{equation*}
holds for a constant $C'=C'(\wp, \epsilon, \xi_i, s_i)$. The lhs of \eqref{analytical extension probability} is therefore an absolutely convergent series of analytic functions in $v$ and the analyticity in the region \eqref{constraint analyticity} follows.
The Taylor expansion at $v=0$ of the weight $\boldsymbol\ell^{(i)}_{\wp,v}$ is easily seen to be of the form
\begin{equation}\label{expansion weight l}
\sum_{n \geq M} \tilde{p}_n^{(i)}(\wp,M,\overline{M}) v^n,
\end{equation}
where $\tilde{p}^{(i)}_n(\wp, M, \overline{M})$ is a polynomial in $\wp$. Alternatively we can rewrite the right hand side of \eqref{expansion weight l} as 
\begin{equation*}
\sum_{n\geq 0}p^{(i)}_{n-M}(\wp, M, \overline{M}) v^n,
\end{equation*}
where $p_L^{(i)}(\wp, M, \overline{M})$ is again a polynomial in $\wp$, which is zero when $L<0$. With this notation the lhs of \eqref{analytical extension probability} admits the expansion
\begin{equation*}
\begin{split}
&\sum_{M_1, \dots, M_x \geq 0}\prod_{i=1}^x \left( \sum_{n_i\geq 0}p^{(i)}_{n_i-M_i}(\wp, M_i, \sum_{j=1}^{i-1}M_j) v^{n_i} \right) \mathbb{P}_{M_1, \dots, M_x}\left(\overline{\mathcal{H}}(x,t)=l \right)\\
&=\sum_{M_1, \dots, M_x \geq 0} \left[ \sum_{n \geq 0} \left( \sum_{n_1+\cdots+n_x=n} \prod_{i=1}^x p^{(i)}_{n_i-M_i}(\wp, M_i, \sum_{j=1}^{i-1}M_j)  \right) v^n \right]\mathbb{P}_{M_1, \dots, M_x}\left(\overline{\mathcal{H}}(x,t)=l \right)\\
&=\sum_{n \geq 0}\left( \sum_{\substack{ n_1 + \cdots + n_x = n \\ M_1, \dots, M_x \geq 0}} \mathbb{P}_{M_1, \dots, M_x}\big(\overline{\mathcal{H}}(x,t)=l \big) \prod_{i=1}^x p^{(i)}_{n_i-M_i}(\wp, M_i, \sum_{j=1}^{i-1}M_j) \right) v^n,
\end{split}
\end{equation*}
where one can see that in the last equality the coefficient of $v^n$ is a polynomial in $\wp$ as the generic summation in the $M_i$ terminates when $n_i - M_i <0$.

\emph{rhs of \eqref{analytical extension probability} }: the analyticity in the variable $v$ is evident, so we look at the Taylor expansion around zero. Using the $q$-binomial theorem \eqref{q binomial summation}, the term in the integrand depending on $v$ and $\wp$ can be expanded as
\begin{equation} \label{expansion proof analytical}
\frac{(\frac{\wp v}{z_j},\frac{v}{ \xi_j s_j};q)_\infty}{(\frac{\wp v}{ \xi_j s_j},\frac{v}{ z_j};q)_\infty}= \sum_{n\geq 0} v^n \left( \frac{1}{z_j^n} \sum_{m=0}^n \frac{(\frac{\xi_j s_j}{z_j};q)_m (\frac{z_j}{\xi_j s_j};q)_{n-m}}{(q;q)_m (q;q)_{n-m}} \left( \frac{\wp z_j}{\xi_j s_j} \right)^m \right).
\end{equation}
By means of simple inequalities as
\begin{equation*}
\left| 1 - q^k \frac{\xi_j s_j}{z_j} \right| \leq 1 + \left| \frac{\xi_j s_j}{z_j} \right|, \qquad \left| 1 - q^k \frac{z_j}{\xi_j s_j} \right| \leq 1 + \left| \frac{z_j}{\xi_j s_j} \right|, \qquad |1-q^k| \geq 1-q  
\end{equation*}
we see that the coefficient of $v^n$ in the right hand side of \eqref{expansion proof analytical} is a polynomial in $\wp$ and can be bounded by $nC''^n$, where $C''$ does not depend on $v$. By taking $v$ sufficiently small we can bring the summation outside of the integral in the rhs of \eqref{analytical extension probability} and obtain a summation of the form
$$
\sum_{n \geq 0} R_n(\wp)v^n
$$
with $R_n(\wp)$ polynomials as promised.

\end{proof}

The statement of Proposition \ref{analytical extension probability proposition} establishes the exact solvability of the coupled measure $\mathbb{P}_{\wp, v,\mathpzc{d}}$. Naturally, $\mathbb{P}_{\wp, v,\mathpzc{d}}$ isn't a particularly interesting object per se, but rather its specialization $\wp = 0$, which describes a double sided $q$-negative binomial Higher Spin Six Vertex Model coupled with an independent random variable $\mathsf{m} \sim q$Poi$(v/\mathpzc{d})$. Unfortunately, due to the presence of $\mathsf{m}$, the measure $\mathbb{P}_{0, v,\mathpzc{d}} = \mathbb{P}_{\text{HS}(v, \mathpzc{d}) \otimes \mathsf{m} }$ is only well defined when $v < \mathpzc{d}$ and this condition prevents us to study the stationary model directly from $\mathbb{P}_{\text{HS}(v, \mathpzc{d}) \otimes \mathsf{m} }$. In order to consider the case $v=\mathpzc{d}$, in Section \ref{section matching and fredholm} we will decouple the Higher Spin Six Vertex Model from $\mathsf{m}$, expressing the probability distribution of $\mathcal{H}$ rather than that of $\overline{\mathcal{H}}$. This will allow us to consider a different family of double sided $q$-negative binomial boundary conditions, where parameters $v, \mathpzc{d}$ will be subjected to the less stringent bound \eqref{condition v d analytical continuation} as explained below in the proof of Theorem \ref{theorem q laplace double sided}.

\section{Fredholm determinant formulas for double sided \texorpdfstring{$q$}{q}-negative binomial boundary conditions} \label{section matching and fredholm}
The main content of this section consists in the proofs of Theorems \ref{theorem: q laplace shifted introduction}, \ref{theorem q laplace double sided} presented in the Introduction. We do this by first considering the coupled model $\mathbb{P}_{\wp, v, \mathpzc{d}}$, which in Section \ref{subsection integrable initial} was proven to be integrable, and then considering its degeneration $\wp=0$. 

\subsection{Fredholm determinants in the coupled model \texorpdfstring{$\mathbb{P}_{\wp,v,\mathpzc{d}}$}{P(p,v,d)}} \label{subsection analytic continuation}

In this section we give a Fredholm determinant expression for the $q$-Laplace transform of the probability mass function of the shifted height function $\overline{\mathcal{H}}$ defined in \eqref{height H bar}. Results given in Proposition \ref{fredholm determinant} hold for the coupled measure $\mathbb{P}_{\wp, v, \mathpzc{d}}$, for a general coupling parameter $\wp$. In Section 5.2 we will consider the meaningful choice $\wp=0$ and hence the model with double sided $q$-negative binomial boundary conditions. The proof of Proposition \ref{fredholm determinant} is based on calculations involving an elliptic version of the Cauchy determinant that  were developed in a previous work by two of the authors \cite{q-TASEPtheta} and it is therefore omitted.
\begin{Prop} \label{fredholm determinant}
Assume conditions on parameters $ \Xi, \mathbf{S,U}, \wp, v$ \eqref{eq: placements Xi S}, \eqref{condition wp v 1}, take $v < \mathpzc{d}$ and $\zeta \in \mathbb{C} \setminus q^{\mathbb{Z}}$. Then we have
\begin{equation}\label{first fredholm determinant}
\begin{split}
\mathbb{E}_{\wp, v, \mathpzc{d}} \left( \frac{1}{(\zeta q^{\overline{\mathcal{H}}(x,t)};q)_{\infty}} \right) = \det( \mathbf{1}-fK)_{l^2(\mathbb{Z})},
\end{split}
\end{equation}
where
\begin{equation}\label{f}
f(n)= \frac{1}{1-q^{n}/\zeta}, 
\end{equation}
\begin{equation}\label{kernel}
K(n,m)=\sum_{l=1}^{x-1}\phi_l(m) \psi_l(n) + (\mathpzc{d} - v) \Phi_x(m)\Psi_x(n),
\end{equation}
\begin{equation}\label{phi}
\phi_l(n)=\tau(n) \int_D \frac{dw}{2 \pi \mathrm{i}} \frac{1}{w^{x+n-l+1}} \prod_{k=1}^{l} \frac{1}{(w-\xi_{k+1} s_{k+1} )} \frac{( qv / w ;q)_\infty }{F(w)},
\end{equation}
\begin{equation}\label{psi}
\psi_l(n)=\frac{\xi_{l+1}s_{l+1}} {\tau(n)} \int_C \frac{dz}{2 \pi \mathrm{i}} z^{n+x-l-1} \prod_{k=2}^l(z-\xi_k s_k) \frac{F(z)}{( qv / z ;q)_\infty },
\end{equation}
\begin{equation}\label{Phi}
\Phi_x(n)=\tau(n)\int_D \frac{dw}{2 \pi \mathrm{i}}\frac{1}{w^{n+1}} \frac{1}{w- \mathpzc{d} } \prod_{k=2}^x \frac{1}{w- \xi_k s_k} \frac{( qv / w ;q)_\infty}{F(w)},
\end{equation}
\begin{equation}\label{Psi}
\Psi_x(n)=\frac{1}{\tau(n)} \int_C \frac{dz}{2 \pi \mathrm{i}} \frac{z^{n-1}}{( v / z ;q)_\infty} \prod_{k=2}^x (z-\xi_k s_k) F(z).
\end{equation}
The contour $D$ encircles $\{\mathpzc{d},\xi_2 s_2, \dots ,\xi_x s_x\}$ and no other singularity, whereas $C$ contains 0 and $vq^k$, for any $k$ in $\mathbb{Z}_{\geq 0}$. Finally, $\tau(n)$ is taken to be
\begin{equation} \label{tau}
\tau(n)=\begin{cases}
    b^n       & \quad \text{if } n \geq 0,\\
    c^n  & \quad \text{if } n < 0,\\
  \end{cases}
\end{equation}
with 
\begin{equation*}
v < b < \mathpzc{d} \leq \inf_{i \geq 2} \{ \xi_i s_i \} \leq \sup_{i \geq 2} \{ \xi_i s_i \} < c < \inf_{i\geq 2} \{ \xi_i/s_i \} ,
\end{equation*}
and 
\begin{equation} \label{F}
F(z)=( v \wp / z;q)_{\infty} \prod_{j=1}^t(zu_j;q)_J  ( qz / \mathpzc{d} ;q)_\infty \prod_{k=2}^x \frac{( qz / ( \xi_k s_k ) ;q)_\infty}{(z s_k / \xi_k ;q)_\infty}.
\end{equation}
\end{Prop}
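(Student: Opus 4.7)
The starting point is the explicit torus integral representation of $\mathbb{P}_{\wp, v, \mathpzc{d}}(\overline{\mathcal{H}}(x,t)=l)$ provided by Proposition~\ref{analytical extension probability proposition}. Following the strategy developed in \cite{q-TASEPtheta}, the plan is to multiply both sides by $1/(\zeta q^l;q)_\infty$ and sum over $l\in\mathbb{Z}$, taking advantage of the fact that the $l$-dependence of the integrand is only through the factor $(\Xi\mathbf{S}/Z)^l=\prod_j(\xi_j s_j/z_j)^l$. The bilateral sum $\sum_{l\in\mathbb{Z}}(\Xi\mathbf{S}/Z)^l/(\zeta q^l;q)_\infty$ can be evaluated in closed form via Ramanujan's ${}_1\psi_1$ summation, producing a ratio of theta functions (equivalently a quotient of infinite $q$-Pochhammer products). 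Splitting the sum into its $l\geq 0$ and $l<0$ halves is precisely what forces the piecewise definition of $\tau(n)$ in \eqref{tau}, and the auxiliary parameters $b,c$ encode the admissible annulus of convergence in the $z_j$-variables.

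After the summation in $l$, the resulting integrand is a fully symmetric function of $(z_1,\dots,z_x)$. The key structural step is to recognize the cross-product $\prod_{i,j=1}^x (\xi_is_i/z_j;q)_\infty^{-1}$ appearing in Proposition~\ref{analytical extension probability proposition}, multiplied by the $q$-Sklyanin weight $m_q^x(\mathbf{z})$, as the elliptic analogue of the Cauchy kernel. By a Frobenius-type theta-function determinant identity (the elliptic deformation of $\det[1/(x_i-y_j)]=\prod_{i<j}(x_i-x_j)(y_i-y_j)/\prod_{i,j}(x_i-y_j)$) developed in \cite{q-TASEPtheta}, the integrand factorises as the product of two $x\times x$ determinants in the variables $z_j$ and $\xi_is_i$ against a separable symmetric measure. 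This is the step that activates the Andr\'eief / Heine identity.

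Andr\'eief's identity then converts the $x$-fold torus integral into the determinant of an $x\times x$ matrix whose entries are single contour integrals. By construction, this finite-dimensional determinant matches $\det(\mathbf{1}-fK)$ read as a rank-$x$ operator on $l^2(\mathbb{Z})$: the natural splitting of the Cauchy kernel into partial fractions yields the $x-1$ rank-one summands $\phi_l(m)\psi_l(n)$ in \eqref{phi}--\eqref{psi}, carrying the poles at $z=\xi_{k+1}s_{k+1}$, together with the distinguished term $(\mathpzc{d}-v)\Phi_x(m)\Psi_x(n)$ from \eqref{Phi}--\eqref{Psi}, which collects the residue at the extra pole $z=\mathpzc{d}$ introduced by the Ramanujan summation. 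The two contours $C$ and $D$ in the definitions of $\phi_l,\psi_l,\Phi_x,\Psi_x$ correspond respectively to the inner and outer deformations separating these poles, and the function $F$ in \eqref{F} assembles the non-$l$-dependent factors of the original integrand.

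The main obstacle is the rigorous identification of the resulting finite determinant with a Fredholm determinant on $l^2(\mathbb{Z})$. This demands verifying that $fK$ is trace class, which rests on a delicate balance between the geometric growth rates of $\tau(n)$ and the contour choices: the matrix elements $\phi_l(m),\psi_l(n),\Phi_x(m),\Psi_x(n)$ must decay geometrically as $|m|,|n|\to\infty$, and the ranges prescribed for $b,c$ together with the constraint $v<\mathpzc{d}$ in \eqref{eq: placements Xi S} are precisely what is needed to secure simultaneous convergence of the contour integrals and of the trace norm. The bulk of the careful bookkeeping lies in tracking these contour positions and convergence conditions throughout; once this is done, the equality \eqref{first fredholm determinant} follows directly from the chain of identities above.
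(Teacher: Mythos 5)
Your high-level plan — start from the torus integral of Proposition~\ref{analytical extension probability proposition}, take the $q$-Laplace transform, use an elliptic (Frobenius theta) Cauchy determinant, apply Andr\'eief, and verify trace class — is broadly the argument of \cite{q-TASEPtheta}, Theorem~4.3, which is exactly what the paper invokes here: its entire proof is the one-paragraph substitution $a_k=\xi_{k+1}s_{k+1}$ for $k<x$, $a_x=\mathpzc{d}$, $\alpha_k=\delta_{k,x}v$, together with replacing $e^{zt}$ by the $q$-Pochhammer product that Proposition~\ref{probability 2sided q-Whittaker proposition} allows. Two attributions in your sketch are, however, off. The pole at $w=\mathpzc{d}$ in $\Phi_x$ is not created by the Ramanujan summation: $\mathpzc{d}$ enters as the last $a$-parameter of the $q$-Whittaker measure through the $\xi_1 s_1\to\mathpzc{d}$ limit of Proposition~\ref{prop: bc general}, so $1/(w-\mathpzc{d})$ in \eqref{Phi} and $(qz/\mathpzc{d};q)_\infty$ in \eqref{F} arise for exactly the same structural reason as the $\xi_ks_k$ poles, while the prefactor $(\mathpzc{d}-v)$ in \eqref{kernel} reflects the $\alpha$-specialization $\alpha_x=v$. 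And the piecewise $\tau(n)$ of \eqref{tau} is a diagonal conjugation that cancels in the determinant ($\phi_l$ carries $\tau(n)$, $\psi_l$ carries $1/\tau(n)$); its sole purpose is to make $fK$ trace class as in Proposition~\ref{trace class}, not to accommodate a splitting of the $l$-sum.

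The step your plan genuinely glosses over is the passage from the $x\times x$ Andr\'eief determinant to a Fredholm determinant on $l^2(\mathbb{Z})$ with the Fermi factor $f(n)=1/(1-q^n/\zeta)$. The Andr\'eief determinant is a scalar $x\times x$ determinant with theta-type entries in $\zeta$ and the ratios $a_l/a_m$; to read it as $\det(\mathbf{1}-fK)_{l^2(\mathbb{Z})}$ one has to expand each entry as a bilateral series $\sum_{n\in\mathbb{Z}} f(n)\phi_l(n)\psi_m(n)$ in an auxiliary integer $n$ — an inverse Ramanujan-type identity rather than the forward one you use for the $l$-sum — and it is exactly this re-expansion that produces $f$, the lattice $\mathbb{Z}$, and the chain of radius constraints $v<b<\mathpzc{d}\le\xi_ks_k<c<\xi_k/s_k$ governing the contours $C,D$ and the conjugator $\tau$. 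Declaring that the match holds "by construction" leaves out the part of \cite{q-TASEPtheta}'s argument that actually turns the formula into a Fredholm determinant on $l^2(\mathbb{Z})$ rather than a finite one; that identity, not the Frobenius step, is where most of the contour bookkeeping you correctly flag as the "main obstacle" is really spent.
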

\begin{proof}
From Proposition \ref{analytical extension probability proposition} we can apply, with minor changes the same argument of \cite{q-TASEPtheta}, Theorem 4.3. More specifically, using the notation used in \cite{q-TASEPtheta}, we need to set

$$
a_k=\begin{cases}
s_{k+1}\xi_{k+1}, \qquad &\text{if } k\neq x\\
\mathpzc{d}, \qquad &\text{if } k=x 
\end{cases}, \qquad \qquad \alpha_k=\delta_{k,x}v,
$$
and substitute $e^{zt}$ with the expression
$$( v\wp / z;q)_{\infty}\frac{\prod_{k=1}^t(zu_k;q)_J}{\prod_{k=1}^x (z s_k / \xi_k ;q)_\infty},
$$
as a result of considering a more general specialization of the $q$-Whittaker measure as that presented in Proposition \ref{probability 2sided q-Whittaker proposition}.
\end{proof}
An alternative expression for the kernel $K$ is given defining an auxiliary kernel $A$ as
\begin{equation} \label{discrete Airy}
A(n,m) = \sum_{l=1}^{x-1} \phi_l(n) \psi_l(m)
\end{equation}
of which we report the explicit form. 
\begin{Prop}[Double integral kernel] \label{prop double integral kernel}
The discrete kernel $A$ admits the following expression
\begin{equation} \label{double integral}
A(n,m)= \frac{\tau(n)}{\tau(m)} \frac{1}{(2\pi \mathrm{i})^2} \int_D dw \int_C dz \frac{z^{m}}{w^{n+1}} \prod_{j=1}^t \left( \frac{(u_jz;q)_J}{(u_jw;q)_J} \right) \prod_{k=2}^x \left( \frac{ ( z / (\xi_k s_k );q)_\infty (w s_k / \xi_k;q)_\infty}{(w / (\xi_k s_k ) ;q)_\infty (z s_k / \xi_k ;q)_\infty} \right) \frac{(q v/w;q)_\infty}{(q v / z;q)_\infty} \frac{1}{z-w}.
\end{equation}
\end{Prop}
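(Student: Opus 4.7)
The plan is to substitute the single-contour representations \eqref{phi} and \eqref{psi} into $A(n,m) = \sum_{l=1}^{x-1}\phi_l(n)\psi_l(m)$, bring the finite sum inside the double contour integral over $D\times C$, and evaluate the resulting $l$-sum in closed form by telescoping. After pulling out the common $l$-independent prefactor $\frac{\tau(n)}{\tau(m)}\cdot\frac{z^{x-1}}{w^{x+n+1}}\cdot z^{m+x-1}\cdot\frac{(qv/w;q)_\infty F(z)}{F(w)(qv/z;q)_\infty}$ (with the powers grouped for later cancellation), the $l$-dependent part of the integrand is
\begin{equation*}
S(z,w) := \sum_{l=1}^{x-1}\left(\frac{w}{z}\right)^{\!l}\xi_{l+1}s_{l+1}\frac{\prod_{k=2}^{l}(z-\xi_k s_k)}{\prod_{k=2}^{l+1}(w-\xi_k s_k)}.
\end{equation*}
To collapse this, introduce $T_l = \prod_{k=2}^{l}\frac{z-\xi_k s_k}{w-\xi_k s_k}$ with the empty-product convention $T_1 = 1$. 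A short computation yields the two identities $\frac{\prod_{k=2}^l(z-\xi_k s_k)}{\prod_{k=2}^{l+1}(w-\xi_k s_k)}=\frac{T_{l+1}-T_l}{z-w}$ and $\xi_{l+1}s_{l+1}(T_{l+1}-T_l)=wT_{l+1}-zT_l$, which combine to rewrite each summand as a discrete difference $\frac{1}{z-w}\bigl[\frac{w^{l+1}}{z^l}T_{l+1}-\frac{w^l}{z^{l-1}}T_l\bigr]$. Telescoping over $l=1,\dots,x-1$ then gives
\begin{equation*}
S(z,w)=\frac{1}{z-w}\left[\frac{w^x}{z^{x-1}}\prod_{k=2}^x\frac{z-\xi_k s_k}{w-\xi_k s_k} - w\right].
\end{equation*}

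Substituting $S(z,w)$ back splits $A(n,m)$ into two double integrals. The first, arising from the $\frac{w^x}{z^{x-1}}T_x$ term, reproduces the integrand on the right-hand side of \eqref{double integral}: the leftover $z,w$ powers cancel against $z^{m+x-1}/w^{x+n+1}$ to leave $z^m/w^{n+1}$, and the identity $\frac{(qz/a;q)_\infty}{(qw/a;q)_\infty}\cdot\frac{z-a}{w-a}=\frac{(z/a;q)_\infty}{(w/a;q)_\infty}$, applied at $a=\xi_k s_k$ for each $k=2,\dots,x$, converts the corresponding $q$-Pochhammer factors in $F(z)/F(w)$ into the ratios $\frac{(z/(\xi_k s_k);q)_\infty}{(w/(\xi_k s_k);q)_\infty}$ displayed in \eqref{double integral}, while the remaining $\prod_{j=1}^t\frac{(u_jz;q)_J}{(u_jw;q)_J}$ and $\prod_k\frac{(ws_k/\xi_k;q)_\infty}{(zs_k/\xi_k;q)_\infty}$ are already present in $F(z)/F(w)$. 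The second, ``boundary'' integral is
\begin{equation*}
A_2 = -\frac{\tau(n)}{\tau(m)}\frac{1}{(2\pi\mathrm{i})^2}\int_D dw\int_C dz\,\frac{z^{m+x-1}}{w^{x+n}}\,\frac{(qv/w;q)_\infty F(z)}{F(w)(qv/z;q)_\infty(z-w)},
\end{equation*}
and must be shown to vanish by Cauchy's theorem.

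The main obstacle is precisely this vanishing: one needs to check that the $w$-integrand of $A_2$ is holomorphic on $D$ and its interior. Under the placement conditions \eqref{eq: placements Xi S} and the prescription of $D$ as encircling only $\{\mathpzc{d},\xi_2 s_2,\dots,\xi_x s_x\}$, every candidate singularity lies strictly outside $D$: the pole of $1/w^{x+n}$ at $w=0$; the pole of $1/(z-w)$ at $w=z\in C$ (the contours $C$ and $D$ being disjoint by the choice of $b,c$ in \eqref{tau}); and the poles of $1/F(w)$ coming from the zeros of the $q$-Pochhammer factors of $F(w)$, located at $w=\mathpzc{d}/q^j$, $\xi_k s_k/q^j$ for $j\geq 1$, at $w=v\wp q^j$ for $j\geq 0$, and at $w=-(q^k u_j)^{-1}$ for $0\leq k\leq J-1$, none of which belong to the small loops composing $D$. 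A direct substitution also confirms that $F(\mathpzc{d})\neq 0$ and $F(\xi_k s_k)\neq 0$, so $1/F(w)$ is in fact regular at the enclosed points of $D$ themselves. Cauchy's theorem then annihilates $A_2$, and the remaining integral is exactly the right-hand side of \eqref{double integral}.
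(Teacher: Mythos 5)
Your overall strategy is the same as the paper's: substitute the single-integral representations \eqref{phi}, \eqref{psi}, collapse the finite $l$-sum via a telescoping identity, and kill the boundary term by holomorphy of the $w$-integrand inside $D$. Where the paper simply asserts the telescoping identity is provable by induction, you derive it cleanly through the $T_l = \prod_{k=2}^l(z-\xi_ks_k)/(w-\xi_ks_k)$ quantities, and the two auxiliary identities and the resulting discrete difference are all correct. Your vanishing argument for $A_2$ is also in line with the paper's (the remark that $D$ encloses only $\{\mathpzc{d},\xi_2s_2,\dots,\xi_xs_x\}$ and none of the candidate singularities, hence Cauchy), and it is a reasonable elaboration of the paper's one-line claim; a cleaner way to phrase why $z \in C$ sits outside $D$ is that $D$, by its specification, fails to enclose the pole at $w=0$, hence fails to enclose $C$, rather than invoking disjointness of $C$ and $D$ which on its own would not preclude $C$ lying inside $D$.

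There is one place where your bookkeeping is incomplete, and you should be aware that the issue is also present in the paper's own statement \eqref{double integral}. After you cancel the $(qz/(\xi_ks_k);q)_\infty / (qw/(\xi_ks_k);q)_\infty$ ratios against $T_x$, you assert that the only remaining content of $F(z)/F(w)$ is $\prod_{j=1}^t (u_jz;q)_J/(u_jw;q)_J$ and $\prod_k (ws_k/\xi_k;q)_\infty/(zs_k/\xi_k;q)_\infty$. But $F$ as defined in \eqref{F} also carries the factors $(v\wp/z;q)_\infty$ and $(qz/\mathpzc{d};q)_\infty$, which contribute an additional ratio
\begin{equation*}
\frac{(v\wp/z;q)_\infty}{(v\wp/w;q)_\infty}\,\frac{(qz/\mathpzc{d};q)_\infty}{(qw/\mathpzc{d};q)_\infty}
\end{equation*}
to the integrand of the first term $A_1$; this ratio does not appear in \eqref{double integral}, nor do you explain why it should drop. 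Under the specialization $\wp=0$ the first factor is $1$, but the $\mathpzc{d}$-factor survives even for $v=\mathpzc{d}$. Since the paper uses \eqref{rescaled discrete Airy} (which is \eqref{double integral} with $v=\mathpzc{d}$ after the saddle-point change of variables) without this factor, the discrepancy is most likely a typographical omission in $F$ or in \eqref{double integral}, but a careful writeup should either carry the ratio through or flag the needed correction. As a minor point, the zeros of $(wu_j;q)_J$ lie at $w=(q^ku_j)^{-1}$, which with $u_j<0$ are negative real numbers; your displayed location $-(q^ku_j)^{-1}$ has a spurious sign, though the conclusion that they are excluded from $D$ is unaffected.
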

\begin{proof}
All it takes to show \eqref{double integral} is to perform the summation
$$
\sum_{l=1}^{x-1} \phi_l(n) \psi_l(m),
$$
using the rather tricky identity
\begin{equation*}
\frac{1}{z-w} \left[ \frac{w^{x-1}}{z^{x-1}}  \prod_{k=2}^x \frac{z-a_k}{w-a_k} -1 \right] = \sum_{l=1}^{x-1} \frac{a_{l+1}}{w-a_{l+1}}  \frac{w^{l-1}}{z^l} \prod_{k=2}^l\frac{z-a_k}{w-a_k},
\end{equation*}
which can be proven by induction. We see that the addend $(z-w)^{-1}$ in the left hand side doesn't give any contribution to the integral as integrating over the variable $z$ it only leaves an integral in $w$ over a path containing no singularities.
\end{proof}

Before moving to the analysis of the stationary case we show that the Fredholm determinant expression \eqref{first fredholm determinant} makes sense. We need the following
\begin{Prop} \label{trace class}
The kernel $fK$ defined by \cref{f,kernel,phi,psi,Phi,Psi} is trace class.
\end{Prop}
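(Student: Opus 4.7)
The operator $K$ defined in \eqref{kernel} is a finite rank operator on $l^2(\mathbb{Z})$, being a sum of at most $x$ rank-one pieces $\psi_l\otimes\phi_l$ together with $(\mathpzc{d}-v)\,\Psi_x\otimes\Phi_x$. Since $f$ acts by bounded multiplication, $fK$ remains finite rank, and a finite sum of rank-one operators $u\otimes v$ on a Hilbert space is trace class with trace norm bounded by $\sum\|u\|_{l^2}\|v\|_{l^2}$, provided each pair of factors lies in the ambient Hilbert space. The plan is therefore to reduce the proof to verifying
$$\phi_l,\ \Phi_x\in l^2(\mathbb{Z}) \qquad\text{and}\qquad f\psi_l,\ f\Psi_x\in l^2(\mathbb{Z}),\qquad l=1,\dots,x-1,$$
which I obtain from exponential decay estimates at $n\to\pm\infty$.

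The key preparation is a compatible choice of contours. Since $q\sup_i\xi_is_i<\mathpzc{d}$ by \eqref{eq: placements Xi S}, I first fix $c\in(\sup_i\xi_is_i,\inf_i\xi_i/s_i)$ close enough to $\sup_i\xi_is_i$ that $qc<\mathpzc{d}$, and then pick $b\in(\max(v,qc),\mathpzc{d})$. I deform $C$ and $D$ to circles centered at $0$ of radii $R_C\in(\max(v,qc),b)$ and $R_D\in(\sup_i\xi_is_i,c)$, possibly shrinking these intervals so that no additional singularity of the integrand sits on either circle. This deformation is legitimate because the only singularities of the $z$-integrands past $\{vq^j\}_{j\ge 0}$ come from $1/(zs_k/\xi_k;q)_\infty$ and lie in $[\inf_i\xi_i/s_i,\infty)$, while the singularities of the $w$-integrands past $\{\mathpzc{d},\xi_ks_k\}$ arise from zeros of $F(w)$ located at $\{\mathpzc{d}/q^{j+1}\}$ and $\{\xi_ks_k/q^{j+1}\}$, all of which exceed $R_D$ once $c$ is chosen sufficiently close to $\sup_i\xi_is_i$ (using $q\sup_i\xi_is_i<\mathpzc{d}<\inf_i\xi_is_i$ once more).

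On the chosen circles the $q$-Pochhammer ratios and polynomial factors in \eqref{phi}--\eqref{Psi} are continuous in the angular variable, hence uniformly bounded by a constant $M$ depending only on the fixed parameters. Elementary estimation of the contour integrals, after absorbing $n$-independent prefactors into $M$, yields
$$|\psi_l(n)|,\,|\Psi_x(n)|\le M|\tau(n)|^{-1}R_C^{\,n},\qquad |\phi_l(n)|,\,|\Phi_x(n)|\le M|\tau(n)|R_D^{-n}.$$
Using the piecewise definition of $\tau$ in \eqref{tau}, this produces two-sided geometric decay for $\phi_l,\Phi_x$: $|\phi_l(n)|\le M(b/R_D)^n$ for $n\ge 0$ and $|\phi_l(n)|\le M(R_D/c)^{|n|}$ for $n<0$, both bases being strictly less than $1$ by construction, so $\phi_l,\Phi_x\in l^2(\mathbb{Z})$. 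For $\psi_l,\Psi_x$ the same reasoning gives summable decay at $+\infty$ but only the bound $M(c/R_C)^{|n|}$ at $-\infty$; here $f$ enters. The function $f(n)=(1-q^n/\zeta)^{-1}$ is uniformly bounded for $n\ge 0$ and satisfies $|f(n)|\le C_f\,q^{|n|}$ as $n\to-\infty$, hence $|f(n)\psi_l(n)|\le M'(qc/R_C)^{|n|}$ for $n<0$, and since $qc<R_C$ by construction, $f\psi_l$ and $f\Psi_x$ also lie in $l^2(\mathbb{Z})$.

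Assembling the two $l^2$-membership results, each rank-one summand of $fK$ is trace class and therefore so is $fK$. The main delicate point, and the one place where hypothesis \eqref{eq: placements Xi S} is genuinely used, is the simultaneous realization of $qc<R_C<b<\mathpzc{d}\le\sup_i\xi_is_i<R_D<c$; without the spacing $q\sup_i\xi_is_i<\mathpzc{d}$ no such choice would exist, and the decay of $f\psi_l$ as $n\to-\infty$ would fail. Every other ingredient of the argument is routine contour analysis.
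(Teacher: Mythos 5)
Your reduction to finite rank plus $l^2$ factors is a valid plan and matches the paper's, but the bound on $\phi_l$ and $\Phi_x$ rests on a deformation of $D$ that is not legitimate. In Proposition~\ref{fredholm determinant} the contour $D$ encircles $\{\mathpzc{d},\xi_2s_2,\dots,\xi_xs_x\}$ \emph{and no other singularity}; in particular it does not contain $w=0$, which is a genuine singularity of the $w$-integrand: the factor $(qv/w;q)_\infty$ appearing in \eqref{phi} and \eqref{Phi} has an essential singularity at the origin, and for $n$ large the prefactor $w^{-(x+n-l+1)}$ contributes a high-order pole there. A circle of radius $R_D>\sup_i\xi_is_i$ centred at the origin necessarily has $w=0$ in its interior, so the circle integral differs from $\phi_l(n)$ and $\Phi_x(n)$ by the residue at the origin (and, if $\wp>0$, by residues at $\{v\wp q^m\}_{m\ge 0}$, which also lie inside the circle). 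The estimate $|\phi_l(n)|\le M|\tau(n)|R_D^{-n}$ therefore does not follow from the argument given. The correct bound (Proposition~\ref{useful inequalities}, Appendix~\ref{appendix  bounds}) comes from evaluating $\phi_l,\Phi_x$ directly as a finite sum of residues at $\mathpzc{d}$ and the $\xi_ks_k$'s, yielding $|\phi_l(n)|\le\mathrm{const}\cdot\tau(n)\bigl(\mathbbm{1}_{n\ge 0}\mathpzc{d}^{-|n|}+\mathbbm{1}_{n<0}(\max_i\xi_is_i)^{|n|}\bigr)$ --- slightly weaker than your claim but still geometric in $|n|$ and enough for $\phi_l,\Phi_x\in l^2(\mathbb{Z})$.

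On the $C$ side the deformation to a circle centred at the origin is legitimate (here $C$ is specified to contain $0$ and the accumulating poles $\{q^kv\}$), but freezing $C$ at a single radius $R_C<b$ loses the decay of $\psi_l,\Psi_x$ as $n\to-\infty$; you recover it from $|f(n)|\sim|\zeta|\,q^{|n|}$ at the price of having to engineer $qc<R_C<b$, which commits you to a particular choice of $b,c$ in \eqref{tau}. The paper instead enlarges $C$ to a circle of radius $r_-\in(c,\inf_i\xi_i/s_i)$ when $n<0$ --- a legal deformation, since no poles of the $z$-integrand are crossed --- so that $|\tau(n)|^{-1}r_-^{\,n}=(c/r_-)^{|n|}<1$, proving $\psi_l,\Psi_x\in l^2$ directly, for every admissible $(b,c)$, and without appealing to the asymptotics of $f$. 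Replacing your fixed-circle estimate of $\phi_l,\Phi_x$ by the residue evaluation and, optionally, using the sign-dependent enlargement of $C$ brings your argument in line with the paper's.
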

\begin{proof}
From $fK$ being a finite sum of products of operators of rank one, it is enough to show that each one of these operators is of Hilbert-Schmidt class. This is essentially proven in Appendix \ref{appendix  bounds}. In fact the generic terms $f(n) \phi_l (n) \psi_l(m)$ and $f(n) \Phi_x(n) \Psi_x(m)$ are bounded in absolute value by quantities exponentially small in $|n|+|m|$, thanks to Proposition \ref{useful inequalities}, and therefore the double summation
\begin{equation*}
\sum_{n,m \in \mathbb{Z}} |f(n)K(n,m)|^2
\end{equation*}
is indeed convergent.
\end{proof}

We close this subsection by offering the proof of Theorem \ref{theorem: q laplace shifted introduction}, presented in the Section \ref{section results}.
\begin{proof}[proof of Theorem \ref{theorem: q laplace shifted introduction}] This is a trivial consequence of Proposition \ref{fredholm determinant}, after setting $\wp=0$.
\end{proof}

\subsection{The double sided \texorpdfstring{$q$}{q}-negative binomial case and the stationary specialization} \label{section regularizations}
In this Section we give a proof of Theorem \ref{theorem q laplace double sided}, that characterizes the probability distribution of the height function $\mathcal{H}$ in the Higher Spin Six Vertex Model with double sided $q$-negative binomial boundary conditions. Our starting point is the Fredholm determinant formula stated in Theorem 
\ref{theorem: q laplace shifted introduction}. Removing the effect of the independent $q$-Poisson random variable $\mathsf{m}$ from expression \eqref{fredholm determinant introduction} we find that determinantal expressions we obtain are well defined in the region \eqref{condition v d analytical continuation}. In Corollary \ref{Corollary stationary q laplace} we specialize the result of Theorem \ref{theorem q laplace double sided} to the relevant case of the stationary Higher Spin Six Vertex Model.

Before we begin the proof of Theorem \ref{theorem q laplace double sided}, we like to state some regularity properties of the integral kernel $fA$ defined by \eqref{f},\eqref{discrete Airy} that hold for parameters $v, \mathpzc{d}$ in \eqref{condition v d analytical continuation}.

\begin{Prop} \label{lemma invertible}
Let $\zeta < 0$ and take $f,A$ as in \eqref{f}, \eqref{discrete Airy}. Then $\mathbf{1}-fA$ is an invertible operator. Moreover both $fA$ and $(\mathbf{1} - fA)^{-1}$ are well defined, bounded operators in the region \eqref{condition v d analytical continuation}.
\end{Prop}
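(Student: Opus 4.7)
The plan rests on the observation that the kernel $A$ defined in \eqref{discrete Airy} has rank at most $x-1$, so $fA$ is a finite-rank operator on $\ell^2(\mathbb{Z})$. Both boundedness and invertibility of $\mathbf{1}-fA$ therefore reduce to finite-dimensional statements, and the substantive work is to carry them out throughout the enlarged region \eqref{condition v d analytical continuation}, in which the bound $v<\mathpzc{d}$ used in Proposition \ref{fredholm determinant} no longer holds.

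First I would establish boundedness. Since $fA=\sum_{l=1}^{x-1}(f\phi_l)\otimes\psi_l$, the operator $fA$ is bounded (in fact trace class) as soon as $f\phi_l,\psi_l\in\ell^2(\mathbb{Z})$ for every $l$, which I plan to verify by adapting the decay estimates of Proposition \ref{useful inequalities}. The piecewise geometric weight $\tau(n)$ in \eqref{tau} forces exponential decay of $\phi_l,\psi_l$ at both $\pm\infty$ once $b,c$ are chosen so as to separate the relevant singularities from the contours $C$ and $D$. The lower constraint $qv<\mathpzc{d}$ in \eqref{condition v d analytical continuation} is exactly what is needed to keep the pole tower $qv,q^2v,\dots$ of $(qv/w;q)_\infty^{-1}$ strictly outside $D$ (after a mild deformation of $D$ when $v>\mathpzc{d}$), while the symmetric bound $\mathpzc{d}<v/q$ plays the analogous role on the $z$-side.

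The main obstacle is invertibility. Sylvester's determinant identity reduces it to non-singularity of the $(x-1)\times(x-1)$ matrix $\mathbf{1}-M$, where $M_{kl}=\sum_{n\in\mathbb{Z}}\psi_k(n)f(n)\phi_l(n)$. The entries $M_{kl}$ are convergent and jointly analytic in $(v,\mathpzc{d})$ on the enlarged region by the previous step, so $\det(\mathbf{1}-M)$ is analytic there; I would also argue that $M$ extends analytically from the sub-region $v<\mathpzc{d}$ by tracking the singularities at $qv$ and $\mathpzc{d}$ as $v$ crosses $\mathpzc{d}$ through suitable contour deformations of $C$ and $D$. To prove non-vanishing of $\det(\mathbf{1}-M)$ throughout \eqref{condition v d analytical continuation}, my strategy is to identify $\det(\mathbf{1}-fA)$, via an analytic continuation argument in the spirit of Proposition \ref{analytical extension probability proposition}, with a $q$-series obtained by decoupling the random variable $\mathsf{m}$ from the identity of Proposition \ref{fredholm determinant}: on the sub-region $v<\mathpzc{d}$ the probabilistic expectation of $1/(\zeta q^{\overline{\mathcal{H}}(x,t)};q)_\infty$ is a convergent sum of positive terms for $\zeta<0$, and the resulting representation remains strictly positive throughout the enlarged region, forcing $\det(\mathbf{1}-M)$ to stay non-zero. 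This is where the main work goes, since the explicit decoupling must match term by term with the expansion leading to Theorem \ref{theorem q laplace double sided}.

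Once invertibility is in hand, boundedness of $(\mathbf{1}-fA)^{-1}$ is automatic: the relation $(\mathbf{1}-fA)^{-1}=\mathbf{1}+fA(\mathbf{1}-fA)^{-1}$ exhibits it as the identity plus a bounded finite-rank operator, the finite-rank piece being controlled by $(\mathbf{1}-M)^{-1}$ in $\mathbb{C}^{x-1}$.
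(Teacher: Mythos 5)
Your reduction of boundedness to $\ell^2$-decay of $\phi_l,\psi_l$, and of invertibility of $\mathbf{1}-fA$ to non-vanishing of the $(x-1)\times(x-1)$ determinant $\det(\mathbf{1}-M)$, is sound as far as it goes. But the proposed argument that $\det(\mathbf{1}-M)\neq 0$ is circular and does not close. The quantity that a probabilistic/positivity argument controls is the \emph{full} Fredholm determinant $\det(\mathbf{1}-fK)$ — the $q$-Laplace transform of $\overline{\mathcal H}$ — not $\det(\mathbf{1}-fA)$. To isolate $\det(\mathbf{1}-fA)$ from $\det(\mathbf{1}-fK)$ you would use the factorization
\[
\det(\mathbf{1}-fK)=\det(\mathbf{1}-fA)\,\det\bigl(\mathbf{1}-(\mathpzc{d}-v)(\varrho f\Phi_x)\Psi_x\bigr),\qquad \varrho=(\mathbf{1}-fA)^{-1},
\]
which already presupposes the invertibility you are trying to establish. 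Even granting $\det(\mathbf{1}-fK)>0$ on the whole region (via positivity of the $q$-exponential for $\zeta<0$ plus analytic continuation), this gives no control on the factor $\det(\mathbf{1}-fA)$ in isolation: a positive product does not prevent one of its constituent factors from vanishing when the decomposition itself is only valid if it doesn't.

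The paper's proof takes a shorter and non-circular route that you have not engaged with. The key structural input is the biorthogonality $\sum_{n}\phi_l(n)\psi_m(n)=\delta_{lm}$ of Lemma \ref{lemma scalar product} (which is explicitly stated to hold in the region \eqref{condition v d analytical continuation} as well as for $v<\mathpzc{d}$). This gives $A^2=A$ and hence $\|A\|=1$. Then, since $f(n)=1/(1+q^n/|\zeta|)\in(0,1)$ with $f(n)\to 1$ only as $n\to+\infty$ — precisely the regime in which all the $\phi_l$ decay geometrically by Proposition \ref{useful inequalities} — one splits
\[
f(n)\le (1-\varepsilon)\,+\,\varepsilon\,\mathbbm{1}_{n\ge n_1}(n)
\]
to conclude $\|fA\|\le 1-\varepsilon+\varepsilon\|P_{n_1}A\|<1$ for $n_1$ large, and then inverts $\mathbf{1}-fA$ by Neumann series. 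This not only gives invertibility for every $\zeta<0$ simultaneously but also the uniform operator bound on $(\mathbf{1}-fA)^{-1}$ and, via termwise differentiation of the geometric series, its analyticity in $(v,\mathpzc{d})$ on the enlarged region. Your approach omits the projection identity $A^2=A$ entirely, which is precisely what makes a direct norm estimate possible; to make your determinant route rigorous you would need an independent, non-probabilistic argument that $\det(\mathbf{1}-M)\neq 0$ for all $\zeta<0$ and all admissible $(v,\mathpzc{d})$.
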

In the proof of Proposition \ref{lemma invertible} we use the following biorthogonality property. 

\begin{Lemma} \label{lemma scalar product}
Let $v,\mathpzc{d}$ be such that $v<\mathpzc{d}$ or \eqref{condition v d analytical continuation} holds. Then we have
\begin{equation} \label{ scalar phi psi}
\sum_{n \in \mathbb{Z}} \phi_l(n) \psi_m(n)=\delta_{l,m}.
\end{equation}
\end{Lemma}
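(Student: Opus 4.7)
The key observation is that $\tau(n)$ appears as a multiplier in $\phi_l(n)$ and as a divisor in $\psi_m(n)$, so it cancels pointwise in the product, leaving
\[
\phi_l(n)\psi_m(n) = \xi_{m+1}s_{m+1}\int_D\!\int_C \frac{dw\, dz}{(2\pi \mathrm{i})^2}\left(\frac{z}{w}\right)^{\!n}\frac{z^{x-m-1}}{w^{x-l+1}}\frac{\prod_{k=2}^m (z-\xi_k s_k)}{\prod_{k=1}^l (w-\xi_{k+1} s_{k+1})}\frac{F(z)(qv/w;q)_\infty}{F(w)(qv/z;q)_\infty}.
\]
The role of the piecewise weight $\tau$ is that it ensures $\phi_l,\psi_m \in \ell^2(\mathbb{Z})$ separately, yielding the pointwise bounds $|\phi_l(n)\psi_m(n)| \lesssim (v/\mathpzc{d})^n$ for $n \geq 0$ and $|\phi_l(n)\psi_m(n)| \lesssim (\sup_k \xi_k s_k / c)^{|n|}$ for $n<0$ under the hypothesis $v<\mathpzc{d}$; in particular the series $\sum_{n\in\mathbb{Z}}\phi_l(n)\psi_m(n)$ is absolutely convergent and one is allowed to split it into $n \geq 0$ and $n < 0$ and interchange sum with integral in each half.

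On the original contours we have $|z|<|w|$, so for $n \geq 0$ the geometric series sums to $\sum_{n\geq 0}(z/w)^n = w/(w-z)$. For $n<0$ convergence requires $|z|>|w|$, so one first deforms the $z$-contour $C$ to a new contour $C'$ of radius in $(\sup_k \xi_k s_k,\,\inf_k \xi_k/s_k)$. This deformation crosses no singularities of the $z$-integrand: the poles of $1/(z s_k/\xi_k;q)_\infty$ inside $F(z)$ lie at $z = q^{-j}\xi_k/s_k \geq \inf_k \xi_k/s_k$ (outside $C'$), while the poles of $1/(qv/z;q)_\infty$ at $z = q^{k+1}v$ lie inside $C$; the apparent zeros at $\mathpzc{d}$ and $\xi_k s_k$ coming from $(qz/\mathpzc{d};q)_\infty (qz/(\xi_k s_k);q)_\infty$ are not poles. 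On $C'$ we obtain $\sum_{n<0}(z/w)^n = w/(z-w)$. Combining the two halves produces the difference $\bigl[\int_C - \int_{C'}\bigr]$ of the integrand multiplied by $w/(w-z)$; by residue calculus this difference equals $-2\pi\mathrm{i}$ times the sum of residues in the annulus bounded by $C$ and $C'$, and the unique pole there is the simple one at $z=w$ created by the kernel itself. Evaluating that residue, the factor $F(z)(qv/w;q)_\infty/[F(w)(qv/z;q)_\infty]$ trivializes, and the double integral collapses to
\[
\sum_{n\in\mathbb{Z}}\phi_l(n)\psi_m(n) \;=\; \xi_{m+1}s_{m+1}\int_D \frac{dw}{2\pi \mathrm{i}}\,\frac{\prod_{k=2}^m (w-\xi_k s_k)}{w^{m-l+1}\prod_{k=2}^{l+1}(w-\xi_k s_k)}.
\]

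The conclusion follows by a residue computation on $D$, which encloses $\mathpzc{d}$ and $\xi_2 s_2,\dots,\xi_x s_x$ but crucially not the origin. If $l=m$ the integrand reduces to $\xi_{m+1}s_{m+1}/[w(w-\xi_{m+1}s_{m+1})]$ and only the residue at $\xi_{m+1}s_{m+1}$ contributes, giving $1$. If $l>m$ the integrand becomes $\xi_{m+1}s_{m+1}\,w^{l-m-1}/\prod_{k=m+1}^{l+1}(w-\xi_k s_k)$, with no pole at zero, and the sum of residues at $\xi_{m+1}s_{m+1},\dots,\xi_{l+1}s_{l+1}$ vanishes by the Lagrange identity $\sum_{j=1}^N x_j^d/\prod_{k\ne j}(x_j - x_k)=0$ valid whenever $d<N-1$ (here $d=l-m-1$, $N=l-m+1$). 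If $l<m$ the integrand simplifies to $\xi_{m+1}s_{m+1}\prod_{k=l+2}^m(w-\xi_k s_k)/w^{m-l+1}$, whose only pole is at $w=0$ and hence outside $D$, so the integral vanishes.

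The main technical obstacle is verifying that the annulus between $C$ and $C'$ contains no spurious poles; this is precisely the geometric content of the chain of inequalities $v<b<\mathpzc{d}\leq \inf_i \xi_i s_i\leq \sup_i \xi_i s_i<c<\inf_i \xi_i/s_i$ built into the definition of $\tau$. Finally, the extension of \eqref{ scalar phi psi} to the regime \eqref{condition v d analytical continuation}, where $v$ may exceed $\mathpzc{d}$ and a legitimate choice $v<b<\mathpzc{d}$ is no longer available, follows by the same analytic continuation argument in the parameters $v,\mathpzc{d}$ that was employed in Proposition \ref{analytical extension probability proposition} and in the derivation of Theorem \ref{theorem q laplace double sided}, since both sides of \eqref{ scalar phi psi} depend analytically on $(v,\mathpzc{d})$ on an open neighborhood of each allowed parameter range.
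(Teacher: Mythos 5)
Your proof is correct and follows the same route as the paper's: cancel $\tau$, split the sum at $n=0$, evaluate the two geometric series on $z$-contours of different radii, combine and reduce to the residue at $z=w$, and finish by a residue computation over $D$ (your Lagrange-identity evaluation for $l>m$ is equivalent to the paper's remark about the residue at infinity). The only place you deviate is the closing appeal to analytic continuation to cover the regime \eqref{condition v d analytical continuation}; this detour is harmless but unnecessary, because the $z$-integrand of $\psi_m$ has poles only at $q^j v$ with $j\geq 1$, so the inner contour can always be taken with radius strictly between $qv$ and $\mathpzc{d}$ (this is precisely the relaxation $qv<b<\mathpzc{d}$ noted in Proposition \ref{useful inequalities}), and the same contour argument applies verbatim.
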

\begin{proof}
This is a simple consequence of the contour integral expressions \eqref{phi}, \eqref{psi}. The generic term in the summation in the left hand side of \eqref{ scalar phi psi} is
\begin{equation*}
\phi_l(n) \psi_m(n) = \xi_{m+1} s_{m+1} \int_D \int_C \frac{dw dz}{(2 \pi \mathrm{i})^2} \left( \frac{z}{w} \right)^n \frac{w^{l-1}}{z^{m+1}} \frac{\prod_{k=2}^m (z-\xi_k s_k)}{\prod_{k=2}^{l+1} (w-\xi_k s_k)}   \frac{G(w)}{G(z)},
\end{equation*}
where, in the function $G$, we gathered together the factors independent on $n$ or $l,m$ as
$$
G(u)=\frac{( qv / u ;q)_\infty}{u^x F(u)}.
$$
We see that in order to take the summation over all integers inside the integrals we need $|z/w|$ to be suitably defined depending on the positivity of $n$ itself.

Consider the contour $\tilde{C}$ being a circle of radius $r$ such that $\max_i |\xi_i s_i| < r < \min_i |\xi_i / s_i|$. We can write
\begin{equation*}
\begin{split}
\sum_{n \in \mathbb{Z}} \phi_l(n) \psi_m(n) &= \xi_{m+1} s_{m+1} \left [ \sum_{n \geq 0} \int_D \int_C \frac{dw dz}{(2 \pi \mathrm{i})^2} \left( \frac{z}{w} \right)^n \frac{w^{l-1}}{z^{m+1}} \frac{\prod_{k=2}^m (z-\xi_k s_k)}{\prod_{k=2}^{l+1} (w-\xi_k s_k)}   \frac{G(w)}{G(z)} \right.\\
& \left. \qquad \qquad \qquad+ \sum_{n < 0} \int_D \int_{\tilde{C}} \frac{dw dz}{(2 \pi \mathrm{i})^2} \left( \frac{z}{w} \right)^n \frac{w^{l-1}}{z^{m+1}} \frac{\prod_{k=2}^m (z-\xi_k s_k)}{\prod_{k=2}^{l+1} (w-\xi_k s_k)}   \frac{G(w)}{G(z)} \right]\\
&= \xi_{m+1} s_{m+1} \left[ \int_D \left( \int_{\tilde{C}} - \int_C \right) \frac{dw dz}{(2 \pi \mathrm{i})^2} \frac{1}{z-w} \frac{w^l}{z^{m+1}} \frac{\prod_{k=2}^m (z-\xi_k s_k)}{\prod_{k=2}^{l+1} (w-\xi_k s_k)}   \frac{G(w)}{G(z)} \right]\\
&=\xi_{m+1} s_{m+1} \left[ \int_D \int_{\tilde{D}} \frac{dw dz}{(2 \pi \mathrm{i})^2} \frac{1}{z-w} \frac{w^l}{z^{m+1}} \frac{\prod_{k=2}^m (z-\xi_k s_k)}{\prod_{k=2}^{l+1} (w-\xi_k s_k)}   \frac{G(w)}{G(z)} \right],
\end{split}
\end{equation*}
where, for the last equality, we deformed $\tilde{C}$ into an union or the two contours $C$ and $\tilde{D}$, with the latter being a curve encircling $D$ and no other singularity for the $z$ variable. Performing the $z$ integral we get
\begin{equation*}
\xi_{m+1} s_{m+1} \int_D \frac{dw}{2 \pi \mathrm{i}} w^{l-m-1} \left( \mathbbm{1}_{m\leq l} \prod_{k=m}^{l+1} \frac{1}{w-\xi_k s_k} + \mathbbm{1}_{m > l} \prod_{k=l+1}^m (w-\xi_k s_k)\right). \end{equation*}
Naturally, when $m>l$ there is no pole and the integral vanishes. On the other hand, if $m \leq l$ we can evaluate the residue at infinity and obtain the result.
\end{proof}

\begin{proof}[Proof of Proposition \ref{lemma invertible}]
We want to show that $\lVert fA \rVert_{l^2(\mathbb{Z})} <1$,
so to define $(\mathbf{1} -fA)^{-1}$ through the geometric series 
$$
\sum_{k \leq 0} (fA)^k.
$$
The biorthogonality relation \eqref{ scalar phi psi} implies $\lVert A \rVert =1$ since $A^2 = A$.

On the other hand, consider the following bound for the function $f$, defined in \eqref{f},
\begin{equation*}
f(n) \leq (1-\varepsilon) f(n+n_0) + \varepsilon  P_{n_1}(n),
\end{equation*}
with $P_{n_1}(n)= \mathbbm{1}_{n\geq n_1}$ and $\varepsilon, n_0 ,n_1$, suitably chosen. In particular, taking $n_0$ large enough and $\varepsilon$ small we can let $n_1$ be an arbitrary big number. Moreover, the fact that $f$ is a diagonal operator implies the simple estimate
\begin{equation*}
\lVert f A \rVert  \leq (1-\varepsilon) \Vert f \Vert \Vert A \Vert + \varepsilon \Vert P_{n_1} A \Vert \leq 1-\varepsilon + \varepsilon \Vert P_{n_1} A \Vert.
\end{equation*}
Let's consider now an element $\eta$ in the unitary sphere of $l^2(\mathbb{Z})$. By simply using the definition of the kernel $A$ and the Schwartz inequality we have
\begin{equation*}
\Vert P_{n_1}A \eta \Vert  \leq \sum_{l=1}^{x-1} \left( \sum_{n \in \mathbb{Z}} \left| \mathbbm{1}_{n\geq n_1} \phi_l(n)  \sum_{m \in \mathbb{Z}} \psi_l(m) \eta_m \right|^2 \right)^{1/2}
\leq \sum_{l=1}^{x-1} \Vert\psi_l \Vert \left( \sum_{n \geq n_1} |\phi_l(n) |^2 \right)^{1/2},
\end{equation*}
which, thanks to the bound \eqref{geometric bound phi}, can be shown to be geometrically small in $n_1$.

Finally, we remark that for sequences $\phi_l$ or $\psi_l$ the bounds \eqref{geometric bound phi} hold true also in the region \eqref{condition v d analytical continuation}, so $A$ is analytic in this domain, so are its powers and so is $(\mathbf{1} - fA)^{-1}$ as one can show the geometrical decay of derivatives of $(fA)^N$ as well. 
\end{proof}

The following lemma offers a tool to decouple a generic process from the contribution of an independent $q$-Poisson random variable.

\begin{Lemma} \label{lemma shifting argument}
Let $\mathsf{m} \sim q$Poi(p). Then, for any bounded function $B$, we have
\begin{equation} \label{eq: shifting}
    B(z)=\frac{1}{(p;q)_\infty} \sum_{k\geq 0}\frac{(-1)^kq^{\binom{k}{2}}}{(q;q)_k} p^k \mathbb{E}_{\mathsf{m}} (B(z- \mathsf{m} -k)).
\end{equation}
\end{Lemma}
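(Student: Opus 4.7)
The plan is to expand the right hand side using the explicit probability mass function of the $q$-Poisson distribution, swap the order of summation, and then use a standard $q$-analog of the relation $(1-1)^n = \delta_{n,0}$ to collapse everything to $B(z)$.

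Since $\mathsf{m} \sim q\text{Poi}(p)$ is the $b=0$ specialization of \eqref{q negative binomial}, its pmf is $\mathbb{P}(\mathsf{m}=n) = p^n (p;q)_\infty / (q;q)_n$. Substituting into the right hand side of \eqref{eq: shifting} and rewriting the expectation as a series, the prefactor $(p;q)_\infty^{-1}$ cancels, giving
\begin{equation*}
\text{RHS} \;=\; \sum_{k\geq 0} \sum_{n \geq 0} \frac{(-1)^k q^{\binom{k}{2}}}{(q;q)_k (q;q)_n}\, p^{k+n}\, B(z-k-n).
\end{equation*}
Boundedness of $B$ ensures the double sum converges absolutely (uniformly in $z$), justifying a Fubini-type rearrangement. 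Setting $m = k+n$ and interchanging the order of summation yields
\begin{equation*}
\text{RHS} \;=\; \sum_{m \geq 0} p^m\, B(z-m) \sum_{k=0}^{m} \frac{(-1)^k q^{\binom{k}{2}}}{(q;q)_k (q;q)_{m-k}}.
\end{equation*}

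The inner sum is a classical $q$-identity: applying the $q$-binomial theorem $(x;q)_m = \sum_{k=0}^m (-1)^k q^{\binom{k}{2}} \binom{m}{k}_q x^k$ at $x=1$ and using that $(1;q)_m = 0$ for $m \geq 1$ and $(1;q)_0 = 1$, together with $\binom{m}{k}_q = (q;q)_m/((q;q)_k(q;q)_{m-k})$, one obtains
\begin{equation*}
\sum_{k=0}^{m} \frac{(-1)^k q^{\binom{k}{2}}}{(q;q)_k (q;q)_{m-k}} \;=\; \delta_{m,0}.
\end{equation*}
Only the term $m=0$ survives and the right hand side collapses to $B(z)$. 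There is no real obstacle here; the only subtle point is ensuring absolute convergence of the double sum so that the reindexing in $k+n$ is legitimate, which follows directly from the hypothesis that $B$ is bounded, since $\sum_{k,n\ge 0} p^{k+n} q^{\binom{k}{2}}/((q;q)_k (q;q)_n) < \infty$ for $0 \leq p < 1$ and $0 \leq q < 1$.
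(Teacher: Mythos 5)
Your proof is correct and follows essentially the same route as the paper's: expand the $q$-Poisson expectation, reindex the double sum by the total shift $m=k+n$, and collapse the inner sum via the $q$-Pochhammer expansion $(1;q)_m = \delta_{m,0}$. The only addition is your explicit remark on absolute convergence justifying the Fubini swap, which the paper leaves implicit.
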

\begin{proof}
To verify identity \eqref{eq: shifting} we simply open up the average in the right hand side with respect to $\mathsf{m}$, as 
\begin{equation*}
    \mathbb{E}_{\mathsf{m}} (B(z- \mathsf{m} -k)) = \sum_{l \geq 0} p^l \frac{(p;q)_\infty}{(q;q)_l}B(z-l-k).
\end{equation*}
We can now rearrange the double summation in the indices $k,l$, naming $L=l+k$, as
\begin{equation*}
    \text{rhs of \eqref{eq: shifting}} = \sum_{L \geq 0} B(z -L) \sum_{k=0}^{L} \frac{(-1)^k q^{\binom{k}{2}}}{(q;q)_k (q;q)_{L-k}},
\end{equation*}
which completes the proof, after recognizing, in the right hand side, the $q$-Pochhammer expansion \eqref{q Pochammer expansion} (with $z=1$) that is one for $L=0$ and zero otherwise. 
\end{proof}

In the remaining part of the paper we will use the following decomposition of terms $\Phi_x, \Psi_x$:
\begin{gather}
\Phi_x(n) = \Phi_x^{(1)}(n) + \Phi_x^{(2)}(n),\label{Phi1 + Phi2 }\\
\Psi_x(n) = \Psi_x^{(1)}(n) + \Psi_x^{(2)}(n), \label{Psi1 + Psi2}
\end{gather}
obtained separating from the integration \eqref{Phi} (resp. \eqref{Psi}) the contribution of pole $w= \mathpzc{d} $ (resp. $z= v$) from that of other poles. The exact expressions are
\begin{equation}\label{Phi 1}
\Phi_x^{(1)}(n) = \frac{\tau(n)}{\mathpzc{d}^{n+1}} \prod_{k=2}^x \frac{1}{\mathpzc{d} - \xi_k s_k} \frac{( qv / \mathpzc{d};q)_\infty}{F(\mathpzc{d})},
\end{equation}
\begin{equation}\label{Phi 2}
\Phi_x^{(2)}(n)=\tau(n) \int_{D_1} \frac{dw}{2 \pi \mathrm{i}} \frac{1}{w^{n+1}} \frac{1}{w-\mathpzc{d}} \prod_{k=2}^x \frac{1}{w-\xi_k s_k} \frac{( qv / w ;q)_\infty}{F(w)},
\end{equation}
\begin{equation}\label{Psi 1}
\Psi_x^{(1)}(n)=\frac{v^n}{\tau(n)} \prod_{k=2}^x (v-\xi_k s_k) \frac{F(v)}{(q;q)_\infty},
\end{equation}
\begin{equation}\label{Psi 2}
\Psi_x^{(2)}(n)=\frac{1}{\tau(n)} \int_{C_1} \frac{dz}{2 \pi \mathrm{i}} \frac{z^{n-1}}{( v / z ;q)_\infty} \prod_{k=2}^x (z-\xi_k s_k) F(z),     
\end{equation}
where $F$ was given in \eqref{F}, contour $D_1$ contains $\{\xi_i s_i\}_{i\geq 2}$ and no other singularity and $C_1$ contains $\{ q^{k}v \}_{k \geq 1}$ and no other singularity.

\begin{proof}[Proof of Theorem \ref{theorem q laplace double sided}]
Using Lemma \ref{lemma shifting argument} setting $B:z \to \mathbb{E}_{\text{HS}(v,\mathpzc{d})}(1/(\zeta q^{\mathcal{H}+z};q)_\infty)$ and expressing the $q$-Laplace transform $\mathbb{E}_{\text{HS}(v,\mathpzc{d})}(1/(\zeta q^{\mathcal{H}-\mathsf{m}};q)_\infty)$ as in \eqref{fredholm determinant introduction} we obtain formula \eqref{eq: q laplace H expansion}. Therefore we only need to show that expression \eqref{eq: q laplace H expansion} is well posed in the region \eqref{condition v d analytical continuation}.

Using basic properties of Fredholm determinants, along with the regularity of the kernel $fA$ proved in Proposition \ref{lemma invertible}, we can write
\begin{equation*}
    \det(\mathbf{1} - fK) = \det\left( \mathbf{1} - f A - (\mathpzc{d} - v) f \Phi_x \Psi_x \right) = \det(\mathbf{1} - fA) \det\left(\mathbf{1} - (\mathpzc{d} - v) (\varrho f \Phi_x)\Psi_x \right),
\end{equation*}
where we called $\varrho=(\mathbf{1} - fA)^{-1}$. This allows us to express $V_{x;v,\mathpzc{d}}(\zeta)$ as
\begin{equation}\label{expansion determinant}
V_{x;v,\mathpzc{d}}(\zeta) =\det(\mathbf{1} - fA)\frac{1}{1- v / \mathpzc{d} } \left( 1 - (\mathpzc{d} - v)\sum_{n \in \mathbb{Z}} (\varrho f \Phi_x) (n) \Psi_x(n)\right),
\end{equation}
We turn our attention to the term
\begin{equation} \label{eq: difficult term}
\sum_{n \in \mathbb{Z}} (\varrho f \Phi_x) (n) \Psi_x(n) = \sum_{n\in \mathbb{Z}}f(n)\Phi_x(n) \Psi_x(n) + \sum_{n \in \mathbb{Z}}(fA\varrho\Phi_x)(n) \Psi_x(n),
\end{equation}
that we rewrite, using \cref{Phi 1,Phi 2,Psi 1,Psi 2} as
\begin{equation} \label{eq: difficult term decomposition}
\eqref{eq: difficult term}=\sum_{n \in \mathbb{Z}}f(n)\Phi_x^{(1)}(n) \Psi_x^{(1)}(n) + \sum_{\substack{i,j=1,2\\ (i,j)\neq (1,1)}} \sum_{n \in \mathbb{Z}} f(n)\Phi_x^{(i)}(n) \Psi_x^{(j)}(n) + \sum_{n \in \mathbb{Z}}(fA\varrho f\Phi_x)(n) \Psi_x(n). 
\end{equation}
We easily see that, thanks to bounds stated in Appendix \ref{appendix  bounds}, the second and the third terms in the right hand side of \eqref{eq: difficult term decomposition} are geometrically convergent summations in the region \eqref{condition v d analytical continuation}. On the other hand, the generic term of the summation in the first addend of the right hand side of \eqref{eq: difficult term decomposition} takes the form
\begin{equation*}
\frac{1}{1-q^n/\zeta} \frac{1}{\mathpzc{d}} \left( \frac{v}{\mathpzc{d}} \right)^n \prod_{k=2}^x \frac{v - \xi_k s_k}{\mathpzc{d} - \xi_k s_k} \frac{(qv / \mathpzc{d};q)_\infty}{(q;q)_\infty}\frac{ F(v)}{F(\mathpzc{d})}.
\end{equation*}
We can perform the summation over $n$ through the Ramanujan $\setlength\arraycolsep{1pt}
{}_1 \psi_1$  formula (\cite{IsmailClassicalAndQuantum}, Theorem 12.3.1) as
\begin{equation*}
\sum_{n \in \mathbb{Z}} \frac{1}{1-q^n/\zeta} \left( \frac{v}{ \mathpzc{d}} \right)^n = \frac{(v / ( \mathpzc{d} \zeta ), q\zeta \mathpzc{d} / v, q,q ;q)_\infty }{ ( v /  \mathpzc{d}, q \mathpzc{d} / v, 1/\zeta, q\zeta;q)_\infty }.
\end{equation*}
which itself leads to the expression
\begin{equation} \label{eq: Phi1 Psi1 sum}
    \sum_{n \in \mathbb{Z}}f(n)\Phi_x^{(1)}(n) \Psi_x^{(1)}(n) = \frac{1}{\mathpzc{d}} \frac{( v / ( \mathpzc{d} \zeta ) , q\zeta \mathpzc{d} / v, q  ;q)_\infty }{ ( v / \mathpzc{d} , 1/\zeta, q \zeta;q)_\infty } \frac{F(v)}{F(\mathpzc{d})} \prod_{k=2}^x \frac{v - \xi_k s_k}{\mathpzc{d} - \xi_k s_k}.
\end{equation}
Combining the explicit formula \eqref{eq: Phi1 Psi1 sum} with \eqref{eq: difficult term decomposition} and \eqref{expansion determinant} we see that $V_{x;v, \mathpzc{d}}$ does not in fact present any singularity in $v=\mathpzc{d}$ by virtue of the Taylor expansion
\begin{equation} \label{star 1}
\text{rhs of \eqref{eq: Phi1 Psi1 sum}}
=\frac{1}{\mathpzc{d}- v}  + \frac{1}{\mathpzc{d}} \left( \polygamma_0(1/\zeta) - \polygamma_0(q \zeta) +2\polygamma_0(q ) +x h_0(\mathpzc{d}) -\sum_{j=1}^ta_0(\mathpzc{d};j)\right)  + \mathcal{O}\left( \mathpzc{d}-v \right),
\end{equation}
where we used the $q$-polygamma type functions $\polygamma_k$ defined in Appendix \ref{appendix qfunctions}, their combinations $a_0,h_0$ presented in \eqref{eq: a_k h_k} and the notation $a_0(\mathpzc{d};j)$ stresses the dependence on the spectral parameters $u_j$ as
\begin{equation*}
    a_k(\mathpzc{d};j) = \polygamma_k(q^J u_j \mathpzc{d}) - \polygamma_k(u_j \mathpzc{d}).
\end{equation*}
Therefore $V_{x;v, \mathpzc{d}}$ is an analytic function of both parameters $v, \mathpzc{d}$ in the region \eqref{condition v d analytical continuation} and this concludes the proof.
\end{proof}
Calculations performed during the proof of Theorem \ref{theorem q laplace double sided} can be exploited to obtain determinantal formulas for for the stationary Higher Spin Six Vertex Model.
\begin{Cor} \label{Corollary stationary q laplace}
Consider stationary Higher Spin Six Vertex Model with parameters $\mathpzc{d}, \Xi, \mathbf{S}$ as in \eqref{eq: placements Xi S}. Then we have
\begin{equation} \label{stationary q laplace 2}
 \mathbb{E}_{\emph{HS}(\mathpzc{d}, \mathpzc{d})} \left ( \frac{1}{(\zeta q^{\mathcal{H}(x,t)};q)_\infty} \right ) = \frac{1}{(q;q)_\infty} \sum_{k\geq 0} \frac{(-1)^k q^{\binom{k}{2}}}{(q;q)_k}q^k \left( V_x(\zeta q^{-k})- V_x(\zeta q^{-k -1})\right) ,
\end{equation}
with the function $V_x=V_{x;\mathpzc{d},\mathpzc{d}}$ being
\begin{equation} \label{V}
\begin{split}
V_x(\zeta) =  \det \left( \mathbf{1} - fA \right) \Bigg(& - \polygamma_0(1/\zeta) + \polygamma_0(q\zeta) - 2 \polygamma_0(q) - x h_0(\mathpzc{d}) + \sum_{j=1}^t a_0(\mathpzc{d};j) \\
&- \mathpzc{d} \sum_{\substack{i,j=1,2\\ (i,j)\neq (1,1)}} \sum_{n \in \mathbb{Z}} f(n)\Phi_x^{(i)}(n) \Psi_x^{(j)}(n) - \mathpzc{d} \sum_{n \in \mathbb{Z}}(fA\varrho f\Phi_x)(n) \Psi_x(n) \Bigg).
\end{split}
\end{equation}
\end{Cor}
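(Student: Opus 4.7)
The argument rests on two observations. First, condition \eqref{condition v d analytical continuation} of Theorem \ref{theorem q laplace double sided}, namely $qv<\mathpzc{d}<v/q$, holds automatically at $v=\mathpzc{d}$, so one may specialize \eqref{eq: q laplace H expansion} to that value, with $V_{x;v,\mathpzc{d}}$ interpreted through the analytic extension at $v=\mathpzc{d}$ that was established within the proof of that theorem. At $v=\mathpzc{d}$ the prefactor $(qv/\mathpzc{d};q)_\infty$ collapses to $(q;q)_\infty$ and every factor $(v/\mathpzc{d})^k$ becomes $1$, so Theorem \ref{theorem q laplace double sided} degenerates to the intermediate identity
\begin{equation*}
\mathbb{E}_{\text{HS}(\mathpzc{d},\mathpzc{d})}\!\left(\frac{1}{(\zeta q^{\mathcal{H}(x,t)};q)_\infty}\right) = \frac{1}{(q;q)_\infty}\sum_{k\geq 0}\frac{(-1)^k q^{\binom{k}{2}}}{(q;q)_k}\, V_x(\zeta q^{-k}),
\end{equation*}
which only needs to be recast in the telescoping form of \eqref{stationary q laplace 2}.

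To identify $V_x := V_{x;\mathpzc{d},\mathpzc{d}}$ explicitly, I would reuse the decomposition
$$V_{x;v,\mathpzc{d}}(\zeta)=\det(\mathbf{1}-fA)\,\frac{\mathpzc{d}}{\mathpzc{d}-v}\!\left(1-(\mathpzc{d}-v)\,S(v,\mathpzc{d})\right),$$
with $S(v,\mathpzc{d})$ denoting the quantity on the right-hand side of \eqref{eq: difficult term decomposition}, already produced in the proof of Theorem \ref{theorem q laplace double sided}. The only singular contribution to $S$ at $v=\mathpzc{d}$ is $\sum_n f(n)\Phi_x^{(1)}(n)\Psi_x^{(1)}(n)$, whose Ramanujan ${}_1\psi_1$ evaluation \eqref{eq: Phi1 Psi1 sum} admits the Taylor expansion \eqref{star 1} beginning with $(\mathpzc{d}-v)^{-1}$. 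This pole cancels the prefactor $\mathpzc{d}/(\mathpzc{d}-v)$, and the constant-in-$(v-\mathpzc{d})$ term recovers precisely the bracketed expression in \eqref{V}.

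With $V_x$ in hand, the passage to the telescoping form reduces to an Abel summation. Setting $V_k=V_x(\zeta q^{-k})$ and $a_k=(-1)^k q^{\binom{k}{2}+k}/(q;q)_k$, a direct computation exploiting $\binom{k}{2}=\binom{k-1}{2}+k-1$ produces $a_0=1$ and $a_k-a_{k-1}=(-1)^k q^{\binom{k}{2}}/(q;q)_k$ for $k\geq 1$. Reindexing $\sum_{k\geq 0}a_k(V_k-V_{k+1})=a_0V_0+\sum_{k\geq 1}(a_k-a_{k-1})V_k$ then yields
$$\sum_{k\geq 0}\frac{(-1)^k q^{\binom{k}{2}}q^k}{(q;q)_k}(V_k-V_{k+1})=\sum_{k\geq 0}\frac{(-1)^k q^{\binom{k}{2}}}{(q;q)_k}V_k,$$
which matches the right-hand side of \eqref{stationary q laplace 2}.

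The only analytical point requiring care is absolute convergence of both series and vanishing of the boundary term $a_kV_k$ as $k\to\infty$, needed to legitimize the Abel rearrangement. This follows from the super-exponential decay $q^{\binom{k}{2}}/(q;q)_k\sim q^{k^2/2}$ combined with the geometric bounds \eqref{geometric bound phi} on $\phi_l,\psi_l$ and the analogous bounds on $\Phi_x^{(i)},\Psi_x^{(j)}$ collected in Appendix \ref{appendix  bounds}, which together ensure at most exponential growth of $V_x(\zeta q^{-k})$ in $k$. No genuinely new difficulty arises, since the delicate analytic continuations have all been settled during the proof of Theorem \ref{theorem q laplace double sided}.
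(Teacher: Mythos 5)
Your proposal is correct and follows essentially the same route as the paper: specialize (or take the limit) $v\to\mathpzc{d}$ in Theorem \ref{theorem q laplace double sided}, extract the regularized form of $V_{x;\mathpzc{d},\mathpzc{d}}$ from the Taylor expansion \eqref{star 1} after cancelling the simple pole against the prefactor, and then rearrange the resulting $q$-exponential sum into the telescoped form. Your explicit Abel-summation identity $a_k-a_{k-1}=(-1)^k q^{\binom{k}{2}}/(q;q)_k$ with $a_k=(-1)^kq^{\binom{k}{2}+k}/(q;q)_k$ is exactly the reindexing the paper performs (there in slightly garbled notation), and your added remark on the vanishing boundary term $a_kV_k$ is a sound, if minor, supplement to the paper's terse justification.
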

\begin{proof}
All it takes to show \eqref{stationary q laplace 2} is to take the limit $v \to \mathpzc{d}$ of both sides of equality \eqref{fredholm determinant introduction}. 
We exchange the limit sign in $v \to \mathpzc{d}$ and the summation sign in the right hand side of \eqref{fredholm determinant introduction} and this can be done as, in the proof of Theorem \ref{theorem q laplace double sided}, the function $V_{x;v,\mathpzc{d}}$ was shown to be uniformly bounded in a neighborhood of $v=\mathpzc{d}$ and its limit $V_{x;v,\mathpzc{d}} \to V_{x;\mathpzc{d},\mathpzc{d}}$ is readily computed using \eqref{star 1}.
This is enough to prove that
\begin{equation} \label{dummy limit summation}
\lim_{ v \to \mathpzc{d} } \mathbb{E}_{\text{HS}(v,\mathpzc{d})} \left ( \frac{1}{(\zeta q^{\mathcal{H}(x,t)};q)_\infty} \right ) = \frac{1}{(q;q)_\infty} \sum_{k\geq 0} \frac{(-1)^k q^{\binom{k}{2}}}{(q;q)_k}V_x(\zeta q^{-k}).
\end{equation}
Finally, substituting $V_x(\zeta q^{k})$ with $q^{k}V_x(\zeta q^{-k}) + (1-q^k)V_x(\zeta q^{-k})$ and rearranging the summation in the right hand side of \eqref{dummy limit summation} we obtain \eqref{stationary q laplace 2}.
\end{proof}

\section{Asymptotics along the critical line} \label{section time asymptotics}

In this section we discuss the time asymptotics of the stationary Higher Spin Six Vertex Model. First we give some details on the general conjecture concerning scaling limits of models in the Kardar-Parisi-Zhang universality class and subsequently we confirm these conjectures in this particular case.

\subsection{The KPZ scaling for the Higher Spin Six Vertex Model} \label{section: kpz scaling}

Before we enter the discussion it is appropriate to recall the definition of the Baik-Rains distribution \cite{BaikRains2000Png}, which will ultimately describe long time fluctuations of the stationary height function $\mathcal{H}$ under a suitable scaling. Rather than showing the original definition given by authors in \cite{BaikRains2000Png}, formula (2.16), we present an equivalent expression first found in \cite{ImamuraSasamoto2013sKPZ} (see also \cite{FerrariSpohn2006sTASEP}).

The building blocks for the construction of the Baik-Rains distributions are given in the following

\begin{Def}[Airy function and Airy kernel]
The Airy function $\emph{Ai}$ is given by
\begin{equation*}
\emph{Ai}(\nu)=\frac{1}{2 \pi \mathrm{i}} \int_{e^{-\frac{\mathrm{i}}{3}\pi} \infty }^{e^{\frac{\mathrm{i}}{3}\pi} \infty} \exp\left\{ \frac{z^3}{3} - z\nu \right\}dz,
\end{equation*}
where the integration contour is any open complex curve having the half lines $\{ R e^{\frac{\mathrm{i}}{3}\pi} | R \geq 0 \}$ and  $\{ R e^{-\frac{\mathrm{i}}{3}\pi} | R \geq 0 \}$ as asymptotes. 

The Airy kernel\footnote{Sometimes the equivalent expression $K_{\text{Airy}}(\nu, \theta)=\int_0^\infty \text{Ai}(\lambda + \nu) \text{Ai}(\lambda + \theta) d \lambda$ is found in literature.} $K_{\text{\emph{Airy}}}$ is defined as 
\begin{equation} \label{Airy kernel}
K_{\text{\emph{Airy}}}(\nu, \theta) = \int_{e^{-\frac{2}{3}\pi \mathrm{i}}\infty}^{e^{\frac{2}{3}\pi \mathrm{i}} \infty} \frac{dz}{2 \pi \mathrm{i}} \int_{e^{\frac{\pi}{3} \mathrm{i}} \infty }^{e^{-\frac{\pi}{3} \mathrm{i} }\infty } \frac{dw}{2 \pi \mathrm{i}}  \frac{1}{z-w} \exp \left\{ \frac{w^3}{3} - \frac{z^3}{3} -w \nu + z \theta  \right\},
\end{equation}
where again integration contours are non intersecting complex curves whose asymptotes are half lines $\{ R e^{ \pm  \frac{\mathrm{i}}{3}\pi} | R \geq 0 \}$ for $w$ and $\{ R e^{ \mp \mathrm{i}\frac{2}{3}\pi} | R \geq 0 \}$ for $z$.
\end{Def}

We come to the next

\begin{Def}[Baik-Rains distribution] \label{Baik-Rains definition}
Let $\varpi \in \mathbb{R}$ and define the one parameter family of functions
\begin{equation} \label{baik rains chi}
\begin{split}
\chi_\varpi (r) &=F_2(r) \Bigg(  r- \varpi^2 - \sum_{\substack{ i,j =1,2 \\ (i,j)\neq (1,1)}} \int_r^{\infty} \Upsilon_{-\varpi}^{(i)}(\nu) \Upsilon_{\varpi}^{(j)}(\nu) d \nu  \\
&\qquad \qquad \qquad  - \int_r^{\infty} d \nu \Upsilon_\varpi (\nu) \int_r^\infty d \lambda_1 \int_r^{\infty} d\lambda_2 \varrho_{\text{\emph{Airy}};r}(\nu, \lambda_1) K_{\emph{Airy}}(\lambda_1, \lambda_2) \Upsilon_{-\varpi}(\lambda_2) \Bigg)
\end{split}
\end{equation}
where terms $F_2, \varrho_{\emph{Airy};r}, \Upsilon_\varpi^{(i)}, \Upsilon_\varpi$ are given by:
\begin{itemize}

\item $F_2(r)$ is the GUE Tracy-Widom distribution \emph{(\cite{TracyWidom1994})}
 \begin{equation} \label{tracy widom}
 F_2(r)=\det\left( \mathbf{1} - \mathbbm{1}_{[r, \infty)}K_{\emph{Airy}} \right)_{\mathcal{L}^2(\mathbb{R})};
 \end{equation}
 
\item $\varrho_{\emph{Airy};r}(\nu, \lambda)$ is the kernel
\begin{equation} \label{traci widom rho}
\varrho_{\emph{Airy};r}(\nu, \lambda) = \left( \mathbf{1} - \mathbbm{1}_{[r, \infty)} K_{\emph{Airy}} \right)^{-1}(\nu, \lambda);
\end{equation}

\item auxiliary functions $\Upsilon_\varpi^{(1)}, \Upsilon_\varpi^{(2)}$ are
\begin{equation} \label{Upsilon12}
\Upsilon_\varpi^{(1)} (\nu) = e^{\frac{\varpi^3}{3} - \nu \varpi}, \qquad \qquad
\Upsilon_\varpi^{(2)}(\nu) = \int_{e^{-\frac{2}{3} \mathrm{i} \pi }\infty}^{e^{\frac{2}{3} \mathrm{i} \pi }\infty} \frac{d \omega}{2 \pi \mathrm{i}} \frac{\exp\{- \frac{\omega^3}{3} + \omega \nu\}}{\omega + \varpi},
\end{equation}
where the integration contour passes to the left of $-\varpi$;
\item lastly 
\begin{equation} \label{Upsilon}
\Upsilon_\varpi(\nu) = \Upsilon_\varpi^{(1)}(\nu)+ \Upsilon_\varpi^{(2)}(\nu).
\end{equation}

\end{itemize}

The Baik-Rains distribution $F_\varpi$ is 

\begin{equation} \label{F_0}
F_\varpi(r)= \frac{\partial}{\partial r} \chi_\varpi (r).
\end{equation}

\end{Def}

We now possess all the ingredients to give a brief explanation of the KPZ scaling theory, which gives a precise conjecture to describe stationary (asymptotic) fluctuations of the height function of models in the KPZ universality class (\cite{SpohnScaling}).

We start considering a properly rescaled version of our model, where, for convenience we interpret the vertical spatial coordinate as a time direction and where we regard space and time as continuous parameters. In this case the height function $\mathcal{H}$, defined for the Higher Spin Six Vertex Model in \eqref{height correct}, still contains every information on the random dynamics thanks to relations
\begin{gather}
\mathcal{H}(x, t) - \mathcal{H}(x+dx,t) = \text{\# of paths in $[x, x+dx]$ at time $t$}, \label{H scaling argument 1}\\
\mathcal{H}(x, t+dt) - \mathcal{H}(x,t) = \text{\# of paths crossing $x$ during the time interval $[t, t+dt]$} \label{H scaling argument 2}.
\end{gather}
For the sake of argument assume that the average of space and time infinitesimal increments of $\mathcal{H}$ are regular enough to define the deterministic density $\rho$ and current $j$ as
\begin{equation*}
\mathbb{E}\left( \mathcal{H}(x+dx,t)- \mathcal{H}(x, t) \right) \approx - \rho(x, t) dx, \qquad
\mathbb{E}\left( \mathcal{H}(x,t+dt)- \mathcal{H}(x, t) \right) \approx j(x,t) dt.
\end{equation*}
The system is autonomous, or, in other words, its evolution depends on space and time only implicitly, therefore the current $j$ must only be a function of $\rho$ and the continuity equation linking these quantities reads
\begin{equation} \label{continuity equation}
\partial_t \rho(x,t) + \partial_x j(\rho(x , t )) = 0.
\end{equation}
At this stage the height $\mathcal{H}$ remains defined, through \eqref{H scaling argument 1}, \eqref{H scaling argument 2}, only up to a global constant and to remove this ambiguity we fix its value at the reference point $x=0, t=0$ to be $\mathcal{H}(0,0)=0$. With this choice the average profile of $\mathcal{H}$ at the generic space time point $(x,t)$ can be expressed as
\begin{equation} \label{eta continuous}
\eta(x,t) = - \int_0^x \rho(y,0) dy + \int_0^t j(\rho(x,s)) ds
\end{equation}
and the study of fluctuations of the height is, by definition, the study of the random quantity 
\begin{equation} \label{fluctuations}
\mathcal{H}(x,t) - \eta(x,t).
\end{equation}

Assume now that the system has reached its steady state, or equivalently assume that at time zero the measure is stationary. Qualitatively, the randomness of \eqref{fluctuations} is affected by two different contributions. One is coming from the stochastic evolution of the system and the other is given by initial conditions. For growth processes in the KPZ universality class, when initial conditions are deterministic and sufficiently regular, fluctuations in the long time scale are expected to present with size of order $t^{1/3}$. This conjecture goes back to the seminal paper \cite{KPZ1986}, where authors argued such property to hold for the solution of the one dimensional KPZ equation itself. On the other hand, from our knowledge of the stationary measure of the Higher Spin Six Vertex Model displayed in Proposition \ref{prop translation invariant}, we certainly expect fluctuations in the space direction to have size of order $x^{1/2}$, as a result of the Central Limit Theorem applied to independent occupation numbers $\mathsf{m}_x^0$ at each site. This means that the information we have about $\mathcal{H}$, which is the choice $\mathcal{H}(0,0)=0$, will be transported by the random dynamics along the direction of growth of the surface and along this line we can observe the emergence of the $1/3$ exponent. Along all other lines, the distribution of \eqref{fluctuations} will be affected very little by the process, and asymptotic fluctuations remain of gaussian nature. An earlier evidence of this last fact was found in \cite{FerrariFontes1994ASEP} (see also \cite{BFP2010sTASEP}, Appendix D).

The direction along which nontrivial fluctuations are observed is given by the characteristic line of partial differential equation \eqref{continuity equation}. This is the curve $(x_t, t)$, where $x_t$ is set to be the solution of the differential equation 
\begin{equation}\label{characteristic line}
\begin{cases}
\dot{x}_t = j'(\rho(x_t, t)),\\
x_0=0.
\end{cases}
\end{equation}
When the system is in its stationary state, equation \eqref{characteristic line} loses its dependence on time and the $\dot{x_t}$ is only function of the stationary density profile $\rho_{\text{st}}$. In case the model does not present space inhomogeneities the characteristic curve is simply the line $(j'(\rho_{\text{st}})t, t)$, but when the stationary density is not constant, this is no more true. We use the explicit parametrization $(x, t_x)$, rather than $(x_t, t)$ and by integration of \eqref{characteristic line}, we obtain
\begin{equation}\label{charateristic line time}
t_x=\int_0^x \frac{dy}{j'(\rho_{\text{st}}(y))}.
\end{equation}
To reiterate what we just explained, consider diverging $x$ and $t$. Assume first that
$$
|t- t_x| = \mathcal{O}(x^{2/3 + \delta}),
$$
for some $\delta > 0$. Then, asymptotically \eqref{fluctuations} obeys the gaussian distributions and its size becomes of order $x^{1/2}$. When on the other hand, $(x,t)$ is taken in the vicinity of the characteristic curve, say
$$
|t- t_x| = \mathcal{O}(x^{2/3}),
$$
then fluctuations become of size $x^{1/3}$ and their law is described by the Baik-Rains distribution.

We can be more precise. Take at first $t= t_x$. The convergence result, in this case, is 
\begin{equation*}
\frac{\mathcal{H}(x, t_x) - \eta ( x, t_x)}{\gamma x^{1/3}} \xrightarrow[x \to \infty]{\mathcal{D}} F_0,
\end{equation*}
where, calling $\sigma_y^2 dy$ the variance of the number of paths lying at time $0$ in the infinitesimal segment $[y,y+dy]$ and its mean
\begin{equation*}
\overline{\sigma^2} = \lim_{x\to \infty} \frac{1}{x} \int_0^x \sigma_y^2 dy,
\end{equation*}
then the constant $\gamma$ is given by
\begin{equation} \label{gamma scaling theory}
\gamma^3 = - \lim_{x \to \infty}\frac{1}{2}j''(\rho_{\text{st}}) (\overline{\sigma^2})^2 \frac{t_x}{x}.
\end{equation}
The explicit parametrization of a fan of size $x^{2/3}$ around the characteristic line can be still expressed in terms of macroscopic quantities. Consider a perturbation of $t_x$ of the form
\begin{equation} \label{time scaling theory}
t_{x, \varpi} = t_x - \varpi \frac{\overline{\sigma}^2 j''(\rho_{\text{st}})}{\gamma j'(\rho_{\text{st}})^2} x^{2/3}.
\end{equation}
with $\varpi$ being a real number. The resulting effect on the expression of $\eta$ reads, up to order $x^{1/3}$, as
\begin{equation} \label{height average scaling theory}
\eta_{x,\varpi} = \eta(x,t_x) - \varpi \frac{\overline{\sigma}^2 j''(\rho_{\text{st}}) j(\rho_{\text{st}}) }{\gamma j'(\rho_{\text{st}})^2} x^{2/3} -\frac{1}{2} \varpi^2 \frac{ ( \overline{\sigma^2} )^2 j''(\rho_{\text{st}}) }{ \gamma^2 j'(\rho_{\text{st}}) }x^{1/3}.
\end{equation}
In this case, the convergence result for fluctuations along the line $(x, t_{x,\varpi})$ becomes 
\begin{equation} \label{Baik Rains conjecture}
\frac{\mathcal{H}(x, t_{x,\varpi}) - \eta_{x,\varpi}}{\gamma x^{1/3}}\xrightarrow[x \to \infty]{\mathcal{D}} F_\varpi.
\end{equation}

The same kind of results are conjectured for discrete time systems, where the characteristic curves can be again explicitely expressed through relation \eqref{charateristic line time}. In the next section we will establish result \eqref{Baik Rains conjecture} for the stationary Higher Spin Six Vertex Model. For this model, the scaling parameters $\kappa_\varpi,\eta_\varpi,\gamma$ were defined in \cref{eq: k0 eta0 gamma0,k perturbed,eta perturbed} and it is a simple exercise to verify that they match with expressions given in \cref{gamma scaling theory,time scaling theory,height average scaling theory}.

\subsection{The Baik-Rains limit} \label{subsection the baik rains limit}

This Section is entirely devoted to the proof of Theorem \ref{theorem: baik rains limit H}, that characterizes the asymptotic fluctuations of the height function in the stationary Higher Spin Six Vertex Model. Our strategy relies on taking the large $x$ limit, after an appropriate scaling of parameters, of expression \eqref{stationary q laplace 2}. Throughout the proof we will assume that the model presents only spatial inhomogeities and hence the spectral parameters $\mathbf{U}$ are taken as
\begin{equation} \label{U choice}
\mathbf{U}= (u, u, u , \dots).
\end{equation}
This simply implies that the transfer matrix $\mathfrak{X}^{(J)}$ stays the same at each time step. 

Before we  move  to the actual core of  the  proof of Theorem \ref{theorem: baik rains limit H}, we like to fix the notation for the main quantities we plan to use. First, we introduce the functions
\begin{equation*}
a_{-1}(z)=\log(z u; q)_J \qquad \text{and} \qquad h_{-1}(z) = \frac{1}{x} \sum_{y=2}^x \log \left( \frac{ ( z s_y /\xi_y ; q )_\infty }{ ( z / (\xi_y s_y) ; q )_\infty } \right),
\end{equation*}
so that, combined with $a_k,h_k$ defined in \eqref{eq: a_k h_k}, they satisfy the properties
\begin{equation*}
z \frac{d}{dz} a_k(z) = a_{k+1}(z) \qquad \text{and} \qquad z \frac{d}{dz} a_k(z) = a_{k+1}(z),
\end{equation*}
for all $k \geq -1$. Define also the function
\begin{equation} \label{g}
g(z) = -\eta \log(z) + \kappa a_{-1}(z) -h_{-1}(z),
\end{equation}
where here and in the rest of the Section, for the sake of a cleaner notation, we set
\begin{equation*}
\eta=\eta_{\varpi}, \qquad \text{and} \qquad \kappa=\kappa_{\varpi},
\end{equation*}
dropping the explicit dependence on the real number $\varpi$ from $\eta_\varpi, \kappa_\varpi$ introduced in \cref{k perturbed,eta perturbed}. A crucial property of the function $g$ is reported next.
\begin{Prop} \label{prop g properties}
With the choice $\varpi=0$, the function $g$, has a double critical point in $\mathpzc{d}$, that is $g'(\mathpzc{d}) = g''(\mathpzc{d}) = 0$. When $\varpi \neq 0$, there exist a point $\varsigma=\varsigma(\varpi)$ such that
\begin{equation} \label{derivatives g}
g'(\varsigma) = g''(\varsigma) = \mathcal{O}(1/x), \qquad g'''(\varsigma) = -2 \frac{\gamma^3}{\varsigma^3} + \mathcal{O}(x^{1/3})
\end{equation}
and $g''(\varsigma)<0$ for $x$ large enough. Moreover, such $\varsigma$ admits the expansion
\begin{equation} \label{varsigma}
\varsigma = \mathpzc{d} \left( 1 + \frac{\varpi}{\gamma x^{1/3}} + \frac{1}{2}  \frac{\varpi^2}{\gamma^2 x^{2/3}} \left( 1+ \frac{a_1^2 h_3 - a_1 a_2 h_2 + 2 a_2^2 h_1 - a_1 a_3 h_1}{a_1(a_2 h_1 - a_1 h_2)} \right)\right) + \smallO (1/x^{2/3}),
\end{equation}
where $a_k,h_k$ are as in \eqref{eq: a_k h_k}.
\end{Prop}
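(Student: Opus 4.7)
The plan is to work out the derivatives of $g$ explicitly using the recursion $z\frac{d}{dz} a_{k-1}(z) = a_k(z)$ (and similarly for $h_k$), which follows directly from the definitions of $a_k$ and $h_k$ in terms of $\polygamma_k$. A straightforward induction gives the identities
\begin{gather*}
 zg'(z) = -\eta + \kappa a_0(z) - h_0(z), \\
 zg'(z) + z^2 g''(z) = \kappa a_1(z) - h_1(z), \\
 zg'(z) + 3 z^2 g''(z) + z^3 g'''(z) = \kappa a_2(z) - h_2(z),
\end{gather*}
so that each condition on a derivative of $g$ translates into an equation in the functions $a_k$ and $h_k$ alone.

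For the first claim ($\varpi = 0$, so $\eta = \eta_0$, $\kappa = \kappa_0$), I would simply evaluate the identities above at $z = \mathpzc{d}$ and recognize that $\eta_0 = \kappa_0 a_0(\mathpzc{d}) - h_0(\mathpzc{d})$ forces $g'(\mathpzc{d}) = 0$, while $\kappa_0 = h_1(\mathpzc{d})/a_1(\mathpzc{d})$ forces $g''(\mathpzc{d}) = 0$. The computation $\mathpzc{d}^3 g'''(\mathpzc{d}) = \kappa_0 a_2(\mathpzc{d}) - h_2(\mathpzc{d}) = -2\gamma^3$ then follows from the definition of $\gamma$, which also gives the key simplifications $h_2 a_1 - h_1 a_2 = 2 a_1 \gamma^3$ that I will use repeatedly in the perturbative analysis below.

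For the second claim I would parametrize $\varsigma = \mathpzc{d}(1+\delta)$ with $\delta = \delta_1 x^{-1/3} + \delta_2 x^{-2/3} + \cdots$ and Taylor expand $a_k(\varsigma), h_k(\varsigma)$ in the logarithmic coordinate $\tilde\delta = \log(1+\delta)$ around $\mathpzc{d}$. Substituting $\eta_\varpi, \kappa_\varpi$ and exploiting the cancellations $\kappa_0 a_1 - h_1 = 0$, $\alpha_0 = a_0 \alpha_\kappa$, $\alpha_\kappa a_1 = 2\gamma^2$, $\beta_0 = 2\gamma$ and $\kappa_0 a_2 - h_2 = -2\gamma^3$, the leading orders of both $\varsigma g'(\varsigma)$ and $\varsigma^2 g''(\varsigma) = \kappa a_1(\varsigma) - h_1(\varsigma) - \varsigma g'(\varsigma)$ collapse into conditions that determine $\delta_1$ and $\delta_2$ uniquely. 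Solving the order-$x^{-1/3}$ equation of $\varsigma^2 g''(\varsigma)$ yields $\delta_1 = \varpi/\gamma$, and the order-$x^{-2/3}$ equation, using the expressions for $\kappa_0 a_3 - h_3$ and $h_3$, gives precisely the coefficient of $\varpi^2/(2\gamma^2 x^{2/3})$ displayed in \eqref{varsigma}. The smallness claim $g'(\varsigma), g''(\varsigma) = \mathcal{O}(1/x)$ then amounts to the assertion that no further contributions survive after these cancellations, which I would prove by simply bounding the remainder terms in the Taylor expansion. The formula for $g'''(\varsigma)$ is immediate from the third identity above: $\varsigma^3 g'''(\varsigma) = \kappa a_2(\varsigma) - h_2(\varsigma) + \mathcal{O}(x^{-2/3}) = -2\gamma^3 + \mathcal{O}(x^{-1/3})$, and the negativity $g''(\varsigma) < 0$ for large $x$ is checked by examining the sign of the leading surviving term in the expansion of $\varsigma^2 g''(\varsigma)$.

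The main obstacle is purely one of bookkeeping: one must track several levels of Taylor cancellation simultaneously, keeping the correct log-derivative Jacobians and propagating the identities derived from the definitions of $\kappa_0, \eta_0, \gamma, \kappa_\varpi, \eta_\varpi$ consistently. Once the special algebraic identities such as $h_2 a_1 - h_1 a_2 = 2 a_1 \gamma^3$ are in hand, each order of the expansion is reduced to elementary linear algebra in $\delta_1, \delta_2$, and the resulting $\varsigma$ automatically matches the claimed expansion \eqref{varsigma}.
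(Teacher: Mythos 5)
Your Taylor‐expansion machinery is the right way to make the paper's phrase ``by direct inspection'' precise: the three log-derivative identities $zg'=-\eta+\kappa a_0-h_0$, etc., and the simplifications such as $h_2 a_1 - h_1 a_2 = 2a_1\gamma^3$, are exactly the cancellations one needs, and expanding $\varsigma = \mathpzc{d}(1+\delta)$ in the log coordinate $\tau = \log(1+\delta)$ is the cleanest way to propagate the $q$-polygamma recursion $z\frac{d}{dz}a_k = a_{k+1}$. So the first part of your plan does cover what the paper's proof waves away. However, there is a genuine gap: you never address the positivity of $\gamma$, and this is precisely the only thing the paper's own proof actually establishes in detail. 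Since $\gamma$ is defined through a cube root of $-\frac{1}{2}(\kappa_0 a_2-h_2)$, one must know the sign of $\kappa_0 a_2(\mathpzc{d})-h_2(\mathpzc{d})$ before $\gamma$ is even a well-defined positive real number, before the scale $\gamma x^{1/3}$ means anything, and before the conclusion $\varsigma^3 g'''(\varsigma)\to -2\gamma^3 < 0$ (which is the fact actually cited later in the steep-descent Lemma) can be read off. The paper handles this by writing $\gamma^3 = \frac{1}{2}(h_2 - \kappa_0 a_2)$ explicitly, substituting the $q$-polygamma series expansion \eqref{digamma psi} and the closed forms of $a_1, a_2$, and exhibiting the result as a sum that is manifestly positive term-by-term because $u<0$ — a non-trivial algebraic manipulation that your sketch omits entirely.

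A secondary issue: your last step, ``the negativity $g''(\varsigma)<0$ is checked by examining the sign of the leading surviving term,'' is not obviously executable. Once $\tau_2$ is chosen so that the $\mathcal{O}(x^{-2/3})$ contribution to $\varsigma^2 g''(\varsigma)$ cancels, the residual is $\mathcal{O}(1/x)$, and its coefficient involves yet another order of Taylor data (and the unspecified $o(x^{-2/3})$ tail of $\varsigma$), so its sign is not determined by the identities you listed. In fact the statement as printed very likely contains typos: the error $\mathcal{O}(x^{1/3})$ should read $\mathcal{O}(x^{-1/3})$, and the inequality that is actually used downstream is $g'''(\varsigma)<0$, which is again where $\gamma>0$ enters. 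So the honest summary is that you have the bookkeeping part right but are missing the one lemma the paper actually proves, and your claimed sign check does not follow from your expansion alone.
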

\begin{proof}
Equalities reported in \eqref{derivatives g} can be verified by direct inspection making use of the approximate form of $\varsigma$ \eqref{varsigma}. Therefore the only thing we are left to prove is that $\gamma$ is a positive quantity. From expression \eqref{eq: k0 eta0 gamma0} we write
\begin{equation} \label{gamma expansion}
\gamma^3 = \frac{1}{2} \left( h_2(\mathpzc{d}) - \frac{h_1(\mathpzc{d})}{a_1(\mathpzc{d})}a_2(\mathpzc{d}) \right).
\end{equation}
Functions $a_1, a_2$ have the explicit expressions
\begin{equation*}
a_1(\mathpzc{d}) = \sum_{j=0}^{J-1} \frac{-\mathpzc{d} u q^j }{(1 - \mathpzc{d} u q^j )^2}, \qquad a_2(\mathpzc{d}) = \sum_{j=0}^{J-1} \frac{-\mathpzc{d} u q^j (1+\mathpzc{d} u q^j) }{(1 - \mathpzc{d} u q^j )^3},
\end{equation*}
that can be recovered using the form \eqref{real polygamma} to compute $\polygamma_1, \polygamma_2$. On the other hand, expressing the $\polygamma_k$'s in $h_1,h_2$ using \eqref{digamma psi} we can write, after some algebraic manipulations, the right hand side of \eqref{gamma expansion} as
\begin{equation*}
\frac{1}{2} \frac{1}{x} \sum_{y=2}^x \sum_{k \geq 1} \left( \frac{\mathpzc{d}}{\xi_y s_y} \right)^k \frac{k(1-s_y^{2k})}{1-q^k}\sum_{j=0}^{J-1} \frac{-u \mathpzc{d} q^j }{(1 - u \mathpzc{d} q^j)^2} \left( k - \frac{1 + u \mathpzc{d} q^j}{ 1 - u \mathpzc{d} q^j } \right),
\end{equation*}
that is a sum of positive terms since $u < 0$.
\end{proof}

The function $g$ becomes useful when it comes to rewrite the kernel $f(n) A(n,m)$ and quantities $\Phi_x^{(i)}(n), \Psi_x^{(j)}(n),\Phi_x(n),\Psi_x(n)$, appearing in the statement of Corollary \ref{Corollary stationary q laplace}, after a suitable scaling of parameters. We use the following expressions for the integers $n,m$, in terms of new variables $\nu,\theta$:
\begin{equation} \label{scaling n_m}
n=n_{\nu}= -\eta x + \nu \gamma x^{1/3}, \qquad m=m_\theta= -\eta x + \theta\gamma x^{1/3}.
\end{equation}
Here $\nu, \theta$ belong to the set of rescaled integers
\begin{equation*}
\dot{\mathbb{Z}} = \{ \nu\in \mathbb{R} |-\nu x+\nu\gamma x^{1/3} \in \mathbb{Z} \}
\end{equation*}
and we use the symbol $\sumDot$ to denote a summation where the index ranges over $\dot{\mathbb{Z}}$ rather than $\mathbb{Z}$.

We also set the expression for the time variable $t$ along the critical line as
\begin{equation} \label{scaling t}
t=\kappa x
\end{equation}
and for any fixed real number $r$, the variable $\zeta$ appearing in the $q$-Laplace transform \eqref{stationary q laplace 2} is taken of the form
\begin{equation}\label{scaling zeta}
\zeta = -q^{-\eta x+ \gamma x^{1/3} r }.
\end{equation}

In the following proposition we summarize  the effect of change of variables \eqref{scaling n_m} and of choices 
\eqref{scaling t}, \eqref{scaling zeta} on the statement of Corollary \ref{Corollary stationary q laplace}.
\begin{Prop} \label{proposition scaling}
Assume \cref{scaling n_m,scaling t,scaling zeta}, fix a real number $L$ such that
\begin{equation} \label{L r relation}
-L< r.
\end{equation}
and define the sequence
\begin{equation} \label{b coefficient}
\mathfrak{b}(\nu)=\begin{cases}
\varsigma^{\nu \gamma x^{1/3}}, \qquad & \text{if } \nu > - L,\\
\tau( n_\nu ), \qquad &\text{if } \nu \leq - L.
\end{cases}
\end{equation}
Define also
\begin{equation} \label{f tilde}
\tilde{f}(\nu) = f(n_\nu) = \frac{1}{1+q^{(\nu - r)\gamma x^{1/3}}},
\end{equation}
\begin{equation} \label{rescaled discrete Airy}
\tilde{A}(\nu, \theta) = \frac{\mathfrak{b}(\nu)}{ \mathfrak{b}(\theta) }\frac{\tau(m_\theta)}{ \tau(n_\nu) } A(n_\nu,m_\theta) =  \frac{\mathfrak{b}(\nu)}{ \mathfrak{b}(\theta) }  \int_D \frac{dw}{2 \pi \mathrm{i} w} \int_C \frac{dz }{2 \pi \mathrm{i}}\frac{z^{ \theta \gamma x^{ 1/3 }}}{ w^{ \nu \gamma x^{ 1/3 }}} \frac{e^{ x g(z) } }{ e^{  x g(w)} } \frac{(q \mathpzc{d}/w ; q)_\infty } {(q \mathpzc{d}/z ; q)_\infty} \frac{1}{z-w},
\end{equation}
\begin{equation*}
    \tilde{\Phi}_x^{(1)}(\nu) = d^{-\nu \gamma x^{1/3} -1} e^{ -x g(\mathpzc{d})}, \qquad \tilde{\Psi}_x^{(1)}(\nu)= d^{\nu \gamma x^{1/3}} e^{ x g(\mathpzc{d}) }, 
\end{equation*}
\begin{equation*}
\tilde{\Phi}_x^{(2)}(\nu) =   \int_{D_1} \frac{dw}{2 \pi \mathrm{i} w}  w^{ - \nu \gamma x^{ 1/3 }} e^{ -x g(w) }  \frac{(q \mathpzc{d}/w ; q)_\infty } {(q w/ \mathpzc{d} ; q)_\infty} \frac{1}{w-\mathpzc{d}}, \qquad \tilde{\Phi}_x(\nu)=\tilde{\Phi}_x^{(1)}(\nu) + \tilde{\Phi}_x^{(2)}(\nu),
\end{equation*}
\begin{equation*}
\tilde{\Psi}_x^{(2)}(\nu) =   \int_{C_1} \frac{dz}{2 \pi \mathrm{i} }  z^{ \nu \gamma x^{ 1/3 }} e^{ x g(z) }  \frac{(q z/ \mathpzc{d} ; q)_\infty } {(q  \mathpzc{d}/z ; q)_\infty} \frac{1}{z-\mathpzc{d}}, \qquad \tilde{\Psi}_x(\nu)=\tilde{\Psi}_x^{(1)}(\nu) + \tilde{\Psi}_x^{(2)}(\nu).
\end{equation*}
Then formula \eqref{stationary q laplace 2} for the $q$-Laplace transform still holds if we substitute, in the expression of $V_x$ \eqref{V}, $f, A, \Phi_x^{(i)}, \Psi_x^{(j)}, \Phi_x, \Psi_x$ with $\tilde{f}, \tilde{A}, \tilde{\Phi}_x^{(i)}, \tilde{\Psi}_x^{(j)}, \tilde{\Phi}_x, \tilde{\Psi}_x$ and we change the summation signs $\sum$ with $\sumDot$.
\end{Prop}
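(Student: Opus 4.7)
The statement is essentially a change of variables in the formulas of Corollary~\ref{Corollary stationary q laplace}, combined with a diagonal similarity transformation of the kernel that leaves the Fredholm determinant invariant. I would organise the argument in three steps, specialising throughout to the stationary case $v = \mathpzc{d}$.

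First, I would perform the direct substitution $n = n_\nu$, $m = m_\theta$, $t = \kappa x$, $\zeta = -q^{-\eta x + \gamma x^{1/3} r}$ in the integral formulas \eqref{double integral} and \eqref{Phi 1}--\eqref{Psi 2}. The function $g$ of \eqref{g} is designed precisely so that, after substitution, the spectral and inhomogeneity $q$-Pochhammer factors in $F(z)$ combine with the powers $z^m$ and $w^{-(n+1)}$ to produce $z^{\theta\gamma x^{1/3}}\,e^{xg(z)}$ and $w^{-\nu\gamma x^{1/3}}\,e^{-xg(w)}$. The whole calculation is unified by the elementary identity
\begin{equation*}
\prod_{k=2}^x(z - \xi_k s_k)\,F(z)\Big|_{v = \mathpzc{d},\,t = \kappa x} \;=\; (-1)^{x-1}\prod_{k=2}^x(\xi_k s_k)\,z^{\eta x}\,(qz/\mathpzc{d};q)_\infty\,e^{xg(z)},
\end{equation*}
which follows from the factorisation $(z/(\xi_k s_k);q)_\infty = (1 - z/(\xi_k s_k))(qz/(\xi_k s_k);q)_\infty$ and the definitions of $a_{-1}, h_{-1}$. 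From this identity one reads off that $\Phi_x^{(i)}(n_\nu) = \bigl((-1)^{x-1}\tau(n_\nu)/\prod_k \xi_k s_k\bigr)\,\tilde\Phi_x^{(i)}(\nu)$ for $i = 1, 2$, and correspondingly $\Psi_x^{(j)}(n_\nu) = \bigl((-1)^{x-1}\prod_k \xi_k s_k/\tau(n_\nu)\bigr)\,\tilde\Psi_x^{(j)}(\nu)$ for $j = 1, 2$. The reciprocal scalars cancel in every product, so that $\tilde\Phi_x^{(i)}(\nu)\tilde\Psi_x^{(j)}(\nu) = \Phi_x^{(i)}(n_\nu)\Psi_x^{(j)}(n_\nu)$, and hence $\sumDot_\nu\tilde f(\nu)\tilde\Phi^{(i)}(\nu)\tilde\Psi^{(j)}(\nu) = \sum_n f(n)\Phi^{(i)}(n)\Psi^{(j)}(n)$ for each pair $(i,j)\neq(1,1)$.

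Second, the kernel $\tilde A$ carries the additional diagonal gauge factor $\mathfrak{b}(\nu)/\mathfrak{b}(\theta)$ beyond the naive substitution. Setting $B(\nu) = \mathfrak{b}(\nu)/\tau(n_\nu)$, this amounts to $\tilde A = B A B^{-1}$ under the bijection $\nu \leftrightarrow n_\nu$. Since $\tilde f$ is diagonal and commutes with $B$, the Fredholm determinant is invariant, yielding $\det(\mathbf{1} - \tilde f\tilde A)_{l^2(\dot{\mathbb{Z}})} = \det(\mathbf{1} - fA)_{l^2(\mathbb{Z})}$, while $\tilde\varrho = B\varrho B^{-1}$ and $\tilde f\tilde A\tilde\varrho\tilde f = B(fA\varrho f)B^{-1}$.

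Third, for the complex bilinear term $\sumDot_\nu(\tilde f\tilde A\tilde\varrho\tilde f\tilde\Phi_x)(\nu)\tilde\Psi_x(\nu)$, I would expand $\tilde\varrho$ as a Neumann series in $\tilde f\tilde A$ and apply the similarity of Step~2 at each power; the gauge factors $B$ at one end and $B^{-1}$ at the other telescope and pair with the rescaling scalars of Step~1 between $\tilde\Phi_x$ and $\tilde\Psi_x$, collapsing the sum to $\sum_n(fA\varrho f\Phi_x)(n)\Psi_x(n)$. Substituting these identifications into \eqref{V} and \eqref{stationary q laplace 2} yields the claim. The main obstacle of the argument is the bookkeeping in this last step: the piecewise definition of $\mathfrak{b}(\nu)$, with a cutoff at $\nu = -L$, must be shown not to introduce any boundary correction at the junction, which is where condition \eqref{L r relation} — placing $r$ above the cutoff — enters to guarantee that $\tilde f$ is exponentially small throughout the region $\nu \le -L$ so that the gauge choice is inconsequential.
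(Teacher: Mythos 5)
Steps 1 and 2 are correct, and they flesh out usefully what the paper only gestures at: the identity $\prod_{k=2}^x(z-\xi_k s_k)F(z) = (-1)^{x-1}\prod_k \xi_k s_k\, z^{\eta x}(qz/\mathpzc{d};q)_\infty e^{xg(z)}$ at $v=\mathpzc{d}$, $t=\kappa x$ checks out and gives $\Phi_x^{(i)}(n_\nu)=\tfrac{\tau(n_\nu)}{K}\tilde\Phi_x^{(i)}(\nu)$, $\Psi_x^{(j)}(n_\nu)=\tfrac{K}{\tau(n_\nu)}\tilde\Psi_x^{(j)}(\nu)$ with $K=(-1)^{x-1}\prod_k\xi_k s_k$, so the pointwise products $\tilde f\tilde\Phi_x^{(i)}\tilde\Psi_x^{(j)}$ are exactly the originals; and $\tilde A = BAB^{-1}$ with $B(\nu)=\mathfrak b(\nu)/\tau(n_\nu)$ leaves $\det(\mathbf 1-fA)$ unchanged.

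Step 3, however, contains a genuine gap: the claimed telescoping does not happen. Write out the pairing carefully. Since $\tilde f\tilde A\tilde\varrho\tilde f = B(fA\varrho f)B^{-1}$, the $B^{-1}(\theta)$ that lands on $\tilde\Phi_x(\theta)$ produces $B^{-1}(\theta)\cdot\tfrac{K}{\tau(m_\theta)}\Phi_x(m_\theta)=\tfrac{K}{\mathfrak b(\theta)}\Phi_x(m_\theta)$, and the $B(\nu)$ that lands on the $\tilde\Psi_x(\nu)$ side produces $B(\nu)\cdot\tfrac{\tau(n_\nu)}{K}\Psi_x(n_\nu)=\tfrac{\mathfrak b(\nu)}{K}\Psi_x(n_\nu)$. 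The $K$'s cancel, but one is left with the uncancelled diagonal factor $\mathfrak b(\nu)/\mathfrak b(\theta)$ sitting between the two indices, so the sum does \emph{not} collapse to $\sum_n(fA\varrho f\Phi_x)(n)\Psi_x(n)$. Crucially, $\mathfrak b$ is not constant on the region that matters: for $\nu>-L$ it equals $\varsigma^{\nu\gamma x^{1/3}}$, so $\mathfrak b(\nu)/\mathfrak b(\theta)=\varsigma^{(\nu-\theta)\gamma x^{1/3}}$, an exponential in $x^{1/3}$ which cannot be ignored. Your diagnosis at the end of the proof — that the only danger is a "boundary correction at the junction $\nu=-L$," tamed by $-L<r$ — is therefore pointing at the wrong place; the obstruction lives on all of $\nu>-L$, precisely where $\tilde f$ is not small. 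The cancellation you assert would require $\tilde\Phi_x$ and $\tilde\Psi_x$ to carry the additional gauge factors $\mathfrak b(\theta)/K$ and $K/\mathfrak b(\nu)$ respectively (i.e.\ $\tilde\Phi_x=B\Phi_x$, $\tilde\Psi_x=B^{-1}\Psi_x$ up to a common constant), which the definitions in the statement do not supply. The paper's own one-line justification ("similar considerations are true also for the remaining functions") is equally silent on this, so you are following its lead; but the explicit claim that the gauge factors telescope needs to be either corrected by inserting the missing $\mathfrak b$ factors into the definitions of $\tilde\Phi_x,\tilde\Psi_x$, or abandoned in favor of an argument that tracks the residual $\mathfrak b(\nu)/\mathfrak b(\theta)$ through the saddle-point analysis.
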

\begin{proof}
We can easily see that the tilde notation corresponds to applying to functions in \eqref{V} the change of variables \eqref{scaling n_m}. The Fredholm determinant $\det(\mathbf{1} - fA)_{l^2(\mathbb{Z})}$ is clearly not affected by the multiplication of $A$ with the factor $\frac{\mathfrak{b}(\nu)}{ \mathfrak{b}(\theta) }\frac{\tau(m_\theta)}{ \tau(n_\nu) }$, nor by the change of variables and it is therefore equal to $\det(\mathbf{1} - \tilde{f}\tilde{A})_{l^2(\dot{\mathbb{Z}})}$. Similar considerations are true also for the remaining functions in \eqref{V}.
\end{proof}
Expressions reported in the statement of Proposition \ref{proposition scaling} are indeed amenable to a rigorous asymptotic analysis when $x$ tends to infinity. Such limits are computed through a steep descent method using the fact, reported in Proposition \ref{prop g properties}, that $g$ has a double critical point in the vicinity of $\mathpzc{d}$. For the sake of a rigorous  procedure, in the next Definition we make some hypothesis on parameters $q,\Xi, \mathbf{S}$, that will hold true throughout the rest of the Section.

\begin{Def}[Conditions on parameters] \label{conditions on parameters}
Take $a,\sigma$ such that $a>\mathpzc{d}$ and $\sigma \in [0,1)$. Parameters $q,\mathpzc{d},\Xi, \mathbf{S}$ are assumed to satisfy \eqref{HS6VM parameters},\eqref{eq: placements Xi S} and they are spaced so that there exist $ R_a,R_\sigma,R_q$, with the properties that
\begin{equation}\label{bounds parameters asymptotics}
a \leq \xi_k s_k \leq a+R_a, \qquad \sigma \leq s_k^2 \leq \sigma + R_\sigma, \qquad \text{for all }k, \qquad 0 \leq q \leq R_q
\end{equation}
and
\begin{equation}\label{bound R_a}
a+R_a < \frac{2 a}{1+\sigma} < \mathpzc{d}/q.
\end{equation}
Numbers $R_a,R_\sigma,R_q$ are strictly positive, yet small in the sense given by Proposition \ref{prop contour D}.
\end{Def} 
The reason for such a restricted choice of parameters lies in the perturbative approach we used to prove Proposition \ref{prop contour D}. There, we showed the steep ascent property for integration contour $D$, in the case where $q=0, \xi_k s_k =a, s_k^2=\sigma$ for each $k \geq 2$. Subsequently, through a continuity argument we concluded that the same property must hold also when parameters are taken in suitably small neighborhoods of our original choices, hence \eqref{bounds parameters asymptotics}. Obtaining a similar kind of result when $q$ is taken far from 0, means constructing an explicit closed contour $D$ on which one would be able to show that the function $g$ assumes a global minimum at $\varsigma$. This is indeed possible in principle, but obtaining explicit bounds for parameters $q,\xi_k,s_k$ becomes prohibitive.

\begin{Remark}
Conditions stated in Definition \ref{conditions on parameters} are far from being optimal and they are essentially consequences of our choice for the representation of the integral kernels $K,\Phi_x,\Psi_x$ in \eqref{kernel},\eqref{Phi}, \eqref{Psi}. In particular the assumption $2a/(1+\sigma) < q^{-1} \mathpzc{d}$, reported in \eqref{bound R_a}, is used to ensure the exponential decay of rear tails of $f \Phi_x \Psi_x$ in Propositions \ref{prop Upsilon}, \ref{prop Upsilon 2}.
\end{Remark}
We will now start the computation of limiting expressions of quantities presented in Proposition \ref{proposition scaling} for $x \to \infty$. 
\begin{Prop} \label{prop discrete Airy}
We have
\begin{equation} \label{Tracy Widom limit}
\det \left( \mathbf{1} - \tilde{f}\tilde{A} \right)_{l^2(\dot{\mathbb{Z}})} = F_2(r) + \frac{1}{\gamma x^{ 1/3 }} R_x^{(1)}(r),
\end{equation}
where $F_2$ is defined in \eqref{tracy widom} and the error term $R_x^{(1)}$ satisfies the following properties
\begin{enumerate}
\item For each $r^* \in \mathbb{R}$, there exists $M_{r^*}>0$ such that, for all $x$,
\begin{equation} \label{remander bounded}
\left | R^{(1)}_x (r^*) \right | < M_{r^*};
\end{equation}
\item There exist $\epsilon>0$, such that, for all $r^* \in [r - \epsilon, r]$, we have
\begin{equation} \label{remander continuous}
\lim_{x \to \infty} \left( R^{(1)}_x (r^*) - R^{(1)}_x (r^* -  \frac{1}{\gamma x^{1/3}} ) \right) = 0,
\end{equation}
uniformly.
\end{enumerate}
\end{Prop}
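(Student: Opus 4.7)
The plan is to combine a steep descent analysis of the double contour integral representation of $\tilde A$ with a Riemann sum approximation for the discrete Fredholm determinant. First I would observe that under the scaling \eqref{scaling n_m}, \eqref{scaling t}, \eqref{scaling zeta}, the integrand of $\tilde A(\nu,\theta)$ takes the form $z^{\theta\gamma x^{1/3}} w^{-\nu\gamma x^{1/3}} e^{x g(z)-x g(w)}\frac{(q\mathpzc{d}/w;q)_\infty}{(q\mathpzc{d}/z;q)_\infty}\frac{1}{z-w}$, so the exponential factor $e^{x g(z)-x g(w)}$ is governed by the function $g$ whose critical structure at $\varsigma$ was described in Proposition \ref{prop g properties}. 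I would deform $D$ and $C$ (using the steep descent properties from Appendix \ref{appendix contours}, which is precisely where the assumptions of Definition \ref{conditions on parameters} are needed) so that $D$ passes through $\varsigma$ along the direction of steepest descent for $-\mathrm{Re}\,g$ and $C$ passes through $\varsigma$ along the direction of steepest descent for $\mathrm{Re}\,g$.

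Next, I would perform the local change of variables $w=\varsigma(1+W/(\gamma x^{1/3}))$ and $z=\varsigma(1+Z/(\gamma x^{1/3}))$ in a neighborhood of $\varsigma$. Using $g'(\varsigma)=g''(\varsigma)=\mathcal{O}(1/x)$ and $g'''(\varsigma)=-2\gamma^3/\varsigma^3+\mathcal{O}(x^{-1/3})$, a Taylor expansion yields $x(g(z)-g(w))=\tfrac{Z^3}{3}-\tfrac{W^3}{3}+\mathcal{O}(x^{-1/3})$, while the power-like factors give $e^{\theta Z-\nu W}$. The $q$-Pochhammer prefactor tends to $1$. The local contributions from the two contours then match the asymptotes $Re^{\pm 2\pi\mathrm i/3}$ and $Re^{\pm \pi\mathrm i/3}$ of the Airy kernel integrand. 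Away from $\varsigma$ the steep descent bounds give exponentially small contributions. Thus $\tilde A(\nu,\theta)\to K_{\mathrm{Airy}}(\nu,\theta)$ pointwise, and the tail bounds in Appendix \ref{appendix  bounds} yield a dominating function decaying exponentially in $|\nu|+|\theta|$ uniformly in $x$.

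Combining this with the pointwise convergence $\tilde f(\nu)\to \mathbf{1}_{\nu>r}$ (and the corresponding uniform tail bound in $\nu$), I would apply dominated convergence for Fredholm determinants (i.e.\ Hadamard's bound plus uniform trace-class tail control) to get, after identifying the discrete sum $\tfrac1{\gamma x^{1/3}}\sumDot$ with a Riemann sum converging to the Lebesgue integral,
\begin{equation*}
\det(\mathbf 1-\tilde f\tilde A)_{l^2(\dot{\mathbb Z})}\longrightarrow\det(\mathbf 1-\mathbbm{1}_{[r,\infty)}K_{\mathrm{Airy}})_{L^2(\mathbb R)}=F_2(r).
\end{equation*}
This establishes the leading order asymptotics of \eqref{Tracy Widom limit} and furnishes a remainder $R_x^{(1)}(r)$ of order $\gamma x^{1/3}$ times the convergence rate in the Riemann sum / Taylor expansion, which standard analysis shows is $\mathcal O(x^{-1/3})$; hence $R_x^{(1)}(r)=\mathcal O(1)$ as claimed.

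The main obstacle is proving the two quantitative properties \eqref{remander bounded} and \eqref{remander continuous}, which are sharper than the bare convergence to $F_2(r)$ and are necessary for later using this proposition to take the derivative in $r$ required by the Baik-Rains limit. For \eqref{remander bounded} I would track the first subleading term in the steep descent expansion of $\tilde A(\nu,\theta)$, verifying a uniform $\mathcal O(x^{-1/3})$ bound on the integrand through quantitative steep descent estimates (quadratic decay in the local $W,Z$ variables, exponential decay of the global tails), then propagating these bounds through the Fredholm determinant expansion $\log\det(\mathbf 1-\tilde f\tilde A)=-\sum_{k\ge 1}\tfrac{1}{k}\mathrm{Tr}((\tilde f\tilde A)^k)$. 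For \eqref{remander continuous} I would exploit that shifting $r$ to $r-1/(\gamma x^{1/3})$ is equivalent to a discrete index shift in the summation $\sumDot$, reducing the statement to the Lipschitz continuity of the limiting $F_2$ together with the uniform control of $R_x^{(1)}$ over the small interval $[r-\epsilon,r]$. These quantitative pieces are the technically delicate part of the argument, whereas the leading order identification with $F_2(r)$ is a relatively standard application of steep descent.
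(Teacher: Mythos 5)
Your leading-order analysis --- steep descent deformation through the double critical point, local rescaling $w,z \mapsto \varsigma(1\pm\cdot/\gamma x^{1/3})$, cubic Taylor expansion matching the Airy kernel integrand, pointwise convergence of $\tilde f$ to the indicator, uniform exponential tail bounds and dominated convergence for the Fredholm determinant --- is correct and agrees with the paper's strategy through Lemmas \ref{convergence on moderately large sets}, \ref{decay tails}, \ref{bounded discrete Kernel}, \ref{lemma determinant rear tail}, \ref{lemma determinant front tail is small}. However, your treatment of the two quantitative properties has genuine gaps.

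First, you have overlooked a point that the paper singles out in Remark \ref{remark choice sets}: when truncating the domain of summation, you must let the upper cutoff $L'$ grow with $x$ (the paper takes $L'=x^{\delta/3}$ with $\delta\in(0,1/3)$). If you truncate to a fixed compact interval $[-L,L']$, the truncation error is $\mathcal O(e^{-L'})$, which does not decay in $x$, and so cannot be absorbed into a remainder of size $\mathcal O(x^{-1/3})$. Your proposal passes over this detail. Moreover, to make the $\mathcal O(x^{-1/3})$ claim for $R_x^{(1)}$ precise, the paper expands $\tilde A$ to \emph{second} order, producing an explicit subleading kernel $Q(\nu,\theta)$ in \eqref{kernel Q} with its own exponential decay estimate \eqref{exponential bound kernel}; the determinant is then decomposed via the identity \eqref{determinant sum matrices identity} into pieces $B^{(1)},\dots,B^{(5)}$ whose contributions are each bounded through Hadamard's inequality. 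This gives an explicit formula for $R_x^{(1)}$ as a sum over partitions, which is what makes properties (1) and (2) checkable. Your proposed use of the Plemelj expansion $\log\det=-\sum_k \mathrm{Tr}((\tilde f\tilde A)^k)/k$ is a plausible alternative bookkeeping device, but you would still need the second-order kernel $Q$ and the growing cutoff to extract a uniform $\mathcal O(1)$ bound.

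Second, and more seriously, your argument for property \eqref{remander continuous} does not work as stated. You propose to reduce it to ``the Lipschitz continuity of the limiting $F_2$ together with the uniform control of $R_x^{(1)}$''. But \eqref{remander continuous} is a statement about the \emph{remainder} $R_x^{(1)}$, not about $F_2$: writing out the difference, one sees
\begin{equation*}
R_x^{(1)}(r^*)-R_x^{(1)}\!\left(r^*-\tfrac{1}{\gamma x^{1/3}}\right)
=\gamma x^{1/3}\Bigl[\det_{r^*}-\det_{r^*-1/(\gamma x^{1/3})}\Bigr]
-\gamma x^{1/3}\Bigl[F_2(r^*)-F_2\!\left(r^*-\tfrac{1}{\gamma x^{1/3}}\right)\Bigr],
\end{equation*}
so the content is precisely that the discrete $r$-derivative of $\det(\mathbf 1-\tilde f\tilde A)$ converges to the derivative of the limit, uniformly on a short interval. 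Lipschitz continuity of $F_2$ gives only boundedness of the second bracket; it says nothing about the first. The paper proves \eqref{remander continuous} by explicitly introducing the difference matrix $C^{(2)}=B^{(I_1,I_2,I_3,\emptyset,\emptyset)}(r^*+\frac{1}{\gamma x^{1/3}})-B^{(I_1,I_2,I_3,\emptyset,\emptyset)}(r^*)$, expanding $\det(C^{(1)})-\det(C^{(1)}-C^{(2)})$ over column partitions, and using the Hadamard bound together with the fact that columns of $C^{(2)}$ are small. You need an argument of this explicit, structural kind; a continuity appeal to $F_2$ alone will not produce the required uniform vanishing of the remainder's discrete derivative.
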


The proof of this crucial fact is long and we present it in several steps.

\begin{Lemma}[Convergence on moderately large sets] \label{convergence on moderately large sets}
Let $\delta$ be a number in the interval $(0, 1/3)$. Then for $(\nu, \theta) \in [-L, x^{\delta/3}]^2$ we have \footnote{for motivation on the choice of sets $[-L, x^{\delta / 3}]$ see Remark
\ref{remark choice sets}.},
\begin{equation} \label{formula convergence on bounded sets}
\tilde{A}(\nu, \theta) = \frac{1}{\gamma x^{1/3}} K_{\emph{Airy}}(\nu, \theta) + \frac{1}{\gamma^2 x^{2/3}} Q(\nu, \theta) + \mathcal{O} \left( x^{2\delta/3-1}  \right)
\end{equation}
and the error term satisfies 
\begin{equation} \label{bound remainder}
x^{2/3}\mathcal{O}(x^{2 \delta /3 -1}) \xrightarrow[x \to \infty]{} 0,
\end{equation}
uniformly in the sequence of sets $(\nu, \theta) \in [-L ,  x^{\delta/3}]^2$. Moreover the exponential estimates, 
\begin{equation} \label{exponential bound kernel}
|K_{\emph{Airy}}(\nu, \theta)|, |Q(\nu, \theta)| < c_1 e^{-c_2(\nu + \theta)},
\end{equation}
hold for all $(\nu, \theta) \in [-L , x^{\delta/3}]^2$, for an opportune choice of positive constants $c_1, c_2$ which do not depend on $x$.
\end{Lemma}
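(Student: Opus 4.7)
I would perform a steep descent analysis of the double integral \eqref{rescaled discrete Airy} around the (nearly) double critical point of $g$ at $\varsigma$. Using $\mathfrak{b}(\nu)/\mathfrak{b}(\theta)=\varsigma^{(\nu-\theta)\gamma x^{1/3}}$, valid because $\nu,\theta>-L$, the integrand acquires the form
\[
\frac{1}{w(z-w)}\,\frac{(q\mathpzc{d}/w;q)_\infty}{(q\mathpzc{d}/z;q)_\infty}\,\exp\bigl\{xg(z)+\theta\gamma x^{1/3}\log(z/\varsigma)-xg(w)-\nu\gamma x^{1/3}\log(w/\varsigma)\bigr\}.
\]
I would split each of $C,D$ into a local piece inside a disk $B(\varsigma,r_0)$ and a global piece outside it. On the global piece, the steep descent property proved in Appendix \ref{appendix contours} (under the parameter restrictions of Definition \ref{conditions on parameters}) gives $\operatorname{Re}(g(w))\geq g(\varsigma)+c$ on $D\setminus B(\varsigma,r_0)$ and $\operatorname{Re}(g(z))\leq g(\varsigma)-c$ on $C\setminus B(\varsigma,r_0)$; since the logarithmic perturbations are of size $O(x^{\delta/3+1/3}\log x)=o(x)$, the global contribution is $O(e^{-cx/2})$, uniformly in $(\nu,\theta)$.

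On the local piece, I would change variables $z=\varsigma+Z/(\gamma x^{1/3})$, $w=\varsigma+W/(\gamma x^{1/3})$. Using Proposition \ref{prop g properties}, i.e.\ $g'(\varsigma),g''(\varsigma)=O(1/x)$ and $g'''(\varsigma)=-2\gamma^3/\varsigma^3+o(1)$, Taylor expansion produces $xg(z)-xg(\varsigma)=-Z^3/(3\varsigma^3)+\mathcal{T}(Z)/x^{1/3}+O\bigl(x^{-2/3}(1+|Z|^5)\bigr)$, while $\theta\gamma x^{1/3}\log(z/\varsigma)=\theta Z/\varsigma-\theta Z^2/(2\varsigma^2\gamma x^{1/3})+O(\theta Z^3/x^{2/3})$, and analogous expressions for $w$. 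The remaining factors $1/w$, $(q\mathpzc{d}/w;q)_\infty/(q\mathpzc{d}/z;q)_\infty$, and $dz\,dw/(z-w)$ Taylor expand cleanly about $\varsigma$. After a second rescaling $Z\mapsto\varsigma Z$, $W\mapsto\varsigma W$, the leading exponent matches the Airy one $-Z^3/3+\theta Z+W^3/3-\nu W$ of \eqref{Airy kernel}, and the images of $C,D$ deform near $\varsigma$ into rays going to infinity along $e^{\pm 2\pi\mathrm{i}/3}$ and $e^{\pm\pi\mathrm{i}/3}$ respectively (the steepest-descent directions of the two cubics). This yields the leading term $K_{\text{Airy}}(\nu,\theta)/(\gamma x^{1/3})$, and collecting the explicit $O(x^{-1/3})$ corrections coming from (i) $\mathcal{T}(Z)$, (ii) the quadratic term of the logarithm, (iii) the expansion of $1/w$, and (iv) the $q$-Pochhammer ratio, produces the kernel $Q(\nu,\theta)/(\gamma^2 x^{2/3})$.

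The main obstacle is controlling the remainder uniformly over $(\nu,\theta)\in[-L,x^{\delta/3}]^2$. Because the shifted saddle of the full phase lies within $O(x^{\delta/3-1/3})$ of $\varsigma$, the effective scale of $|Z|$ is at most $O(x^{\delta/6})$; tracking the next Taylor correction (of formal order $x^{-1}$ but polynomially amplified by powers of $|\theta|,|\nu|$ from further logarithmic insertions) gives precisely the bound $O(x^{2\delta/3-1})$, which is $o(x^{-2/3})$ exactly because $\delta<1/3$. The exponential decay \eqref{exponential bound kernel} of $K_{\text{Airy}}$ and $Q$ is then obtained by translating the rescaled $Z$ and $W$ contours by real quantities proportional to $\theta^{1/2}$ and $\nu^{1/2}$, producing the classical Airy factors $\exp(-\tfrac{2}{3}\nu^{3/2})$ and $\exp(-\tfrac{2}{3}\theta^{3/2})$, which dominate any $e^{-c_2(\nu+\theta)}$ uniformly for $\nu,\theta\geq-L$. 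The most delicate step, beyond routine Taylor bookkeeping, is verifying that the $|\theta|,|\nu|$-amplification of the remainder does not exceed the $x^{-2/3}$ prefactor of $Q$; this is what the choice of window size $x^{\delta/3}$ with $\delta<1/3$ is engineered to guarantee.
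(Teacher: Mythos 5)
Your proposal follows essentially the same steepest-descent strategy as the paper: split $C,D$ into a local arc near $\varsigma$ and a global piece controlled by the steep descent/ascent estimates of Appendix~\ref{appendix contours}, rescale $z,w$ by $x^{-1/3}$ around $\varsigma$ (you use $z=\varsigma+Z/(\gamma x^{1/3})$, the paper uses the multiplicative form $z=\varsigma(1-Z/(\gamma x^{1/3}))$ --- an inessential difference), Taylor-expand the integrand, and read off the $x^{-1/3}$ Airy term and the $x^{-2/3}$ correction $Q$. One mechanical difference: for \eqref{exponential bound kernel} the paper translates the rescaled contours by a fixed amount $\tilde b/2$ (see \eqref{change of variable epsilon}) to pull out $e^{-\tilde b(\nu+\theta)/2}$ directly, whereas you propose $\theta^{1/2}$- and $\nu^{1/2}$-dependent shifts; that would also work but requires checking the shifted contours remain inside the steepest-descent sectors, whereas the fixed shift avoids the issue. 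One small imprecision worth noting: you motivate the $\mathcal O(x^{2\delta/3-1})$ error by saying ``the shifted saddle lies within $O(x^{\delta/3-1/3})$ of $\varsigma$'' and hence ``$|Z|$ is effectively $O(x^{\delta/6})$''; these two scales are inconsistent (a saddle at $Z\sim\sqrt\theta\sim x^{\delta/6}$ corresponds to $z-\varsigma\sim x^{\delta/6-1/3}$, not $x^{\delta/3-1/3}$). The paper sidesteps this by simply tracking the explicit polynomial factors in $\theta,\nu,Z,W$ appearing in the Taylor remainders of \eqref{taylor expansion integrand 1}--\eqref{taylor expansion integrand 3} and integrating them against the decaying cubic exponential; this yields $\mathcal O(x^{2\delta/3-1})$ without needing to locate the perturbed saddle. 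The final conclusion is the same, but the bookkeeping is cleaner to phrase in terms of powers of $\theta,\nu$ against the $x^{-2/3}$ prefactor, which is exactly why the window $[-L,x^{\delta/3}]$ with $\delta<1/3$ is chosen.
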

\begin{proof}

The definition itself of scaling parameters is functional to perform a saddle point analysis. In particular we want to show that, when $\nu$ and $\theta$ are relatively small quantities, compared to $x^{1/3}$, the integrals in \eqref{rescaled discrete Airy} are dominated by the value of the integrands at the double critical point $\varsigma$.
To do so we suitably deform contours $C,D$ in such a way that, for $x$ large enough, the following properties hold:\footnote{for the sake of the uniform convergence over compact sets conditions 2,4 are not necessary, but we still state them as they will become useful later in Lemma \ref{tail}.
}
\begin{enumerate}
\item $\max_{z\in C} \mathfrak{Re}\{g(z)\} = g(\varsigma(1- \frac{1}{2 \gamma x ^{1/3}}))$;
\item $\max_{z \in C}|z| = \varsigma(1- \frac{1}{2 \gamma x ^{1/3}})$;
\item $\min_{w \in D} \mathfrak{Re}\{g(z)\} = g(\varsigma(1+ \frac{1}{2 \gamma x ^{1/3}}))$;
\item $\min_{w \in D} |w| = \varsigma(1+ \frac{1}{2 \gamma x ^{1/3}})$.
\end{enumerate}

The idea is to take paths like those depicted in Figure \ref{contours asymptotics}. Based on results of Appendix \ref{appendix contours}, we now construct the steep descent contour $C$. The same procedure can be applied to provide an exact expression for $D$ as well and therefore we will omit this in the discussion.

Fix an arbitrarily small positive number $\epsilon$ and consider $C$ to be the union of two curves $\tilde{C}_1$, $\tilde{C}_2$ such that
\begin{align}
\tilde{C}_1&=\partial \mathsf{D} \left( 0, \varsigma (1- \epsilon) \right) \cap \left\{ z\in \mathbb{C} |\ \mathfrak{Re}(z)\leq \frac{\varsigma}{4}( 3 + \sqrt{1 - 8 \epsilon + 4 \epsilon^2 } ) \right\},\\
\tilde{C}_2&= \left\{ \mathbbm{1}_{[0,\frac{2}{\gamma x^{1/3}}]}(|\rho|) \varsigma \left(  1 - \frac{4 + \gamma^2 x^{2/3} \rho^2}{ 8 \gamma x^{1/3} } + \mathrm{i} \frac{\sqrt{3}}{2} \rho \right) \right. \nonumber
\\ & \qquad \left. + \mathbbm{1}_{[\frac{2}{\gamma x^{1/3}},\infty)}(|\rho|) \varsigma \left(  1 - \frac{ |\rho|}{2} + \mathrm{i} \frac{\sqrt{3}}{2} \rho \right) :\ |\rho| \leq \frac{1}{2}(1-\sqrt{1 - 8 \epsilon + 4 \epsilon^2}) \right\},
\end{align}
where $\partial\mathsf{D}(c,R)$ indicates a circumference of center $c$ and radius $R$. To put it in simple terms $C$ is a circle of radius $\varsigma(1- \epsilon)$ up until it intersects for the first time (from the left) the two complex lines exiting from $\varsigma$ with slope $\pm \frac{2 \pi}{3}$ (as in Figure \ref{contours asymptotics}, b)). After $C$ meets these intersection points, denoted with $p_\pm$, it becomes $\tilde{C}_2$, a regular curve which coincides with such lines for a while and passes strictly to the left of $\varsigma$. 

We claim that the contribution of the integral in the $z$ variable in \eqref{rescaled discrete Airy} are given, up to an error which is exponentially small in $x$, by the integral along the contour $\tilde{C}_2$. To show this, we first notice that from Proposition \ref{prop contour C}, if $\epsilon$ is small enough we can assume that, along $\tilde{C}_1$ the real part of $g(z)$  is a decreasing function. Therefore, the contribution of the term $e^{xg(z)}$ can be estimated by its values at the extremal points of $\tilde{C}_1$,
\begin{equation*}
p_{\pm} = \frac{\varsigma}{4} \left( 3 + \sqrt{1 - 8 \epsilon + 4 \epsilon^2} \right) \pm \mathrm{i} \frac{\sqrt{3}}{4} \varsigma (1- \sqrt{1- 8 \epsilon + 4 \epsilon^2} ) \approx \varsigma \left( 1- \epsilon \pm \mathrm{i} \sqrt{3} \epsilon \right).
\end{equation*}
Let us evaluate the quantity $\mathfrak{Re}\{ g(p_{\pm}) \} - g(\varsigma)$ through a Taylor expansion. By using \eqref{derivatives g}, we have
\begin{equation*}
\mathfrak{Re}\{ g(p_{\pm}) \} - g(\varsigma) = \frac{ 8 g'''(\varsigma) \varsigma^3}{3!} \epsilon^3 + R(\epsilon) \epsilon^4,
\end{equation*}
where $R(\epsilon)$ is the Taylor remainder and it is a regular, bounded function in a neighborhood of zero. The factor $\varsigma^3 g'''(\varsigma)$ is strictly negative, as stated in Proposition \ref{prop g properties} and therefore we obtain the bound
\begin{equation*}
e^{x(g(z) - g(\varsigma))} \leq e^{-cx}, \qquad \text{ for each } z \in \tilde{C}_1,
\end{equation*}
which holds for some positive constant $c$. 

Through an analogous argument we can deform the $D$ contour too and separate it in an union of two curves $\tilde{D}_1$ and $\tilde{D}_2$ (see Figure \ref{contours asymptotics}). As for the $C$ contour case, we can take $\tilde{D}_2$ to be a curve that follows the two complex half lines $\{ \varsigma + e^{\pm \mathrm{i} \frac{\pi}{3}} \rho: \rho \geq 0 \}$ in a neighborhood of size $\epsilon$ of $\varsigma$ and that passes strictly to the right of $\varsigma$. For $\epsilon$ small enough, but still of order 1, the remaining contour $\tilde{D}_1$ can be chosen so that the contribution of the $w$ integral over $\tilde{D}_1$ to the kernel $\tilde{A}$ are exponentially small in $x$.  

We also remark that curves $\tilde{C}_2, \tilde{D}_2$ are kept at a distance of size $x^{-1/3}$ from $\varsigma$ (and hence from each other) due to the presence in the integral expression of $\tilde{A}$ of a sigularity at $z=w$.

\begin{figure}[t]
\centering

\begin{minipage}{.45 \textwidth}
\includegraphics{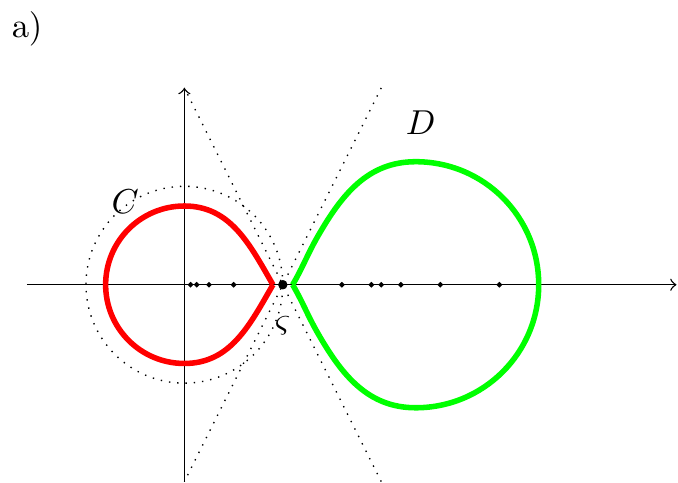}
\end{minipage}
\begin{minipage}{.45 \textwidth}
\includegraphics{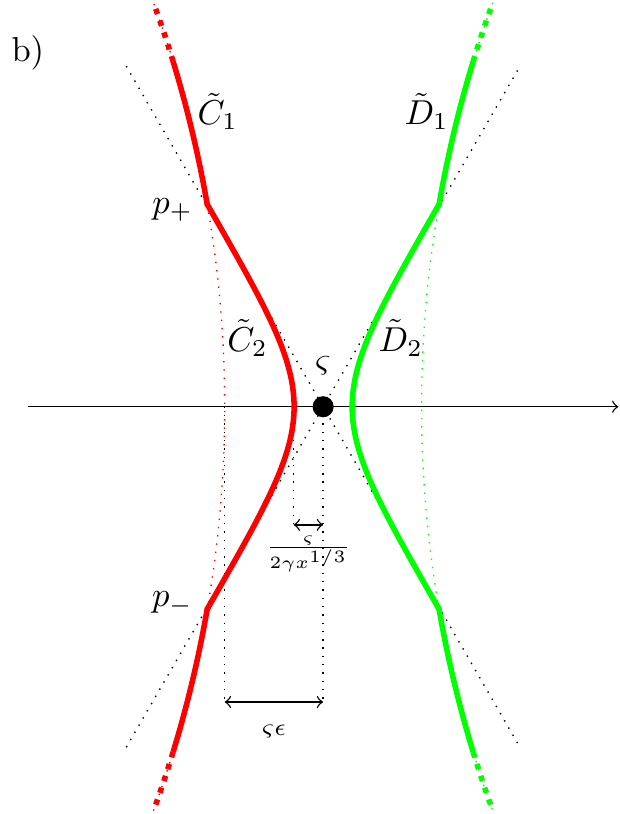}
\end{minipage}

\caption{\small a) Choices of integration contours in Lemma \ref{convergence on moderately large sets}. The red contour $C$ encircles the sigularities $\{q^k \mathpzc{d}\}_{k \geq 1}$, is contained inside a circle of radius $\varsigma(1-\epsilon)$ and joins the point $\varsigma( 1 - \frac{1}{2 \gamma x^{1/3}})$ with slope $\frac{\pi}{3}$ from above (resp. $-\frac{\pi}{3}$ from below). The green contour $D$ contains the singularities $\{ \xi_k s_k \}_{k=2, \dots, x}$ and forms at the point $\varsigma ( 1 + \frac{1}{2 \gamma x^{1/3}} )$ a cusp of width $\frac{2}{3}\pi$, symmetric to that of $C.$\\
b) A representation of contours $C$ and $D$ in the immediate vicinity of the critical point $\varsigma$.} \label{contours asymptotics}
\end{figure}

We can summarize discussion made so far expressing the kernel $\tilde{A}$ as
\begin{equation} \label{kernel A C2 D2}
\tilde{A}(\nu, \theta) = \frac{ \varsigma^{\nu \gamma x^{1/3}} }{ \varsigma^{\theta \gamma x^{1/3}} } \frac{1}{(2 \pi \mathrm{i})^2} \int_{\tilde{D}_2} \frac{dw}{w} \int_{\tilde{C}_2} dz \frac{z^{ \theta \gamma x^{ 1/3 } }}{ w^{ \nu \gamma x^{ 1/3 }}} \frac{\exp \{ x g(z) \} }{ \exp \{  x g(w)\} } \frac{(q \mathpzc{d}/w ; q)_\infty } {(q \mathpzc{d}/z ; q)_\infty} \frac{1}{z-w} + \mathcal{O}(e^{-cx}),
\end{equation}
where we notice that, with respect to \eqref{rescaled discrete Airy}, the integration contours have become $\tilde{C}_2$ and $\tilde{D}_2$ and the remainder is a quantity which decays as an exponential in $x$. We can now safely employ the saddle point method to give an estimate of the integral expression in \eqref{kernel A C2 D2}. The only significant contribution to the double integral \eqref{rescaled discrete Airy} is given when variables $z,w$ are separated from $\varsigma$ by a distance of order $x^{-1/3}$. For this reason we like to apply the change of variables
\begin{equation*}
z=\varsigma \left( 1- \frac{Z}{\gamma x^{ 1/3 }} \right), \qquad \qquad w=\varsigma \left( 1- \frac{W}{\gamma x^{ 1/3 }} \right),
\end{equation*}
and we write, through simple Taylor expansions, different terms of the integrand function in \eqref{kernel A C2 D2} as 
\begin{gather}
\frac{z^{\theta \gamma x^{1/3}}}{w^{\nu \gamma x^{1/3}}} = \frac{\varsigma^{\theta \gamma x^{1/3}}}{\varsigma^{\nu \gamma x^{1/3}}} \frac{e^{-\theta Z}}{e^{-\nu W}}\left[ 1 + \frac{1}{\gamma x^{1/3}}\left( \frac{\nu W^2}{2} - \frac{\theta Z^2}{2} \right) + \mathcal{O}\left( \frac{Z^3 \theta}{x^{2/3}}, \frac{W^3 \nu}{x^{2/3}}, \frac{Z^4 \theta^2}{x^{2/3}} , \frac{W^4 \nu^2}{x^{2/3}}\right) \right], \label{taylor expansion integrand 1} \\
\frac{e^{xg(z)}}{e^{ x g(w)} } = \frac{e^{Z^3/3}}{e^{W^3/3}} \left[ 1 + \frac{1}{\gamma x^{1/3}} \left( E_1 Z^4 - E_1 W^4 \right) + \mathcal{O}\left( \frac{Z^8}{x^{2/3}} ,\frac{W^8}{x^{2/3}} \right) \right], \label{taylor expansion integrand 2} \\
\frac{(q \mathpzc{d}/w ; q)_\infty }{(q \mathpzc{d}/z ; q)_\infty} = 1 + \frac{1}{\gamma x^{1/3}} ( E_2 Z - E_2 W ) + \mathcal{O}\left( \frac{Z^2}{x^{2/3}} , \frac{W^2}{x^{2/3}} \right).\label{taylor expansion integrand 3}
\end{gather}
In these expressions, coefficients $E_1,
E_2$, naturally possess exact expressions, which we do not report as they are irrelevant for the computations. 

Thanks to \eqref{kernel A C2 D2},\eqref{taylor expansion integrand 1},\eqref{taylor expansion integrand 2},\eqref{taylor expansion integrand 3} we obtain an expansion of $\tilde{A}$ in the infinitesimal quantity $1/(\gamma x^{1/3})$. Collecting together terms of order $1/(\gamma x^{1/3})$ and $1/(\gamma x^{1/3})^2$ we obtain
\begin{equation} \label{tilde A order 1}
\tilde{A}(\nu, \theta) = \frac{1}{\gamma x^{1/3}} \int_{e^{-\frac{2}{3}\pi \mathrm{i}} \infty}^{e^{\frac{2}{3}\pi \mathrm{i}} \infty}   \frac{dW}{2 \pi \mathrm{i}} \int_{e^{\frac{\pi}{3} \mathrm{i}} \infty}^{e^{-\frac{\pi}{3} \mathrm{i}} \infty} \frac{dZ}{2 \pi \mathrm{i}} \frac{e^{Z^3/3 - \theta Z}}{e^{W^3/3 - \nu W}} \frac{1}{W - Z} + \frac{1}{(\gamma x^{1/3})^2} Q(\nu, \theta) + \mathcal{O}(x^{2\delta/3 - 1}).
\end{equation}
with the kernel $Q$ being given by
\begin{equation} \label{kernel Q}
Q(\nu, \theta) = 
\int_{e^{-\frac{2}{3}\pi \mathrm{i}} \infty}^{e^{\frac{2}{3}\pi \mathrm{i}} \infty}   \frac{dW}{2 \pi \mathrm{i}} \int_{e^{\frac{\pi}{3} \mathrm{i}} \infty}^{e^{-\frac{\pi}{3} \mathrm{i}} \infty} \frac{ d Z}{2 \pi \mathrm{i}} \frac{e^{Z^3/3 - \theta Z}}{e^{W^3/3 - \nu W}} \left(  \frac{\nu W^2}{2} - \frac{\theta Z^2}{2} +  E_1 (Z^4 - W^4) +  E_2 (Z - W) \right) \frac{1}{W - Z}.
\end{equation}
By recognizing the expression of the Airy kernel \eqref{Airy kernel} in \eqref{tilde A order 1} we write $\tilde{A}$ as in \eqref{formula convergence on bounded sets}.

All we are left to do is to prove the exponential bound \eqref{exponential bound kernel} for $Q$, since the same type of estimate for $K_{\text{Airy}}$ follows from well known decay properties of the Airy functions \cite{abramowitz+stegun}. To do this consider the following parametrization of the integration variables 
\begin{equation} \label{change of variable epsilon}
Z = \frac{\tilde{b}}{2} + |\varrho_1| e^{-\sign(\varrho_1) \mathrm{i} \pi/3}, \qquad W = -\frac{\tilde{b}}{2} + |\varrho_2| e^{-\sign(\varrho_2) \mathrm{i} 2 \pi/3},
\end{equation}
for $\rho_1, \rho_2 \in \mathbb{R}$ and $\tilde{b}$ being a positive real number. Applying the substitution \eqref{change of variable epsilon} in \eqref{kernel Q}, we straightforwardly obtain an inequality like

\begin{equation}\label{kernel Q estimate}
\begin{split}
|Q(\nu, \theta)| < e^{-\frac{\tilde{b}}{2}(\theta + \nu)} \frac{e^{\tilde{b}^3/12}}{\tilde{b}} \int_0^\infty d\varrho_1 \int_0^\infty d\varrho_2 & \Big( |\theta| P_{\tilde{b}}(\rho_1) + |\nu| P_{\tilde{b}}(\rho_2)  + S_{\tilde{b}}(\varrho_1) +  S_{\tilde{b}}(\varrho_2) \Big)\\
& \times e^{-\frac{\tilde{b}}{4}(\rho_1^2 + \rho_2^2) -\left( \frac{\theta}{2} -\frac{\tilde{b}^2}{8} \right)\rho_1 -\left( \frac{\nu}{2} -\frac{\tilde{b}^2}{8} \right)\rho_2},
\end{split}
\end{equation}
where $P_{\tilde{b}}$ and $S_{\tilde{b}}$ are polynomials and by making use of elementary estimates on the integrals on the right hand side of \eqref{kernel Q estimate}, we can finally show \eqref{exponential bound kernel}.

The error term $\mathcal{O}(x^{2 \delta /3 -1})$ in \eqref{formula convergence on bounded sets} is obtained taking into account quantities
\begin{equation*}
\mathcal{O}\left( \frac{Z^3 \theta}{x^{2/3}}, \frac{W^3 \nu}{x^{2/3}}, \frac{Z^4 \theta^2}{x^{2/3}} , \frac{W^4 \nu^2}{x^{2/3}}\right) , \mathcal{O}\left( \frac{Z^8}{x^{2/3}} ,\frac{W^8}{x^{2/3}} \right), \mathcal{O}\left( \frac{Z^2}{x^{2/3}} , \frac{W^2}{x^{2/3}} \right),\mathcal{O}(e^{-cx})
\end{equation*}
from \eqref{kernel A C2 D2} and \eqref{taylor expansion integrand 1}, \eqref{taylor expansion integrand 2}, \eqref{taylor expansion integrand 3} in the saddle point integration. Due to the presence of the exponentially decaying term $e^{Z^{3}/3 - W^3/3 - Z \theta + W \nu}$ we can formulate bounds like \eqref{kernel Q estimate} for these remainders as well, to finally show \eqref{bound remainder}. This concludes our proof.
\end{proof}

\begin{Lemma}[Exponential decay of front tails] \label{decay tails}
Let $L'$ be an arbitrary large positive real numbers (possibly of order $x$ raised to some power). Then there exists $x_*$, such that for all $x>x_*$ the bound
\begin{equation} \label{tail}
\left| \gamma x^{1/3} \tilde{A}(\nu, \theta) \right| <  e^{-\nu - \theta}
\end{equation}
holds for each $(\nu, \theta) \in [-L, \infty)^2 \setminus [-L,L']^2$. 
\end{Lemma}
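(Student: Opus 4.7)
The plan is to extend the saddle-point analysis of Lemma \ref{convergence on moderately large sets} by deforming $C$ and $D$ to steep-descent contours $C_x^{(\alpha)}$ and $D_x^{(\alpha)}$ of the same qualitative shape as $\tilde C_2,\tilde D_2$ in that proof, but with a widened gap around $\varsigma$: $C_x^{(\alpha)}$ meets the real axis at $\varsigma(1-\alpha/(\gamma x^{1/3}))$ and enters at angles $\pm 2\pi/3$, while $D_x^{(\alpha)}$ meets it at $\varsigma(1+\alpha/(\gamma x^{1/3}))$ and enters at angles $\pm \pi/3$. Here $\alpha\geq 1$ is a constant that will be chosen sufficiently large. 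The steep descent/ascent property of $g$ on these contours is a mild perturbation of the one used in Lemma \ref{convergence on moderately large sets} and is secured by the continuity arguments of Appendix \ref{appendix contours} under the assumptions of Definition \ref{conditions on parameters}.

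The next step is a pointwise bound of the integrand of $\tilde A$ on these contours. Parametrizing the portion of $C_x^{(\alpha)}$ near $\varsigma$ by $z=\varsigma+(\rho/(\gamma x^{1/3}))e^{\pm 2\pi i/3}$ with $\rho\geq \alpha$, a direct computation gives $|z/\varsigma|\leq 1-\rho/(2\varsigma\gamma x^{1/3})+O((\rho/(\gamma x^{1/3}))^2)$, hence
\begin{equation*}
    \left|\frac{z}{\varsigma}\right|^{\theta\gamma x^{1/3}} \leq \exp\!\left(-\frac{\rho\,\theta}{2\varsigma}\right),
\end{equation*}
and analogously $|\varsigma/w|^{\nu\gamma x^{1/3}}\leq \exp(-\rho'\nu/(2\varsigma))$ for $w\in D_x^{(\alpha)}$ with $\rho'\geq \alpha$. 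The cubic Taylor expansion of $g$ at $\varsigma$ combined with steep descent yields $|e^{x(g(z)-g(\varsigma))}|\leq e^{-c\rho^3}$ (and similarly for $w$), with $c>0$ depending on $g'''(\varsigma)$; the $q$-Pochhammer ratio is uniformly bounded, and $|1/(z-w)|\leq C\gamma x^{1/3}/\alpha$ from the minimum separation of the contours. On the portions of $C_x^{(\alpha)}$, $D_x^{(\alpha)}$ far from $\varsigma$, the pointwise estimates are even more favorable by the exponential control on $|e^{xg(z)}|$ and $|e^{-xg(w)}|$ already exploited in the proof of Lemma \ref{convergence on moderately large sets}.

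Assembling the pointwise bounds and carrying out the contour integration (the Gaussian-type integration in $\rho$ and $\rho'$ contributes a factor $O(1/(\gamma x^{1/3}))$ from each contour arc near $\varsigma$, cancelling the factor $\gamma x^{1/3}/\alpha$ from $1/|z-w|$) yields
\begin{equation*}
    |\tilde A(\nu,\theta)| \leq \frac{C_\alpha}{\gamma x^{1/3}}\,\exp\!\left(-\frac{\alpha(\nu+\theta)}{2\varsigma}\right)
\end{equation*}
with $C_\alpha$ independent of $x,\nu,\theta$. Choosing $\alpha$ so that $\alpha/(2\varsigma)\geq 1+\epsilon$ for some small $\epsilon>0$ gives $|\gamma x^{1/3}\,\tilde A(\nu,\theta)|\leq C_\alpha e^{-(1+\epsilon)(\nu+\theta)}$. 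In the tail region $[-L,\infty)^2\setminus[-L,L']^2$ one has $\nu+\theta\geq L'-L$, so
\begin{equation*}
    C_\alpha e^{-(1+\epsilon)(\nu+\theta)} = C_\alpha e^{-\epsilon(\nu+\theta)}\cdot e^{-(\nu+\theta)} \leq C_\alpha e^{-\epsilon(L'-L)}e^{-(\nu+\theta)} < e^{-\nu-\theta},
\end{equation*}
provided $L'$ (and then $x_*$, to make the asymptotic expansions valid) are chosen suitably large.

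\textbf{The main obstacle} will be verifying the steep-descent/ascent property of $g$ on the widened contours $C_x^{(\alpha)}$ and $D_x^{(\alpha)}$ uniformly in $x$, together with the uniformity (in $\nu,\theta$) of the bound along the non-saddle portions of the contours as these parameters range over unbounded sets; both rely crucially on the restrictive hypotheses of Definition \ref{conditions on parameters} and on the perturbative contour analysis of Appendix \ref{appendix contours}.
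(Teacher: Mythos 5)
Your proposal is correct in spirit and takes a genuinely different route from the paper. The paper's proof is \emph{asymmetric}: when $\theta\ge\nu$ it leaves $D$ exactly as in Lemma \ref{convergence on moderately large sets} (absorbing all of the $w$-dependence into a single constant via the lumped bound $|\varsigma/w|^{\nu\gamma x^{1/3}}e^{x(g(\varsigma)-g(w))}\leq\Gamma_2$, cf.\ \eqref{bound tails 6}) and deforms only $C$, replacing the kissing contour by a circle-plus-vertical-segment whose uniform maximum modulus is $\varsigma(1-2/(\gamma x^{1/3}))$; this gives $|\gamma x^{1/3}\tilde A|\lesssim e^{-2\theta}$ and the conclusion follows from $-2\theta\le -\nu-\theta$, with the case $\nu>\theta$ handled by symmetrizing the roles of $C$ and $D$. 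You instead widen \emph{both} contours by a constant multiple $\alpha$ of $1/(\gamma x^{1/3})$ and pick $\alpha$ large enough that each side individually produces an $e^{-(1+\epsilon)\theta}$ (resp. $e^{-(1+\epsilon)\nu}$) factor; this eliminates the case split, which is a real simplification, at the cost of having to verify steep descent/ascent on two deformed contours. Both buy essentially the same estimate.

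Two points deserve attention if you were to flesh this out. First, the chain ``$|z/\varsigma|\leq 1-\rho/(2\varsigma\gamma x^{1/3})+O((\rho/(\gamma x^{1/3}))^2)$, hence $|z/\varsigma|^{\theta\gamma x^{1/3}}\leq\exp(-\rho\theta/(2\varsigma))$'' is not a valid deduction: on the ray at angle $\pm 2\pi/3$ one computes $|z/\varsigma|^2=1-\rho/(\varsigma\gamma x^{1/3})+\rho^2/(\varsigma^2\gamma^2x^{2/3})$, so the $O$-correction has a definite \emph{positive} sign and the claimed exponential bound fails pointwise in $\rho$. The fix is to observe that on your ray contour the modulus is \emph{maximal} at the near endpoint $\rho=\alpha$, giving a uniform bound $|z/\varsigma|\le 1-c\alpha/(\gamma x^{1/3})$ (for $x$ large, with $c>0$ a constant), hence $|z/\varsigma|^{\theta\gamma x^{1/3}}\le e^{-c\alpha\theta}$ uniformly on the contour for $\theta>0$; this is precisely the mechanism the paper exploits via its property $\max_{z\in C}|z|$, see \eqref{bound tails 3}, and it changes only the numerical constant in front of $\alpha$. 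Second, when one of $\nu,\theta$ is negative (say $\nu\in[-L,0)$), the modulus factor on the corresponding contour $D$ \emph{grows} rather than decays; since $|\nu|\le L$ is fixed, this contributes only a multiplicative constant $C_{L,\alpha}$, but the claimed form $C_\alpha e^{-(1+\epsilon)(\nu+\theta)}$ then really reads $C_{L,\alpha}e^{-(1+\epsilon)\theta}$, and the final step still works because $\theta>L'$ with $L'$ at our disposal. Neither issue sinks the argument, but both should be made explicit.
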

\begin{proof}
We use again suitable deformations of contours described in Lemma \ref{convergence on moderately large sets} to estimate, for large $x$, the contribution of the factor 
\begin{equation*}
\frac{z^{\theta \gamma x^{ 1/3 }}}{ w^{\nu \gamma x^{ 1/3 }} }
\end{equation*}
to the double integral \eqref{rescaled discrete Airy}. Let's first prove \eqref{tail} in the case $\theta \geq \nu$. When this is the case, we take the contour $D$ exactly as in Lemma \ref{convergence on moderately large sets} and we modify $C = \tilde{C}_1 \cup \tilde{C}_2$, where
\begin{equation*}
\begin{split}
&\tilde{C}_1= \partial \mathsf{D} \left( 0, \varsigma- \frac{2 \varsigma }{\gamma x^{1/3}} \right) \cap \{ z\in \mathbb{C} |\ \mathfrak{Re}(z)\leq \varsigma( 1 -\frac{3}{\gamma x^{ 1/3 }}) \},\\
&\tilde{C}_2=\varsigma(1-\frac{3}{\gamma x^{ 1/3 }}) + \mathrm{i} [-\varsigma \tilde{a} , \varsigma  \tilde{a}]
\end{split}
\end{equation*}
and $\tilde{a}$ is given by the intersections of the vertical complex line $\{ \varsigma(1-\frac{3}{\gamma x^{ 1/3 }}) + \mathrm{i} y |\ y\in \mathbb{R}  \}$ with the circle $\partial \mathsf{D} \left( 0, \varsigma- \frac{2 \varsigma}{\gamma x^{1/3}} \right)$. We can also write down its exact expression as 
\begin{equation*}
\tilde{a}= \sqrt{\frac{2 }{\gamma x^{1/3}}-\frac{5}{ \gamma^2 x^{2/3}}} \approx \sqrt[]{\frac{2}{\gamma}} \frac{1}{x^{1/6}} + \mathcal{O} (x^{-1/3}) .
\end{equation*}
From Proposition \ref{prop contour C}, $\partial \mathsf{D} \left( 0, \varsigma ( 1 - \frac{2 }{\gamma x^{1/3}} ) \right)$ is a steep descent contour for $\mathfrak{Re}(g)$ and we can assume that
\begin{equation*}
\max_{z \in C_1 } \mathfrak{Re} \{ g(z) \} =  \mathfrak{Re} \{ g( \varsigma ( 1 - \frac{3}{ \gamma x^{1/3}} + \mathrm{i} \tilde{a}  ) ) \}.
\end{equation*}
To evaluate the real part of the function $g$ on the complex segment $\tilde{C}_2$ we use the parametrization
\begin{equation} \label{z on C_2}
z=\varsigma \left( 1 - \frac{3}{\gamma x^{1/3}} + \mathrm{i} \frac{Z}{\gamma x^{1/3}} \right).
\end{equation}
In this case $Z$ is a real number ranging in an interval which, up to corrections of order $x^{-1/3}$ is $[-\sqrt{2 \gamma} x^{1/6} , \sqrt{2 \gamma} x^{1/6}]$. Expanding $g$ in Taylor series around $\varsigma$ and recalling \eqref{derivatives g}, we have 
\begin{equation} \label{expansion g Z}
\mathfrak{Re}\{ g(z) \} - g(\varsigma) = \frac{1}{ x } \frac{\varsigma^3 g'''(\varsigma)}{ 3! \gamma^3} (- 27 + 9 Z^2) + \frac{1}{x^{4/3}} \frac{\varsigma^4 g^{(4)}(\varsigma) }{ 4! \gamma^4 } Z^4 + \mathcal{O}\left( \frac{Z^2}{x^{4/3}} , \frac{Z^4}{x^{5/3}} , \frac{Z^6}{x^2} \right),
\end{equation}
where the presence of terms of order higher than three takes into account the fact that $Z$ can be of order $x^{1/6}$. When $Z/(\gamma x^{1/3})=\tilde{a}$, \eqref{expansion g Z} becomes
\begin{equation*}
\mathfrak{Re}\{ g(\varsigma(1-\frac{3}{\gamma x^{1/3}} +\mathrm{i} \tilde{a})) \} - g(\varsigma) = \frac{1}{\gamma^2 x^{2/3}} \left( 3 \varsigma^3 g'''(\varsigma) + \frac{1}{6} \varsigma^4 g^{(4)}(\varsigma) \right) + \mathcal{O}(x^{-1})
\end{equation*}
and the term on the right hand side of order $x^{-2/3}$ is negative. This can be shown either directly computing the derivatives of $g$ or simply recalling that the point $\varsigma(1-\frac{3}{\gamma x^{1/3}} +\mathrm{i} \tilde{a})$ lies on a steep descent contour. These calculations imply the estimate
\begin{equation} \label{bound tails 1}
|e^{x(g(z) - g(\varsigma) )}| \leq e^{-c x^{1/3}}, \qquad \text{ for each }z \in \tilde{C}_1, 
\end{equation}
for some positive constant $c$. On the other hand, when $z$ belongs to $\tilde{C}_2$, \eqref{expansion g Z} gives us that
\begin{equation} \label{bound tails 2}
|e^{x(g(z) - g(\varsigma) )}| \leq e^{9 - \tilde{c} Z^2 }, \qquad \text{for each } |Z| \leq \gamma \tilde{a} x^{1/3},
\end{equation}
for some other positive constant $\tilde{c}$. 

\begin{figure}[t]
\centering
\includegraphics{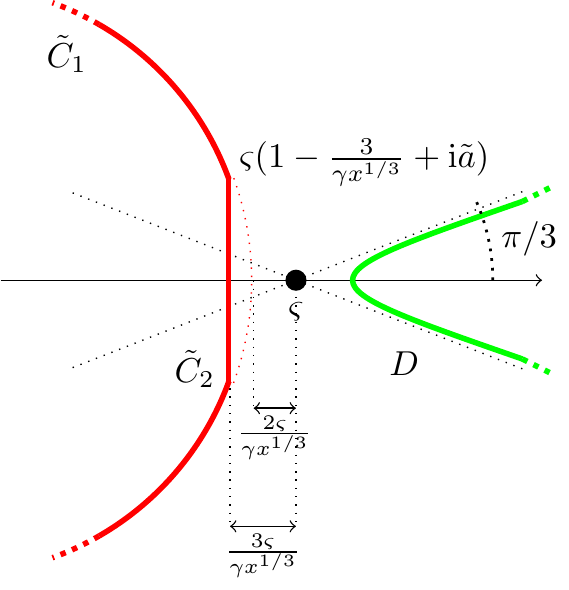}
\caption{\small Choice of integration contours of Lemma \ref{decay tails}. The red contour $C$ is the union of $\tilde{C}_1$, an arc of the circle of center 0 and radius $\varsigma(1- \frac{2}{\gamma x^{1/3}})$, and $\tilde{C}_2$, a vertical segment passing for the point $\varsigma(1 - \frac{3}{\gamma x^{1/3}})$ on the real axis. On the other hand $D$, in the vicinity of the critical point $\varsigma$, stays close to the lines exiting from $\varsigma$ with slope $\pm \frac{\pi}{3}$ (dotted lines).  } \label{contours asymptotics tails}
\end{figure}

To complete the list of preliminary estimates for terms depending on $z$ in the integral formula \eqref{rescaled discrete Airy} of the kernel $\tilde{A}$, we need to address the factor $z^{\theta \gamma x^{1/3}}$. First we notice that, since the contour $C$ lies inside the circle centered at $0$ with radius $\varsigma(1 - 2/(\gamma x^{1/3}))$, we have
\begin{equation} \label{bound tails 3}
\left| \frac{z}{\varsigma} \right|^{\theta \gamma x^{1/3}} \leq \exp \left\{\theta \gamma x^{1/3} \log \left(1 - \frac{
2}{\gamma x^{1/3}} \right) \right\} \leq  e^{-2 \theta}, \qquad \text{for each } z\in C,
\end{equation}
as a result of the simple inequlity $\log(1 + y) \leq y$, valid for all $y>-1$. Moreover, when $z$ is on $\tilde{C}_2$, using the parametrization \eqref{z on C_2}, we have
\begin{equation} \label{bound tails 4}
\left| \frac{z}{\varsigma} \right|^{\theta \gamma x^{1/3}} = \exp\left\{ \theta \gamma x^{1/3} \log \left| 1 - \frac{3}{\gamma x^{1/3}} + \frac{ \mathrm{i} Z }{\gamma x^{1/3}} \right|  \right\}  \leq \exp\left\{ - \theta \left( 3 - \frac{9 + Z^2}{2 \gamma x^{1/3}} \right) \right\}.
\end{equation}
To evaluate the kernel $\tilde{A}$ we also need to provide some estimates for quantities involving the variable $w$. The choice of contours $C,D$ implies that
\begin{equation}\label{bound tails 5}
\frac{1}{z-w} \leq \frac{\varsigma}{ \gamma x^{1/3}} \qquad \text{and} \qquad \left| \frac{(q \mathpzc{d}/w ; q)_\infty}{(q \mathpzc{d}/z ; q)_\infty} \right| \leq \Gamma_1,
\end{equation}
for some constant $\Gamma_1$. In addition, since $\nu > -L$ and $|w|> \varsigma$, combined with the fact that $D$ is steep ascent for the function $\mathfrak{Re}\{g\}$, as proved in Proposition \ref{prop contour D}, we have that
\begin{equation}\label{bound tails 6}
\left| \frac{\varsigma}{w} \right|^{\nu \gamma x^{1/3}} \exp\{ x ( g(\varsigma) - g(w) ) \} \leq \left| \frac{\varsigma}{w} \right|^{ - L \gamma x^{1/3}} \exp\{ x (g(\varsigma) - g(w)) \} \leq \Gamma_2,
\end{equation}
for some other constant $\Gamma_2$. Combining together inequalities \eqref{bound tails 4}, \eqref{bound tails 5}, \eqref{bound tails 6}, we can write
\begin{equation} \label{bound tails 7}
\begin{split}
|\tilde{A}(\nu, \theta)| &= \frac{\varsigma^{\nu \gamma x^{1/3}} }{\varsigma^{\theta \gamma x^{1/3}}} \left| \int_C \frac{dz}{2 \pi} \int_D \frac{dw}{2 \pi w} \frac{ z^{\theta \gamma x^{1/3}} }{ w^{ \nu \gamma x^{1/3}} } \frac{\exp\{ x g(z) \}}{ \exp\{ x g(w) \} } \frac{(q \mathpzc{d}/w ; q )_\infty }{ (q \mathpzc{d}/z ; q )_\infty } \frac{1}{z-w} \right|\\
&\leq \frac{\Gamma_1 \Gamma_2 l(D)}{(2 \pi)^2 \gamma x^{1/3}} \int_C \left| dz \left( \frac{z}{\varsigma} \right)^{\theta \gamma x^{1/3} } \exp\{ x(g(z) - g(\varsigma)) \}  \right|,
\end{split}
\end{equation}
where $l(D)$ is the length of the curve $D$. The integral over $C$ is naturally split into different contributions coming from contours $\tilde{C}_1$ and $\tilde{C}_2$. On $\tilde{C}_1$, utilizing \eqref{bound tails 1} and \eqref{bound tails 3} we have
\begin{equation} \label{bound tails 8}
\int_{\tilde{C}_1}\left| dz \left( \frac{z}{\varsigma} \right)^{\theta \gamma x^{1/3}} \exp\{ x ( g(z) - g(\varsigma) ) \}  \right| \leq e^{-2 \theta} e^{-c x^{-1/3}} l(\tilde{C}_1),
\end{equation}
whereas on $\tilde{C}_2$, from \eqref{bound tails 2}, \eqref{bound tails 4} we obtain
\begin{equation} \label{bound tails 9}
\int_{\tilde{C}_2} \left| dz \left( \frac{z}{\varsigma} \right)^{\theta \gamma x^{1/3}} \exp\{ x (g(z) - g(\varsigma)) \} \right| \leq \int_{-a \gamma x^{1/3}}^{a \gamma x^{1/3}} dZ \exp \left\{ - \theta \left( 3 - \frac{9 + Z^2}{2 \gamma x^{1/3}} \right) + 9 - \tilde{c} Z^2 \right\}.
\end{equation}
To estimate the integral on the right hand side of \eqref{bound tails 9}, set a large integer $N$ and split the integration segmant into $|Z|<N$ and $N<|Z|< \tilde{a} \gamma x^{1/3}$. When $|Z|<N$ the term $\frac{9 + Z^2}{2 \gamma x^{1/3}}$ is small and we can denote it with $\mathcal{O}(N^2/x^{1/3})$. On the other hand, when $N<|Z|< \tilde{a} \gamma x^{1/3}$, since $(3 - \frac{9 + Z^2}{2 \gamma x^{1/3}}) > 2$, the integrand becomes very small due to the presence of the exponential of $-\tilde{c}Z^2$. We can therefore write
\begin{equation}\label{bound tails 10}
\begin{split}
\text{rhs }\eqref{bound tails 9} &\leq e^{-2 \theta} \left( e^{- \theta ( 1- \mathcal{O}(N^2/x^{1/3}) )} \int_{-N}^N e^{9 - \tilde{c}Z^2} dZ + \int_{N<|Z|\tilde{a} \gamma x^{1/3}} dZ e^{9 - \tilde{c} Z^2 } \right)\\
&= e^{- 2 \theta} \left( e^{- \theta (1 - \mathcal{O}(N^2/x^{1/3}))} \Gamma_3 + \mathcal{O}(e^{- \tilde{c} N^2 }) \right),
\end{split}
\end{equation}
with $\Gamma_3$ being a constant coming from the integration of the exponential.

We can now plug \eqref{bound tails 8}, \eqref{bound tails 9}, \eqref{bound tails 10} into the right hand side of \eqref{bound tails 7} to finally obtain
\begin{equation} \label{bound tail 11}
|\tilde{A}(\nu, \theta)| < e^{- 2 \theta} \left( e^{- c x^{1/3}} l(C_1) + e^{-\theta (1 - \mathcal{O}(N^2/x^{1/3}))} \Gamma_3 + \mathcal{O}(e^{- \tilde{c} N^2 }) \right) \frac{\Gamma_1 \Gamma_2 l(D)}{ (2 \pi)^2 \gamma x^{1/3}}.
\end{equation}
The term inside the parentheses can be made smaller than $\frac{(2 \pi)^2}{\Gamma_1 \Gamma_2 l(D)}$ taking $x \gg 0$ and $L' \gg 0$ (remember $L'<\theta$), so that \eqref{bound tail 11} reduces to
\begin{equation*}
|\tilde{A}(\nu, \theta)| < e^{- 2 \theta} \frac{1}{\gamma x^{1/3}},
\end{equation*}
which implies \eqref{tail} since $-2 \theta < -\theta - \nu$.

The complementary case $\nu > \theta$ can be studied analogously, deforming the contour $D$, instead of $C$, symmetrically with respect to the critical point $\varsigma$. 
\end{proof}

Up to this point we estimated the kernel $\tilde{A}$ in a region where both $\theta$ and $\nu$ are bounded from below. When this is not the case the saddle point method cannot be applied any longer as the contribution to the integral \eqref{rescaled discrete Airy} of the term
$$
\frac{z^{\gamma x^{1/3}\theta}}{w^{\gamma x^{1/3} \nu}} 
$$
is no more negligible. In the following Lemma we show how to control the rear tails of $\tilde{A}$.
\begin{Lemma} \label{bounded discrete Kernel}
The kernel $\tilde{A}$ defines a trace class operator on $l^2(\dot{\mathbb{Z}})$ and it  satisfies, for each $x$, the bound
\begin{equation} \label{tilde tilde component}
|\tilde{A}(\nu, \theta)| \leq 1.
\end{equation}
\end{Lemma}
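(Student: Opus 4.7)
The plan is to separate the two assertions and handle each with a different technique, reusing the machinery already assembled in Sections 5 and 6.

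For the trace class property, the key observation is that $\tilde A$ inherits the finite rank structure of $A$. Writing $\tilde\phi_l(\nu) := \mathfrak{b}(\nu)\phi_l(n_\nu)/\tau(n_\nu)$ and $\tilde\psi_l(\theta) := \tau(m_\theta)\psi_l(m_\theta)/\mathfrak{b}(\theta)$, we have $\tilde A = \sum_{l=1}^{x-1} \tilde\phi_l \otimes \tilde\psi_l$, so the rank of $\tilde A$ is at most $x-1$ and it suffices to prove that each $\tilde\phi_l$ and $\tilde\psi_l$ lies in $l^2(\dot{\mathbb{Z}})$. For $\nu \leq -L$ the conjugating factor equals $1$ and square-summability is inherited verbatim from the exponential bounds on $\phi_l, \psi_l$ established in Appendix~C. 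For $\nu > -L$ the factor $\mathfrak{b}(\nu)/\tau(n_\nu) = (\varsigma/b)^{\nu\gamma x^{1/3}}$ (respectively $\tau(m_\theta)/\mathfrak{b}(\theta) = (c/\varsigma)^{\theta\gamma x^{1/3}}$) is bounded by a geometric factor whose ratio to the ambient decay of $\phi_l, \psi_l$ can be absorbed thanks to the spacing $b < \varsigma < c$ of Proposition~\ref{fredholm determinant} (with $\varsigma$ close to $\mathpzc{d}$ by Proposition~\ref{prop g properties}); this preserves square-summability in $\dot{\mathbb{Z}}$. As a byproduct, $\tilde\phi_l$ and $\tilde\psi_l$ are uniformly bounded sequences as well.

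For the pointwise bound $|\tilde A(\nu,\theta)| \leq 1$ my plan is to return to the double contour integral \eqref{rescaled discrete Airy} and deform $C,D$ to the specific shape used in Lemma~\ref{convergence on moderately large sets}: $C$ inscribed in the circle $|z| = \varsigma(1-1/(2\gamma x^{1/3}))$ and $D$ circumscribing the circle $|w| = \varsigma(1+1/(2\gamma x^{1/3}))$, both passing near $\varsigma$ with cusps of opening $\pm\pi/3$. On these contours the steep descent property of $\mathfrak{Re}\, g$ on $C$ (Proposition~\ref{prop contour C}) and the steep ascent property on $D$ (Proposition~\ref{prop contour D}) bound the exponential factor $e^{x(g(z)-g(w))}$ uniformly by $1 + O(x^{-1})$; the Pochhammer ratio $(q\mathpzc{d}/w;q)_\infty/(q\mathpzc{d}/z;q)_\infty$ and the kernel singularity $1/(z-w)$ are bounded by constants times $\gamma x^{1/3}$. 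The algebraic factor $(z/\varsigma)^{\theta\gamma x^{1/3}}(\varsigma/w)^{\nu\gamma x^{1/3}}$ is majorized by $(1-1/(2\gamma x^{1/3}))^{\theta\gamma x^{1/3}}(1+1/(2\gamma x^{1/3}))^{-\nu\gamma x^{1/3}}$ for $\nu,\theta \geq -L$, which together with the length of $C$ and $D$ produces an $O(1)$ bound on the integral that can be sharpened to $\leq 1$.

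The main obstacle will be the regime where $\nu$ or $\theta$ is very negative (below $-L$), since there the algebraic factor no longer decays and the front-tail contour deformation used above no longer works. Here the definition $\mathfrak{b}(\nu) = \tau(n_\nu)$ is precisely designed so that $\tilde A$ reduces to $(\tau(m_\theta)/\tau(n_\nu)) A(n_\nu, m_\theta)$, which after a rear contour deformation (replacing $C$ by a circle of radius $c^{-}$ to pick up large powers of $z$, compensated by the $\tau$ factor) reproduces the same integrand structure analyzed in Appendix~\ref{appendix  bounds}. I expect the tightest constant check to occur at the transition between the two regimes of $\mathfrak{b}$, requiring the parameter $L$ in \eqref{L r relation} to be chosen consistently with the contour radii; this is a book-keeping exercise rather than a conceptual difficulty, but it is what the proof needs to verify carefully.
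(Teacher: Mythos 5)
Your trace-class argument, via the rank-$(x{-}1)$ decomposition $\tilde A = \sum_{l=1}^{x-1}\tilde\phi_l\otimes\tilde\psi_l$ and square-summability of $\tilde\phi_l,\tilde\psi_l$, is a valid alternative to the paper's route (which instead bounds the double integral directly in the mixed regions $\nu\le -L,\,\theta>-L$ and $\nu>-L,\,\theta\le -L$ to show Hilbert--Schmidt decay, deferring the diagonal regions to Lemmas~\ref{convergence on moderately large sets}, \ref{decay tails} and Proposition~\ref{trace class}). Both are legitimate; yours trades contour manipulations for the observation that finite rank plus $l^2$ factors forces trace class.

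The gap is in the pointwise bound $|\tilde A(\nu,\theta)|\le 1$. Your plan is to estimate the double contour integral directly: steep descent/ascent bounds the exponential by $1+O(x^{-1})$, the Pochhammer ratio and the singularity $1/(z-w)$ contribute $O(\gamma x^{1/3})$, the algebraic factor contributes a damping $e^{-\nu/2-\theta/2}$, and the contour length closes the circle to an $O(1)$ constant ``that can be sharpened to $\le 1$.'' But the last step is exactly the difficulty: a saddle-point estimate gives you a constant, not the constant $1$. At $\nu=\theta=-L$ your damping factor is $e^{L}$, and you would need the remaining pieces to contract by a matching factor for every $x$ --- nothing in your argument forces that. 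The paper avoids this entirely by a structural observation: the biorthogonality $\sum_n\phi_l(n)\psi_m(n)=\delta_{lm}$ of Lemma~\ref{lemma scalar product} gives $\tilde A^2=\tilde A$ (the conjugation by $\mathfrak b/\tau$ is diagonal so commutes through), so $\tilde A$ is an idempotent and the paper deduces $\|\tilde A\|=1$ and hence $|\tilde A(\nu,\theta)|=|\langle e_\nu,\tilde A e_\theta\rangle|\le\|\tilde A\|\le 1$. This is the missing idea in your proposal; the contour estimates cannot see it because they throw away the algebraic cancellation encoded in the biorthogonality. (As a separate note, ``$P^2=P$ implies $\|P\|=1$'' is delicate for oblique projections, but that caveat applies to the paper's argument as written; what you need to recover is the passage through $\tilde A^2=\tilde A$, which your integral-estimate approach does not use at all.)
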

\begin{proof}
We have to provide some robust estimate only in the cases
\begin{equation*}
\nu \leq -L, \ \theta > -L \qquad \text{ and } \qquad \nu > -L,\ \theta \leq -L. 
\end{equation*}
The reason of this is that, when both $\nu$ and $\theta$ are bigger than $-L$ we can use results of Lemma \ref{convergence on moderately large sets} and Lemma \ref{decay tails}, whereas, if they are both smaller than $-L$, this lemma is essentially a change of variable of Proposition \ref{trace class}.

In both cases we provide arguments analogous to those used in Appendix \ref{appendix  bounds}, readapting them to a kernel expressed in a double integral form instead of as a sum of rank one operators.

To start assume that $\nu \leq -L, \ \theta > -L$. We have
\begin{equation*}
\tilde{A}(\nu, \theta) = \frac{\tau(n_\nu)}{\varsigma^{ \theta \gamma x^{\frac{1}{3}} }} \frac{1}{(2 \pi \mathrm{i})^2} \int_D \frac{dw}{w} \int_C dz \frac{z^{\theta \gamma x^{1/3}} }{w^{\nu \gamma x^{1/3}}} \frac{ e^{ x g(z) }}{ e^{ x g(w) } } \frac{(q \mathpzc{d}/w;q)_\infty }{(q \mathpzc{d}/z;q)_\infty } \frac{1}{z-w} ,
\end{equation*}
so that, the $z$ integral is easily evaluated through the saddle point techniques we already exploited above and the $w$ integral can be estimated using the residue theorem. We can then write
\begin{equation*}
|\tilde{ A}(\nu, \theta)| \leq \Gamma_x \tau(n_\nu) \max_{k=2, \dots ,x}\{ (\xi_k s_k)^{\nu \gamma x^{1/3}} \},
\end{equation*}
where $\Gamma_x$ is a constant with respect to $\nu$ depending on $x$ (and in fact possibly diverging in $x$). Definition \eqref{tau} of $\tau$, at this point, guarantees that $|\tilde{A}(\nu , \theta)|$
becomes geometrically small for $\nu \ll 0$.

Let's now set $\nu > -L,\ \theta \leq -L$. In this case the rescaled kernel becomes
\begin{equation} \label{tilde tilde 1}
\tilde{ A}(\nu, \theta) = \frac{ \varsigma^{ \nu \gamma x^{1/3 } } }{ \tau(m_\theta) } \frac{1}{(2 \pi \mathrm{i} )^2} \int_D \frac{dw}{w} \int_C dz \frac{z^{\theta \gamma x^{1/3}} }{w^{\nu \gamma x^{1/3}}} \frac{e^{ x g(z) }}{ e^{ x g(w) } } \frac{(q \mathpzc{d}/w;q)_\infty }{(q \mathpzc{d}/z;q)_\infty } \frac{1}{z-w}.
\end{equation}
We can move the integration contour $C$ to encircle $D$, by noticing that this does not create any additional contribution to the integral. For example, we can take $C$ to be a circle of radius $R$ slightly bigger than $\max_{k=2, \dots, x}\{ (\xi_k s_k) \}$. Estimating the $w$ integral in \eqref{tilde tilde 1} with a saddle point method we obtain
\begin{equation*}
|\tilde{ A}(\nu, \theta)| \leq \Gamma'_x \frac{1}{\tau(m_\theta)} R^{\theta \gamma x^{1/3}},
\end{equation*}
which is again a geometrically small quantity for fixed $x$ and $\theta \ll 0$. Also in this case the quantity $\Gamma'_x$ is a constant with respect to $\theta$ depending on $x$. This proves that $\tilde{A}$ is a trace class operator.

Following the same strategy used in the proof of Lemma \ref{lemma invertible} we can also show that
\begin{equation*}
\tilde{ A }^2= \tilde{ A },
\end{equation*}
which implies that the operator norm $\Vert \tilde{ A } \Vert = 1$ and hence \eqref{tilde tilde component}.
\end{proof}
We can now start the evaluation of the Fredholm determinant of the kernel $\tilde{f}\tilde{A}$.

\begin{Lemma} \label{lemma determinant rear tail}
Then there exist constants $c,x^*$ such that, for each $x>x^*$, we have
\begin{equation} \label{determinant rear tail}
\left| \det \left( \mathbf{1} - \tilde{f} \tilde{A }\right)_{l^2( \dot{\mathbb{Z}} )} - \det \left( \mathbf{1} - \tilde{f} \tilde{A }\right)_{l^2( \dot{\mathbb{Z}} \cap [-L,\infty ))} \right| < e^{-c L \gamma x^{1/3}}.
\end{equation}
\end{Lemma}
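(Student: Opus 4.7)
The plan is to invoke Lipschitz continuity of the Fredholm determinant with respect to the trace norm. Let $P_L$ denote the orthogonal projection of $\ell^2(\dot{\mathbb{Z}})$ onto $\ell^2(\dot{\mathbb{Z}} \cap [-L,\infty))$ and set $Q_L = \mathbf{1} - P_L$. Under this identification $\det(\mathbf{1} - \tilde f \tilde A)_{\ell^2(\dot{\mathbb{Z}} \cap [-L,\infty))} = \det(\mathbf{1} - P_L \tilde f \tilde A P_L)_{\ell^2(\dot{\mathbb{Z}})}$, so the left-hand side of \eqref{determinant rear tail} is a difference of two Fredholm determinants on the same ambient Hilbert space. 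Applying the standard inequality $|\det(\mathbf{1}-K_1) - \det(\mathbf{1}-K_2)| \leq \|K_1 - K_2\|_1 \exp(1 + \|K_1\|_1 + \|K_2\|_1)$ with $K_1 = \tilde f \tilde A$ and $K_2 = P_L \tilde f \tilde A P_L$ reduces the problem to bounding
\begin{equation*}
\|K_1 - K_2\|_1 \leq \|Q_L \tilde f \tilde A\|_1 + \|P_L \tilde f \tilde A Q_L\|_1
\end{equation*}
by $e^{-cL\gamma x^{1/3}}$, while keeping $\|K_1\|_1$ and $\|K_2\|_1$ uniformly bounded in $L$ and $x$. The latter point is a scaling-invariant restatement of Proposition \ref{trace class}.

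To obtain the exponential decay I exploit the rank-$(x-1)$ decomposition $\tilde f \tilde A = \sum_{l=1}^{x-1} (\tilde f \tilde\phi_l) \otimes \tilde\psi_l$, with $\tilde\phi_l(\nu) = (\mathfrak{b}(\nu)/\tau(n_\nu)) \phi_l(n_\nu)$ and $\tilde\psi_l(\theta) = (\tau(m_\theta)/\mathfrak{b}(\theta)) \psi_l(m_\theta)$. Using $\|u \otimes v\|_1 = \|u\|_2 \|v\|_2$, the trace-norm bound reduces to a finite sum of products such as $\|Q_L \tilde f \tilde\phi_l\|_2 \|\tilde\psi_l\|_2$ and $\|\tilde f \tilde\phi_l\|_2 \|Q_L \tilde\psi_l\|_2$. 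The crucial observation is that for $\nu \leq -L$ the definition \eqref{b coefficient} sets $\mathfrak{b}(\nu) = \tau(n_\nu)$, which cancels the $\tau$-prefactors and reduces the tilde quantities to their unscaled counterparts: $\tilde f(\nu) \tilde\phi_l(\nu) = f(n_\nu)\phi_l(n_\nu)$ and $\tilde\psi_l(\nu) = \psi_l(m_\nu)$.

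It thus suffices to verify that the restricted $\ell^2$ norms $\sum_{\nu \in \dot{\mathbb{Z}},\, \nu \leq -L} |f(n_\nu) \phi_l(n_\nu)|^2$ and $\sum_{\nu \in \dot{\mathbb{Z}},\, \nu \leq -L} |\psi_l(m_\nu)|^2$ decay as $e^{-c L \gamma x^{1/3}}$. This is precisely the regime covered by the geometric bounds of Appendix \ref{appendix  bounds} used in Proposition \ref{trace class}: the choice of parameters $v < b < \mathpzc{d} \leq \inf_k \xi_k s_k \leq \sup_k \xi_k s_k < c < \inf_k \xi_k/s_k$ in \eqref{tau} forces each of $|f(n)\phi_l(n)|$ and $|\psi_l(n)|$ to decay geometrically as $n \to -\infty$ at a rate independent of $x$. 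The change of variables $n_\nu = -\eta x + \nu \gamma x^{1/3}$ converts the tail sum over $\nu \leq -L$ into a geometric series starting from $n = n_{-L}$, which (since $|n_{-L}| \geq L \gamma x^{1/3}$ for $x$ large and $\eta > 0$) yields a bound of the form $e^{-c L \gamma x^{1/3}}$.

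The main obstacle is the careful bookkeeping needed to track the scaling prefactors $\mathfrak{b}, \tau, \varsigma^{\nu \gamma x^{1/3}}$ through the change of variables, and to verify that the geometric decay constants inherited from Appendix \ref{appendix  bounds} are uniform in both $L$ and $x$; once this is done, the Lipschitz estimate for Fredholm determinants immediately yields \eqref{determinant rear tail}.
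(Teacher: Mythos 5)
The Lipschitz route runs into two obstacles that the paper's Hadamard argument is specifically designed to avoid.

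First, the inequality $|\det(\mathbf{1}-K_1)-\det(\mathbf{1}-K_2)|\leq \|K_1-K_2\|_1\exp(1+\|K_1\|_1+\|K_2\|_1)$ is only useful here if $\|K_1\|_1=\|\tilde f\tilde A\|_1$ and $\|K_2\|_1$ are bounded \emph{uniformly in $x$} (and $L$). You assert this is ``a scaling-invariant restatement of Proposition~\ref{trace class},'' but that proposition only shows $fK$ is trace class for each fixed $x$; it contains no uniformity statement. In fact, the biorthogonality of Lemma~\ref{lemma scalar product} gives $\operatorname{tr}(\tilde A)=x-1$ and $\tilde A^2=\tilde A$, so $\|\tilde A\|_1\geq x-1$ grows linearly in $x$. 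Whether the multiplication by $\tilde f$ brings $\|\tilde f\tilde A\|_1$ down to $O(1)$ is a delicate question that would itself require the full saddle-point machinery of Lemmas~\ref{convergence on moderately large sets}--\ref{bounded discrete Kernel}; it is not free. Without that, the prefactor $\exp(\|K_1\|_1+\|K_2\|_1)$ can ruin the claimed decay $e^{-cL\gamma x^{1/3}}$.

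Second, the tail estimate you invoke from Appendix~\ref{appendix  bounds} is stated as: ``for all fixed $x$, there exist constants $\Gamma_1,\Gamma_2>0$ such that $|\phi_l(n)|<\Gamma_1 e^{-\Gamma_2|n|}$.'' The constant $\Gamma_1$ there depends on $x$ (it is built from the integrand of $\phi_l$, which contains $w^{-(x+n-l+1)}$ and the $x$-dependent $F(w)$), so citing these bounds alone does not yield a tail sum $\leq e^{-cL\gamma x^{1/3}}$ uniformly in $x$. For $\nu\leq -L$ the real control comes from $\tilde f(\nu)\approx q^{(r-\nu)\gamma x^{1/3}}$, which decays exponentially in $\gamma x^{1/3}$; the paper uses exactly this, together with the \emph{$x$-uniform} bounds on the rescaled kernel from Lemmas~\ref{convergence on moderately large sets}, \ref{decay tails}, \ref{bounded discrete Kernel}, rather than the raw Appendix~\ref{appendix  bounds} estimates.

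The paper instead expands both Fredholm determinants term by term, restricts the $k$-fold sum to configurations with at least one index $\nu_1\leq -L$, and applies Hadamard's inequality to each $k\times k$ determinant of the symmetrized kernel $\sqrt{\tilde f(\nu)\tilde f(\theta)}\,\tilde A(\nu,\theta)$. This is purely row-by-row linear algebra: it never needs a uniform bound on $\|\tilde f\tilde A\|_1$, and the exponential smallness comes from the single factor $e^{c_2\nu_1\gamma x^{1/3}}$ attached to the row with $\nu_1\leq -L$, while the $k^{k/2}/k!$ growth handles summability in $k$. To salvage your argument you would need to either (i) prove $\sup_{x,L}\|\tilde f\tilde A\|_1<\infty$ (nontrivial), or (ii) replace the Lipschitz estimate for $\det$ by one for the regularized determinant $\det_2$ (which only needs Hilbert--Schmidt control) and separately handle the trace term $\exp(\operatorname{tr}\tilde f\tilde A)$. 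Either fix is substantially more work than the paper's direct Hadamard expansion.
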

\begin{proof}
First we set constants $c_1, c_2$ such that the estimate
\begin{equation*}
\left| \sqrt[]{\tilde{f}(\nu) \tilde{f}(\theta)  } \tilde{A}(\nu, \theta) \right| < \begin{cases} 
\frac{1}{\gamma x^{1/3}} c_1 e^{-\theta - \nu}, &\qquad \text{if } \theta, \nu \in [-L, \infty),\\
\frac{1}{\gamma x^{1/3}} e^{c_2 (\min(\theta, -L) + \min(\nu, -L)) \gamma x^{1/3} }, &\qquad \text{else},
\end{cases}
\end{equation*}
holds, for each $x$ sufficiently large. This is always possible as a result of Lemmas \ref{convergence on moderately large sets}, \ref{decay tails}, \ref{bounded discrete Kernel} and from the fact that $\tilde{f}(\nu)$, given in \eqref{f tilde}, decays exponentially in $x^{1/3}$ when $\nu \ll 0$. In particular we can easily deduce the additional bound
$$
\left| \sqrt[]{\tilde{f}(\nu) \tilde{f}(\theta)  } \tilde{A}(\nu, \theta) \right| < c_3 \frac{1}{\gamma x^{1/3}} e^{-|\theta|-|\nu|},
$$
true for any $\theta, \nu$, for some constant $c_3$. We have
\begin{equation}\label{rear tail lhs}
\begin{split}
\text{lhs of \eqref{determinant rear tail}} &=
\left| \sum_{k \geq 1} \frac{(-1)^k}{k!} \sumDot_{(\nu_1, \dots, \nu_k) \notin [-L, \infty)^k} \det_{i,j=1}^k \left( \sqrt[]{\tilde{f}(\nu_i) \tilde{f}(\nu_j)  } \tilde{A}(\nu_i, \nu_j) \right) \right|\\
& \leq \sum_{k \geq 1} \frac{(-1)^k}{(k-1)!} \sumDot_{\nu_1 \leq -L} \sumDot_{ \nu_2, \dots, \nu_k } \left| \det_{i,j=1}^k \left( \sqrt[]{\tilde{f}(\nu_i) \tilde{f}(\nu_j)  } \tilde{A}(\nu_i, \nu_j) \right) \right|.
\end{split}
\end{equation}
Thanks to the Hadamard's inequality we can estimate the determinantal term in the sum as
\begin{equation*}
\begin{split}
\left| \det_{i,j=1}^k \left( \sqrt[]{\tilde{f}(\nu_i) \tilde{f}(\nu_j)  } \tilde{A}(\nu_i, \nu_j) \right) \right| &\leq \prod_{i=1}^k\left( \sum_{j=1}^k \tilde{f}(\nu_i) \tilde{f}(\nu_j) \left| \tilde{A}(\nu_i, \nu_j) \right|^2 \right)^{1/2} \\
&\leq \frac{k^{k/2}}{(\gamma x^{1/3})^k}e^{c_2 \nu_1 \gamma x^{1/3}} c_3^{k-1} \prod_{j=2}^k e^{-|\nu_j|},
\end{split}
\end{equation*}
so that, using this bound in \eqref{rear tail lhs}, we obtain our result.
\end{proof}

\begin{Lemma} \label{lemma determinant front tail is small}
Take constants $L,\delta$ such that $-L<r$ and $\delta \in (0, 1/3)$. Then there exist constants $C,x^*$ such that, for each $x>x^*$, we have
\begin{equation} \label{determinant front tail is small}
\left| \det \left( \mathbf{1} - \tilde{f} \tilde{A} \right)_{l^2( \dot{\mathbb{Z}} \cap [-L,\infty ))} - \det \left( \mathbf{1} - \tilde{f} \tilde{ A }\right)_{l^2( \dot{\mathbb{Z}} \cap [-L, x^{\delta/3} ))} \right| < C e^{ - x^{\delta/3}}.
\end{equation}
\end{Lemma}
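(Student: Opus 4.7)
The plan is to expand the difference of Fredholm determinants as a sum of contributions indexed by the number of points $k$ and by how many of these points lie beyond the cutoff $x^{\delta/3}$, then to bound each contribution using the front-tail estimate of Lemma \ref{decay tails} together with Hadamard's inequality.

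Concretely, writing
\begin{equation*}
\det(\mathbf{1}-\tilde f\tilde A)_{l^2(\dot{\mathbb Z}\cap[-L,\infty))}-\det(\mathbf{1}-\tilde f\tilde A)_{l^2(\dot{\mathbb Z}\cap[-L,x^{\delta/3}))}=\sum_{k\ge 1}\frac{(-1)^k}{k!}S_k,
\end{equation*}
where $S_k$ is the sum of $k\times k$ determinants $\det[\sqrt{\tilde f(\nu_i)\tilde f(\nu_j)}\,\tilde A(\nu_i,\nu_j)]$ over those $(\nu_1,\dots,\nu_k)\in(\dot{\mathbb Z}\cap[-L,\infty))^k$ for which at least one coordinate exceeds $x^{\delta/3}$. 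By symmetry I can bound $|S_k|$ by $k$ times the sum in which $\nu_1\ge x^{\delta/3}$ is forced.

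For a summand with $\nu_1\ge x^{\delta/3}$, Hadamard's inequality gives
\begin{equation*}
\left|\det_{i,j=1}^k\!\left(\sqrt{\tilde f(\nu_i)\tilde f(\nu_j)}\,\tilde A(\nu_i,\nu_j)\right)\right|\le k^{k/2}\prod_{i=1}^k\Bigl(\sum_{j=1}^k|\tilde A(\nu_i,\nu_j)|^2\Bigr)^{1/2},
\end{equation*}
where I have used $|\tilde f|\le 1$. For the row indexed by $\nu_1$ I use Lemma \ref{decay tails}, which (for $x$ large enough) yields $|\tilde A(\nu_1,\nu_j)|\le (\gamma x^{1/3})^{-1}e^{-\nu_1-\nu_j}$, so the row norm is bounded by $(\gamma x^{1/3})^{-1}e^{-\nu_1}(\sum_j e^{-2\nu_j})^{1/2}$. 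For the other rows I apply a uniform bound of the form $|\tilde A(\nu_i,\nu_j)|\le c_3(\gamma x^{1/3})^{-1}e^{-|\nu_i|-|\nu_j|}$ exactly as in the proof of Lemma \ref{lemma determinant rear tail}, which remains valid here because in the relevant region both $\nu_i$ and $\nu_j$ exceed $-L$. The crucial factor $e^{-\nu_1}$ produced by the row at $\nu_1$ is then summed against the counting measure on $\dot{\mathbb Z}\cap[x^{\delta/3},\infty)$, producing a factor proportional to $e^{-x^{\delta/3}}$, while the remaining sums over $\nu_2,\dots,\nu_k$ are each bounded by a geometric series and contribute a constant factor $\Gamma^{k-1}$ independent of $x$.

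Combining these estimates and summing over $k$, the series $\sum_k k^{k/2}\Gamma^{k-1}/(k!(\gamma x^{1/3})^k)$ converges to a finite constant, giving the claimed bound $Ce^{-x^{\delta/3}}$. The only non-routine point is to make sure that the front-tail estimate of Lemma \ref{decay tails} can be used with cutoff $L'=x^{\delta/3}$; the statement of that lemma allows $L'$ to be as large as a power of $x$, so this is automatic. Thus the main obstacle is bookkeeping rather than a conceptual difficulty: one must keep the decay factor $e^{-\nu_1}$ separate from the generic row bounds in order to generate the small parameter $e^{-x^{\delta/3}}$ instead of merely a constant.
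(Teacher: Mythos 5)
Your proof is correct and takes essentially the same approach as the paper: the paper omits the proof, remarking only that it uses the exponential bound \eqref{tail} of Lemma \ref{decay tails} with $L'=x^{\delta/3}$ and parallels the argument of Lemma \ref{lemma determinant rear tail}, and your write-up fleshes out precisely that intended argument (symmetrize to force $\nu_1\geq x^{\delta/3}$, apply Hadamard with the Lemma \ref{decay tails} estimate on the row at $\nu_1$ and the uniform bound on the others, then sum). The one small imprecision — calling the sums over $\nu_2,\dots,\nu_k$ on $\dot{\mathbb{Z}}$ ``a constant factor $\Gamma^{k-1}$ independent of $x$'' when they actually scale like $(\gamma x^{1/3})^{k-1}$ — is harmless because those factors cancel the $(\gamma x^{1/3})^{-k}$ coming from the kernel bounds, and the same cancellation is left implicit in the paper's proof of Lemma \ref{lemma determinant rear tail}.
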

\begin{proof}
The proof of \eqref{determinant front tail is small} makes use of the exponential bound \eqref{tail} choosing $L'=x^{\delta/3}$ and it is similar to that of Lemma \ref{lemma determinant rear tail}, therefore we omit it.
\end{proof}

\begin{Remark} \label{remark choice sets}
The statement of Proposition \ref{prop discrete Airy} not only tells us that 
$$
\det (\mathbf{1} - \tilde{f} \tilde{A}) \xrightarrow[x\to \infty]{} F_2(r),
$$
but also it gives us an estimate of the error depending on $x$ and this will be essential in the proof of Theorem \ref{theorem: baik rains limit H}. To measure such error term, namely $\frac{1}{\gamma x^{1/3}} R_x^{(1)}$ in \eqref{Tracy Widom limit}, we approximated the kernel $\tilde{f} \tilde{A}$ on $l^2(\dot{\mathbb{Z}}^2)$ with its truncated version defined only on $(\dot{\mathbb{Z}} \cap [-L , x^{\delta/3}])^2$. The choice of the supremum of the segment $[-L, x^{\delta / 3}]$ is actually very relevant and possibly differentiate our analysis of the Fredholm determinant from that of earlier works, such as \cite{BFS2007}. Had we considered the convergence of $\tilde{f}\tilde{A}$ only on compact sets like $[-L,L']$, with $L'$ being some finite constant, we would have ended up, in Lemma \ref{lemma determinant front tail is small} (replacing every $x^{\delta / 3}$ with $L'$), with a bound like
\begin{equation} \label{wrong estimate}
\text{\emph{lhs of }}\eqref{determinant front tail is small} < C e^{-L'}.
\end{equation}
This clearly would have not been enough for our purposes, as the right hand side of \eqref{wrong estimate} has no dependence on $x$ and in particular does not decay when $x$ becomes infinite.
\end{Remark}

\begin{proof}[Proof of Proposition \ref{prop discrete Airy}]
To prove this result we first use Lemma \ref{lemma determinant rear tail} and Lemma \ref{lemma determinant front tail is small} to restrict our attention to the Fredholm determinant of $\tilde{f} \tilde{A}$ in $l^2(\dot{\mathbb{Z}} \cap[-L,x^{\delta/3}] )$. The error we make while considering this restriction is exponentially small in $x$ and hence it is irrelevant when it comes to a decomposition like \eqref{Tracy Widom limit}. Using results of Lemma \ref{convergence on moderately large sets} we have
\begin{gather}
\tilde{A} (\nu, \theta) = \frac{1}{\gamma x^{1/3}}K_{\text{Airy}}(\nu, \theta) + \frac{1}{(\gamma x^{1/3})^2} Q(\nu, \theta) + \mathcal{O}(x^{2\delta/3-1}), \\
\tilde{f}(\nu) = \mathbbm{1}_{[r, \infty)}(\nu) + \Delta_r(\nu),
\end{gather}
where the term $\Delta_r$ is simply expressed as
\begin{equation*}
\Delta_r(\nu) = \begin{cases}
\frac{1}{1 + q^{(\nu - r)\gamma x^{1/3}}}, \qquad & \text{if } \nu < r,\\
\frac{ - q^{(\nu - r) \gamma x^{1/3} }}{1+ q^{(\nu - r)\gamma x^{1/3}}}, \qquad & \text{if } \nu \geq r.
\end{cases}
\end{equation*}
In order to give a sharp estimate of the difference
\begin{equation*}
\det \left( \mathbf{1} - \tilde{f} \tilde{A} \right)_{l^2(\dot{\mathbb{Z}} \cap[-L,x^{\delta/3}] )} - \det(\mathbf{1} - \mathbbm{1}_{[r, \infty)}K_{\text{Airy}})_{\mathcal{L}^2[-L, x^{\delta/3} )},
\end{equation*}
we plan to use the determinantal identity
\begin{equation} \label{determinant sum matrices identity}
\det(B^{(1)} + \cdots + B^{(N)}) = \sum_{ \substack{ \cup_{i=1}^N I_i = \{ 1, \dots, k \} \\ I_i \cap I_j = \emptyset \text{ if }i\neq j } } \det(B^{(I_1, \dots, I_N)}),
\end{equation}
holding for generic $k \times k$ matrices $B^{(1)}, \dots, B^{(N)}$. The matrix $B^{(I_1, \dots, I_N)}$ appearing in the right hand side of \eqref{determinant sum matrices identity} is constructed taking columns of the $B^{(j)}$'s according to the choice of the partition $I_1, \dots, I_N$ of the set $\{1, \dots, k\}$ as
\begin{equation*}
B^{(I_1, \dots, I_N)}_{i,j} = B^{(l)}_{i,j} \qquad \text{if } i\in I_l.
\end{equation*}
We set
\begin{equation*} 
	\begin{aligned}
&B^{(1)}_{i,j} = \mathbbm{1}_{[r, \infty)} (\nu_i) K_{\text{Airy}}(\nu_i, \nu_j), \qquad & B^{(2)}_{i,j} = \frac{1}{\gamma x^{1/3}} \mathbbm{1}_{[r, \infty)}(\nu_i) Q(\nu_i , \nu_j),\\
&B^{(3)}_{i,j} = \Delta_r(\nu_i) K_{\text{Airy}}(\nu_i, \nu_j), &B^{(4)}_{i,j} = \frac{1}{\gamma x^{1/3}} \Delta_r(\nu_i) Q(\nu_i, \nu_j),\\
& B^{(5)}_{i,j} = \mathcal{O} (x^{2(\delta-1)/3})
	\end{aligned}
\end{equation*}
and we write the Fredholm determinant of $\tilde{f} \tilde{ A}$ as
\begin{equation*}
\det \left( \mathbf{1} - \tilde{f} \tilde{A} \right)_{l^2(\dot{\mathbb{Z}} \cap[-L,x^{\delta/3}] )} = 1 + \sum_{k \geq 1} \frac{(-1)^k}{k!} \int_{-L}^{x^{\delta/3}} d\nu_1\cdots \int_{-L}^{x^{\delta/3}} d\nu_k \det_{i,j=1}^k( B^{(1)}_{i,j} + \cdots + B^{(5)}_{i,j}) + \mathcal{O}(1/x^{1/3}),
\end{equation*}
where the remainder term comes from approximating the sum $\sumDot$ with the integral sign and clearly does not play a role in the proof of properties \eqref{remander bounded},\eqref{remander continuous}. 
Using  \eqref{determinant sum matrices identity} we have
\begin{equation} \label{determinant sum 5 matrices}
\det(B^{(1)} + \cdots + B^{(5)}) = \sum_{ \substack{ \cup_{i=1}^5 I_i = \{ 1, \dots, k \} \\ I_i \cap I_j = \emptyset \text{ if }i\neq j } } \det(B^{(I_1, I_2, I_3, I_4, I_5)})
\end{equation}
and we aim to estimate which term of \eqref{determinant sum 5 matrices}, after an integration over $[-L, x^{\delta/3}]^k$, is of order 1 or of order $x^{-1/3}$. The Hadamard inequality provides for the bound
\begin{equation*}
\left| \det(B^{(I_1, I_2, I_3, I_4, I_5)}) \right| \leq \prod_{l=1}^5 \prod_{i \in I_l} \left( \sum_{j=1}^k |B^{(l)}_{i,j} |^2 \right)^{1/2},
\end{equation*}
while \eqref{exponential bound kernel} allows us to write
\begin{equation*}
	\begin{aligned}
&B^{(1)}_{i,j} < c_1 e^{-c_2 \nu_i}, \qquad & B^{(2)}_{i,j} < \frac{1}{\gamma x^{1/3}} c_1 e^{- c_2 \nu_i}, \qquad
&B^{(3)}_{i,j} < c_1 q^{|\nu_i - r|\gamma x^{1/3}},\\ &B^{(4)}_{i,j} < \frac{c_1 q^{|\nu_i - r|\gamma x^{1/3}}}{\gamma x^{1/3}}, \qquad
& B^{(5)}_{i,j} = \mathcal{O} (x^{2(\delta-1)/3}).
	\end{aligned}
\end{equation*}
Integrating the generic term of \eqref{determinant sum 5 matrices} we obtain
\begin{equation} \label{bounds matrices B}
\int_{-L}^{x^{\delta/3}} d\nu_1\cdots \int_{-L}^{x^{\delta/3}} d\nu_k \left| \det(B^{(I_1, I_2, I_3, I_4, I_5)}) \right|=k^{k/2} \mathcal{O}\left(x^{-|I_2|/3 - |I_3|/3 - 2|I_4|/3 + |I_5|(\delta - 2/3)} \right),
\end{equation}
where the exponents $-|I_3|/3$ and $-2|I_4|/3$ appear due to the fact that the function $q^{|\nu -r|\gamma x^{1/3}}$ is exponentially small in $x$ outside of a neighborhood of size $x^{-1/3}$ of $r$. We can now easily deduce the expansion
\begin{equation*}
\begin{split}
\det \left( \mathbf{1} - \tilde{f} \tilde{A} \right)_{l^2( \dot{\mathbb{Z}} \cap [-L, x^{\delta/3} ])} &=  
\det( \mathbf{1} - \mathbbm{1}_{[r, \infty)} K_{\text{Airy}} )_{\mathcal{L}^2([-L,x^{\delta/3}])}\\
&+ \sum_{k \geq 1} \frac{(-1)^k}{k!} \int_{-L}^{x^{\delta/3}} d\nu_1\cdots \int_{-L}^{x^{\delta/3}} d\nu_k \sum_{ \substack{ I_1 \cup I_2 \cup I_3 =\{1, \dots ,k\} \\ I_i\cap I_j = \emptyset \text{ if } i\neq j \\ |I_1| = k-1 } }\det_{i,j=1}^k( B^{(I_1,I_2,I_3,\emptyset,\emptyset)}_{i,j} )\\
&+ \smallO(x^{-1/3}).
\end{split}
\end{equation*}
This last equality is equivalent to \eqref{Tracy Widom limit}, where the term $(\gamma x^{1/3})^{-1}R_x^{(1)}$ is essentially given by the second addend in the right hand side. Using \eqref{bounds matrices B} we can also prove the boundedness property \eqref{remander bounded}, due to the generality of $r$.

Now that we have an expression for the remainder term $R_x^{(1)}$, we can also show its "continuity" with respect to the parameter $r$ expressed in \eqref{remander continuous}. To do this we take $\epsilon$ small enough so that $r - \epsilon \gg -L$ and we estimate of the difference
\begin{equation} \label{difference remainder}
\int_{-L}^{x^{\delta/3}} d\nu_1\cdots \int_{-L}^{x^{\delta/3}} d\nu_k  \det_{i,j=1}^k ( B^{(I_1,I_2,I_3,\emptyset,\emptyset)}_{i,j}(r^*) ) - \det_{i,j=1}^k ( B^{(I_1,I_2,I_3,\emptyset,\emptyset)}_{i,j}( r^* - \frac{1}{\gamma x^{1/3}})  ),
\end{equation}
for $r^* \in [r - \epsilon, r]$, in case exactly one between $|I_2|$ and $|I_3|$ is equal to 1 and the other one is 0. We point out that, with a little abuse of notation, we highlighted the dependence on $r$ of the matrices $B^{(I_1, I_2,I_3,\emptyset,\emptyset)}$.

Defining matrices $C^{(1)}, C^{(2)}$ as
\begin{equation*}
C^{(1)} = B^{(I_1,I_2,I_3,\emptyset,\emptyset)}(r ), \qquad C^{(2)}= B^{(I_1,I_2,I_3,\emptyset,\emptyset)}_{i,j}(r +\frac{1}{\gamma x^{1/3}}) - B^{(I_1,I_2,I_3)}_{i,j}(r),
\end{equation*}
we can write the integrand of \eqref{difference remainder} as
\begin{equation*}
\det(C^{(1)}) - \det(C^{(1)} -C^{(2)} ) = \sum_{ \substack{ J_1 \cup J_2 =\{1, \dots, k\} \\ J_1 \cap J_2 = \emptyset \\ J_2 \neq \emptyset } } \det ( C^{(J_1, J_2)} ).
\end{equation*}
Using again the Hadamard inequality we can estimate each term $\det(C^{(J_1, J_2)})$ and integrating over $\nu_1, \dots \nu_k$ we find that, due to the presence of columns of $C^{(2)}$, the generic expression \eqref{difference remainder} converges to 0 as $x$ grows and this is uniformly for $r^* \in [r - \epsilon, r]$.
\end{proof}

Let's now see what is the asymptotic behavior of remaining terms of \eqref{V}.

\begin{Prop} \label{proposition limit V 2}
Recall choices \eqref{scaling t}, \eqref{scaling zeta}. Then we have
\begin{equation}\label{asymptotic dummy limit}
\begin{split}
\frac{1}{\gamma x^{1/3}} \left( t a_0(\mathpzc{d}) - \polygamma_0(1/\zeta) - 2 \polygamma_0(q) + \polygamma_0(q \zeta) - x h_0(\mathpzc{d} ) \right)
= r-\varpi^2 + \frac{1}{\gamma x^{1/3}}R^{(2)}_x(r).
\end{split}
\end{equation}
The error term $R_x^{(2)}$ satisfies the following properties
\begin{enumerate}
\item for each $r^* \in \mathbb{R}$, there exists $M_{r^*}>0$ such that, for all $x$,
\begin{equation} \label{R^2 bounded}
\left | R^{(2)}_x (r^*) \right | < M_{r^*};
\end{equation}
\item there exist $\epsilon >0$, such that, foe all $r^* \in [r- \epsilon, r]$ we have
\begin{equation} \label{continuity R^2}
\lim_{x \to \infty} \left( R^{(2)}_x (r^*) - R^{(2)}_x (r^* - \frac{1}{\gamma x^{1/3}}) \right) = 0.
\end{equation}
uniformly.
\end{enumerate}

\end{Prop}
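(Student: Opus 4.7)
The plan is to Taylor-expand each ingredient of the left-hand side in the small parameter $1/x^{1/3}$, using the explicit forms of $\kappa_\varpi,\eta_\varpi$ in \cref{k perturbed,eta perturbed} together with a sharp asymptotic of $\polygamma_0$ at large argument. The very first step is purely algebraic: using $\eta_0=\kappa_0 a_0-h_0$ and the identity $h_2a_1-h_1a_2=2\gamma^3 a_1$ (which follows at once from \eqref{eq: k0 eta0 gamma0}), the substitution $t=\kappa_\varpi x$ gives
\[
t\,a_0(\mathpzc{d})-x\,h_0(\mathpzc{d}) = \eta_0 x+\frac{2\gamma^2 a_0(\mathpzc{d})\,\varpi}{a_1(\mathpzc{d})}\,x^{2/3},
\]
so the $x$- and $x^{2/3}$-contributions we must eventually cancel are already laid bare.

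Next I would derive the asymptotic of the $\polygamma_0$ combination. Setting $y:=\eta_\varpi x-\gamma x^{1/3}r$, one has $1/\zeta=-q^y\to 0^-$ and $q\zeta=-q^{-(y-1)}\to-\infty$, so from the series $\polygamma_0(w)=\sum_{m\ge 1}w^m/(1-q^m)$ it is immediate that $\polygamma_0(1/\zeta)=O(q^y)$. For the large-argument piece, introduce $\phi(y):=-\polygamma_0(-q^{-y})-y$; the functional equation $\polygamma_0(qw)-\polygamma_0(w)=w/(w-1)$ translates into
\[
\phi(y+1)-\phi(y)=-\frac{1}{1+q^{-1-y}},
\]
which is exponentially small for large $y$, so $\phi$ converges to some constant $\phi_\infty$ as $y\to\infty$, with $\phi(y)-\phi_\infty=O(q^{y+1})$. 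Evaluating at integer $y=n$ via the exact identity $(-q^{-n};q)_\infty=2q^{-n(n+1)/2}(-q;q)_n(-q;q)_\infty$ identifies $\phi_\infty$ as an explicit $q$-constant, so that
\[
\polygamma_0(q\zeta)-\polygamma_0(1/\zeta) = -y+c(q)+O(q^{y-1})
= -\eta_\varpi x+\gamma x^{1/3}r+c(q)+O(q^{y-1}).
\]

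Combining the two steps, the expansion of $\eta_\varpi x$ from \eqref{eta perturbed}, when processed through $-\eta_\varpi x$, exactly annihilates the $\eta_0 x$ and $x^{2/3}$ terms produced in the first step. What survives at order $x^{1/3}$ is the $\gamma x^{1/3}(r-\varpi^2)$ promised by the proposition, together with a bounded $O(1)$ residue collecting $c(q)$, $-2\polygamma_0(q)$, and the exponentially small remainder $O(q^{y-1})$. Dividing by $\gamma x^{1/3}$ yields $r-\varpi^2$ as the leading term and identifies $R^{(2)}_x(r)$ with this $O(1)$ residue. This algebraic cancellation is really the only delicate point: it is the exact reason for the specific perturbative forms of $\kappa_\varpi$ and $\eta_\varpi$, and I would make sure to record the intermediate identities cleanly rather than just chasing coefficients.

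Finally the boundedness \eqref{R^2 bounded} is immediate since $R^{(2)}_x(r)$ is a sum of $q$-constants plus an exponentially decaying tail in $y=\eta_\varpi x-\gamma x^{1/3}r$; for any fixed $r^*$ the tail is bounded uniformly in $x$. For the continuity \eqref{continuity R^2}, observe that replacing $r^*$ by $r^*-(\gamma x^{1/3})^{-1}$ amounts to shifting $y\mapsto y+1$: the $q$-constants are unchanged, while the tail changes by $\phi(y+1)-\phi(y)=O(q^{y+1})$, which tends uniformly to zero on any small neighborhood of $r$. The main (mild) obstacle is purely bookkeeping: one must track the various $q$-dependent constants carefully enough to verify that they collect into a single bounded function of $r$ that is insensitive to the discrete shift $y\mapsto y+1$ up to exponentially small corrections; the functional-equation argument above is precisely what makes this transparent.
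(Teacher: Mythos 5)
Your approach is structurally the same as the paper's: both isolate $\polygamma_0(q\zeta)=-y+O(1)$ and then chase the $\kappa_\varpi,\eta_\varpi$ algebra. Your functional-equation route via $\phi(y)=-\polygamma_0(-q^{-y})-y$ is tidier than the paper's term-pairing trick and it makes the shift-continuity property \eqref{continuity R^2} immediate; that part of your plan is fine (one small caveat: the series $\sum_{m\ge 1}z^m/(1-q^m)$ you invoke converges only for $|z|<1$, so you should derive $\polygamma_0(qw)-\polygamma_0(w)=w/(w-1)$ from the other representation $\polygamma_0(z)=\sum_{n\ge 0}q^nz/(1-q^nz)$ used in \eqref{real polygamma}, which is valid for the large arguments you actually need; and the claim that $\phi$ converges to a single constant $\phi_\infty$ is more than the one-step functional equation alone gives you, though boundedness and $\phi(y+1)-\phi(y)\to0$ are all the proposition requires).

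The genuine gap is the sentence ``What survives at order $x^{1/3}$ is the $\gamma x^{1/3}(r-\varpi^2)$ promised by the proposition.'' You never carry out the cancellation, and if you do it with the definitions exactly as written in \eqref{k perturbed}--\eqref{eta perturbed} you do \emph{not} get $r-\varpi^2$. Using your own identity $h_2a_1-h_1a_2=2\gamma^3a_1$, one finds $\kappa_\varpi-\kappa_0=\frac{2\gamma^2\varpi}{a_1x^{1/3}}$ and $\eta_\varpi-\eta_0=(\kappa_\varpi-\kappa_0)a_0+\frac{2\gamma\varpi^2}{x^{2/3}}$, hence the exact identity
\begin{equation*}
\kappa_\varpi a_0(\mathpzc{d})-h_0(\mathpzc{d})-\eta_\varpi \;=\;\mathpzc{d}\,g'(\mathpzc{d})\;=\;-\frac{2\gamma\varpi^2}{x^{2/3}}.
\end{equation*}
Feeding this into your cancellation produces the coefficient $\gamma x^{1/3}(r-2\varpi^2)$, off by a factor of $2$. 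The paper's own proof reaches $r-\varpi^2$ through a Taylor expansion of $g'(\mathpzc{d})$ around $\varsigma$ that relies on the estimate $g'(\varsigma)=O(1/x)$ from Proposition~\ref{prop g properties}; but the same algebra as above shows $\varsigma g'(\varsigma)=-\gamma\varpi^2/x^{2/3}+O(1/x)$, i.e.\ $g'(\varsigma)$ is actually $O(x^{-2/3})$, so that expansion drops a term of the same order as the one it keeps. The discrepancy points to a likely missing factor $\tfrac12$ in front of the $\varpi^2/x^{2/3}$ term in \eqref{eta perturbed} (compare with the continuum KPZ-scaling formula \eqref{height average scaling theory}, which \emph{does} carry the $\tfrac12$). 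You cannot simply assert that the coefficient works out; you need to carry the algebra through explicitly and, if you hit $r-2\varpi^2$, reconcile it with the definitions of the scaling parameters before concluding.
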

\begin{proof}
First we see that the term
$$
\polygamma_0(1/\zeta) + 2 \polygamma_0 (q)=\sum_{n \geq 0} \left( \frac{q^n/\zeta}{1-q^n/\zeta} + 2 \frac{q^{n+1}}{1-q^{n+1}} \right) 
$$
plays no role in the limit as it is a bounded quantity in $x$ for each fixed $r$.

Less trivial is to calculate the limiting form of $\polygamma_0( q \zeta)$, which is a summation like
$$
\sum_{n \geq 0} \frac{q^{n+1-X}}{1+q^{n+1-X}},
$$
for $X$ being large. The kicker here is understanding that the main contribution to the sum is given by terms where $m$ runs between $0$ and $2\ceil{X}$ \footnote{here $\ceil{}$ is the ceiling function}. Coupling the $(k-1)$th and the $(2 \ceil{X} - k -1)$th addends and using the simple inequality
\begin{equation*}
1 - \frac{1 - q^2}{1 + q^2 + q^{k - X} + q^{X-k+2} } \leq \frac{1}{1+q^{X-k}} + \frac{1}{1+q^{X-2\ceil{X}+k}} \leq 1,
\end{equation*}
we see that
\begin{equation*}
\sum_{n \geq  0} \frac{1}{1+q^{X-n-1} }= X + \mathcal{O}(1).
\end{equation*}
We are interested in the case when $X=\eta x - \gamma x^{1/3} r$, so that, plugging this result into \eqref{asymptotic dummy limit} we are left to calculate
\begin{equation*}
\begin{split}
&\lim_{x \rightarrow \infty} \frac{1}{\gamma x^{1/3}} \left(  \kappa x a_0(\mathpzc{d}) - x h_0(\mathpzc{d}) - \eta x + r \gamma x^{1/3}  \right)\\
&= r + \lim_{x \rightarrow \infty}\frac{x^{2/3}}{\gamma} \mathpzc{d} g'( \mathpzc{d} ),
\end{split}
\end{equation*}
which gives \eqref{asymptotic dummy limit} and \eqref{R^2 bounded} after expanding $g'$ around its critical point $\varsigma$ as
\begin{equation*}
g'(\mathpzc{d}) \approx \frac{1}{2}g'''(\varsigma)( \mathpzc{d} - \varsigma)^2=\frac{\varsigma^2}{2 \gamma^2}g'''(\varsigma) \frac{\varpi^2}{x^{2/3}} + \smallO(x^{-2/3}).
\end{equation*}
This procedure also proves the boundedness of the remainder $R^{(3)}_x$ due to the generality of $r$.

Result \eqref{continuity R^2} follows from expression \eqref{asymptotic dummy limit}. We have
\begin{equation*}
R^{(2)}_x(r^*) = -\gamma x^{1/3} r^* - \sum_{n \geq 0} \left( \frac{q^n/\zeta}{ 1 - q^n / \zeta } - \frac{q^{n+1} \zeta}{1 - q^{n+1} \zeta} \right) + (\text{terms independent of $r^*$}), 
\end{equation*}
where $\zeta = -q^{-\eta x + \gamma x^{1/3} r^*}$. In this way the difference $R^{(2)}_x(r^*) - R^{(2)}_x(r^*-1/(\gamma x^{1/3}))$ becomes
\begin{equation*}
-1 + \frac{q^{\eta x - \gamma x^{1/3} r^*}}{1 + q^{\eta x - \gamma x^{1/3} r^*}} + \frac{q^{-\eta x + \gamma x^{1/3} r^*}}{ 1 + q^{-\eta x + \gamma x^{1/3} r^*} },
\end{equation*}
which converges to zero exponentially as $x$ goes to infinity.
\end{proof}
Lastly we state the convergence result for terms $\tilde{\Phi}_x^{(i)},\tilde{\Phi}_x, \tilde{\Psi}_x^{(j)},\tilde{\Psi}_x$. The procedure closely traces what was done in Lemmas \ref{convergence on moderately large sets}, \ref{decay tails}.
\begin{Prop} \label{prop Upsilon}
We have
\begin{equation} \label{upsilon convergence}
\frac{ \mathpzc{d} }{\gamma x^{1/3}} \sumDot_{\nu \in \dot{\mathbb{Z}} } \sum_{\substack{ i,j =1,2 \\ (i,j)\neq (1,1)}} \tilde{f}(\nu) \tilde{\Phi}_x^{(i)}(\nu) \tilde{\Psi}_x^{(j)}(\nu) = \sum_{\substack{ i,j =1,2 \\ (i,j)\neq (1,1)}} \int_r^{\infty} \Upsilon_{-\varpi}^{(i)}(\nu) \Upsilon_{\varpi}^{(j)}(\nu) d \nu + \frac{1}{\gamma x^{1/3}}R^{(3)}_x(r),
\end{equation}
where functions $\Upsilon^{(1)}, \Upsilon^{(2)}$ are defined in \eqref{Upsilon12} and the error term $R_x^{(3)}$ satisfies the following properties
\begin{enumerate}
\item for each $r^* \in \mathbb{R}$ there exists $M_{r^*}>0$ such that, for all $x$,
\begin{equation} \label{R^3 bounded}
\left | R^{(3)}_x (r^*) \right | < M_{r^*};
\end{equation}
\item there exists $\epsilon > 0$, such that, for all $r^* \in [r - \epsilon,r]$ we have
\begin{equation} \label{R^3 continuous}
\lim_{x \to \infty} \left( R^{(3)}_x (r^* + \frac{1}{\gamma x^{1/3}}) - R^{(3)}_x (r^*) \right) = 0.
\end{equation}
uniformly.
\end{enumerate}
\end{Prop}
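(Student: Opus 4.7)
The plan is to extend the saddle/steep--descent machinery developed in Lemmas \ref{convergence on moderately large sets}, \ref{decay tails}, \ref{bounded discrete Kernel} to each of the three non--trivial products $\tilde\Phi_x^{(i)}\tilde\Psi_x^{(j)}$ with $(i,j)\in\{(1,2),(2,1),(2,2)\}$, then promote the resulting pointwise asymptotics to a controlled Riemann sum against $\tilde f$. As a preliminary algebraic step, I would rewrite the explicit terms $\tilde\Phi_x^{(1)},\tilde\Psi_x^{(1)}$ in terms of the saddle $\varsigma$ via $(\mathpzc{d}/\varsigma)^{\nu\gamma x^{1/3}}= e^{-\nu\varpi}(1+o(1))$ and $x(g(\mathpzc d)-g(\varsigma))=\varpi^{3}/3+o(1)$, which immediately delivers
\begin{equation*}
\tilde\Psi_x^{(1)}(\nu)=\varsigma^{\nu\gamma x^{1/3}}e^{xg(\varsigma)}\Upsilon_{\varpi}^{(1)}(\nu)(1+o(1)),\qquad
\tilde\Phi_x^{(1)}(\nu)=\tfrac{1}{\mathpzc d}\varsigma^{-\nu\gamma x^{1/3}}e^{-xg(\varsigma)}\Upsilon_{-\varpi}^{(1)}(\nu)(1+o(1)).
\end{equation*}

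The core analytic step is the saddle--point expansion of $\tilde\Phi_x^{(2)}$ and $\tilde\Psi_x^{(2)}$. I would deform $D_{1}$ (resp.\ $C_{1}$) near $\varsigma$ along the steepest--ascent (resp.\ descent) rays of $\mathfrak{Re}\,g$ at angles $\pm\pi/3$, arranging the deformation so that the pole at $\mathpzc d$ stays on the correct side of the contour (to the right of $D_{1}$, to the left of $C_{1}$); this is permissible in the parameter regime of Definition \ref{conditions on parameters} thanks to Propositions \ref{prop contour C}, \ref{prop contour D}. After the rescaling $w=\varsigma(1+w'/(\gamma x^{1/3}))$ (resp.\ $z=\varsigma(1+\tilde w/(\gamma x^{1/3}))$) and Taylor expansions analogous to \eqref{taylor expansion integrand 1}--\eqref{taylor expansion integrand 3}, the factor $1/(w-\mathpzc d)$ produces $\gamma x^{1/3}/(\mathpzc d(w'+\varpi))$, the $q$--Pochhammer ratio collapses to $1$ at leading order, and the Jacobian $\varsigma\,dw'/(\gamma x^{1/3})$ cancels the enhancement from the pole up to a factor $\varsigma/\mathpzc d\to 1$. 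Recognizing the definition \eqref{Upsilon12} in the resulting oscillatory integral yields, uniformly on $\nu\in[-L,x^{\delta/3}]$,
\begin{equation*}
\tilde\Phi_x^{(i)}(\nu)\tilde\Psi_x^{(j)}(\nu)=\tfrac{1}{\mathpzc d}\Upsilon_{-\varpi}^{(i)}(\nu)\Upsilon_{\varpi}^{(j)}(\nu)+\mathcal{O}(x^{-1/3}),
\end{equation*}
together with the exponential estimate $|\Upsilon_{-\varpi}^{(i)}\Upsilon_{\varpi}^{(j)}|\le c_1 e^{-c_2\nu}$, obtained by deforming the integration variable along rays $|\varrho|e^{\pm i\pi/3}$ or $|\varrho|e^{\pm 2i\pi/3}$ as in the bound \eqref{kernel Q estimate}.

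Next I would handle the front tail $\nu>x^{\delta/3}$ exactly as in Lemma \ref{decay tails}: reshaping $C_1$ (resp.\ $D_1$) to lie in $\{|z|\le\varsigma(1-2/(\gamma x^{1/3}))\}$ (resp.\ $\{|w|\ge\varsigma(1+2/(\gamma x^{1/3}))\}$) gives an extra factor $e^{-2\nu}$ from $(z/\varsigma)^{\nu\gamma x^{1/3}}$ while $x(g(z)-g(\varsigma))$ stays bounded, so $|\tilde\Phi_x^{(i)}(\nu)\tilde\Psi_x^{(j)}(\nu)|\le Ce^{-\nu}$. The rear tail $\nu<-L$ is controlled by the $\tau$--weight built into $\mathfrak b$ in \eqref{b coefficient}: up to the cancelling conjugation, the pointwise bounds of Appendix \ref{appendix bounds} (Proposition \ref{useful inequalities}), combined with $\tilde f(\nu)\le c\,q^{(r-\nu)\gamma x^{1/3}}$ for $\nu<r$, yield geometric decay in $|\nu|$; condition \eqref{bound R_a} is exactly what guarantees this estimate.

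Combining pointwise convergence with the two tail bounds, dominated convergence upgrades $\tfrac{1}{\gamma x^{1/3}}\sumDot_{\nu}\tilde f(\nu)\tilde\Phi_x^{(i)}(\nu)\tilde\Psi_x^{(j)}(\nu)\to\int_r^\infty\Upsilon_{-\varpi}^{(i)}(\nu)\Upsilon_{\varpi}^{(j)}(\nu)\,d\nu$, and tracking the $\mathcal O(x^{-1/3})$ correction in the saddle expansion defines $R_x^{(3)}(r)$; its boundedness \eqref{R^3 bounded} follows from the uniformity in $r^*$ of the exponential bounds, and the continuity \eqref{R^3 continuous} from the fact that $\tilde f(\nu+1/(\gamma x^{1/3}))-\tilde f(\nu)$ is localized in a window of size $\mathcal O(1/x^{1/3})$ around $r^*$ and vanishes upon integration against the bounded functions $\Upsilon_{-\varpi}^{(i)}\Upsilon_{\varpi}^{(j)}$. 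The hardest step will be the simultaneous saddle expansion of $\tilde\Phi_x^{(2)}\tilde\Psi_x^{(2)}$: both factors carry a pole at $\mathpzc d$ located only $\mathcal O(x^{-1/3})$ from the saddle, so one must deform $C_1$ and $D_1$ through overlapping neighborhoods of $\varsigma$ keeping the pole on the correct side of each contour without spoiling the steep--descent inequalities, and the surviving denominators $1/(\tilde w+\varpi)$ and $1/(w'+\varpi)$ are precisely what encodes the non--trivial $\varpi$--dependence appearing in the Baik--Rains limit.
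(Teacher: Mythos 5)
Your proposal takes essentially the same route as the paper: saddle--point expansion near $\varsigma$ for the bulk region, front--tail control by the contour deformations of Lemma~\ref{decay tails}, and dominated convergence for the Riemann sum, with the three non--trivial cases $(i,j)\in\{(1,2),(2,1),(2,2)\}$ treated separately. Your algebraic preprocessing of $\tilde\Phi^{(1)}_x,\tilde\Psi^{(1)}_x$ via $(\mathpzc{d}/\varsigma)^{\nu\gamma x^{1/3}}\to e^{-\nu\varpi}$ and $x(g(\mathpzc d)-g(\varsigma))\to\varpi^3/3$ is exactly right and recovers $\Upsilon^{(1)}_{\pm\varpi}$, and you correctly identify the pole at $\mathpzc d$ near the saddle as the source of $\Upsilon^{(2)}_{\pm\varpi}$.

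One point needs sharpening: the rear--tail ($\nu<-L$) control. The rescaled products $\tilde\Phi^{(i)}_x\tilde\Psi^{(j)}_x$ carry no residual $\tau$ or $\mathfrak b$ weight (those conjugate out when passing to Proposition~\ref{proposition scaling}), and the fixed--$x$ estimates of Proposition~\ref{useful inequalities} are not the right tool here: their constants $\Gamma_1,\Gamma_2$ depend on $x$ and so cannot certify the uniform geometric decay in $|\nu|$ that the Riemann--sum argument requires. The mechanism the paper actually uses is a direct estimate on the contour integrals: for $(1,2)$ one shows $|z/\mathpzc d|^{\gamma x^{1/3}\nu}\le e^{\nu(\varpi-h)+\smallO(1)}$ on a steep--descent circle inside $\mathsf D(0,\mathpzc d)$, and for $(2,1)$ and $(2,2)$ the diverging factor $|\mathpzc d/w|^{\gamma x^{1/3}\nu}$ on $D_1$ is bounded using $\max_{w\in D_1}|w|\le 2a/(1+\sigma)$; in both cases the factor $\tilde f(\nu)\le q^{(r-\nu)\gamma x^{1/3}}$ then produces a net bound of order $\left(\tfrac{2a}{(1+\sigma)\mathpzc d}\,q\right)^{|\nu|\gamma x^{1/3}}$, and it is precisely the hypothesis $2a/(1+\sigma)<q^{-1}\mathpzc d$ from \eqref{bound R_a}---which you do correctly flag as the crucial condition---that makes this decay. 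So your conclusion is right, but the invocation of Appendix~\ref{appendix  bounds} should be replaced by this direct, case--by--case contour estimate.
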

\begin{proof}
By making use of the saddle point method it is easy, at this stage, to obtain a convergence result as
\begin{equation}\label{upsilon convergence 1}
\mathpzc{d} \sum_{\substack{ i,j =1,2 \\ (i,j)\neq (1,1)}} \tilde{f}(\nu) \tilde{\Phi}_x^{(i)}(\nu) \tilde{\Psi}_x^{(j)}(\nu) \xrightarrow[x \rightarrow \infty]{} \sum_{\substack{ i,j =1,2 \\ (i,j)\neq (1,1)}} \mathbbm{1}_{[r, \infty)}(\nu) \Upsilon_{-\varpi}^{(i)}(\nu) \Upsilon_\varpi^{(j)}(\nu) 
\end{equation}
and to estimate the error term depending on $x$ and $r$.
This holds for $\nu$ in relatively large sets of the form $[-L, x^{\delta/3}]$ for some fixed $L>0$ and $\delta \in (0, 1/3)$. Also, assuming a suitably strong decay of tails of summands in the left hand side of \eqref{upsilon convergence 1}, this easily leads to an expansion of type \eqref{upsilon convergence}. 

Using suitable deformations of contours in the integral expressions of $\tilde{\Phi}_x^{(2)}, \tilde{\Psi}_x^{(2)}$, such as those seen in Lemma
\ref{decay tails}, one can also establish an exponential type decay for the front tail ($\nu \gg 0$) of \eqref{upsilon convergence 1}.

The exponential decay we have in the left hand side of \eqref{upsilon convergence 1}, when $\nu$ goes to $- \infty$ is slightly different from what seen previously and in particular, here we make use of the hypothesis $\frac{2 a }{1 + \sigma} < q^{-1} \mathpzc{d}$ stated in \eqref{bound R_a}. We evaluate separately each one of the three summands in the left hand side of \eqref{upsilon convergence}, when $(i,j)$ is either equal to $(1,2),(2,1)$ or $(2,2)$.

We start with the $(i,j)=(1,2)$ term. From expressions reported in Proposition \ref{proposition scaling}, we write
\begin{equation}\label{upsilon convergence 2}
\mathpzc{d} \tilde{f}(\nu) \tilde{\Phi}_x^{(1)}(\nu) \tilde{\Psi}_x^{(2)}(\nu) = \frac{1}{1 + q^{\gamma x^{1/3} (\nu - r) }} \int_{C_1} \frac{dz}{2 \pi \mathrm{i} z} \left( \frac{z}{\mathpzc{d}} \right)^{\gamma x^{1/3} \nu} e^{x(g(z) - g(\mathpzc{d}))} \frac{(q z / \mathpzc{d} ; q)_\infty}{(q;q)_\infty}.
\end{equation}
We take $C_1$ to be a circle of center in 0 and radius $\varsigma (1 - h/(\gamma x^{1/3}))$, where $h$ is chosen so that $|z|<\mathpzc{d}$ for all $z$ in $C_1$ (e.g. take $h >\varpi$). When $x$ is large enough, such $C_1$ is a steep descent contour for $\mathfrak{Re}\{g\}$, as proven in Proposition \ref{prop contour C}. This, along with the fact that $\varsigma$ is a double critical point for $g$ allows us to state the bound
\begin{equation} \label{upsilon convergence 3}
|\exp\{ x ( g(z) - g(\mathpzc{d}) ) \}| < \text{const} \qquad \text{for all } z \in C_1.
\end{equation}
Moreover, the choice of $C_1$ also allows us to write
\begin{equation}\label{upsilon convergence 4}
\begin{split}
\left| \frac{z}{\mathpzc{d}} \right|^{\gamma x^{1/3} \nu} &= \exp\left\{ \gamma x^{1/3} \nu \left( \log \left( 1 - \frac{h}{\gamma x^{1/3}} \right) - \log \left( 1 - \frac{w}{\gamma x^{1/3}} \right)  \right) + \smallO(1) \right\}\\
&\leq \exp \left\{ \nu ( \varpi - h ) + \smallO (1) \right\},
\end{split}
\end{equation}
having used the simple logarithmic inequality $\frac{y}{1+y} \leq \log(1+y) \leq y$, valid for all $y>-1$. Despite the right hand side of \eqref{upsilon convergence 4} is a quantity which diverges exponentially when $\nu \to -\infty$, its contribution is easily balanced by the term $1/(1 + q^{\gamma x^{1/3} (\nu - r) } )$ in \eqref{upsilon convergence 2}, which, for $\nu < -L$, decays as $q^{-(1-r/L)\gamma x^{1/3} \nu }$. Following \eqref{upsilon convergence 3}, \eqref{upsilon convergence 4} we come to the estimate 
\begin{equation*}
|\eqref{upsilon convergence 2}| < \text{const} \frac{l(C_1)}{2\pi} e^{(h-\varpi)\nu } q^{-(1-r/L)\gamma x^{1/3} \nu } 
\end{equation*}
where in the right hand side the constant term also includes a trivial bound for the factor $\frac{(q z / \mathpzc{d} ; q)_\infty}{(q;q)_\infty}$. This is enough to show that for $L$ large enough, we have
\begin{equation} \label{upsilon convergence 5}
|\eqref{upsilon convergence 2}| < c_1 e^{ c_2 \gamma x^{1/3} \nu}, \qquad \text{for all } \nu < -L,  
\end{equation}
where $c_1$ and $c_2$ are two suitably chosen positive constants.

We now want to establish a type of bound similar to \eqref{upsilon convergence 5} for the term $(i,j)=(2,1)$ of the left hand side of \eqref{upsilon convergence}. Again, from \eqref{Phi 2}, \eqref{Psi 1} we write
\begin{equation} \label{upsilon convergence 6}
\mathpzc{d} \tilde{f}(\nu) \tilde{\Phi}_x^{(2)}(\nu) \tilde{\Psi}_x^{(1)}(\nu) = \frac{1}{1 + q^{\gamma x^{1/3} (\nu - r)}} \int_{D_1} \frac{dw}{2 \pi \mathrm{i} w} \left( \frac{\mathpzc{d}}{w} \right)^{\gamma x^{1/3} \nu} e^{ x ( g(\mathpzc{d}) -g(w) ) }
 \frac{(q \mathpzc{d} / w ; q)_\infty}{(q w / \mathpzc{d} ; q)_\infty} \frac{\mathpzc{d}}{w - \mathpzc{d}}.
\end{equation}
As a contour $D_1$ we can simply take the contour $D$ described in Proposition \ref{prop contour D}. Since we can always deform the integration contour in a neighborhood of size $ x^{-1/3}$ of $\varsigma$, without loss of generality, we assume that $\mathpzc{d}$ lies strictly at the left of $D_1$. With this choice, we know that $D_1$ is a steep ascent contour for $\mathfrak{Re}\{g\}$ and this, along with the fact that $\varsigma$ is  double critical point for $g$ implies the bound
\begin{equation*}
|\exp\{ x( g(\mathpzc{d}) - g(w) )\}| < \text{const} \qquad \text{for all } w \in D_1.
\end{equation*}
Another consequence of the choice of contour $D_1$ is that
\begin{equation*}
\max_{w \in D_1} |w| \leq \frac{2 a}{1+\sigma}, 
\end{equation*}
as reported in \eqref{inequality D}. This immediately gives us the estimate
\begin{equation*}
\left| \frac{\mathpzc{d}}{w} \right|^{\gamma x^{1/3} \nu} \leq \left| \frac{(1 +\sigma ) \mathpzc{d} }{2 a} \right|^{\gamma x^{1/3} \nu} \qquad \text{for all } w \in D_1,
\end{equation*}
since in this case $\nu$ is taken to be negative. In expression \eqref{upsilon convergence 6}, the contribution of the factor $\mathpzc{d}/(w-\mathpzc{d})$ is bounded, in absolute value, by a quantity of order $x^{1/3}$ and therefore we come to write
\begin{equation} \label{upsilon convergence 7}
|\eqref{upsilon convergence 6}| < \text{const} \frac{l(D_1)}{2 \pi} x^{1/3} \left| \frac{(1 + \sigma)}{2 a} q^{-(1 - r/L)}\mathpzc{d} \right|^{\gamma x^{1/3} \nu} .
\end{equation}
When $L$ is large enough, the assumption $2a/(1+\sigma) < q^{-1} \mathpzc{d}$ of \eqref{bound R_a} guarantees that the right hand side of \eqref{upsilon convergence 7} is bounded by an exponential function in $\gamma x^{1/3} \nu$ whenever $\nu<-L$ and this concludes our analysis of the rear tail of the term $(i,j)=(2,1)$.

To obtain the same type of result also for the case when $(i,j)=(2,2)$ one can reproduce, with minor adjustments, the same argument we used for $(i,j)=(2,1)$ and therefore we omit details on this part.

We have, at this point proved a bound for the summands in expression \eqref{upsilon convergence} of the form
\begin{equation*}
\left| \sum_{\substack{ i,j=1,2 \\ (i,j) \neq (1,1) }} \tilde{f}(\nu) \tilde{\Phi}_x^{(i)}(\nu) \tilde{\Psi}_x^{(j)}(\nu)  \right| < c_1 e^{c_2 \gamma x^{1/3} \nu} \qquad \text{for all }\nu<-L,
\end{equation*}
for suitably chosen positive constants $c_1, c_2$. This concludes our argument.
\end{proof}

\begin{Prop} \label{prop Upsilon 2}
We have
\begin{equation} \label{convergence third}
\frac{ \mathpzc{d} }{\gamma x^{1/3}} \sumDot_{\nu \in \dot{\mathbb{Z}}}(\tilde{f} \tilde{A} \tilde{\varrho} \tilde{f} \tilde{\Phi}_x)(\nu) \tilde{\Psi}_x(\nu) = \int_r^{\infty} d \nu \Upsilon_\varpi (\nu) \int_r^\infty d \lambda_1 \int_r^{\infty} d\lambda_2 \varrho_{\emph{Airy};r}(\nu, \lambda_1) K_{\emph{Airy}}(\lambda_1, \lambda_2) \Upsilon_{-\varpi}(\lambda_2) + \frac{1}{\gamma x^{1/3}}R^{(4)}_x(r),
\end{equation}
where the integral kernel $\varrho_{\emph{Airy};r}$ and the function $\Upsilon_\varpi$ were defined in \eqref{traci widom rho}, \eqref{Upsilon} and the error term $R_x^{(4)}$ satisfies the following properties
\begin{enumerate}
\item for each $r^*\in \mathbb{R}$ there exists $M_{r^*}>0$ such that, for all $x$
\begin{equation} \label{R4 bounded}
\left | R^{(4)}_x (r^*) \right | < M_{r^*};
\end{equation}
\item there exists $\epsilon>0$ such that, for all $r^* \in [r-\epsilon, r]$, we have
\begin{equation} \label{R4 continuous}
\lim_{x \to \infty} \left( R^{(4)}_x (r^*) - R^{(4)}_x ( r^* -\frac{1}{\gamma x^{1/3}} ) \right) = 0.
\end{equation}
uniformly.
\end{enumerate} 
\end{Prop}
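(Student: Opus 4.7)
The plan is to combine three ingredients already in place: the pointwise convergence of the discrete Airy kernel $\tilde{A}$ from Lemma \ref{convergence on moderately large sets}, the asymptotic estimates and rear-tail bounds for $\tilde\Phi_x,\tilde\Psi_x$ obtained in Proposition \ref{prop Upsilon}, and the uniform invertibility of $\mathbf{1}-\tilde f\tilde A$ established in Proposition \ref{lemma invertible}. I would begin by expanding the resolvent as a Neumann series $\tilde\varrho=\sum_{k\geq 0}(\tilde f\tilde A)^k$, so that the left-hand side of \eqref{convergence third} becomes a countable sum whose generic term has the form $\sumDot_{\nu,\theta_1,\ldots,\theta_{k+1}} \tilde\Psi_x(\nu)\tilde f(\nu)\tilde A(\nu,\theta_1)\cdots \tilde f(\theta_{k+1})\tilde\Phi_x(\theta_{k+1})$; each $k$-level term will be analyzed individually and then resummed.

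For each such term I would truncate every summation variable to a bulk interval $[-L,x^{\delta/3}]$ with $L>0$ fixed and $\delta\in(0,1/3)$. The rear-tail cutoff $\nu<-L$ (or any $\theta_i<-L$) is controlled by combining the exponential estimate for $\tilde A$ in Lemma \ref{bounded discrete Kernel} with the geometric decay of $\tilde f\tilde\Phi_x\tilde\Psi_x$ proved in Proposition \ref{prop Upsilon}, where the hypothesis $2a/(1+\sigma)<q^{-1}\mathpzc{d}$ of Definition \ref{conditions on parameters} is crucial. The front-tail cutoff $\nu>x^{\delta/3}$ is handled by the sub-exponential bound \eqref{tail} of Lemma \ref{decay tails}, giving an error of size $\exp(-x^{\delta/3})$. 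Both cutoff errors decay faster than $(\gamma x^{1/3})^{-1}$ and are absorbed in $R^{(4)}_x$.

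On the bulk region, the saddle point expansion of Lemma \ref{convergence on moderately large sets} gives $\gamma x^{1/3}\tilde A=K_{\text{Airy}}+\mathcal O(x^{-1/3})$, and the proof of Proposition \ref{prop Upsilon} yields $\tilde\Phi_x\to\Upsilon_{-\varpi}$, $\tilde\Psi_x\to\Upsilon_{\varpi}$ uniformly, while $\tilde f(\nu)=\mathbbm 1_{[r,\infty)}(\nu)+\Delta_r(\nu)$. Mimicking the multilinear bookkeeping of Proposition \ref{prop discrete Airy}, I would expand each $\tilde A$ and each $\tilde f$ into a leading and a remainder piece and retain only the leading combination. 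After resummation over $k$, this produces precisely
\[
\bigl\langle \Upsilon_\varpi,\; \mathbbm 1_{[r,\infty)} K_{\text{Airy}}\, \varrho_{\text{Airy};r}\, \mathbbm 1_{[r,\infty)} \Upsilon_{-\varpi}\bigr\rangle_{\mathcal L^2(\mathbb R)},
\]
which coincides with the right-hand side of \eqref{convergence third} upon using the symmetry of $K_{\text{Airy}}$ and $\varrho_{\text{Airy};r}$. The boundedness property \eqref{R4 bounded} will follow from the fact that every saddle-point and tail estimate above is uniform in $r^\ast$ on compact sets, and the near-continuity \eqref{R4 continuous} reduces, as in Propositions \ref{prop discrete Airy} and \ref{proposition limit V 2}, to the observation that the only $r^\ast$-dependence outside of smooth integrands is through $\tilde f$, whose discrete shift by $1/(\gamma x^{1/3})$ moves the transition region by a single site of $\dot{\mathbb Z}$ and yields a vanishing contribution.

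The main obstacle will be the uniformity with respect to the Neumann index $k$. While Proposition \ref{lemma invertible} guarantees $\lVert \tilde f\tilde A\rVert<1$ for each $x$, in order to commute the limit $x\to\infty$ with the infinite sum over $k$ one needs a bound of the form $\lVert(\tilde f\tilde A)^k\tilde f\tilde\Phi_x\rVert_{\ell^2(\dot{\mathbb Z})}\leq C\rho^k$ with $C$ and $\rho<1$ independent of $x$, together with the analogous $\mathcal L^2$ estimate for the Airy-side resolvent. Securing this uniform geometric decay --- by combining the exponential tail bounds of Lemmas \ref{convergence on moderately large sets}, \ref{decay tails} and \ref{bounded discrete Kernel} with the Hilbert--Schmidt estimates used in the proof of Proposition \ref{lemma invertible} --- is the technical heart of the argument; once it is in hand, dominated convergence turns the termwise analysis into the full statement \eqref{convergence third}.
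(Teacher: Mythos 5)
Your proposal is correct in spirit but takes a genuinely different decomposition from the paper's. The paper keeps $\tilde\varrho$ intact as a single kernel: it writes the left-hand side of \eqref{convergence third} as the triple sum
$\frac{\mathpzc{d}}{\gamma x^{1/3}}\sumDot_{\nu,\lambda_1,\lambda_2}\tilde f(\nu)\tilde A(\nu,\lambda_1)\tilde\varrho(\lambda_1,\lambda_2)\tilde f(\lambda_2)\tilde\Phi_x(\lambda_2)\tilde\Psi_x(\nu)$,
splits the index set into $[-L,x^{\delta/3}]^3$ and its complement, bounds the complement using the tail estimates of Lemmas \ref{decay tails}, \ref{bounded discrete Kernel} and Proposition \ref{prop Upsilon}, and applies the saddle-point expansion on the bulk. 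It thus never reopens the Neumann series; the limit of the resolvent kernel $\tilde\varrho(\lambda_1,\lambda_2)\to\varrho_{\text{Airy};r}(\lambda_1,\lambda_2)$ on the bulk is absorbed into the same three-variable analysis. Your route---expanding $\tilde\varrho=\sum_{k\geq 0}(\tilde f\tilde A)^k$, analysing each $(k+2)$-variable sum, and resumming---makes the mechanism more explicit layer by layer, but it trades the paper's implicit bulk estimate on $\tilde\varrho$ for the uniform-in-$k$ geometric bound you correctly flag as the technical heart. Be warned that this bound does not follow automatically from Proposition \ref{lemma invertible}: that proposition gives $\lVert fA\rVert<1$ for each fixed $x$ (not uniformly over $x$), and $\tilde f\tilde A$ is only a similarity transform of $fA$ by the diagonal factor $\mathfrak b(\nu)/\tau(n_\nu)$, so the spectrum transfers but the operator norm does not. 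To secure $\sup_{x\gg 1}\lVert\tilde f\tilde A\rVert<1$ you would need, for instance, to upgrade the pointwise saddle-point asymptotics of Lemma \ref{convergence on moderately large sets} together with the tail bounds to Hilbert--Schmidt convergence $\tilde f\tilde A\to\mathbbm 1_{[r,\infty)}K_{\text{Airy}}$, which then controls the norm for $x$ large since $\lVert\mathbbm 1_{[r,\infty)}K_{\text{Airy}}\rVert<1$. This is genuine extra work compared with the paper's Fredholm-determinant-style estimates, which use Hadamard's inequality and hence only require Hilbert--Schmidt (not operator-norm) control. Once that uniformity is in place, your dominated-convergence resummation and the $r^*$-continuity argument for $R_x^{(4)}$ go through just as for $R_x^{(1)}$ in Proposition \ref{prop discrete Airy}, and the limiting expression you wrote---using that $K_{\text{Airy}}$ and $\varrho_{\text{Airy};r}=(\mathbf 1-\mathbbm 1_{[r,\infty)}K_{\text{Airy}})^{-1}$ commute (and are symmetric) on $\mathcal L^2([r,\infty))$---does agree with the right-hand side of \eqref{convergence third}.
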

\begin{proof}
First we expand the expression in the left hand side of \eqref{convergence third} as 
\begin{equation} \label{convergence third summation}
\frac{ \mathpzc{d} }{\gamma x^{1/3}} \sumDot_{\nu , \lambda_1, \lambda_2\in \dot{\mathbb{Z} } } \tilde{f}(\nu) \tilde{A}(\nu,\lambda_1) \tilde{\varrho}(\lambda_1,\lambda_2) \tilde{f}(\lambda_2) \tilde{\Phi}_x(\lambda_2) \tilde{\Psi}_x(\nu).
\end{equation}
We can split the summation \eqref{convergence third summation} as
\begin{equation*}
\frac{ \mathpzc{d} }{\gamma x^{1/3}} \left( \sumDot_{(\nu , \lambda_1, \lambda_2)\in [-L, x^{\delta/3}]^3 } + \sumDot_{\nu , \lambda_1, \lambda_2\notin [-L, x^{\delta/3}]^3} \right) \left( \tilde{f}(\nu) \tilde{A}(\nu , \lambda_1) \tilde{\varrho}(\lambda_1, \lambda_2) \tilde{f}(\lambda_2) \tilde{\Phi}_x(\lambda_2) \tilde{\Psi}_x(\nu) \right).
\end{equation*}

Using estimates already encountered in the proofs of Proposition \ref{prop discrete Airy}, \ref{prop Upsilon}, we know that the contribution of summation where indices do not belong to $[-L,x^{\delta/3}]^3$ is exponentially small in some power of $x$. On the other hand, when all $\nu, \lambda_1, \lambda_2$ belong to $[-L,x^{\delta/3}]$, we can safely employ the saddle point method to estimate the summand terms in and obtain their expansion in power of $x^{-1/3}$, as done in \eqref{formula convergence on bounded sets} for $\tilde{A}$. This would ultimately lead to the convergence result \eqref{convergence third} and to a verification of properties \eqref{R4 bounded}, \eqref{R4 continuous} for the remainder term. The procedure is analogous to what explained throughout the rest of the section and therefore we do not describe its details any further. 
\end{proof}
\begin{proof}[Proof of Theorem \ref{theorem: baik rains limit H}]
Using a rather elementary argument, detailed in Secion 5 of \cite{FerrariVeto2015TWLimit}, it is possible to show that proving \eqref{eq: baik rains limit intro} is equivalent to showing that
\begin{equation*}
\lim_{x \rightarrow \infty} \mathbb{E}_{\text{HS}(\mathpzc{d},\mathpzc{d})} \left ( \frac{1}{(-q^{ \mathcal{H}(x, \kappa x) - \eta x + r \gamma x^{1/3}} ;q)_\infty} \right ) = F_\varpi (r).
\end{equation*}
To do so we use formula \eqref{stationary q laplace 2} to express the $q$-Laplace transform on the left hand side. We want to evaluate
\begin{equation*} \label{main theorem 1}
\lim_{x \to \infty} \frac{1}{(q;q)_\infty} \sum_{k\geq 0} \frac{(-1)^k q^{\binom{k+1}{2}}}{(q;q)_k} \left( V_x(\zeta q^{-k})- V_x(\zeta q^{-k -1})\right), 
\end{equation*}
and to do so we aim to bring the limit inside the summation symbol. We start by fixing a small number $\epsilon$ and we split the summation in \eqref{main theorem 1} into two different contributions. One comes from the sum over $k$ ranging in the region $[0, \epsilon \gamma x^{1/3}]$ and the other is given by $k$ in $(\epsilon \gamma x^{1/3}, \infty)$. For each of these terms we can use different estimates. 

We start with the latter, that is we take $k > \epsilon \gamma x^{1/3}$. A general inequality that can be deduced from the definition of $V_x=\lim_{v \to \mathpzc{d} } V_{x;v,\mathpzc{d}} $ and from Theorem \ref{theorem: q laplace shifted introduction}, in case $\zeta$ is a negative number, is
\begin{equation*}
\begin{split}
V_x(\zeta q^{-k}) &= \lim_{v \uparrow \mathpzc{d} } \frac{1}{1- v/ \mathpzc{d} } \mathbb{E}_{\text{HS}(v, \mathpzc{d})\otimes \mathsf{m}} \left ( \frac{1}{(\zeta q^{-k} q^{\mathcal{H}-\mathsf{m} };q)_\infty} \right )\\
&\leq \lim_{v \uparrow \mathpzc{d} } \frac{1}{1- v/ \mathpzc{d} } \mathbb{E}_{\text{HS}(v, \mathpzc{d})\otimes \mathsf{m}} \left ( \frac{1}{(\zeta q^{-k^*} q^{\mathcal{H} -\mathsf{m}};q)_\infty} \right ) = V_x(\zeta q^{-k^*}),
\end{split}
\end{equation*}
which holds for every $k^* < k$. By taking $k^*= \epsilon \gamma x^{1/3}$ and $r^*=r-\epsilon$ we obtain the estimate
\begin{equation}\label{bounded convergence 2}
\begin{split}
q^{\binom{k+1}{2}}  \left(  V_x(\zeta q^{-k})- V_x(\zeta q^{-k -1}) \right ) &\leq 2 q^{\binom{k+1}{2}} V_x(\zeta q^{(r^*-r)\gamma x^{1/3}})\\
&=2q^{\binom{k+1}{2}} \left( \gamma x^{1/3} \chi_\varpi(r^*) + \sum_{i=1}^4 S^{(i)}(r^*) R_x^{(i)}(r^*)  + \mathcal{O}(x^{-1/3}) \right).
\end{split}
\end{equation}
In the right hand side of \eqref{bounded convergence 2} we used results of Propositions \ref{prop discrete Airy}, \ref{proposition limit V 2}, \ref{prop Upsilon}, \ref{prop Upsilon 2} to provide the approximate expression of $V_x$. Function $\chi_\varpi$ was defined in \eqref{baik rains chi} and terms $S^{(i)}$'s are explicit, bounded functions which for convenience we do not report explicitly. We can therefore write
\begin{equation} \label{main theorem 2}
\frac{1}{(q;q)_\infty} \sum_{k > \epsilon \gamma x^{1/3}} \frac{(-1)^k q^{\binom{k+1}{2}}}{(q;q)_k} \left( V_x(\zeta q^{-k})- V_x(\zeta q^{-k -1})\right) = \smallO (e^{-c x^{2/3}}), 
\end{equation}
for some positive constant $c$, since, from \eqref{bounded convergence 2} we see that the right hand side is a quantity exponentially small in $x^{2/3}$, due to the presence of the term $q^{\binom{k+1}{2}}$.

We now consider the contribution of the summation in \eqref{main theorem 1}, when the index $k$ is less than $\epsilon \gamma x^{1/3}$. Once again, using results of Propositions \ref{prop discrete Airy}, \ref{proposition limit V 2}, \ref{prop Upsilon}, \ref{prop Upsilon 2} we have
\begin{equation}\label{bounded convergence 1}
\begin{split}
 V_x(\zeta q^{-k})- V_x(\zeta q^{-k -1})&=\gamma x^{1/3} \left ( \chi_\varpi (r - \frac{k}{\gamma x^{1/3}}) - \chi_\varpi (r - \frac{k+1}{\gamma x^{1/3}}) \right)\\
 &+\sum_{i=1}^4 S^{(i)} (r - \frac{k}{\gamma x^{1/3}})  \left( R^{(i)}_x(r - \frac{k}{\gamma x^{1/3}}) - R^{(i)}_x(r - \frac{k+1}{\gamma x^{1/3}}) \right)\\
 &+\mathcal{O}(x^{-1/3}),
\end{split}
\end{equation}
which immediately implies 
\begin{equation} \label{main theorem 3}
\left| V_x(\zeta q^{-k}) - V_x (\zeta q^{-k-1}) \right| \leq \text{const},
\end{equation}
after expanding $\chi_\varpi$ around $r - \frac{k}{\gamma x^{1/3}}$. 

We can finally evaluate the limit \eqref{main theorem 1}. Using the bound \eqref{main theorem 2}, we write
\begin{equation} \label{main theorem 4}
\begin{split}
\eqref{main theorem 1} &= \lim_{x \to \infty} \left( \sum_{k=0}^{\epsilon \gamma x^{1/3}} + \sum_{k > \epsilon \gamma x^{1/3}} \right) \frac{(-1)^k q^{\binom{k+1}{2}}}{(q;q)_\infty (q;q)_k} (V_x(\zeta q^{-k}) - V_x (\zeta q^{-k-1})) \\
&= \lim_{x \to \infty} \sum_{k \geq 0}  \frac{(-1)^k q^{\binom{k+1}{2}}}{(q;q)_\infty (q;q)_k} (V_x(\zeta q^{-k}) - V_x (\zeta q^{-k-1})) \mathbbm{1}_{[0,\epsilon \gamma x^{1/3}]}(k)
\end{split}
\end{equation}
and following estimate \eqref{main theorem 3}, we can employ the bounded convergence theorem to exchange the limit and summation symbols in the right hand side of \eqref{main theorem 4}. Here the pointwise convergence
\begin{equation*}
\lim_{x \to \infty} V_x(\zeta q^{-k}) - V_x (\zeta q^{-k-1}) = \frac{\partial}{\partial r} \chi_\varpi (r)
\end{equation*}
can be established through the expansion \eqref{bounded convergence 1}, using the fact that the difference between remainder terms $R_x^{(i)}$'s converges to zero, as reported in Propositions \ref{prop discrete Airy}, \ref{proposition limit V 2}, \ref{prop Upsilon}, \ref{prop Upsilon 2}. We can therefore write
\begin{equation*}
\eqref{main theorem 1} = \sum_{k \geq 0} \frac{(-1)^k q^{\binom{k+1}{2}}}{(q;q)_\infty (q;q)_k} \frac{\partial}{\partial r} \chi_\varpi (r) = \frac{\partial}{\partial r} \chi_\varpi (r),
\end{equation*}
which concludes the proof.
\end{proof}

\section{Specializations of the Higher Spin Six Vertex Model} \label{section specializations}
In this section we take a look at the most relevant degenerations of the Higher Spin Six Vertex Model. Letting parameters vary and considering different scalings we can study models which could be discrete or continuous both in time or space. 
\subsection{\texorpdfstring{Stationary $q$-Hahn particle process}.}
First we will consider the $q$-Hahn TASEP, a space-time discrete particle process introduced in \cite{PovolotskyQHahn} as a dual counterpart of a general chipping model solvable by coordinate Bethe Ansatz. As a consequence of exact  results obtained in Section \ref{section matching and fredholm} we will establish here determinantal formulas describing the position of a tagged particle for the model in the stationary regime and under certain assumptions on parameters we establish Baik-Rains fluctuations. 

The $q$-Hahn TASEP is a three parameters dependent simple exclusion process where particles, at each time step, move in a predetermined direction with jumps distributed according to a $q$-deformed Beta binomial law. This means that, recording the position of particles in the lattice at a specific time $t$ in a strictly decreasing sequence $\mathbf{y}(t)=\{ y_k(t) \}_{k \in \mathbb{Z}}$, then after a time unit, $\mathbf{y}(t)$ is updated to a new sequence
$$
\mathbf{y}(t+1) = \{ y_k(t) + \mathsf{J}_k^{t+1} \}_{k \in \mathbb{Z}},
$$
where the values of jumps $\mathsf{J}_k^{t+1}$ are chosen with probabilities $\mathbb{P}(\mathsf{J}_k^{t+1} = j | \mathbf{y}(t))$, given by
\begin{equation} \label{beta binomial}
\varphi_{q,\mu, \nu}(j\ |g_k) = \mu^{j} \frac{(\nu / \mu ; q)_j (\mu;q)_{g_k-j}}{(\nu;q)_{g_k}} \frac{(q;q)_{g_k}}{(q;q)_j (q;q)_{g_k-j}} \qquad \text{for }j=0, \dots, g_k
\end{equation}
and $g_k=y_{k-1}(t)- y_k(t)-1$ is the gap between the $(k-1)$-th and the $k$-th particle. The fact that, provided 
\begin{equation*}
0\leq \nu <\mu<1 \qquad \text{and} \qquad 0\leq q<1, 
\end{equation*}
$\varphi_{q, \mu, \nu}$ is a probability distribution is is consequence of the $q$-Gauss summation \eqref{q Gauss sum}.
\begin{figure}[b]
\centering
\includegraphics[scale=0.9]{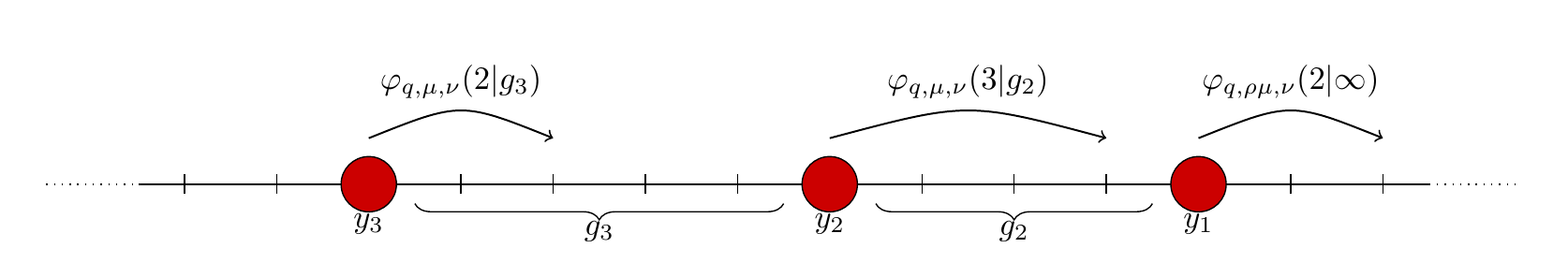}
\caption{\small A visualization of the dynamics of the $q$-Hahn TASEP.}
\end{figure}
The case when the system possesses a rightmost particle, say the one labeled with 1, can be considered ideally placing particles with labels 0, -1, -2,\dots infinitely far away. Here, when we are interested in the evolution of $y_1(t), \dots, y_x(t),$ we can reduce to study a model with only $x$ particles. This is due to the fact that the dynamics of particles $x+1, x+2, \dots$ cannot influence the motion of the ones to their right. In this case, Bethe Ansatz techniques are available (see \cite{Corwin2015qHahn}, \cite{BCPS2015SpectalTheory}) and, for the special initial conditions 
\begin{equation*}
y_k(0)= -k \qquad \text{a.s.} \qquad \text{for } k = 1, \dots, x 
\end{equation*}
the distribution of the single particle $y_x(t)$ exhibits a determinantal structure. This particular property was used in \cite{Veto15qHahn} to establish Tracy-Widom fluctuations for the integrated 
current.

Our goal is to study a different class of initial conditions, where particles fill, with different densities, expressed in terms of two parameters $0 < \mathpzc{d}_-,\mathpzc{d}_+ <1$, the regions respectively at the left and at the right of the origin. More specifically, these are given setting
\begin{equation} \label{q hahn 2 densities}
\begin{split}
&y_1(0)=-1 \qquad \text{a.s.},\\
&y_{k-1}(0) - y_k(0) - 1 \sim \begin{cases}
q\text{NB}(\nu, \mathpzc{d}_+) \qquad \text{if }k \leq 1\\
q\text{NB}(\nu, \mathpzc{d}_-) \qquad \text{if }k > 1,
\end{cases}
\end{split}
\end{equation}
and we refer to these as \emph{double sided $q$-negative binomial initial conditions}. In words, at time $t=0$, consecutive particles occupying the negative half line $\mathbb{Z}_{ \leq -1}$ (those with labels greater or equal than 1) are spaced with $q$-negative binomial distribution of parameters $(\nu, \mathpzc{d}_-)$ and those in the portion of the lattice $\mathbb{Z}_{\geq -1}$ are spaced with $q$-negative binomial law of parameters $(\nu,\mathpzc{d}_+)$. An important particular case of initial conditions \eqref{q hahn 2 densities} is given setting 
\begin{equation} \label{q hahn equal densities}
\mathpzc{d}_- = \mathpzc{d}_+ =\mathpzc{d}.
\end{equation}
As proven in Proposition \ref{prop q hahn stationary}, with this particular choice, the dynamics of the $q$-Hahn TASEP preserves the distribution of gaps between consecutive particles and therefore \eqref{q hahn 2 densities}, \eqref{q hahn equal densities} are regarded as \emph{stationary initial conditions}.

The reason behind the exact solvability of the model with initial conditions \eqref{q hahn 2 densities} is that the study of evolution of coordinates $y_1(t), \dots, y_x(t)$ (with $x \geq 1$), can be reduced to the study of the same quantities in a system including only finitely many particles. Indeed the presence of infinitely many particles (spaced with distribution $q$NB$(\nu, \mathpzc{d}_+)$) at the right of the first one can be mimicked by simply slowing down $y_1$ by a quantity depending on $\mathpzc{d}_+$. This is a consequence of the fact that the dynamics of the $q$-Hahn TASEP preserves the spacing between $y_1, y_0, y_{-1}, \dots$ and of the simple identity
\begin{equation} \label{q binomial identity q hahn}
\mathbb{P}(\mathsf{J}_1^t = j) = \sum_{g \geq 0} \varphi_{q, \mu, \nu}(j |g) \pi_g 
= \varphi_{q,\mu \mathpzc{d} , \nu \mathpzc{d}} (j| \infty),
\end{equation}
where
\begin{equation} \label{densities q hahn}
\pi_M = \mathpzc{d}^M \frac{(\nu;q)_M}{(q;q)_M} \frac{(\mathpzc{d} , q)_\infty}{(\nu \mathpzc{d} ; q)_\infty }.
\end{equation}
This can be proven expanding terms $\varphi_{q, \mu, \nu}(j|g),\pi_g$ and using the $q$-binomial theorem \eqref{q binomial summation}. Equality \eqref{q binomial identity q hahn} shows that, when $y_0(t) - y_1(t) -1$ is distributed according to $q$NB$(\nu, \mathpzc{d}_+)$ at each time, then the effective distribution of jumps of $y_1$ is given by $\varphi_{q,\mu \mathpzc{d}_+ , \nu \mathpzc{d}_+} (\bullet| \infty)$.

We now come to prove the claim that, with initial conditions \eqref{q hahn 2 densities}, \eqref{q hahn equal densities}, the dynamics preserves the distribution of gaps.
\begin{Prop} \label{prop q hahn stationary}
Let $\mathbf{y}(t)=\{ y_k(t) \}_{k \in \mathbb{Z}}$ be the array of positions of particles of a $q$-Hahn TASEP having initial conditions
\begin{equation*}
y_{k-1}(0) - y_k(0) - 1 \sim q\text{NB}(\nu, \mathpzc{d}), \qquad \text{for all } k \in \mathbb{Z}, 
\end{equation*}
and $\mathpzc{d}$ is a fixed parameter in the interval $(0,1)$. Then, for each $t \geq 0$
\begin{equation*}
y_{k-1}(t) - y_k(t) - 1 \sim q\text{NB}(\nu, \mathpzc{d}), \qquad \text{for all } k \in \mathbb{Z}. 
\end{equation*}
\end{Prop}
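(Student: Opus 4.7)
The strategy is to realize the $q$-Hahn TASEP as a specialization of the fused Stochastic Higher Spin Six Vertex Model on the full plane (Proposition~\ref{prop extension HS6VM}) and to deduce stationarity directly from Burke's property (Proposition~\ref{prop translation invariant}). The key observation is that, under an appropriate tuning of the parameters $\Xi,\mathbf{S},\mathbf{U}$ and the analytic continuation in the fusion parameter $q^J$ discussed after \eqref{fused weights L}, the fused vertex weight $\mathsf{L}^{(J)}_{\xi u, s}$ implements the $q$-Hahn jump kernel $\varphi_{q,\mu,\nu}$ of \eqref{beta binomial} at a single vertex.

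First I would set up the identification. Consider the full-plane homogeneous model with $s_x \equiv s$ satisfying $s^2 = \nu$ and with $\xi_x,u_t$ chosen (in the suitable sign convention, using the fact that the weights depend only on the products $\xi_x s_x u_t$ and $s_x^2$) so that the horizontal jump law $q\text{NB}(q^{-J}, q^J \mathpzc{d} u_t)$ reduces to $q\text{NB}(\nu/\mu, \mu)$, which by direct inspection equals $\varphi_{q,\mu,\nu}(\cdot|\infty)$. This forces $q^J = \mu/\nu$ (hence the need for analytic continuation) and $\mathpzc{d} u_t = \nu$. Under the identification $\mathsf{m}_x^t \leftrightarrow g_{-x}(t)$ and $\mathsf{j}_x^t \leftrightarrow \mathsf{J}_{-x}^{t+1}$, the conservation law \eqref{conservation law J} becomes the $q$-Hahn TASEP update rule $g_k(t+1) = g_k(t) + \mathsf{J}_{k-1}^{t+1} - \mathsf{J}_k^{t+1}$. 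Matching the fused weight \eqref{fused weights L} with the $q$-Hahn kernel \eqref{beta binomial} under the present parameter choices reduces to a standard $q$-hypergeometric simplification of the ${}_4\bar{\phi}_3$ appearing in \eqref{fused weights L}; this is the known route by which the $q$-Hahn TASEP is obtained as a degeneration of the Higher Spin Six Vertex Model \cite{CorwinPetrov2016HSVM}.

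With this correspondence in place, the hypothesis $g_k(0) \sim q\text{NB}(\nu, \mathpzc{d})$ i.i.d.\ translates to the Burke's property initial condition \eqref{bc Burke}: under the chosen specialization one has $\xi_x s_x \equiv 1$, and \eqref{bc Burke} reads $\mathsf{m}_x^0 \sim q\text{NB}(\nu, \mathpzc{d})$ independently in $x$. Proposition~\ref{prop extension HS6VM} then asserts that the joint law of $\{\mathsf{m}_x^t\}_{x\in \mathbb{Z}}$ is preserved for every $t \geq 0$ and consists of mutually independent $q\text{NB}(\nu,\mathpzc{d})$ marginals. Translating back to particle positions, the gaps $\{g_k(t)\}_{k \in \mathbb{Z}}$ remain i.i.d.\ $q\text{NB}(\nu, \mathpzc{d})$ at every time, which is the content of the proposition.

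The main obstacle will be the first step: carefully verifying that the fused vertex weight $\mathsf{L}^{(J)}$ under the chosen parameter specialization coincides, at the level of a single vertex update, with the $q$-Hahn transition kernel $\varphi_{q,\mu,\nu}$. This matching is implicit in earlier works, but the sign conventions and the analytic continuation in $q^J$ (which removes the restriction on the number of paths sharing horizontal edges) must be tracked against the conventions of Sections~\ref{subs: the Model} and \ref{subsection fused dynamics}. Once this identification is settled, the stationarity statement is essentially a corollary of the characterization of translation-invariant boundary conditions in Proposition~\ref{prop translation invariant}.
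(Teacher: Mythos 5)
Your approach is genuinely different from the paper's: the authors give a direct, self-contained computation verifying the stationarity equation for the gap distribution using $q$-binomial and contour-integral identities, explicitly flagging this as an ``elementary'' alternative to arguments that route through other frameworks (such as the ring-geometry $q$-Hahn zero range process or, as you propose, the vertex model). Your plan to deduce the result from Burke's property is conceptually natural, but as written it has a real gap.

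The gap is the parameter regime. The $q$-Hahn specialization of the fused weight in \eqref{borodin limit fused weights} requires $\xi_k s_k = 1$ and $u_t = s^2 > 0$, so the invariant product $\xi_k s_k u_t = s^2 = \nu$ is \emph{positive}. Conditions \eqref{HS6VM parameters}, under which Propositions~\ref{prop translation invariant} and \ref{prop extension HS6VM} are actually proved, require $u_t < 0$ (equivalently $\xi_x s_x u_t < 0$), and none of the alternative stochasticity windows listed in Section~\ref{subsection directed paths picture} is compatible with $s_x^2 = \nu \in (0,1)$ and $\xi_x s_x u_t = \nu$ for $q \in (0,1)$. Moreover the fusion replaces $u$ by the geometric progression $(u, qu, \ldots, q^{J-1}u)$, which preserves sign, so the unfused model underlying the $q$-Hahn degeneration is \emph{not} within the range where the Burke property was established. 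Invoking ``analytic continuation in $q^J$'' fixes the integrality of $J$ but does not touch the sign of $u_t$; that is a separate and more serious obstruction. Note also that the analytic continuation in Proposition~\ref{prop: bc general} and Proposition~\ref{prop probability q hahn} is carried out at the level of the one-point distribution of the height function, whereas Burke's property is a statement about the full joint law of all gaps and horizontal increments, which would require its own continuation argument (including convergence of the infinite matrix products in the proof of Lemma~\ref{lemma bc propagation horiz} outside the stochastic regime). Either you re-prove Burke's property for the $q$-Hahn weights directly, or you articulate a careful analytic-continuation argument for the full joint law; as it stands the claim that stationarity ``is essentially a corollary of Proposition~\ref{prop translation invariant}'' does not follow. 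The paper's direct calculation sidesteps all of this, which is precisely why the authors chose it.

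(One small additional nit: your identification $\mathsf{m}_x^t \leftrightarrow g_{-x}(t)$ reverses the orientation used in the paper, which takes $\mathsf{m}_k^t = y_{k-1}(t) - y_k(t) - 1$ for $k \geq 2$; this is cosmetic, but worth tracking if you were to carry out the identification in detail.)
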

We start stating a simple summation identity
\begin{Lemma}
For any complex numbers $a,b,c$ and integer $M\geq0$, we have
\begin{equation} \label{summation a b c}
\sum_{k\geq 0}a^k \frac{(b;q)_k}{(q;q)_k}\frac{(c;q)_{M-k}}{(q;q)_{M-k}}=\frac{1}{2 \pi \mathrm{i}}\oint_{C_0}\frac{(zab;q)_\infty}{(za;q)_\infty}\frac{(zc;q)_\infty}{(z;q)_\infty}\frac{dz}{z^{M+1}},
\end{equation}
where $C_0$ is a sufficiently small contour encircling $0$ and no other poles.
\end{Lemma}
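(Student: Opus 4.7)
The plan is to recognize the right-hand side of \eqref{summation a b c} as a Taylor coefficient extraction. For any function $f(z)$ holomorphic in a punctured neighborhood of $0$, taking $C_0$ a sufficiently small loop around the origin gives
\begin{equation*}
\frac{1}{2\pi \mathrm{i}}\oint_{C_0} f(z)\,\frac{dz}{z^{M+1}} = [z^M]\,f(z),
\end{equation*}
where $[z^M]$ denotes the $M$-th coefficient of the Taylor expansion around $z=0$. So it suffices to Taylor-expand the integrand and read off the coefficient of $z^M$.

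To do this I would apply the $q$-binomial theorem \eqref{q binomial summation} to each of the two ratios of $q$-Pochhammer symbols separately, obtaining the power series
\begin{equation*}
\frac{(abz;q)_\infty}{(az;q)_\infty} = \sum_{k\geq 0}\frac{(b;q)_k}{(q;q)_k}(az)^k, \qquad \frac{(cz;q)_\infty}{(z;q)_\infty} = \sum_{j\geq 0}\frac{(c;q)_j}{(q;q)_j}z^j,
\end{equation*}
both of which converge absolutely in a sufficiently small disc around $z=0$. Since the contour $C_0$ is assumed small, it can be taken to lie inside both discs of convergence, so the product of the two series represents the integrand on $C_0$.

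Forming the Cauchy product of these two expansions and collecting the terms contributing to $z^M$ yields
\begin{equation*}
[z^M]\,\frac{(abz;q)_\infty}{(az;q)_\infty}\frac{(cz;q)_\infty}{(z;q)_\infty} = \sum_{k=0}^{M} a^k \frac{(b;q)_k}{(q;q)_k}\frac{(c;q)_{M-k}}{(q;q)_{M-k}},
\end{equation*}
which coincides with the left-hand side of \eqref{summation a b c}, upon observing that $1/(q;q)_n$ vanishes for $n$ a negative integer so that the sum over $k\geq 0$ naturally truncates at $k=M$. There is no substantive obstacle here; the only thing to keep an eye on is that $C_0$ is chosen inside both radii of convergence of the geometric expansions, which is built into the hypothesis that the contour encircles only the singularity at $0$.
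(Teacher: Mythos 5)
Your proof is correct and takes essentially the same route as the paper's: both apply the $q$-binomial theorem \eqref{q binomial summation} to each Pochhammer ratio, form the Cauchy product, and read off the coefficient of $z^M$ via the contour integral. The only superficial difference is the direction of presentation (you extract the coefficient from the right-hand side; the paper builds the product series and then writes the contour integral), plus your helpful remark that $1/(q;q)_{M-k}$ vanishes for $k>M$, which the paper leaves implicit.
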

\begin{proof}
For $z$ sufficiently close to 0 we define the functions
\begin{equation*}
F(z)=\sum_{l\geq 0}(za)^l\frac{(b;q)_l}{(q;q)_l}=\frac{(zab;q)_\infty}{(za;q)_\infty}, \qquad G(z)=\sum_{l \geq 0}z^l\frac{(c;q)_l}{(q;q)_l}
\end{equation*}
and we see that their product can be written as 
\begin{equation*}
F(z)G(z)=\sum_{M \geq 0}\Big( \sum_{k\geq 0} a^k \frac{(b;q)_k}{(q;q)_k} \frac{(c;q)_{M-k}}{(q;q)_{M-k}} \Big) z^M,
\end{equation*}
so that 
\begin{equation*}
\sum_{k\geq 0} a^k \frac{(b;q)_k}{(q;q)_k} \frac{(c;q)_{M-k}}{(q;q)_{M-k}}=\frac{1}{2 \pi \mathrm{i}} \oint_{C_0}F(z)G(z)\frac{dz}{z^{M+1}}
\end{equation*}
and we have our result.
\end{proof}
\begin{proof}[Proof of Proposition \ref{prop q hahn stationary}]
We proceed with a checking style argument. We will show that
\begin{equation*}
\pi_M = \sum_{k \geq 0} \pi_k \sum_{l \geq 0} \mathbb{P}(\mathsf{J}_{x-1}^{t+1}=M-k+l)\varphi_{q,\mu,\nu}(l|\ k),
\end{equation*}
or, equivalently, expanding all terms, that
\begin{equation} \label{stationary qhahn equality}
\mathpzc{d}^M \frac{(\nu,q)_M}{(q;q)_M}
=\sum_{k \geq 0} \mathpzc{d}^k \sum_{l \geq 0} (\mu \mathpzc{d})^{M-k+l} \frac{(\nu / \mu ;q)_{M-k+l}}{(q;q)_{M-k+l}} \frac{(\mathpzc{d} \mu;q)_\infty}{(\mathpzc{d} \nu ; q)_\infty} \mu^l \frac{(\nu/\mu;q)_l (\mu;q)_{k-l}}{(q;q)_l (q;q)_{k-l}},
\end{equation}
where we made use of a summation like \eqref{q binomial identity q hahn} to express the probability of the $(x-1)$-th particle making a jump of $M-k+l$ steps. In the right hand side of \eqref{stationary qhahn equality} we can exchange the summation order noticing that the sum in the $l$ index is nontrivial only for $l\leq k$. Therefore this can be written as
\begin{equation*}
\begin{split}
& \frac{(\mathpzc{d} \mu;q)_\infty}{( \mathpzc{d} \nu ; q)_\infty} \sum_{l \geq 0} \mu^l \mathpzc{d}^l \frac{(\nu/ \mu;q)_l}{(q;q)_l} \sum_{k \geq l} \mathpzc{d}^{k-l}(\mu \mathpzc{d})^{M-(k-l)} \frac{(\nu/ \mu; q)_{M-(k-l)}(\mu;q)_{k-l}}{(q;q)_{M-(k-l)}(q;q)_{k-l}}\\
&=\mu^M \mathpzc{d}^M \sum_{k'\geq 0}\mu^{-k'} \frac{(\nu/ \mu;q)_{M-k'}(\mu;q)_{k'}}{(q;q)_{M-k'}(q;q)_{k'}}.
\end{split}
\end{equation*}
The summation can be evaluated with \eqref{summation a b c} setting $a=1/\mu$, $b=\mu$ and $c=\nu/\mu$. We get
\begin{equation*}
\sum_{k'\geq 0}\mu^{-k'} \frac{(\nu/ \mu;q)_{M-k'}(\mu;q)_{k'}}{(q;q)_{M-k'}(q;q)_{k'}}=\frac{1}{2 \pi \mathrm{i} }\oint_{C_0} \frac{(z\nu/\mu;q)_\infty}{(z/ \mu;q)_\infty} \frac{dz}{z^{M+1}}=\mu^{-M}\frac{(\nu;q)_M}{(q;q)_M},
\end{equation*}
which combined with the previous identities completes the proof.
\end{proof}

The fact that \eqref{q hahn 2 densities}, \eqref{q hahn equal densities} constitute a family of translation invariant initial conditions was originally argued in \cite{Corwin2015qHahn}. There the author speculated the stationarity property starting from the fact that they are an infinite volume analog of the factorized steady state measures of the $q$-Hahn zero range process in the ring geometry \cite{PovolotskyQHahn}. Our proof is of some interest as it is elementary, in the sense that it only makes use of notable $q$-binomial identities.

Although the $q$-Hahn TASEP was introduced in \cite{PovolotskyQHahn} with no reference to stochastic vertex models, it is indeed possible to obtain it as a degeneration of the Higher Spin Six Vertex Model, as it was observed first in \cite{CorwinPetrov2016HSVM}. The natural way to construct a simple exclusion process from the Higher Spin Six Vertex Model is to interpret the vertical axis as a time direction and to read the number of paths vertically crossing vertices as the evolution of gaps between consecutive particles. More specifically, given occupation random variables $\mathsf{j}_1^1,\mathsf{j}_1^2, \dots, \mathsf{j}_1^t$ and $\mathsf{m}_2^t,\mathsf{m}_3^t, \dots, \mathsf{m}_x^t$ defined in \eqref{m rv}, \eqref{j rv}, we construct a configuration of particles $\mathbf{y}(t) = \{ y_k(t) \}_{k \geq 1}$ such that
\begin{equation*}
y_1(t) = -1 + \mathsf{j}_1^1 + \cdots + \mathsf{j}_1^t \qquad \text{and} \qquad y_{k-1}(t) - y_k(t) -1 = \mathsf{m}_k^t \qquad \text{for all } k\geq 2. 
\end{equation*}
In this way, horizontal occupation numbers $\mathsf{j}_1^t, \mathsf{j}_2^t, \dots$ are interpreted as jumping distances of particles during the update at time $t$ and the Markov operator describing the stochastic dynamics is given in general by the transfer operator $\mathfrak{X}_{u_t}^{(J)}$, as in \eqref{fused transfer matrix}. Although the exact form of the fused weights $\mathsf{L}^{(J)}$, reported in \eqref{fused weights L}, appearing in the definition of $\mathfrak{X}_{u_t}^{(J)}$, looks rather complicated it is possible to degenerate it and match it with an instance of the $q$ deformed beta binomial distribution \eqref{beta binomial}. This fact was first observed in \cite{Borodin2017SymmetricFunctions} and in our notation, Proposition 6.7 of the same article implies that
\begin{equation} \label{borodin limit fused weights}
\mathsf{L}_{u_t\xi_k,s_k}^{(J)} (i_1, j_1 |\ i_2, j_2) \xrightarrow[\substack{ s_k = s \\ \xi_k = 1/s \\ u_t = s^2 }]{} \mathbbm{1}_{i_1 + j_1 = i_2 + j_2 } \mathbbm{1}_{j_2 \leq i_1} \varphi_{q, q^J s^2, s^2} (j_2| i_1). 
\end{equation}
Expression \eqref{borodin limit fused weights} suggests us the right specialization to turn the transfer operator $\mathfrak{X}^{(J)}_{u_t}$ into the Markov generator of the $q$-Hahn TASEP. On the other hand, thanks to arguments carried in Section \ref{subsection integrable initial}, we also know how to employ analytic continuation techniques to describe the probability distribution of the model for certain random initial conditions, which indeed would correspond to \eqref{q hahn 2 densities}. 

We like to summarize this discussion concerning the matching between $q$-Hahn particle processes and Higher Spin Six Vertex Model in the following

\begin{Prop} \label{prop probability q hahn}
Consider the (non stochastic) Higher Spin Six Vertex Model on $\Lambda_{0,-1}$ with boundary conditions 
\begin{equation} \label{q hahn strange bc}
\mathsf{m}_{x}^{-1} = 0 \quad \text{a.s.}, \quad \text{for all }x \geq 1, \qquad \mathsf{j}_0^0 = K \quad \text{a.s}, \qquad \mathsf{j}_0^t = J \quad \text{a.s,} \quad \text{for all }t \geq 1,
\end{equation}
transfer operators $\mathfrak{X}_{q/\mathpzc{d}_-}^{(K)}, \mathfrak{X}_{s^2}^{(J)}, \mathfrak{X}_{s^2}^{(J)}, \dots$ and parameters
\begin{equation} \label{parameters q hahn}
\Xi=(\xi_1, s^{-1}, s^{-1}, \dots), \qquad \mathbf{S}= ( s_1, s, s, \dots).
\end{equation}
Then, for each $l$, the signed measure $\mathbb{P}(\mathcal{H}(x,t)-\mathsf{m}_1^0 = l)$ is an analytic function of $\mu = q^J s^2$ and $\wp = q^{-K}$. Moreover, setting
\begin{gather}
s_1 = 1/N, \qquad \xi_1 = \mathpzc{d_+} N, \qquad \text{and taking the limit }N \to \infty, \label{q hahn first s xi}\\
0<\mathpzc{d}_- < \mathpzc{d}_+ <1, \qquad s^2 = \nu, \qquad 0 \leq \nu < \mu < 1, \qquad \wp = 0, 
\end{gather}
we obtain
\begin{equation*}
\mathbb{P}(\mathcal{H}(x,t) -\mathsf{m}_1^0 = l) = \mathbb{P}_{q \text{\emph{H}}(\mathpzc{d}_-,\mathpzc{d}_+) \otimes \mathsf{m} } (y_x(t) + x - \mathsf{m} = l),
\end{equation*}
for each $l \in \mathbb{Z}$, $x \geq 1$ and $t \geq 0$. In the last equality, both sides are probability measures, $\mathbb{P}_{q \text{\emph{H}}(\mathpzc{d}_-,\mathpzc{d}_+) \otimes \mathsf{m} }$ refers to a product measure of a $q$-Hahn TASEP with initial conditions \eqref{q hahn 2 densities} and of a $q$Poisson($\mathpzc{d}_-/\mathpzc{d}_+$) random variable $\mathsf{m}$ (independent of $y_x$). 
\end{Prop}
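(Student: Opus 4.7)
The plan is to identify, under the specified specializations and the $N \to \infty$ limit, the Higher Spin Six Vertex Model on $\Lambda_{0,-1}$ with a realization of the $q$-Hahn TASEP coupled to the auxiliary variable $\mathsf{m}$, and then use analyticity in $\mu = q^J s^2$ and $\wp = q^{-K}$ to pass to the physical range. The correspondence is checked separately on the initial configuration $\{\mathsf{m}_x^0\}_{x \geq 1}$, the bulk dynamics for $x \geq 2$, and the leftmost column dynamics at $x = 1$.

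For the initial row, the first-row fused transfer $\mathfrak{X}^{(K)}_{q/\mathpzc{d}_-}$ applied to the empty vertical boundary is precisely the construction analyzed in Proposition~\ref{prop: bc general}, and with the choice \eqref{q hahn first s xi} it yields, for $\wp = q^{-K}$ and $v = \mathpzc{d}_-$, the coupled joint law $\mathbb{P}_{\wp, v, \mathpzc{d}_+}$ of Definition~\ref{def: bc general} on $(\mathsf{m}, \mathsf{m}_2^0, \mathsf{m}_3^0, \dots)$, where $\mathsf{m}$ is identified with $\mathsf{m}_1^0$. Setting $\wp = 0$ decouples $\mathsf{m} \sim q\text{Poi}(\mathpzc{d}_-/\mathpzc{d}_+)$ from the rest, and the weights \eqref{weight l} collapse, using $\xi_x s_x = 1$ and $s_x^2 = \nu$ from \eqref{parameters q hahn}, to those of independent $q\text{NB}(\nu, \mathpzc{d}_-)$ variables for $x \geq 2$, matching the double sided $q$-negative binomial gap distribution \eqref{q hahn 2 densities} for $y_{x-1}(0) - y_x(0) - 1$.

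For the bulk dynamics, the Borodin limit \eqref{borodin limit fused weights} with $\xi_x = 1/s$, $s_x = s$, $u_t = s^2$ reduces $\mathsf{L}^{(J)}_{s^2\xi_x, s_x}$ for $x \geq 2$ to the $q$-Hahn jump kernel $\varphi_{q, \mu, \nu}(j_2 | i_1)$, which together with the conservation law \eqref{conservation law} reproduces the Markov generator of the $q$-Hahn TASEP exactly. At the leftmost column, the limit \eqref{limit LJ} makes $\mathsf{j}_1^t$ an independent $q\text{NB}(q^{-J}, q^J \nu \mathpzc{d}_+)$ for each $t$; using $\nu/\mu = q^{-J}$ and \eqref{q binomial identity q hahn}, this distribution coincides with $\varphi_{q, \mu\mathpzc{d}_+, \nu\mathpzc{d}_+}(\cdot|\infty)$, the effective jump law of the rightmost particle $y_1$ when the semi-infinite reservoir to its right has stationary gap density $q\text{NB}(\nu, \mathpzc{d}_+)$. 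Identifying gaps $\mathsf{m}_x^0 \leftrightarrow y_{x-1}(0) - y_x(0) - 1$ and jumps $\mathsf{j}_x^s \leftrightarrow \mathsf{J}_x^s$, formula \eqref{height correct} telescopes to $\mathcal{H}(x,t) = y_x(t) + x$, so $\mathcal{H}(x,t) - \mathsf{m}_1^0 = y_x(t) + x - \mathsf{m}$ at the level of random variables.

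Analyticity of $\mathbb{P}(\mathcal{H}(x,t) - \mathsf{m}_1^0 = l)$ in $\mu$ and $\wp$ is inherited from the rational dependence of the fused weights $\mathsf{L}^{(J)}$ and $\mathsf{L}^{(K)}$ on $q^J$ and $q^K$, respectively. The principal technical delicacy is controlling the interchange between the $N \to \infty$ limit in the first column and the analytic continuation $\wp = q^{-K} \to 0$: the argument follows the model of Proposition~\ref{analytical extension probability proposition}, expanding both sides as absolutely convergent power series in a small parameter (here playing the role of $\mathpzc{d}_-/\mathpzc{d}_+$) whose coefficients are polynomials in $\wp$ of uniformly bounded degree, so that agreement at the infinite family $\wp = q^{-K}$, $K \in \mathbb{Z}_{\geq 1}$, forces agreement at $\wp = 0$.
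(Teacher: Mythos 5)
Your proposal follows essentially the same route as the paper: identify the first-row construction with Proposition~\ref{prop: bc general}, degenerate the bulk weights via \eqref{borodin limit fused weights} to the $q$-Hahn kernel, use \eqref{limit LJ} together with \eqref{q binomial identity q hahn} for the leftmost column, telescope the height, and pass to general $\mu$ and $\wp=0$ by analytic continuation. Two small inaccuracies are worth flagging: the final identification $\mathcal{H}(x,t)-\mathsf{m}_1^0 = y_x(t)+x-\mathsf{m}$ holds in distribution, not at the level of random variables (the $q$-Hahn TASEP with double-sided $q$NB gaps is a distinct process from the one with a slowed-down first particle that the HS6VM literally produces — identity \eqref{q binomial identity q hahn} only equates laws of marginals); and in the $\wp$-continuation the Taylor coefficients $P_n, R_n$ are polynomials in $\wp$ of degree growing with $n$, not uniformly bounded — the argument works because for each fixed $n$ there are still enough evaluation points $\wp=q^{-1},\dots,q^{-d_n-1}$ in the infinite family to force $P_n=R_n$. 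Also note the paper makes the $\mu$-analyticity concrete by substituting into the integral representation \eqref{analytical extension probability}, which immediately shows no new poles cross the torus; your argument from rational dependence of the fused weights is valid because for fixed $x,t$ with the deterministic boundary \eqref{q hahn strange bc} only finitely many path configurations contribute, but it's worth making that finiteness explicit.
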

\begin{proof}
We start considering expression \eqref{analytical extension probability}, which is stated for a model with boundary conditions \eqref{q hahn strange bc}, $J=1$, transfer operator $\mathfrak{X}_{q/v}^{(K)},\mathfrak{X}_{u_1},\mathfrak{X}_{u_1},\dots$ and generic parameters $u_t, \xi_x , s_x$. The fusion of rows procedure, as explained in Section \ref{subsection fused dynamics}, allows us to substitute $\mathfrak{X}$ with $\mathfrak{X}^{(J)}$ and it simply consists in specializing spectral parameters in geometric progressions of ratio $q$. We therefore operate the substitution 
\begin{equation} \label{parameters q hahn u}
(u_{Jm+1}, u_{Jm +2}, \dots, u_{J(m+1)}) \to (s^2, q s^2, q^2 s^2, \dots, q^{J-1} s^2 ) \qquad \text{for }m \geq 0
\end{equation}
and, as a result, in \eqref{analytical extension probability}, we change the factor $\tilde{\Pi}$ in the integrand into
\begin{equation*}
\tilde{\Pi} (\mathbf{z} ; \Xi^{-1} \mathbf{S}, \hat{\mathbf{U}}) \to \prod_{j = 1}^x \left( \prod_{i=2}^x (z_j s^2 ;q)_\infty^{-1} \left( \frac{(s^2 z_j ; q)_\infty}{(q^J s^2 z_j ; q)_\infty}  \right)^t \right).
\end{equation*}
As long as the quantity $q^J s^2$ is smaller than 1 in absolute value, no new pole is created for the integration in $z_1, \dots, z_x$ on the torus $\mathbb{T}^x$ and therefore we can analytically prolong $\mathbb{P}( \mathcal{H}(x,t) - \mathsf{m}_1^0 = \bullet )$ to the region $s^2 = \nu, \mu = q^J s^2, 0\leq\nu<\mu<1$. Choice of parameters \eqref{q hahn first s xi} implies that $\mathsf{j}_1^t \sim \varphi_{q, \mathpzc{d}_+ \mu , \mathpzc{d}_+ \nu }(\bullet | \infty)$ as it was observed in \eqref{limit LJ} and together with conditions on $\nu, \mu$, it turns the transfer operator $\mathfrak{X}_{s^2}^{(J)}$ into a Markov generator describing a $q$-Hahn TASEP where the rightmost particle is slower of a factor  $\mathpzc{d}_+$ compared to the others. This is a basic consequence of \eqref{borodin limit fused weights}. 

The analytic continuation in parameter $\wp=q^{-K}$ is treated as in Proposition \ref{analytical extension probability proposition}. As a result of choice $\wp=0$, random variables $\mathsf{m}_1^0, \mathsf{m}_2^0, \dots$ become independently distributed as
\begin{equation*}
\mathsf{m}_1^0 \sim q\text{Poi}(\mathpzc{d}_- / \mathpzc{d}_+ ), \qquad \text{and} \qquad \mathsf{m}_x^0 \sim q\text{NB}(\nu, \mathpzc{d}_- ).
\end{equation*}
This passage is explained more extensively in Section \ref{subsection integrable initial}. Recalling the definition of $\mathcal{H}$, given in \eqref{height H bar}, interpreting $\mathsf{m}_k^t$ as the gap between the $(k-1)$-th and the $k$-th particle and $\mathsf{j}_k^t$ as the jumps made by the $k$-th particle during the update at time $t$ we realize that
\begin{equation*}
y_x(t) + x - \mathsf{m} \stackrel{\mathcal{D}}{=} \mathcal{H}(x,t) - \mathsf{m}_1^0 \qquad \text{for all }x \geq 1,t \geq 0,
\end{equation*}
where the equality holds in distribution. So far $y_x$ is the position of the $x$-th particle of a $q$-Hahn TASEP with a slower particle, but as a consequence of identity \eqref{q binomial identity q hahn}, this is equivalent, in distribution, to the position of the $x$-th particle in a model with infinitely many particles at the right of $y_1$ spaced with $q$-negative binomial distribution of parameters $(\nu, \mathpzc{d}_+)$. This concludes the proof.
\end{proof}
We come now to state our main results on the double sided $q$-negative binomial $q$-Hahn TASEP.
\begin{Prop}
For $\mathpzc{d}_- < \mathpzc{d}_+$, we have
\begin{equation}\label{first fredholm determinant qHahn}
\mathbb{E}_{q \text{\emph{H}} (\mathpzc{d}_-,\mathpzc{d}_+) \otimes \mathsf{m}} \left( \frac{1}{(\zeta q^{y_x(t) + x - \mathsf{m} };q)_{\infty}} \right)  = \det(\mathbf{1}-fK_{q\text{\emph{H}}})_{l^2(\mathbb{Z})},
\end{equation}
where
\begin{gather}
f(n)= \frac{1}{1-q^{n}/\zeta},\\
K_{q\text{\emph{H}}}(n,m)= A_{q\text{\emph{H}}}(n,m) + (\mathpzc{d}_+ - \mathpzc{d}_-) \Phi_{q\text{\emph{H}},x}(m)\Psi_{q\text{\emph{H}},x}(n),\label{kernel qHahn}\\
A_{q\text{\emph{H}}}(n,m)=\frac{\tau(n)}{\tau(m)} \int_D \frac{dw}{2 \pi \mathrm{i}} \int_C \frac{dz}{2 \pi \mathrm{i}} \frac{z^m}{w^{n+1}} \frac{F(z)}{F(w)} \frac{1}{z-w}, \label{A qHahn}
\\
\Phi_{q\text{\emph{H}},x}(n)=\tau(n)\int_D \frac{dw}{2 \pi \mathrm{i}}\frac{1}{w^{n+2}}\frac{1}{ (\mathpzc{d}_+/w ; q)_\infty } \frac{ 1 }{F(w)}, \label{Phi qHahn}\\
\Psi_{q\text{\emph{H}},x}(n)=\frac{1}{\tau(n)} \int_C \frac{dz}{2 \pi \mathrm{i}} z^{n} \frac{1}{z-\mathpzc{d}_-} (q z /\mathpzc{d}_+;q)_\infty F(z) \label{Psi qHahn}.
\end{gather}
The contour $D$ encircles 1,$\mathpzc{d}_+$ and no other singularity, whereas $C$ contains 0 and $q^{k}\mathpzc{d}_-$, for any $k$ in $\mathbb{Z}_{\geq 0}$. Moreover, $\tau(n)$ is taken to be
\begin{equation} \label{tau qHahn}
\tau(n)=\begin{cases}
    b^n,       & \quad \text{if } n \geq 0\\
    c^n,  & \quad \text{if } n < 0,\\
  \end{cases}
\end{equation}
with $\mathpzc{d}_- < b < \mathpzc{d}_+  < 1 < c < 1 / \nu $, and 
\begin{equation} \label{F qHahn}
F(z)= \left( \frac{(\nu z;q)_\infty}{(\mu z;q)_\infty } \right)^t  \left( \frac{( z;q)_\infty}{(\nu z;q)_\infty} \right)^{x-1} \frac{1}{(q \mathpzc{d}_-/z ;q)_\infty}.
\end{equation}
Finally, $\mathsf{m}$ is a $q$Poi($\mathpzc{d}_-/\mathpzc{d}_+$) random variable independent of $y_x$.
\end{Prop}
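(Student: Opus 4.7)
The plan is to obtain \eqref{first fredholm determinant qHahn} as a direct degeneration of Theorem \ref{theorem: q laplace shifted introduction} via the matching recorded in Proposition \ref{prop probability q hahn}. Under the parameter choice \eqref{parameters q hahn} together with the limit \eqref{q hahn first s xi}, and the identifications $v = \mathpzc{d}_-$, $\mathpzc{d} = \mathpzc{d}_+$, $u_t = s^2 = \nu$, $q^J s^2 = \mu$, $\wp = 0$, the coupled random variable $\mathsf{m}$ decouples into a $q\text{Poi}(\mathpzc{d}_-/\mathpzc{d}_+)$ variable independent of the rest of the model, and by Proposition \ref{prop probability q hahn} one has $\mathcal{H}(x,t) - \mathsf{m}_1^0 \stackrel{\mathcal{D}}{=} y_x(t) + x - \mathsf{m}$. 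Consequently the left hand sides of \eqref{first fredholm determinant qHahn} and of \eqref{fredholm determinant introduction} coincide, and the task reduces to tracking the degeneration of the kernel $fK$ of Theorem \ref{theorem: q laplace shifted introduction}.

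First I would rewrite the finite-rank part $\sum_{l=1}^{x-1}\phi_l(m)\psi_l(n)$ in \eqref{kernel} through the double integral form supplied by Proposition \ref{prop double integral kernel}. Substituting the specialization into the integrand of \eqref{double integral}: using $\xi_k s_k = 1$ and $s_k/\xi_k = \nu$ for $k\geq 2$, the spatial product collapses to $\bigl((z;q)_\infty (\nu w;q)_\infty/((w;q)_\infty(\nu z;q)_\infty)\bigr)^{x-1}$; using the geometric progression $u_j = s^2$ and the fused exponent $J$, the time product becomes $\bigl((\nu z;q)_\infty (\mu w;q)_\infty/((\mu z;q)_\infty(\nu w;q)_\infty)\bigr)^t$; and the factor $(qv/w;q)_\infty/(qv/z;q)_\infty$ specializes to $(q\mathpzc{d}_-/w;q)_\infty/(q\mathpzc{d}_-/z;q)_\infty$. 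Collecting these three contributions one recognizes precisely the ratio $F(z)/F(w)$ with $F$ as in \eqref{F qHahn}, giving \eqref{A qHahn}.

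For the rank-one correction $(\mathpzc{d}-v)\Phi_x(m)\Psi_x(n)$ in \eqref{kernel} I would specialize \eqref{Phi}, \eqref{Psi}. The key algebraic move is the identity $(qz;q)_\infty = (z;q)_\infty/(1-z)$, which, combined with the collapsed products $\prod_{k=2}^x(z-\xi_k s_k) = (z-1)^{x-1}$ and $\prod_{k=2}^x(w-\xi_k s_k)^{-1} = (w-1)^{-(x-1)}$, allows the $(1-z)^{x-1}$ and $(1-w)^{-(x-1)}$ factors to be absorbed into the spatial Pochhammer products of $F_{\text{HSVM}}$ so that the integrands regroup into $1/F(w)$ and $F(z)$ with the new $F$. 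Together with the identity $w/((w-\mathpzc{d}_+)(q\mathpzc{d}_+/w;q)_\infty) = 1/(\mathpzc{d}_+/w;q)_\infty$ used to combine the residue at $w=\mathpzc{d}_+$ in $\Phi_x$, and an analogous repackaging of $(v/z;q)_\infty^{-1}$ in $\Psi_x$, one recovers exactly \eqref{Phi qHahn}, \eqref{Psi qHahn}; the prefactor $\mathpzc{d} - v$ becomes $\mathpzc{d}_+ - \mathpzc{d}_-$.

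The main obstacle will be careful bookkeeping of signs and of the shift identities $(qz;q)_\infty = (z;q)_\infty/(1-z)$, together with verifying that all contour constraints deform consistently: the original $D$ encircling $\{\mathpzc{d}, \xi_2 s_2, \dots, \xi_x s_x\}$ collapses to a contour encircling $\{1, \mathpzc{d}_+\}$, while $C$ encircling $\{vq^k\}_{k\geq 0}$ becomes a contour encircling $\{q^k \mathpzc{d}_-\}_{k\geq 0}$ and $0$. One must also check that the placements \eqref{eq: placements Xi S} degenerate into $q\mathpzc{d}_+ < \mathpzc{d}_+ < 1 < 1/\nu$ (which is implied by the assumptions $\mathpzc{d}_- < \mathpzc{d}_+ <1$, $\nu<1$, $q<\mathpzc{d}_+$), and that the auxiliary sequence $\tau$ in \eqref{tau} admits the reset \eqref{tau qHahn} with $\mathpzc{d}_- < b < \mathpzc{d}_+ < 1 < c < 1/\nu$. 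No new analytic input is required: the result follows from Theorem \ref{theorem: q laplace shifted introduction} purely by this tracked degeneration of parameters.
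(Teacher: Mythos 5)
Your overall strategy agrees with the paper's (the paper's own proof is one sentence: specialize Theorem~\ref{theorem: q laplace shifted introduction} to the parameters of Proposition~\ref{prop probability q hahn} and analytically continue in $\mu=q^Js^2$), and your identification of $A_{q\text{H}}$ via Proposition~\ref{prop double integral kernel} is correct, as is the repackaging $\frac{z^{n-1}}{(v/z;q)_\infty}=\frac{z^n}{(z-v)(qv/z;q)_\infty}$ used for $\Psi_x$. However, the identity you invoke to produce $\Phi_{q\text{H},x}$ is not the one the degeneration actually supplies. You write $w/\bigl((w-\mathpzc{d}_+)(q\mathpzc{d}_+/w;q)_\infty\bigr)=1/(\mathpzc{d}_+/w;q)_\infty$, i.e.\ you assume that $1/F(w)$ in \eqref{Phi} contributes a factor $(q\mathpzc{d}_+/w;q)_\infty^{-1}$. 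But in \eqref{F} the $\mathpzc{d}$-dependence of $F$ enters through $(qz/\mathpzc{d};q)_\infty$, so after specialization the $\mathpzc{d}_+$-dependent piece of the integrand of $\Phi_x$ is $\frac{1}{(w-\mathpzc{d}_+)(qw/\mathpzc{d}_+;q)_\infty}=\frac{-1}{\mathpzc{d}_+(w/\mathpzc{d}_+;q)_\infty}$, a $q$-Pochhammer in $w/\mathpzc{d}_+$, not in $\mathpzc{d}_+/w$.

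The two expressions $\frac{-1}{\mathpzc{d}_+w^{n+1}(w/\mathpzc{d}_+;q)_\infty}$ and $\frac{1}{w^{n+2}(\mathpzc{d}_+/w;q)_\infty}$ share the same residue at $w=\mathpzc{d}_+$, but they have different pole structures near $\mathpzc{d}_+$ ($q^j\mathpzc{d}_+$ versus $\mathpzc{d}_+/q^j$) and evaluate differently at $w=1$, which is the location of the order-$(x-1)$ pole of $1/F(w)$. So the degenerated $\Phi_x$ is not literally the integrand of \eqref{Phi qHahn}, and your final sentence that one ``recovers exactly'' \eqref{Phi qHahn} does not follow from the steps you give. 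To close the argument you would need either to track the factor $(qw/\mathpzc{d}_+;q)_\infty^{-1}$ honestly through the computation (which yields $\Phi_{q\text{H},x}(n)=-\frac{\tau(n)}{\mathpzc{d}_+}\int_D\frac{dw}{2\pi i}\frac{1}{w^{n+1}(w/\mathpzc{d}_+;q)_\infty F(w)}$ up to the sign already absorbed in $\Psi_x$), or to verify directly that the resulting rank-one perturbation produces the same scalar $\sum_n(\varrho f\Phi)(n)\Psi(n)$ as the one built from \eqref{Phi qHahn} — neither of which is immediate.
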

\begin{proof}
We only need to specialize results of Theorem \ref{theorem: q laplace shifted introduction} to the same choice of parameters adopted in Proposition \ref{prop probability q hahn} and ultimately to perform the analytic continuation in parameter $\mu=q^J s^2$.
\end{proof}
It is now safe to apply techniques developed for the Higher Spin Six Vertex Model to describe asymptotic fluctuations of the position of a tagged particle in the stationary $q$-Hahn TASEP. In order to fix the parameters describing the scaling of the $q$-Hahn TASEP we introduce the families of functions
\begin{equation*}
\mathpzc{a}_{-1}(z; p, \tilde{p}) = \log \left( \frac{(z p ; q)_\infty}{(z \tilde{p} ; q)_\infty} \right) \qquad \text{and} \qquad \mathpzc{a}_{k+1}(z;p,\tilde{p}) = z \frac{d}{dz} \mathpzc{a}_k (z;p,\tilde{p}),
\end{equation*}
for all $k \geq 0$. When $k\geq 0$, $\mathpzc{a}_k(z;p,\tilde{p})$ is expressed in terms of $q$-polygamma like functions \eqref{digamma psi} as $\polygamma_k(\tilde{p} z) -\polygamma_k(p z)$. 
\begin{Def}[Scalings for the stationary $q$-Hahn TASEP] \label{scaling qHahn} 
For numbers $\mathpzc{d} \in (0,1)$ and $\varpi \in \mathbb{R}$, we set
\begin{equation*}
    \gamma_{q\emph{H}}= - \frac{1}{2^{1/3}} \left( \frac{\mathpzc{a}_1(\mathpzc{d};\nu,1)}{\mathpzc{a}_1(\mathpzc{d};\nu,\mu)} \mathpzc{a}_2(\mathpzc{d};\nu,\mu) - \mathpzc{a}_2(\mathpzc{d};\nu,1)    \right)^{1/3},
\end{equation*}
\begin{equation*}
    \kappa_{q\emph{H};\varpi} = \frac{\mathpzc{a}_1(\mathpzc{d};\nu,1)}{\mathpzc{a}_1(\mathpzc{d};\nu,\mu)} + \frac{\mathpzc{a}_2(\mathpzc{d};\nu,1) \mathpzc{a}_1(\mathpzc{d};\nu,\mu) - \mathpzc{a}_1(\mathpzc{d};\nu,1) \mathpzc{a}_2(\mathpzc{d};\nu,\mu) }{\mathpzc{a}_1(\mathpzc{d};\nu,\mu)^2} \frac{\varpi}{\gamma_{q\emph{H}} x^{1/3}},
\end{equation*}
\begin{equation*}
    \eta_{q\emph{H};\varpi} = \kappa_{q\emph{H};\varpi} \mathpzc{a}_0(\mathpzc{d};\nu,\mu) -\mathpzc{a}_0(\mathpzc{d};\nu,1) + \frac{\mathpzc{a}_2(\mathpzc{d};\nu,1) \mathpzc{a}_1(\mathpzc{d};\nu,\mu) - \mathpzc{a}_1(\mathpzc{d};\nu,1) \mathpzc{a}_2(\mathpzc{d};\nu,\mu) }{\mathpzc{a}_1(\mathpzc{d};\nu,\mu)} \frac{\varpi^2}{\gamma_{q\emph{H}}^2 x^{2/3}}
\end{equation*}
\end{Def}
By means of quantities $\kappa_{q\text{H}}, \eta_{q\text{H}}, \gamma_{q\text{H}}$ we are now going to confirm the KPZ-scaling conjecture for the stationary $q$-Hahn TASEP, result that in our notation reads 
\begin{equation*} 
\frac{y_x(\kappa_{q\text{H};\varpi } x) - (\eta_{q\text{H},\varpi} -1 ) x }{\gamma_{q\text{H}} x^{1/3}}\xrightarrow[x \to \infty]{\mathcal{D}} F_\varpi,
\end{equation*}
where $F_{\varpi}$ is the Baik-Rains distribution introduced in Definition \ref{Baik-Rains definition}. For the sake of a rigorous procedure in the asymptotics we need to establish technical conditions on parameters defining the model. As for the Higher Spin Six Vertex Model, the main technical issue is to guarantee the existence of steep descent/ascent contours $C/D$ for the real part of a function $g_{q \text{H}}$, which in this case is given by
\begin{equation}\label{g qHahn}
    g_{q \text{H}}(z) = - \eta_{q\text{H},\varpi} \log(z) + \kappa_{q\text{H},\varpi} \mathpzc{a}_{-1}(z;\nu, \mu) - \mathpzc{a}_{-1}(z;\nu, 1).
\end{equation}
Function $g_{q \text{H}}$ possesses a double critical point $\varsigma$ in a neighborhood of order $\varpi/x^{1/3}$ of $\mathpzc{d}$ and the construction of contours $C,D$ enables us to perform a saddle point analysis to evaluate the Fredholm determinant of the kernel $K_{q\text{H}}$. The expression of the critical point $\varsigma$ can be given explicitely and it is identical to \eqref{varsigma} once we substitute $\gamma=\gamma_{q\text{H}}, a_k = \mathpzc{a}_k(\mathpzc{d}, \nu, \mu)$ and $h_k = \mathpzc{a}_k(\mathpzc{d}, \nu, 1)$.

We find that, in the $q$-Hahn TASEP case, the analysis of $g_{q \text{H}}$ slighlty differs from that of the homologous function $g$ for the general Higher Spin Six Vertex Model. In particular, the problem of the existence of steep contours was already considered in \cite{Veto15qHahn} and we can take advantage of results obtained by the author in the same paper, which we summarize in the following Proposition. 
\begin{Prop}[\cite{Veto15qHahn}, Prop. 6.2, 6.3] \label{prop steep q hahn}
Define the curves 
\begin{equation} \label{contours Veto}
C=\left \{ \varsigma e^{\mathrm{i} \vartheta}| \ \vartheta \in [0, 2 \pi) \right \}, \qquad D=\left \{ 1 - ( 1 - \varsigma ) e^{\mathrm{i} \vartheta}| \ \vartheta \in [0, 2 \pi) \right \}.
\end{equation}
Then, assuming
\begin{equation} \label{technical condition q hahn}
0 \leq q \leq \nu < \mu \leq 1/2,
\end{equation}
we have, for $x$ large enough
\begin{enumerate}
\item $\mathfrak{Re}\{g_{q \text{\emph{H}}}\}$ assumes, on the contour $C$, a unique global maximum in $\varsigma$;
\item$\mathfrak{Re}\{g_{q \text{\emph{H}}}\}$ assumes, on the contour $D$, a unique global minimum in $\varsigma$.
\end{enumerate}
\end{Prop}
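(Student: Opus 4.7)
The proposition is attributed to \cite{Veto15qHahn}, Propositions~6.2--6.3, so our task is to verify that the argument there transfers to the stationary setting, where the contour radii depend on $\varsigma = \mathpzc{d}(1+O(x^{-1/3}))$ rather than on the step-initial critical point, and the scaling parameters $\kappa_{q\text{H},\varpi}, \eta_{q\text{H},\varpi}$ are the ones of Definition~\ref{scaling qHahn}. The plan is to parametrize each contour by $\vartheta\in[-\pi,\pi)$ and reduce the claim to the strict sign of one real-analytic function of one real variable.

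First, writing $z=\varsigma e^{\mathrm{i}\vartheta}$ on $C$ and $z=1-(1-\varsigma)e^{\mathrm{i}\vartheta}$ on $D$, and expanding the $q$-Pochhammer symbols inside $g_{q\text{H}}$ as
\begin{equation*}
g_{q\text{H}}(z) = -\eta_{q\text{H},\varpi}\log z + \sum_{k\geq 0}\Big[(\kappa_{q\text{H},\varpi}-1)\log(1-\nu z q^k) - \kappa_{q\text{H},\varpi}\log(1-\mu z q^k) + \log(1-z q^k)\Big],
\end{equation*}
and using $\mathfrak{Re}\{\log(1-a e^{\mathrm{i}\vartheta})\} = \tfrac{1}{2}\log(1-2a\cos\vartheta+a^2)$, one obtains an explicit even, real-analytic function $\phi_\bullet(\vartheta):=\mathfrak{Re}\{g_{q\text{H}}\}|_\bullet$ whose derivative factors cleanly as $\phi_\bullet'(\vartheta)=\sin\vartheta\cdot S_\bullet(\cos\vartheta)$. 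Thus all critical points inside $(0,\pi)$ are zeros of $S_\bullet$, and the geometry on $(\pi,2\pi)$ follows by evenness.

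Second, by the double critical point property $g'_{q\text{H}}(\varsigma)=g''_{q\text{H}}(\varsigma)=0$ combined with $g'''_{q\text{H}}(\varsigma)<0$ (established as in Proposition~\ref{prop g properties} via the same positive-term $q$-hypergeometric computation), a cubic Taylor expansion at $\vartheta=0$ gives
\begin{equation*}
\phi_C(\vartheta)-\phi_C(0) \;\approx\; \tfrac{\varsigma^3}{4}\,g'''_{q\text{H}}(\varsigma)\,\vartheta^4, \qquad \phi_D(\vartheta)-\phi_D(0) \;\approx\; -\tfrac{(1-\varsigma)^3}{4}\,g'''_{q\text{H}}(\varsigma)\,\vartheta^4,
\end{equation*}
which identifies $\vartheta=0$ as a strict local maximum on $C$ and a strict local minimum on $D$.

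The genuine obstacle, and the step where the restrictive hypothesis \eqref{technical condition q hahn} is actually used, is upgrading these local statements to global ones on all of $[-\pi,\pi)$: one must show the strict signs $S_C(\cos\vartheta)<0$ and $S_D(\cos\vartheta)>0$ for every $\vartheta\in(0,\pi)$. In the limit $\varpi=0$, $\varsigma=\mathpzc{d}$, these are precisely the statements of \cite{Veto15qHahn}, Propositions~6.2--6.3; the bound $\mu\leq 1/2$ is what keeps the pole $z=1/\mu$ outside the disk bounded by $D = \{|z-1|=1-\varsigma\}$, so that the $\vartheta$-derivative computation picks up no hidden singular contribution, while $q\leq\nu$ controls the relative signs of the $(\kappa_{q\text{H}}-1)$ and $-\kappa_{q\text{H}}$ coefficients appearing in the expansion of $S_\bullet$. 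Since $\varsigma$, $\kappa_{q\text{H},\varpi}$, $\eta_{q\text{H},\varpi}$ differ from their $\varpi=0$ counterparts by $O(x^{-1/3})$ corrections, Véto's strict inequalities, which are uniform on compact sets of $\vartheta$ bounded away from $0$, persist for all $x$ sufficiently large by continuity and a compactness argument on $[\epsilon,\pi-\epsilon]$ combined with the local cubic analysis of the previous paragraph to cover the remaining neighborhood of $0$.
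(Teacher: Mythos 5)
The paper gives \emph{no proof} of this proposition: it is stated as a direct citation of Propositions~6.2 and 6.3 of \cite{Veto15qHahn}, and the only surrounding discussion is a remark (accompanied by Figure~\ref{plot g qHahn}) suggesting that condition~\eqref{technical condition q hahn} is merely technical. So there is no ``paper's own proof'' to compare against. Your proposal actually does something the paper leaves implicit: you identify the genuine step needed to invoke Véto's result in the present setting, namely a perturbation argument, since here $\varsigma$, $\kappa_{q\text{H},\varpi}$, $\eta_{q\text{H},\varpi}$ carry $\mathcal{O}(\varpi x^{-1/3})$ corrections absent from Véto's step-initial analysis. Your local Taylor computations at $\vartheta=0$ are correct (the coefficients $\tfrac{\varsigma^3}{4}g'''(\varsigma)$ on $C$ and $-\tfrac{(1-\varsigma)^3}{4}g'''(\varsigma)$ on $D$ check out), and the decomposition of $g_{q\text{H}}$ into a logarithmic sum matches \eqref{g qHahn}. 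The weakest parts of your sketch are (i) the assertion that Véto's inequalities are uniform on $[\epsilon,\pi-\epsilon]$ and over the relevant family of $(\varsigma,\eta,\kappa)$ — this is what makes the compactness/continuity step go through and needs to be confirmed against Véto's actual proof rather than assumed; and (ii) the explanations you give for the roles of $\mu\leq 1/2$ (pole of $(\mu z;q)_\infty^{-1}$ outside $D$) and $q\leq\nu$ are plausible reconstructions, not verified facts — the paper itself declines to engage with where these conditions come from, stating only that they ``appear to be technical.''
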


\begin{figure}[ht]
\centering
  \includegraphics[clip, trim=2.5cm 7.5cm 2.5cm 7.5cm,scale=0.6]{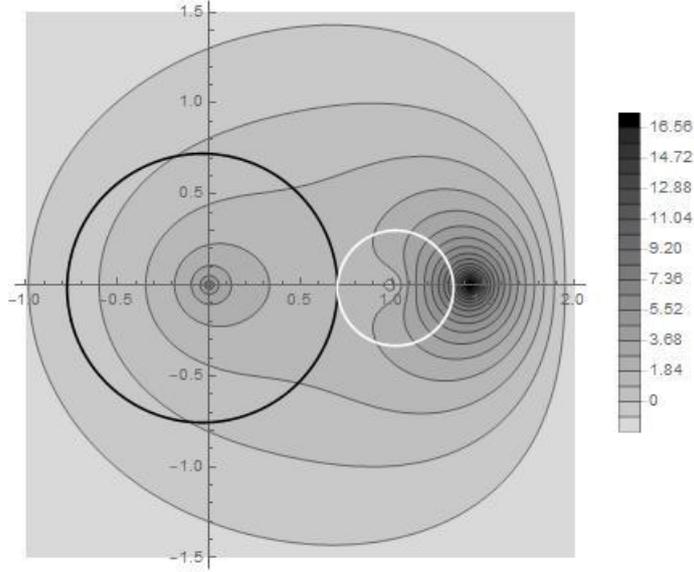}
\caption{\small In the picture we see the contour plot
$\mathfrak{Re}\{g (z)\}$ for the particular choice of parameters $\varsigma=0.7, \mu=0.7, \nu=0.4, q=0.3, x=20$. Here $z$ lies the complex rectangle $ [-1,2] + \mathrm{i} [-1.5 ,1.5] $ and, as the legend shows, to darker shades correspond greater values of $\mathfrak{Re}\{g (z)\}$. The black and the white circle are respectively $C$ and $D$ and they intersect in the critical point $\varsigma$.} \label{plot g qHahn}
\end{figure}

Condition \eqref{technical condition q hahn} appears to be technical, as it could be argued through simple numerical tests. As an example we report in Figure \ref{plot g qHahn} the plot of the real part of $g_{q\text{H}}$ for a choice of parameters $q, \nu, \mu, \mathpzc{d}$ not included in \eqref{technical condition q hahn} and from where it appears evident that, also in that case steep contours $C,D$ can be constructed. We do not attempt here to loosen hypothesis on Proposition \ref{prop steep q hahn} and we simply use such results to adapt our asymptotic analysis of the $q$-Laplace transform in the stationary $q$-Hahn TASEP setting.

We come to the following
\begin{theorem}
Consider the $q$-Hahn TASEP with parameters $q, \nu, \mu$ as in \eqref{technical condition q hahn} and 
stationary initial conditions, where $\mathpzc{d}$ satisfies
\begin{equation} \label{qhahn thm hypothesis}
\mathpzc{d} > \frac{2 q}{ 1 + q}.
\end{equation}
Then we have
\begin{equation} \label{baik rains limit qhahn}
\lim_{x \rightarrow \infty} \mathbb{P}_{q\emph{H}(\mathpzc{d}, \mathpzc{d}) }\left( \frac{y_x(\kappa_{q\emph{H};\varpi } x) - (\eta_{q\emph{H},\varpi} -1 ) x }{\gamma_{q\emph{H}} x^{1/3}} > -r \right) = F_\varpi(r).
\end{equation}
\end{theorem}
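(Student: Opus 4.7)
The plan is to mimic the proof of Theorem \ref{theorem: baik rains limit H}, replacing the Higher Spin Six Vertex Model data with the $q$-Hahn specialization supplied by Proposition \ref{prop probability q hahn} and the Fredholm determinant formula \eqref{first fredholm determinant qHahn}. First I would decouple the auxiliary $q$-Poisson random variable $\mathsf{m}$. Applying Lemma \ref{lemma shifting argument} to the identity \eqref{first fredholm determinant qHahn} (for $\mathpzc{d}_- < \mathpzc{d}_+$) gives a series representation of $\mathbb{E}_{q\text{H}(\mathpzc{d}_-,\mathpzc{d}_+)}\bigl(1/(\zeta q^{y_x(t)+x};q)_\infty\bigr)$ in terms of a function $V_{q\text{H};x,v,\mathpzc{d}}(\zeta)$ built from the Fredholm determinant $\det(\mathbf 1-fK_{q\text{H}})$. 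Exactly as in the proof of Theorem \ref{theorem q laplace double sided}, the apparent pole at $\mathpzc{d}_-=\mathpzc{d}_+$ coming from the prefactor $(\mathpzc{d}_+-\mathpzc{d}_-)$ in \eqref{kernel qHahn} cancels against a corresponding singularity of the $\sum_n f(n)\Phi_{q\text{H},x}^{(1)}\Psi_{q\text{H},x}^{(1)}$ term after splitting $\Phi_{q\text{H},x}$ and $\Psi_{q\text{H},x}$ into residue parts (at $w=\mathpzc{d}_+$ and $z=\mathpzc{d}_-$) and regular parts (contours $D_1,C_1$). One then takes $\mathpzc{d}_-\to\mathpzc{d}_+=\mathpzc{d}$ to obtain the $q$-Hahn analogue of Corollary \ref{Corollary stationary q laplace}, whose derivation uses only that $V_{q\text{H};x,v,\mathpzc{d}}$ is uniformly bounded in a neighborhood of $v=\mathpzc{d}$, exactly as in the paragraph surrounding \eqref{star 1}.

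Next I would plug in the scaling \eqref{scaling n_m}--\eqref{scaling zeta} with $\eta=\eta_{q\text{H};\varpi}$, $\kappa=\kappa_{q\text{H};\varpi}$, $\gamma=\gamma_{q\text{H}}$ and the relevant action $g_{q\text{H}}$ defined in \eqref{g qHahn}. Direct computation (formally identical to Proposition \ref{prop g properties}) shows that $g_{q\text{H}}$ has a double critical point at a $\varsigma=\mathpzc{d}(1+\varpi/(\gamma_{q\text{H}}x^{1/3})+\cdots)$, and that $\gamma_{q\text{H}}^3>0$ by the same positivity computation as in \eqref{gamma expansion} (each summand in the spectral expansion of $\mathpzc{a}_1,\mathpzc{a}_2$ has a definite sign since $0<\nu<\mu<1$). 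The existence of global steep descent/ascent contours $C,D$ is the key technical input, and it is precisely what Proposition \ref{prop steep q hahn} (borrowed from \cite{Veto15qHahn}) provides under the restriction \eqref{technical condition q hahn}; in the small neighborhood of $\varsigma$ where the double pinch occurs one deforms these circles locally to the canonical cusp contours of width $2\pi/3$, as in Figure \ref{contours asymptotics}.

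With the steep contours in hand, the asymptotic analysis proceeds in strict analogy with Section \ref{subsection the baik rains limit}:
\begin{itemize}
\item Lemma \ref{convergence on moderately large sets} becomes convergence of $\tilde A_{q\text{H}}(\nu,\theta)$ to $\gamma_{q\text{H}}^{-1}x^{-1/3}K_{\text{Airy}}(\nu,\theta)$ on moderately large sets $[-L,x^{\delta/3}]^2$, with an explicit subleading kernel;
\item Lemma \ref{decay tails} gives exponential front-tail decay from the same local deformation of $C$ to a vertical segment through $\varsigma(1-3/(\gamma_{q\text{H}}x^{1/3}))$;
\item Lemma \ref{bounded discrete Kernel} (the rear-tail bound) carries over verbatim from the biorthogonality and the decay of $\tau(n)$;
\item the analogues of Propositions \ref{proposition limit V 2}, \ref{prop Upsilon}, \ref{prop Upsilon 2} yield the building blocks of $\chi_\varpi(r)$.
\end{itemize}
The rear-tail bound for the $\Phi_{q\text{H},x}^{(2)}\Psi_{q\text{H},x}^{(1)}$ piece, which in Proposition \ref{prop Upsilon} required $\sup_w|w|$ on $D$ to be strictly smaller than $q^{-1}\mathpzc{d}$, now demands an analogous inequality on the circle $D$ of \eqref{contours Veto}, whose maximum modulus is $2-\mathpzc{d}+\smallO(1)$. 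This is exactly the role of the hypothesis \eqref{qhahn thm hypothesis}: $\mathpzc{d}>2q/(1+q)$ is equivalent to $2-\mathpzc{d}<q^{-1}\mathpzc{d}$, so the estimate \eqref{upsilon convergence 7} goes through.

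Assembling these pieces, the proof of Theorem \ref{theorem: baik rains limit H} applies mutatis mutandis: the dominated convergence argument at the end of that proof (splitting the $k$-sum of \eqref{stationary q laplace 2} into $k\leq\epsilon\gamma x^{1/3}$ and $k>\epsilon\gamma x^{1/3}$, using the Gaussian factor $q^{\binom{k+1}{2}}$ to kill the tail and using the uniform boundedness of the remainders $R_x^{(i)}$ together with the uniform-on-$[r-\epsilon,r]$ smallness of their one-step differences) produces the required convergence to $F_\varpi(r)=\partial_r\chi_\varpi(r)$. The main obstacle, in my view, is not the saddle point analysis itself but rather verifying that the steep contours from Proposition \ref{prop steep q hahn} admit local modifications compatible with all the pole requirements of the $q$-Hahn kernel (in particular, that the contour $C$ can be moved past $\varsigma$ without crossing the poles at $\{q^k\mathpzc{d}\}_{k\geq 1}$ and that $D$ can be deformed past $\mathpzc{d}$ without crossing $1$); this is where the explicit circular geometry of \eqref{contours Veto} together with the bound \eqref{qhahn thm hypothesis} plays its full role.
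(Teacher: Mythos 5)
Your proposal follows essentially the same route as the paper's proof: decoupling $\mathsf{m}$ via the shifting lemma, passing to $\mathpzc{d}_-\to\mathpzc{d}_+=\mathpzc{d}$ to obtain the $q$-Hahn analogue of Corollary \ref{Corollary stationary q laplace}, and then rerunning the machinery of Section \ref{subsection the baik rains limit} with the steep contours of Proposition \ref{prop steep q hahn} in place of Propositions \ref{prop contour C}--\ref{prop contour D}. You even pin down the precise role of the hypothesis \eqref{qhahn thm hypothesis}: since $\max_{w\in D}|w|=2-\varsigma\to 2-\mathpzc{d}$ for the circle \eqref{contours Veto}, the requirement $\frac{q}{\mathpzc{d}}\max_{w\in D}|w|<1$ reduces exactly to $\mathpzc{d}>2q/(1+q)$, which is what the paper invokes to establish the rear-tail bound in the analogue of Proposition \ref{prop Upsilon}.
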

\begin{proof}
We see that from the Fredholm determinant identity \eqref{first fredholm determinant qHahn}, employing the same procedure detailed in the proofs of Theorem \ref{theorem q laplace double sided}, we can decouple the quantity $y_x$ from the random shift $\mathsf{m}$. This leads us to an exact expression for the $q$-Laplace transform $\mathbb{E}_{q \text{H}(\mathpzc{d} , \mathpzc{d})}\left( (\zeta q^{y_x(t) + x} ; q)_\infty^{-1} \right)$ as
\begin{equation} \label{q laplace qhahn}
\frac{1}{(q;q)_\infty} \sum_{k \geq 0} \frac{(-1)^k q^{\binom{k}{2}+k} }{(q;q)_k} \left( V_x(\zeta q^{-k}) - V_x(\zeta q^{-k-1}) \right),  
\end{equation}
where the function $V_x$ takes the form
\begin{equation} \label{V q hahn}
\begin{split}
V_x(\zeta) = \det (\mathbf{1} - f A_{q \text{H}}) \Bigg(& - \polygamma_0(1/\zeta) + \polygamma_0(q\zeta) - 2 \polygamma_0(q) - x \mathpzc{a}_0(\mathpzc{d};\nu,1)  + t \mathpzc{a}_0(\mathpzc{d};\nu,\mu)  \\
& - \mathpzc{d} \sum_{\substack{i,j=1,2\\ (i,j)\neq (1,1)}} \sum_{n \in \mathbb{Z}} f(n)\Phi_{q \text{H},x}^{(i)}(n) \Psi_{q \text{H},x}^{(j)}(n) - \mathpzc{d}  \sum_{n \in \mathbb{Z}}(fA_{q \text{H}} \varrho_{q \text{H}} f\Phi_{q \text{H},x})(n) \Psi_{q \text{H},x}(n) \Bigg).
\end{split}
\end{equation}
Here $\varrho_{q \text{H}}=(\mathbf{1} - f A_{q \text{H}})^{-1}$ and terms $\Phi_{q \text{H},x}^{(i)},\Psi_{q \text{H},x}^{(j)}$ are obtained from $\Phi_{q \text{H},x}, \Psi_{q \text{H},x}$ separating the contribution of pole $\mathpzc{d}$ on integral expressions \eqref{Phi qHahn}, \eqref{Psi qHahn}, analogously to \cref{Phi 1,Phi 2,Psi 1,Psi 2}.

Result \eqref{baik rains limit qhahn} now follows evaluating the large $x$ limit of \eqref{q laplace qhahn}, \eqref{V q hahn} after setting
\begin{equation*}
\zeta = -q^{-\eta_{q\text{H};\varpi} x + \gamma_{q\text{H}} x^{ 1/3 }r}, \qquad t= \kappa_{q\text{H};\varpi} x
\end{equation*}
and this can be done through methods developed throughout Section \ref{section time asymptotics}. We want to remark that the main tool used to compute the asymptotic behavior of the $q$-Laplace transform is the saddle point method, applied to the complex integral expression of the kernels $A_{q \text{H}}, \Phi_{q \text{H},x} \otimes \Psi_{q \text{H},x}$. This procedure is rigorously justified by the statement of Proposition \ref{prop steep q hahn} which guarantees the existence of steep integration contours $C,D$.

The additional hypothesis \eqref{qhahn thm hypothesis}, we made on the density parameter $\mathpzc{d}$ is analogous to condition $2a/(1+\sigma) < q^{-1} \mathpzc{d}$ stated in \eqref{bound R_a} for the Higher Spin Six Vertex Model. In particular \eqref{qhahn thm hypothesis} implies that, for $x$ large enough,
\begin{equation*}
\frac{q}{ \mathpzc{d} } \times \max_{w \in D} |w| < 1,
\end{equation*}
where the contour $D$ is defined in \eqref{contours Veto}. This fact can be used to establish the exponential decay of rear tails of terms
\begin{equation*}
- \sum_{\substack{i,j=1,2\\ (i,j)\neq (1,1)}}  f(n)\Phi_{q \text{H},x}^{(i)}(n) \Psi_{q \text{H},x}^{(j)}(n) -  (fA_{q \text{H}} \varrho_{q \text{H}} f\Phi_{q \text{H},x})(n) \Psi_{q \text{H},x}(n),
\end{equation*}
in the expression of $V_x$  analogously to what is explained in the proof of Proposition \ref{prop Upsilon}. 
\end{proof}

\subsection{Continuous time processes} \label{section half continuous}
There are mainly two possible scalings giving rise to meaningful continuous time versions of the Higher Spin Six Vertex Model (here we only treat the unfused model, therefore $J=1$) and the aim of this paragraph is to briefly define and make a few comments on them.

Possibly the naivest way to proceed is to simply scale the spectral parameter $u= -\varepsilon$ along with the discrete time $t= \ceil{ \varepsilon^{-1} \mathfrak{t}}$ and then let $\varepsilon$ go to zero. In this limit the vertex weights $\mathsf{L}_{\xi_x u,s_x}$ become, up to order $\varepsilon$, as shown in Table \ref{weights table half continuous}.
\begin{table}[h!]
  \centering
  \begin{tabular}{l||c|c|c|c}
     & \includegraphics[]{images/fig_VertexConf_0_g_0_g.pdf}  &  \includegraphics[]{images/fig_VertexConf_0_g_1_gm1.pdf} & 
     \includegraphics[]{images/fig_VertexConf_1_g_1_g.pdf}& 
     \includegraphics[]{images/fig_VertexConf_1_g_0_gp1.pdf}\\
    \hhline{-----}
    $\mathsf{L}_{\xi_x \varepsilon, s_x}$ & $1-\xi_x s_x (1-q^g) \varepsilon$ & $\xi_x s_x (1-q^g) \varepsilon $ & $s_x^2 q^g + \xi_x s_x (1- s_x^2 q^g)\varepsilon$ & $1-s_x^2 q^g - \xi_x s_x (1- s_x^2 q^g)\varepsilon$\\
  \end{tabular}
  \caption{\small{Weights $\mathsf{L}_{\xi_x \varepsilon,s_x}(i_1, j_1 |\  i_2, j_2)$ in the continuous time scaling $u=\varepsilon \sim 0 $.}}
  \label{weights table half continuous}
\end{table}
Using standard arguments one can rigorously show the convergence to a Markov process $\mathfrak{X}^{\text{hc}}$ which evolves according to the following rules
\begin{itemize}
\item paths move on the quadrant $\mathbb{Z}_{\geq 2} \times \mathbb{R}_{\geq 0}$, where at each discrete $x$-coordinate is associated a Poisson clock with rate $\xi_x s_x (1-q^{\#\{\text{paths travelling the $x$-th lane}\}})$
\item each path travels vertically with unitary speed, possibly temporarily sharing with others the same route. When the clock at the generic position $x$ rings, one of the paths occupying this lane is immediately diverted to its right and placed at the random location $x+k$ with probability
\begin{equation} \label{hc1}
s_{x+1}^2 \cdots s_{x+k-1}^2 q^{\mathfrak{h}(x+1) - \mathfrak{h}(x+k)} (1-s_{x+k}^2 q^{\mathfrak{h}(x+k) - \mathfrak{h}(x+k+1)})
\end{equation}
and from there it continues its upward movement.
If at the moment the $x$-th clock rings no path is occupying position $x$, nothing happens.
\item paths randomly emanate from the boundary $\{1\} \times \mathbb{R}_{\geq 0}$ with exponential law
\begin{equation} \label{hc2}
\mathbb{P}(\text{a path is generated from the segment }\{1\} \times [t, t+ \varepsilon ] ) = \xi_1 s_1 \varepsilon + o(\varepsilon).
\end{equation}
If a generation happens at ordinate $t$ the path travels horizontally to the random location $(k,t)$ with probability
\begin{equation} \label{hc3}
s_{2}^2 \cdots s_{k-1}^2 q^{\mathfrak{h}(2) - \mathfrak{h}(k)} (1-s_{k}^2 q^{\mathfrak{h}(k) - \mathfrak{h}(k+1)})
\end{equation}
and subsequently proceeds turning upward.
\end{itemize}
In our description we assumed, as before, the definition of the height function $\mathfrak{h}(x)$ at a specific ordinate $t$ to be the number of paths strictly to the right of $x-1$.

A possible relevant degenerations of this model is the $q$-TASEP. This is obtained setting $s_k^2=0$ at each location while keeping the $\xi_k s_k$'s finite positive quantities, to be interpreted as speeds of particles.

Other than the procedure we just described, one can possibly consider the ASEP scaling of the Stochastic Six Vertex Model. In this case we set
\begin{equation} \label{ASEP scaling}
\begin{gathered}
s_1^2=0, \quad -\xi_1s_1 > 0,\qquad \qquad \xi_x = 1, \quad s_x=q^{-\frac{1}{2}} \quad \text{for } x>1,\\
u=q^{-\frac{1}{2}} \left( 1+(1-q)\varepsilon \right), \qquad \qquad t=\ceil{ \varepsilon^{-1} \mathfrak{t} },
\end{gathered}
\end{equation}
while, at the same time we shift the position $x$ to $x+t$. With the choice \eqref{ASEP scaling}, when $\varepsilon$ becomes small, we see that paths tend to have diagonal trajectories and the displacements from these diagonals have to be read as the movement of particles in an ASEP dynamics. The coefficient $-\xi_1 s_1$, which previously determined the rate at which paths entered the system now has to be interpreted as a density parameter. More specifically, the initial conditions given by this specializations are half-Bernoulli, in the sense that they describe an ASEP having, at time $\mathfrak{t}=0$, the positive half line empty and each remaining location independently filled with a particle with probability
\begin{equation*}
\frac{ -\xi_1 s_1 q^{-\frac{1}{2}} }{ 1 -\xi_1 s_1 q^{-\frac{1}{2}}  }.
\end{equation*}
The asymmetry here is governed by $q$ and one interprets the height function of the Higher Spin Six Vertex Model as the integrated current of particles through a specific location.

By making use of analytic continuation techniques as those considered above, one can extend these initial conditions to the so called double sided Bernoulli initial conditions, where particles fill locations also in the positive half line independently with Bernoulli law. In this setting, in \cite{Aggarwal2016FluctuationsASEP}, the author was able to study asymptotic properties of the integrated current $\mathfrak{J}$ of the stationary ASEP. As one could expect, also in this case determinantal structures were found considering the $q$-Laplace transform
\begin{equation} \label{q Laplace ASEP}
\left\langle \frac{1}{(\zeta q^{\mathfrak{J}};q)_\infty} \right\rangle.
\end{equation}
An interesting observation is that determinantal expressions for the $q$-Laplace transform \eqref{q Laplace ASEP} obtained in \cite{Aggarwal2016FluctuationsASEP} are similar yet different from the ones we would get employing elliptic determinantal techniques utilized in \cite{q-TASEPtheta} and in this paper. In a future work \cite{ImaMucSasASEPqTASEP} we plan to shed light on relationships between these two different determinantal structures and there we will provide a more detailed analysis of the Stochastic Six Vertex Model, which therefore is here omitted.
\subsection{Inhomogeneous Exponential Jump Model}
This continuous time/continuous space degeneration of the ($J=1$) Higher Spin Six Vertex Model was recently introduced in \cite{BorodinPetrov2017ExpoJump}, where authors were able to study asymptotics and phase transitions of the model with step initial conditions. Here we will apply our results to take into account its stationary state.

The emergence of a continuous space structure in the Higher Spin Six Vertex Model can be recovered considering a particular scaling of the Markov process $\mathfrak{X}^{\text{hc}}$ defined in the Section \ref{section half continuous}. For its description we need the following
\begin{Def} \label{parameters exponential jump}
In this Section we denote with $\mathbf{B} \subset \mathbb{R}_{\geq 0}$ (set of roadblocks) a fixed discrete set, with no accumulation point and for any arbitrary small positive number $\varepsilon$ we set
\begin{equation*}
\mathbf{B}^\varepsilon = \{ \floor{\varepsilon^{-1} b} | b \in \mathbf{B} \}.
\end{equation*}
To the set $\mathbf{B}$ we associate a weight function
\begin{equation*}
\mathsf{p}:\mathbf{B} \rightarrow[0,1].
\end{equation*}
Moreover we set $\mathsf{v}, \mathsf{k}$ to be positive functions and we refer to them respectively as speed and jumping distance function.   
\end{Def}
In light of Definition \ref{parameters exponential jump} we now specialize the continuous time process $\mathfrak{X}^{\text{hc}}$ setting
\begin{equation} \label{scaling exponential jump}
\begin{gathered}
s_i^2= \begin{cases}
e^{-\varepsilon \mathsf{k}(i\varepsilon)}, \qquad &\text{ if } i\in \mathbb{Z}_{\geq 2} \setminus \mathbf{B}^\varepsilon \\
\mathsf{p}(b), \qquad & \text{ if } i \in \mathbb{Z}_{\geq 2} \cap \mathbf{B}^\varepsilon
\end{cases}, \\
\xi_i s_i =\mathsf{v}(i \varepsilon) \qquad \text{for } i \geq 2, \qquad \qquad x= \ceil{ \epsilon^{-1}\mathbf{x} }.
\end{gathered}
\end{equation}
When $\varepsilon$ goes to zero the half continuous Higher Spin Six Vertex Model $\mathfrak{X}^{\text{hc}}$ converges to a process $\mathfrak{X}^{\text{EJ}}$ which we are yet to describe. To do so we need to degenerate expressions \eqref{hc1},\eqref{hc2}, \eqref{hc3} according to the scaling detailed in \ref{scaling exponential jump} and take the limit $\varepsilon\to 0$. In this case we make use of the zero range process language, where the paths at location $(x,t)$ are interpreted as a stack of particles at location $x$ and time $t$. As particles randomly move on $\mathbb{R}_{>0}$, we describe the process through the quantity
\begin{equation*}
    \mathfrak{H}(\mathbf{x},\mathfrak{t})= -\#\left\{
    \parbox{35mm}{\raggedright particle in the interval $(0,\mathbf{x}]$ at time 0} \right\} + \# \left\{
    \parbox{45mm}{\raggedright particle moving to the right of $\mathbf{x}$ during the interval $(0,\mathfrak{t}]$} \right\},
\end{equation*}
that is clearly the analogous of the height function $\mathcal{H}$ \eqref{height correct}.

Given a locally finite configuration of stacks of particles on $\mathbb{R}_{> 0}$, they evolve according to $\mathfrak{X}^{\text{EJ}}$ as follows:
\begin{itemize}
\item at any location $y$ hosting a stack of particles, independently of the rest of the system, a Poisson clock rings with rate $\mathsf{v}(y)(1-q^{\# \{ \text{number of particles at $y$} \}} )$. As the clock rings, exactly one particle of the stack becomes active 
\item an active particle at location $y$ and time $\mathfrak{t}$ performs a random jump to its right of length $\Delta y$ taken with law
\begin{equation*}
\mathbb{P}\left( \Delta y \geq l | \text{ the jump started at $y$} \right) = e^{-\mathsf{K}(y,y+l)} q^{\mathfrak{H}(y_+,\mathfrak{t})-\mathfrak{H}(y+l,\mathfrak{t})} \prod_{b\in \mathbf{B}:y<b<y+l} \textsf{p}(b).
\end{equation*}
Here $\mathsf{K}(y, y+l)=\int_y^{y+l}\mathsf{k}(t)dt$ and the difference $\mathfrak{H}(y_+,\mathfrak{t})-\mathfrak{H}(y+l,\mathfrak{t})$ is the number of particles lying within the interval $(y, y+l]$ at time $\mathfrak{t}$.
\item active particles are injected at position $y=0$ according to a Poisson process with intensity $\mathsf{v}(0)$.
\end{itemize}
The mechanism is clear. When a particle decides to jump, it chooses a distance $\Delta y$ with exponential distribution and, as it flies to reach the targeted destination, it might get captured by a stack of other particles with probability $1-q^{\# \{ \text{ particles in the stack} \} }$ or blocked by a roadblock $b$ with probability  $1-\mathsf{p}(b)$ (see Figure \ref{figure Expo Jump}).

\begin{figure}[ht]
\centering
\includegraphics[scale=1.2]{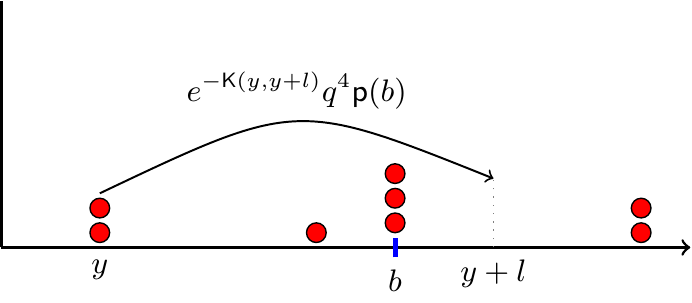}

\caption{\small A graphic visualization of the dynamics of the Exponential Jump Model. In the example exactly one particle at location $y$ is attempting a jump of length greater than $l$. The probability of this event involves the presence of other particles in the term $q^4$ and the presence of a roadblock at location $b$ in the term $\mathsf{p}(b)$ other than the exponential factor $e^{-\mathsf{K}(y,y+l)}$. } \label{figure Expo Jump}
\end{figure}
To discuss continuous degenerations of the Higher Spin Six Vertex Model with random boundary conditions considered above we give the following
\begin{Def} \label{marked Poisson process}
Assume we have positive piecewise continuous function $\mathfrak{L}$ possessing left and right limit at each point, a family of probability distributions $\{\mathbf{\varphi}_y\}_{y \in \mathbb{R}}$ on $\mathbb{Z}_{ \geq 1}$ and an open subset $\Omega \subset \mathbb{R}$. We define the marked Poisson process $\mathfrak{m}_{ \mathfrak{L}, \bm{\varphi} }$ as the process which picks a set of points $\{ y_r \}_r$ on $\Omega$ according to an inhomogeneous Poisson process with rate given by $\mathfrak{L}$ and assigns to each one of the $y_r$'s, independently, a mark chosen with law $\bm{\varphi}_{y_r}$.
\end{Def}
The definition of the marked Poisson process comes in handy when we take the scaling form of the double sided $q$-negative binomial Higher Spin Six Vertex Model.
The basic limit
\begin{equation*}
\rho^M\frac{(e^{-\varepsilon \alpha };q)_M}{(q;q)_M} \frac{(\rho;q)_\infty}{( e^{-\varepsilon \alpha } \rho ;q)_\infty} \xrightarrow[ \varepsilon \sim 0 ]{}  \left( \delta_{0,M} +\varepsilon \alpha \frac{\rho^M}{1-q^M} \right)\left( 1 + \varepsilon \alpha \polygamma_0(\rho) \right)^{-1} +\smallO(\varepsilon),
\end{equation*}
implies that, with the scaling \eqref{scaling exponential jump}, the half continuous Higher Spin Six Vertex model with $q$-NB($s_i^2,v/(\xi_i s_i)$) entries at location $i$ in the horizontal boundary becomes the Exponential Jump Model with initial conditions described as:
\begin{itemize}
\item on $\mathbb{R}_{>0} \setminus \mathbf{B}$ places stacks of particles  according to an inhomogeneous marked Poisson process $\mathfrak{m}_{\mathfrak{L},\bm{ \varphi}}$, where
\begin{equation*}
\bm{\varphi}_y (k) =  \frac{(  v /\mathsf{v}(y) )^k}{1-q^k}\polygamma_0 ( v / \mathsf{v}(y) )^{-1} \qquad \text{and} \qquad \mathfrak{L}(y)=\mathsf{k}(y) \polygamma_0 ( v / \mathsf{v}(y)); 
\end{equation*}
\item on each $b \in \mathbf{B}$ places a stack of $M_b$ particles with probability
\begin{equation} \label{probability roadblocks}
( v / \mathsf{v}(b) )^{M_b} \frac{(\mathsf{p}(b);q)_{M_b}}{(q;q)_{M_b}} \frac{( v / \mathsf{v}(b);q)_\infty}{( \mathsf{p}(b) v/\mathsf{v}(b) ; q )_\infty}.
\end{equation}
\end{itemize}
We refer this process with the symbol $\mathcal{P}(v,\mathsf{v},\mathsf{k},\mathsf{p})$. We report a simple property of a general Marked Poisson process on the line.
\begin{Prop}
Consider a marked Poisson process $\mathfrak{m}_{\mathfrak{L}, \bm{\varphi}}$, as in Definition \ref{marked Poisson process} and consider the random variable $\mathcal{M}(a,b)$ to be the sum of marks contained within the interval $(a,b)$. Then, we have
\begin{equation} \label{generating marked poisson}
\mathbb{E}\left( z^{\mathcal{M}(a,b)} \right) = \exp\left\{ \int_a^b \mathfrak{L}(y) \left( \mathbb{E}_{\bm{\varphi}_y}(z^\mathcal{M}) -1 \right) dy  \right\},
\end{equation}
where $\mathbb{E}_{\bm{\varphi}_y}(z^\mathcal{M})$ is the generating function of the random variable $\mathcal{M}$ counting the marks at the generic location $y$.
\end{Prop}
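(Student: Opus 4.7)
The natural strategy is to condition on the total number of points of the underlying inhomogeneous Poisson process lying in $(a,b)$, exploit the classical order-statistics property of Poisson processes to describe the locations of these points, and then use the independence of the marks to factorize the generating function.

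More explicitly, let $N=N(a,b)$ denote the number of points of $\mathfrak{m}_{\mathfrak{L},\bm{\varphi}}$ falling inside $(a,b)$ and set $\Lambda = \int_a^b \mathfrak{L}(y)\,dy$. The first step is to recall that $N$ is Poisson distributed with parameter $\Lambda$, and that conditionally on $\{N=n\}$ the locations $y_1, \dots, y_n$ of the points are i.i.d. with common density $\mathfrak{L}(y)/\Lambda$ on $(a,b)$. Since marks are assigned independently to distinct points, we can write, conditionally on $N=n$ and on the positions $(y_1,\dots,y_n)$,
\begin{equation*}
\mathbb{E}\bigl(z^{\mathcal{M}(a,b)} \,\big|\, N=n,\, y_1,\dots,y_n\bigr) = \prod_{r=1}^n \mathbb{E}_{\bm{\varphi}_{y_r}}(z^\mathcal{M}).
\end{equation*}

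The second step is to integrate out the positions and sum over $n$. Averaging first over the i.i.d. locations we get
\begin{equation*}
\mathbb{E}\bigl(z^{\mathcal{M}(a,b)} \,\big|\, N=n\bigr) = \left( \frac{1}{\Lambda}\int_a^b \mathfrak{L}(y)\,\mathbb{E}_{\bm{\varphi}_y}(z^{\mathcal{M}})\,dy \right)^{\!n},
\end{equation*}
and then summing against the Poisson weights $e^{-\Lambda}\Lambda^n/n!$ yields
\begin{equation*}
\mathbb{E}\bigl(z^{\mathcal{M}(a,b)}\bigr) = e^{-\Lambda}\exp\!\left( \int_a^b \mathfrak{L}(y)\,\mathbb{E}_{\bm{\varphi}_y}(z^{\mathcal{M}})\,dy \right) = \exp\!\left( \int_a^b \mathfrak{L}(y)\bigl(\mathbb{E}_{\bm{\varphi}_y}(z^{\mathcal{M}})-1\bigr)\,dy \right),
\end{equation*}
which is exactly \eqref{generating marked poisson}. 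Alternatively, one could bypass the conditioning argument by partitioning $(a,b)$ into a fine mesh of subintervals of width $\Delta y$, noting that on each subinterval one has with probability $\mathfrak{L}(y)\Delta y + o(\Delta y)$ a single marked point (and with probability $1-\mathfrak{L}(y)\Delta y + o(\Delta y)$ no point), computing the generating function on each cell as $1+\mathfrak{L}(y)\Delta y\bigl(\mathbb{E}_{\bm{\varphi}_y}(z^\mathcal{M})-1\bigr) + o(\Delta y)$, and taking the product over cells in the continuum limit.

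There is essentially no obstacle: both approaches are standard applications of well-known facts on (inhomogeneous) Poisson point processes. The only mild care needed is to ensure the integrals involved are finite, which is guaranteed for $|z|\leq 1$ by $\mathbb{E}_{\bm{\varphi}_y}(z^{\mathcal{M}})-1$ being bounded in $y$ on the compact interval $(a,b)$ together with piecewise continuity of $\mathfrak{L}$; for general $z$ the formula is to be read as an identity of formal (or analytically continued) generating functions within the domain of convergence of the right-hand side.
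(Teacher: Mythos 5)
Your proposal is correct and takes essentially the same route as the paper. Both arguments condition on the number of Poisson points in $(a,b)$, use the independence of marks to factorize, and re-sum the Poisson series into the exponential. The only presentational difference is that you invoke the order-statistics property (points i.i.d.\ with density $\mathfrak{L}/\Lambda$ given $N=n$), whereas the paper writes out the joint density directly as a sum over ordered $k$-tuples $a\le y_1<\dots<y_k\le b$ together with a sum over partitions of $N$ and their permutations (the mark sequence $\eta_1,\dots,\eta_k$), then symmetrizes to replace the ordered simplex integral by $\tfrac{1}{k!}\bigl(\sum_{\eta\ge 1}\int_a^b \mathfrak{L}(y)\,z^\eta\varphi_y(\eta)\,dy\bigr)^k$; that symmetrization is exactly what your order-statistics step encapsulates. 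Your second sketched route (the infinitesimal cell decomposition) is a legitimate alternative, though the paper does not use it.
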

\begin{proof}
First we see that, from the definition itself of the inhomogeneous marked Poisson process we can write the probability distribution of $\mathcal{M}(a,b)$ as
\begin{equation*}
\mathbb{P}(\mathcal{M}(a,b)=N) = e^{-\int_a^b \mathfrak{L}(y)dy} \sum_{k\geq 0} \sum_{ \substack{\mu \vdash N \\ l(\mu) = k } } \sum_{\nu \sim \mu} \int_{a\leq y_1 < y_2 < \cdots <y_k \leq b } \prod_{j=1}^k \mathfrak{L}(y_j) \varphi_{y_j}( \eta_j) dy_j
\end{equation*}
where the notation $\eta \sim \mu$ means that $\eta$ is a permutation of the partition $\mu$. The generating function can now be evaluated as
\begin{equation*}
\begin{split}
\mathbb{E}\left( z^{\mathcal{M}(a,b)} \right) &=e^{-\int_a^b \mathfrak{L}(y)dy} \sum_{k\geq 0} \sum_{ \mu: l(\mu) = k } \sum_{\nu \sim \mu} \int_{a\leq y_1 < y_2 < \cdots <y_k \leq b } \prod_{j=1}^k \mathfrak{L}(y_j) z^{\eta_j} \varphi_{y_j}( \eta_j) dy_j\\
&=e^{-\int_a^b \mathfrak{L}(y)dy} \sum_{k\geq 0} \frac{1}{k!} \left( \sum_{\eta \geq 1} \int_a^b \mathfrak{L}(y) z^\eta \varphi_y( \eta ) dy \right)^k,
\end{split}
\end{equation*}
which gives \eqref{generating marked poisson}.
\end{proof}
Below we report a proof that the process $\mathcal{P}(v,\mathsf{v}, \mathsf{k}, \mathsf{p})$ indeed admit the stationary measure as a particular case.
\begin{Prop}
Assume that $\mathsf{v}(0) < \mathsf{v}(y)$ for all $y>0$. Then, the process $\mathcal{P}(\mathsf{v}(0),\mathsf{v}, \mathsf{k}, \mathsf{p} )$ is stationary for the Exponential Jump Model.
\end{Prop}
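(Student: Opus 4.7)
The plan is to deduce the statement as a continuous limit of the Burke's property of Proposition \ref{prop translation invariant}, specialized to the unfused ($J=1$) Higher Spin Six Vertex Model. Indeed, the Exponential Jump Model $\mathfrak{X}^{\text{EJ}}$ was constructed as the $\varepsilon\to 0$ scaling limit of the half-continuous process $\mathfrak{X}^{\text{hc}}$, which is itself the $u=-\varepsilon$, $t=\lceil\varepsilon^{-1}\mathfrak{t}\rceil$ limit of $\mathfrak{X}$ with $J=1$. I would therefore start from the Higher Spin Six Vertex Model on $\Lambda_{1,0}$ with $J=1$, take the spatial parameters $s_i,\xi_i$ as in \eqref{scaling exponential jump}, and the spectral parameters $u_t\equiv -\varepsilon$. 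Under these specializations, for $v=\mathpzc{d}$ satisfying the bound \eqref{bound curvy d}, Proposition \ref{prop translation invariant} guarantees that the law of $\{\mathsf{m}^t_x\}_{x\ge 2}$ is invariant in $t$ and factorizes over $x$, with $\mathsf{m}_x^t\sim q\text{NB}(s_x^2,\mathpzc{d}/(\xi_x s_x))$.

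Next I would identify the scaling limits of these boundary laws. At a non-roadblock location $y=i\varepsilon\in \mathbb{R}_{>0}\setminus\mathbf{B}$, we have $s_i^2=1-\varepsilon\mathsf{k}(y)+\smallO(\varepsilon)$ and $\xi_i s_i=\mathsf{v}(y)$, so expanding the $q$-Pochhammer factors in \eqref{q negative binomial} gives
\begin{gather*}
\mathbb{P}(\mathsf{m}_i^0=0)=1-\varepsilon\mathsf{k}(y)\polygamma_0(v/\mathsf{v}(y))+\smallO(\varepsilon),\\
\mathbb{P}(\mathsf{m}_i^0=M)=\varepsilon\mathsf{k}(y)\frac{(v/\mathsf{v}(y))^M}{1-q^M}+\smallO(\varepsilon),\qquad M\geq 1,
\end{gather*}
which exactly matches the infinitesimal rate $\mathfrak{L}(y)dy$ of the marked Poisson process $\mathfrak{m}_{\mathfrak{L},\bm{\varphi}}$ with mark law $\bm{\varphi}_y$. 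At a roadblock $b\in\mathbf{B}$, $s_i^2=\mathsf{p}(b)$ stays of order one and the $q$-negative binomial law survives the limit as the stated distribution \eqref{probability roadblocks}. Hence the stationary initial conditions of the Higher Spin Six Vertex Model degenerate, under \eqref{scaling exponential jump}, precisely to the process $\mathcal{P}(v,\mathsf{v},\mathsf{k},\mathsf{p})$.

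It remains to match the injection mechanism on the boundary $\{1\}\times\mathbb{Z}_{\geq 1}$. Here, Proposition \ref{prop translation invariant} prescribes the independent Bernoulli law $\mathsf{j}_1^t\sim\text{Ber}(\mathpzc{d}\varepsilon/(1+\mathpzc{d}\varepsilon))$, so the number of paths injected in a mesoscopic interval $[t,t+\varepsilon^{-1}d\mathfrak{t}]$ becomes, in the limit $\varepsilon\to 0$, a Poisson random variable of intensity $\mathpzc{d}\,d\mathfrak{t}$; in the zero range interpretation this is the rate at which active particles are injected at $y=0$. Comparing with the definition of $\mathfrak{X}^{\text{EJ}}$, where the injection rate at $y=0$ is $\mathsf{v}(0)$, the stationary specialization $v=\mathpzc{d}$ must be complemented by $\mathpzc{d}=\mathsf{v}(0)$, which by hypothesis $\mathsf{v}(0)<\mathsf{v}(y)$ satisfies the continuous analog of \eqref{bound curvy d}. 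Since the distribution of the occupation variables $\{\mathsf{m}^t_x\}$ is invariant in $t$ before taking the limit, by continuity the law of the discretized particle configuration of $\mathfrak{X}^{\text{EJ}}$ started from $\mathcal{P}(\mathsf{v}(0),\mathsf{v},\mathsf{k},\mathsf{p})$ is invariant in $\mathfrak{t}$.

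The main obstacle I anticipate is purely technical and lies in justifying the interchange of limits: one must argue that the Markov semigroup of the discretized model converges to that of $\mathfrak{X}^{\text{EJ}}$ on a sufficiently rich class of cylindrical observables (as already implicit in the construction of $\mathfrak{X}^{\text{EJ}}$ in \cite{BorodinPetrov2017ExpoJump}), and that the sequence of initial measures is tight enough for the stationarity identity $\mathbb{E}(F(\eta_0))=\mathbb{E}(F(\eta_{\mathfrak{t}}))$ to pass to the limit. This is handled by the standard argument of localizing $F$ on a bounded spatial window, where only finitely many stacks are involved with high probability (thanks to the local finiteness of the Poisson intensity $\mathfrak{L}$ and of $\mathbf{B}$), and on such windows the convergence of finite-dimensional distributions combined with the uniform bound on occupation numbers suffices.
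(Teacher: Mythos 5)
Your approach is genuinely different from the paper's. The paper gives a direct, self-contained computation: it fixes a location $L$, partitions $[0,L]$ according to the roadblocks, and shows that the infinitesimal probability of a particle crossing $L$ telescopes to $\mathsf{v}(0)\Delta\mathfrak{t}$, hence the current is a Poisson process of rate $\mathsf{v}(0)$ and the density is preserved. This verification uses the generating function identity \eqref{generating marked poisson} together with the explicit form of the roadblock measure \eqref{probability roadblocks} and is entirely autonomous of the discrete theory. Your route instead leverages the already-proved Burke's property (Proposition \ref{prop translation invariant}) for the $J=1$ Higher Spin Six Vertex Model and pushes it through the simultaneous space--time scaling \eqref{scaling exponential jump}, $u_t\equiv-\varepsilon$, $t=\lceil\varepsilon^{-1}\mathfrak{t}\rceil$. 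Interestingly, the paper explicitly flags your strategy as available (``one could simply regard it as a continuous space modification of the argument we used in section \ref{About the initial conditions}'') but deliberately chooses the direct verification instead, presumably because it sidesteps the semigroup-convergence and tightness considerations that you correctly identify as the technical crux of the limit argument.

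A few details in your sketch are worth tightening. First, the Burke boundary law from Proposition \ref{prop translation invariant} is $\mathsf{j}_1^t\sim q\text{NB}(q^{-1},q\mathpzc{d}u_t)$, which for $J=1$ indeed reduces to $\text{Ber}(-\mathpzc{d}u_t/(1-\mathpzc{d}u_t))$, so your Poisson intensity $\mathpzc{d}$ at the origin is right, and matching it with the prescribed injection rate $\mathsf{v}(0)$ of $\mathfrak{X}^{\text{EJ}}$ forces $\mathpzc{d}=\mathsf{v}(0)$. Second, the discrete hypothesis \eqref{bound curvy d} becomes, at finite $\varepsilon$, $\mathpzc{d}<\inf_{i\geq 2}\mathsf{v}(i\varepsilon)$; the stated continuous assumption $\mathsf{v}(0)<\mathsf{v}(y)$ for all $y>0$ makes each term of this infimum strictly larger than $\mathpzc{d}$, but it does not a priori keep the infimum uniformly away from $\mathpzc{d}$ as $\varepsilon\to 0$ when $\mathsf{v}$ is merely continuous at the origin. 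This is harmless for your argument (you only need \eqref{bound curvy d} at fixed $\varepsilon$), but it is the kind of degenerating bound that makes the paper's choice to bypass the limit argument defensible. Finally, the justification of the interchange is where the real work hides: one must establish convergence of the discrete transfer semigroup to the generator of $\mathfrak{X}^{\text{EJ}}$ on a convergence-determining class of observables, which the paper treats as already granted by \cite{BorodinPetrov2017ExpoJump} for the step case but does not spell out for the random marked-Poisson initial conditions you are feeding in. Your localization argument (bounded window, local finiteness of $\mathbf{B}$ and $\mathfrak{L}$, finitely many stacks with high probability) is the right outline, but it would need to be made precise before this counts as a complete proof; the paper's telescoping identity buys you a proof that is entirely elementary and requires none of this machinery.
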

\begin{proof}
The proof of this fact, in the homogeneous case, was already given in \cite{BorodinPetrov2017ExpoJump}. One could simply regard it as a continuous space modification of the argument we used in section \ref{About the initial conditions}. Nonetheless it might still be interesting to explicitly work out the calculations in this particular case as well.

We aim to prove that at any location $L$ the process which counts particles jumping from the region $[0,L]$ to $(L, \infty)$ is a Poisson process with rate $\mathsf{v}(0)$, and hence the current is constant and the density is stationary.

Since the set $\mathbf{B}$ has no accumlation points we can write $\mathbf{B} \cap [0,L] = \{b_1, \dots, b_n\}$, for some finite $n$ and subsequently we partition $[0,L]$ as a disjoint union of intervals
\begin{equation*}
[0,L]= I_0 \cup I_1 \cup \cdots \cup I_n,
\end{equation*}
where $I_0=[0,b_1], I_1=(b_1, b_2], \dots, I_n=(b_n,L]$.
In the infinitesimal time interval $(0, \Delta \mathfrak{t})$ we assume that, up to the terms quadratically small in $\Delta \mathfrak{t}$, we can write
\begin{equation} \label{sum prob expo jump}
\mathbb{P}
    \left\{
    \parbox{30mm}{\raggedright a particle crosses $L$ during $(0, \Delta \mathfrak{t})$}
    \right\}=\sum_{k=0}^n \mathbb{P}
    \left\{
    \parbox{41mm}{\raggedright a particle crosses $L$ during $(0, \Delta \mathfrak{t})$ jumping from $I_k$}
    \right\} + o(\Delta \mathfrak{t}).
\end{equation}
Now, let's consider the single term in the summation in the right hand side of \eqref{sum prob expo jump} and after cleverly using the definition of the model we can easily see that
\begin{equation} \label{calculation prob expo jump}
\begin{split}
\mathbb{P}
    \left\{
    \parbox{41mm}{\raggedright a particle crosses $L$ during $(0, \Delta \mathfrak{t})$ jumping from $I_k$}
    \right\} =& \mathsf{v}(0) \Delta \mathfrak{t} \int_{b_k}^{b_{k+1}} dy \frac{\mathsf{k}(y)}{1-\mathsf{v}(0) / \mathsf{v}(y) } e^{-\mathsf{K}(y,L)} \mathbb{E}(q^{\mathfrak{H}(y,0)-\mathfrak{H}(L,0)})\prod_{j=k+1}^n \mathsf{p}(b_j)\\
    &+ \mathsf{v}(0) \Delta \mathfrak{t} \left( 1- \frac{\mathsf{p}(b_{k+1})(1-\mathsf{v}(0)/\mathsf{v}(b_{k+1}))}{1-\mathsf{p}(b_{k+1})\mathsf{v}(0)/ \mathsf{v}(b_{k+1})} \right) \mathbb{E}(q^{\mathfrak{H}(y,0)-\mathfrak{H}(L,0)})\prod_{j=k+2}^n \mathsf{p}(b_j).
\end{split}
\end{equation}
At this point we can split the difference $\mathfrak{H}(y,0)- \mathfrak{H}(L,0)$ in two independent contribution: one coming from the marked Poisson process which we baptize as $\mathcal{M}(y,L)$ and the other coming form particles encountered at roadblocks which we call $ \mathcal{M}_{\mathbf{B}}(y,L) $. Using expression \eqref{generating marked poisson} and $q$ summation identities concerning the measure \eqref{probability roadblocks}, we obtain
\begin{equation*}
\begin{split}
\mathbb{E}(q^{\mathcal{H}(y,0) - \mathcal{H}(L,0)}) &= \mathbb{E}(q^{\mathcal{M}(y,L)}) \mathbb{E}(q^{ \mathcal{M}_{\mathbf{B}}(y,L) } )\\
&=\exp\left\{ - \int_y^L \mathsf{k}(w) \frac{ \mathsf{v} (0) / \mathsf{v}(w) }{ 1- \mathsf{v} (0) / \mathsf{v}(w) } dw \right\} \prod_{b:y<b<L}\frac{1- \mathsf{v} (0) / \mathsf{v}(b_j) }{ 1- \bm{p}(b_j) \mathsf{v} (0) / \mathsf{v}(b_j) }.
\end{split}
\end{equation*}
Substituting this last identity in the left hand side of \eqref{calculation prob expo jump} we get
\begin{equation*}
\begin{split}
\mathsf{v}(0) \Delta \mathfrak{t} & \left[ \prod_{j=k+2}^n \frac{ \mathsf{p}(b_j) (1- \mathsf{v} (0) / \mathsf{v}(b_j)) }{ 1- \mathsf{p}(b_j) \mathsf{v} (0) / \mathsf{v}(b_j) } \exp\left\{ - \int_{b_{k+1}}^L \mathsf{k}(w) \frac{ \mathsf{v} (0) / \mathsf{v}(w) }{ 1- \mathsf{v} (0) / \mathsf{v}(w) } dw \right\} \right.\\
& \qquad - \left. \prod_{j=k+1}^n \frac{ \mathsf{p}(b_j) (1- \mathsf{v} (0) / \mathsf{v}(b_j)) }{ 1- \mathsf{p}(b_j) \mathsf{v} (0) / \mathsf{v}(b_j) } \exp\left\{ - \int_{b_k}^L \mathsf{k}(w) \frac{ \mathsf{v} (0) / \mathsf{v}(w) }{ 1- \mathsf{v} (0) / \mathsf{v}(w) } dw \right\} \right],
\end{split}
\end{equation*}
from which we deduce that the sum on the right hand side of \eqref{sum prob expo jump} telescopes to $\mathsf{v}(0) \Delta \mathfrak{t}$.
\end{proof}

We now state a result analogous to that of Theorem \ref{theorem: q laplace shifted introduction} in order to characterize the distribution of  $\mathfrak{H}$ in the Exponential Jump Model with initial conditions given by $\mathcal{P}(v, \mathsf{v},\mathsf{k},\mathsf{p})$. In order to apply techniques developed throughout Section \ref{section matching and fredholm} we will assume that the speed function $\mathsf{v}$ is of the form
\begin{equation} \label{velocity condition}
\mathsf{v}(y)=
\begin{cases}
\mathsf{v}_0, \qquad &\text{if } y=0\\
1, \qquad &\text{if } y>0,
\end{cases}
\end{equation}
where $\mathsf{v_0}<1$ and that the system presents no roadblocks. This means that the spatial inhomogeneity is all encoded in the jumping distance function $\mathsf{k}$, on which we do not make any particular assumption. We will refer to an Exponential Jump Model with initial conditions $\mathcal{P}(v,\mathsf{v},\mathsf{k}, \mathsf{p}=0)$ with $\mathsf{v}$ as in \eqref{velocity condition} with the shorthand EJ($v, \mathsf{v_0}; \mathsf{k} $).
\begin{Prop} \label{prop fredholm det expo jump}
For $0<v<\mathsf{v}_0 <  1 $, we have
\begin{equation} \label{fredholm determinant expo jump}
\mathbb{E}_{\emph{EJ}(v,\mathsf{v_0};\mathsf{k}) \otimes \mathsf{m}} \left( \frac{1}{(\zeta q^{\mathfrak{H}(\mathbf{\emph{x}}, \mathfrak{t}) - \mathsf{m}};q)_{\infty}} \right) = \det(\mathbf{1}-fK_{\text{\emph{EJ}}})_{l^2(\mathbb{Z})},
\end{equation}
where
\begin{equation*}
f(n)= \frac{1}{1-q^{n}/\zeta},    
\end{equation*}
\begin{equation*}
K_{\text{\emph{EJ}}}(n,m)= A_{\text{\emph{EJ}}}(n,m)  + (\mathsf{v}_0 - v) \Phi_{\text{\emph{EJ}},\mathbf{\emph{x}}}(m)\Psi_{\text{\emph{EJ}},\mathbf{\emph{x}}}(n),
\end{equation*}
\begin{equation*}
A_{\text{\emph{EJ}}}(n,m)=  \int_D \frac{dw}{2\pi \mathrm{i}} \int_C \frac{dz}{2\pi \mathrm{i}} \frac{z^{m}}{w^{n+1}} \frac{ \exp\left\{\mathfrak{t}z- \polygamma_0(z) \int_0^{\mathbf{\emph{x}}} \mathsf{k}(y) dy \right\} }{ \exp\left\{\mathfrak{t}w- \polygamma_0(w) \int_0^{\mathbf{\emph{x}}} \mathsf{k}(y) dy \right\} }  \frac{(q v/w;q)_\infty}{(q v/z;q)_\infty} \frac{1}{z-w},
\end{equation*}
\begin{equation*}
\Phi_{\text{\emph{EJ}},\mathbf{\emph{x}}}(n)= \int_D \frac{dw}{2\pi \mathrm{i}} \frac{1}{w^{n+1}} \exp\left\{-\mathfrak{t}w + \polygamma_0\left( w\right) \int_0^{\mathbf{\emph{x}}} \mathsf{k}(y) dy \right\}  \frac{(q v/w;q)_\infty}{( q w/\mathsf{v}_0;q)_\infty} \frac{1}{w-\mathsf{v}_0},
\end{equation*}
\begin{equation*}
\Psi_{\text{\emph{EJ}},\mathbf{\emph{x}}}(n)= \int_C \frac{dw}{2\pi \mathrm{i}} z^n \exp\left\{\mathfrak{t}z- \polygamma_0\left( z \right) \int_0^{\mathbf{\emph{x}}} \mathsf{k}(y) dy \right\} \frac{ ( q z/\mathsf{v}_0;q)_\infty }{ (q v/z;q)_\infty } \frac{1}{z-v}.
\end{equation*}
The contour $D$ encircles $\mathsf{v}_0, 1$ and no other singularity, whereas $C$ contains 0 and $q^{k}v$, for any $k$ in $\mathbb{Z}_{\geq 0}$. Finally $\mathsf{m}$ is a $q$Poisson random variable with parameter $v/\mathsf{v}_0$ independent of the particle process.
\end{Prop}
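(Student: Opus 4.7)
The strategy is to derive \eqref{fredholm determinant expo jump} as a double scaling limit of Proposition \ref{fredholm determinant}, carried out in two stages corresponding to the continuous-time reduction of Section \ref{section half continuous} and to the continuous-space rescaling \eqref{scaling exponential jump} that turns the half-continuous Higher Spin Six Vertex Model into the Exponential Jump Model. Since the convergence of the stochastic processes themselves has already been discussed immediately after \eqref{scaling exponential jump}, it suffices to propagate that convergence through the finite-rank Fredholm determinant representation of Proposition \ref{fredholm determinant}.

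For the first step I would specialise Proposition \ref{fredholm determinant} to $J=1$ and set $u_t=-\varepsilon$, $t=\lceil \varepsilon^{-1}\mathfrak{t}\rceil$. The spectral factor in \eqref{F} degenerates as
\begin{equation*}
\prod_{j=1}^{t}(z u_j;q)_1 = (1+\varepsilon z)^{t}\;\longrightarrow\; e^{\mathfrak{t}z},\qquad \varepsilon\to 0,
\end{equation*}
uniformly on compact sets, producing the exponential appearing in $A_{\text{EJ}}$, $\Phi_{\text{EJ},\mathbf{x}}$ and $\Psi_{\text{EJ},\mathbf{x}}$. For the second step I would apply \eqref{scaling exponential jump} with $\mathsf{p}\equiv 0$, $\mathsf{v}\equiv 1$ off the origin, $\xi_1 s_1=\mathsf{v}_0$, and replace $x$ by $\lceil \varepsilon^{-1}\mathbf{x}\rceil$. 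The basic identity
\begin{equation*}
\frac{(z;q)_\infty}{(z e^{-\varepsilon\mathsf{k}};q)_\infty}=\exp\bigl(-\varepsilon\mathsf{k}\,\polygamma_0(z)\bigr)+O(\varepsilon^{2}),
\end{equation*}
where $\polygamma_0$ is the $q$-polygamma function of Appendix \ref{appendix qfunctions}, collapses the spatial product in \eqref{F} via a Riemann-sum argument to $\exp(-\polygamma_0(z)\int_0^{\mathbf{x}}\mathsf{k}(y)\,dy)$, up to a $z$-independent factor that cancels between $F(z)$ and $F(w)$ in every ratio $F(z)/F(w)$ building the kernel. The pole at $\mathpzc{d}$ becomes the pole at $\mathsf{v}_0$, the pole at $v$ is preserved, and the contour $D$ of Proposition \ref{fredholm determinant} --- initially around $\mathpzc{d}$ together with all $\xi_k s_k$ --- degenerates to a curve around $\mathsf{v}_0$ and the collapsed singularity at $1$, while $C$ around $0$ and $\{q^k v\}_{k\geq 0}$ is preserved as is.

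In parallel, I would match the boundary data. The per-site limit
\begin{equation*}
v^M\frac{(e^{-\varepsilon\mathsf{k}};q)_M}{(q;q)_M}\frac{(v;q)_\infty}{(v e^{-\varepsilon\mathsf{k}};q)_\infty}=\delta_{M,0}\bigl(1-\varepsilon\mathsf{k}\,\polygamma_0(v)\bigr)+\mathbbm{1}_{M\geq 1}\,\varepsilon\mathsf{k}\,\frac{v^M}{1-q^M}+O(\varepsilon^{2})
\end{equation*}
shows that the independent site variables $\mathsf{m}_k^0\sim q\mathrm{NB}(s_k^2,v/(\xi_k s_k))$ converge jointly to the marked Poisson process $\mathcal{P}(v,\mathsf{v},\mathsf{k},0)$ introduced before the proposition, with intensity $\mathfrak{L}(y)=\mathsf{k}(y)\polygamma_0(v)$ and mark distribution $\bm{\varphi}_y(k)\propto v^k/(1-q^k)$; the independent shift $\mathsf{m}\sim q\mathrm{Poi}(v/\mathpzc{d})$ of Theorem \ref{theorem: q laplace shifted introduction} remains a $q\mathrm{Poi}(v/\mathsf{v}_0)$ variable, as claimed.

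The main obstacle will be justifying the interchange of the $\varepsilon\to 0$ limit with the Fredholm expansion. I would handle this by producing uniform-in-$\varepsilon$ bounds on the rank-one summands $f(n)\phi_l(n)\psi_l(m)$ and $f(n)\Phi_x(n)\Psi_x(m)$ of the type established in Proposition \ref{trace class} and Appendix \ref{appendix  bounds}: a suitable prefactor $\tau(n)$ of the form \eqref{tau qHahn} forces exponential decay in $|n|+|m|$ uniformly in the scaling parameter, so that the pointwise convergence of the kernel elements can be promoted, via Hadamard's inequality applied term by term in the Fredholm series, to convergence of the determinant itself, yielding \eqref{fredholm determinant expo jump}.
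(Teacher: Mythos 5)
The paper never writes a proof for this proposition: it is positioned as a continuous--time/continuous--space degeneration of Theorem \ref{theorem: q laplace shifted introduction} and Proposition \ref{fredholm determinant}, stated after the scaling \eqref{scaling exponential jump} is described. Your proposal makes that intended degeneration explicit, and your computations at the level of the kernel are correct: with $J=1$, $\wp=0$, $u_t=-\varepsilon$, $t=\lceil\varepsilon^{-1}\mathfrak{t}\rceil$, the spectral product gives $e^{\mathfrak{t}z}$; with $\xi_ks_k=1$, $s_k^2=e^{-\varepsilon\mathsf{k}(k\varepsilon)}$, $x=\lceil\varepsilon^{-1}\mathbf{x}\rceil$, the per--site expansion $(z;q)_\infty/(ze^{-\varepsilon\mathsf{k}};q)_\infty=e^{-\varepsilon\mathsf{k}\,\polygamma_0(z)}+O(\varepsilon^2)$ (which follows from $\sum_{j\geq0}zq^j/(1-zq^j)=\polygamma_0(z)$) produces the $\polygamma_0$ exponentials; the poles of $\Phi_x,\Psi_x$ at $\mathpzc{d}$ and $v$ become the poles at $\mathsf{v}_0$ and $v$; the factors $(-1)^{x-1}$ generated when combining $\prod_k(w-\xi_ks_k)^{\pm1}$ with $(qw;q)_\infty^{\mp1}$ cancel between $\Phi_x$ and $\Psi_x$; and the $q\mathrm{NB}$ entries collapse to the marked Poisson process while the independent shift $\mathsf{m}$ stays $q\mathrm{Poi}(v/\mathsf{v}_0)$.

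The one step that will not work as written is the claim that the $\tau$--conjugation of Appendix \ref{appendix  bounds} gives exponential decay of the rank--one summands \emph{uniformly in the scaling parameter}. In Proposition \ref{fredholm determinant} the parameter $c$ in $\tau(n)$ is squeezed into the window $\sup_k\{\xi_ks_k\}<c<\inf_k\{\xi_k/s_k\}$, which under the specialization $\xi_ks_k=1$, $\xi_k/s_k=e^{\varepsilon\mathsf{k}(k\varepsilon)}$ shrinks to $1<c<e^{\varepsilon\inf\mathsf{k}}$; hence the decay rate $\log c$ of $\tau(n)$ on the negative half--line tends to $0$ as $\varepsilon\to0$, and no single $\tau$ of the form \eqref{tau} or \eqref{tau qHahn} controls the rear tails uniformly. (This degeneration is visible in the limiting statement itself, which carries no $\tau$ at all.) The remedy is to use, on the negative half--line, the geometric decay supplied by $f$ rather than by $\tau$: after conjugating by $f^{1/2}$, the factor $f^{1/2}(n)\sim|\zeta|^{1/2}q^{|n|/2}$ for $n\to-\infty$ supplies the decay that $\tau$ no longer provides, while for $n,m\to+\infty$ the contours $|z|<1<|w|$ already give geometric smallness; with this modified conjugation the term--by--term Hadamard estimate in the Fredholm series is uniform in $\varepsilon$ and the interchange of limit and series goes through. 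With that repair your argument is complete.
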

As usual the way to obtain formulas useful for the analysis of the stationary state of the exponential jump model is to set $v=\mathsf{v}_0$. This degenerates the quantity $\mathfrak{H} - \mathsf{m}$ and subsequently the right hand side of \eqref{fredholm determinant expo jump}. Therefore we might proceed with removing the dependence on the independent random quantity $\mathsf{m}$ with an argument equal to that of Lemma \ref{lemma shifting argument}, obtaining an expression as
\begin{equation} \label{stationary q laplace expo jump}
\mathbb{E}_{\text{EJ}(v,\mathsf{v}_0;\mathsf{k})}
\left ( \frac{1}{(\zeta q^{\mathfrak{H}(\text{x}, \mathfrak{t})};q)_\infty} \right ) = \frac{1}{(v/\mathsf{v}_0;q)_\infty}\sum_{k\geq 0}\frac{(-1)^kq^{\binom{k}{2}}}{(q;q)_k} \left( \frac{v}{\mathsf{v}_0} \right)^k \mathbb{E}_{\text{EJ}(v,\mathsf{v}_0;\mathsf{k}) \otimes \mathsf{m} }
\left ( \frac{1}{(\zeta q^{\mathfrak{H}(\text{x},\mathfrak{t})-\mathsf{m}-k};q)_\infty} \right ).
\end{equation}
Both left and right hand side of \eqref{stationary q laplace expo jump} can be proven to be analytic functions of both $\mathsf{v}_0$ and $v$ in a neighborhood of $v=\mathsf{v}_0$ which unlocks the mechanisms developed in Sections \ref{section regularizations}, \ref{section time asymptotics} to study asymptotics. In this case, instead of considering large time/space asymptotics, we let the jumping parameter $\mathsf{k}$ grow along with the time. This corresponds to watching the system evolve with particles moving at a slow speed for long period of time. When this is the case, the scaling we adopt is
\begin{equation*}
    \mathfrak{t} = \kappa_{\emph{EJ};\varpi} \EuScript{T} \qquad \text{and} \qquad \mathsf{k}(y)= \mathpzc{k}(y) \EuScript{T},
\end{equation*}
where $\kappa_{\emph{EJ};\varpi}$, along with other scaling parameters is fixed in the following
\begin{Def}[Scaling parameters Exponential Jump Model] \label{scaling expo jump} 
Set $0<\mathsf{v}_0<1$ and $\mathpzc{K}(\emph{x})=\int_0^\mathbf{x} \mathpzc{k}(y) dy$. Then, we set
\begin{equation*}
\gamma_{\emph{EJ}} =  \frac{1}{2^{1/3}} \left( ( \polygamma_3 ( \mathsf{v}_0 ) - \polygamma_2 ( \mathsf{v}_0 ) ) \mathpzc{K}(\emph{x}) \right)^{1/3},
\end{equation*}
\begin{equation*}
\kappa_{\emph{EJ};\varpi} = \frac{1}{\mathsf{v}_0} \polygamma_2 ( \mathsf{v}_0) \mathpzc{K}(\emph{x}) +  \frac{1}{\mathsf{v}_0} (\polygamma_3 ( \mathsf{v}_0) -\polygamma_2 ( \mathsf{v}_0) ) \mathpzc{K}(\emph{x}) \frac{\varpi}{\gamma x^{1/3}},
\end{equation*}
\begin{equation*}
\eta_{\emph{EJ};\varpi} = \kappa_{\emph{EJ};\varpi} \mathsf{v}_0 - \polygamma_1(\mathsf{v}_0) \mathpzc{K}(\emph{x}) + (\polygamma_3 ( \mathsf{v}_0) -\polygamma_2 ( \mathsf{v}_0) ) \mathpzc{K}(\emph{x}) \frac{\varpi^2}{\gamma^2 x^{2/3}}.
\end{equation*}
\end{Def}
As a last result we can establish Baik-Rains fluctuations of $\mathfrak{H}(\text{x},\eta_{\text{EJ};\varpi} \EuScript{T})$ around $\eta_{\text{EJ};\varpi} \EuScript{T}$.
\begin{theorem}
Consider the stationary state of the inhomogeneous exponential jump model and let $q$ be in a sufficiently small neighborhood of zero. Then, we have
\begin{equation*}
\lim_{\EuScript{T} \rightarrow \infty} \mathbb{P}_{\emph{EJ}(\mathsf{v_0},\mathsf{v_0}, \EuScript{T} \mathpzc{k})}\left( \frac{ \mathfrak{H}(\emph{x},\kappa_{\emph{EJ};\varpi} \EuScript{T}) - \eta_{\emph{EJ};\varpi} \EuScript{T} }{ \gamma_{\emph{EJ}} \EuScript{T}^{1/3}  } > - r \right) =F_\varpi(r).
\end{equation*}
\end{theorem}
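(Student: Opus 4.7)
The plan is to mirror, step-by-step, the proof of Theorem \ref{theorem: baik rains limit H}. I would first start from the Fredholm determinant representation \eqref{fredholm determinant expo jump} (valid for $v<\mathsf{v}_0$) and apply Lemma \ref{lemma shifting argument} to the $q$-Poisson random variable $\mathsf{m}\sim q\text{Poi}(v/\mathsf{v}_0)$, producing the expansion \eqref{stationary q laplace expo jump} with
\begin{equation*}
V_{\text{EJ};v,\mathsf{v}_0}(\zeta)=\frac{1}{1-v/\mathsf{v}_0}\det(\mathbf{1}-fK_{\text{EJ}})_{l^2(\mathbb{Z})}.
\end{equation*}
Then, decomposing $\Phi_{\text{EJ},\mathbf{x}}=\Phi_{\text{EJ},\mathbf{x}}^{(1)}+\Phi_{\text{EJ},\mathbf{x}}^{(2)}$ and $\Psi_{\text{EJ},\mathbf{x}}=\Psi_{\text{EJ},\mathbf{x}}^{(1)}+\Psi_{\text{EJ},\mathbf{x}}^{(2)}$ by separating the residues at $w=\mathsf{v}_0$ and $z=v$ (mimicking \cref{Phi 1,Phi 2,Psi 1,Psi 2}) and summing the $\Phi^{(1)}\Psi^{(1)}$ contribution via the Ramanujan ${}_1\psi_1$ identity exactly as in the proof of Theorem \ref{theorem q laplace double sided}, I would show that $V_{\text{EJ};v,\mathsf{v}_0}$ extends analytically across $v=\mathsf{v}_0$, producing the Exponential Jump analogue of \eqref{stationary q laplace 2} with a function $V_{\text{EJ}}(\zeta):=V_{\text{EJ};\mathsf{v}_0,\mathsf{v}_0}(\zeta)$ of the same structure as \eqref{V}.

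Next I would perform the saddle point analysis. Introducing the scaling $n=-\eta_{\text{EJ};\varpi}\EuScript{T}+\nu\gamma_{\text{EJ}}\EuScript{T}^{1/3}$, $m=-\eta_{\text{EJ};\varpi}\EuScript{T}+\theta\gamma_{\text{EJ}}\EuScript{T}^{1/3}$ and $\zeta=-q^{-\eta_{\text{EJ};\varpi}\EuScript{T}+\gamma_{\text{EJ}}\EuScript{T}^{1/3}r}$, and multiplying by the conjugating factor $\mathfrak{b}(\nu)/\mathfrak{b}(\theta)$ of Proposition \ref{proposition scaling}, each contour integral in $A_{\text{EJ}}$, $\Phi_{\text{EJ},\mathbf{x}}$, $\Psi_{\text{EJ},\mathbf{x}}$ acquires the form $\int e^{\EuScript{T} g_{\text{EJ}}(z)}(\cdots)\,dz$ with phase function
\begin{equation*}
g_{\text{EJ}}(z)=-\eta_{\text{EJ};\varpi}\log z+\kappa_{\text{EJ};\varpi}\,z-\polygamma_0(z)\,\mathpzc{K}(\mathbf{x}).
\end{equation*}
Using Definition \ref{scaling expo jump} and $z\tfrac{d}{dz}\polygamma_k(z)=\polygamma_{k+1}(z)$, a direct computation shows that at $\varpi=0$, $g_{\text{EJ}}$ has a double critical point at $\mathsf{v}_0$ with $g_{\text{EJ}}'''(\mathsf{v}_0)=\bigl(\polygamma_2(\mathsf{v}_0)-\polygamma_3(\mathsf{v}_0)\bigr)\mathpzc{K}(\mathbf{x})/\mathsf{v}_0^3$, so that $-\tfrac{1}{2}g_{\text{EJ}}'''(\mathsf{v}_0)\mathsf{v}_0^3=\gamma_{\text{EJ}}^3$; for general $\varpi$, the critical point shifts to $\varsigma_{\text{EJ}}=\mathsf{v}_0\bigl(1+\varpi/(\gamma_{\text{EJ}}\EuScript{T}^{1/3})\bigr)+\mathcal{O}(\EuScript{T}^{-2/3})$, with the analogue of Proposition \ref{prop g properties} still holding. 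I would then replay the estimates of Lemmas \ref{convergence on moderately large sets}, \ref{decay tails}, \ref{bounded discrete Kernel} and Propositions \ref{prop discrete Airy}, \ref{proposition limit V 2}, \ref{prop Upsilon}, \ref{prop Upsilon 2}: deform $D$ and $C$ to pass through $\varsigma_{\text{EJ}}$ at the canonical angles $\pm\pi/3$, match the local Taylor expansion to the Airy kernel $K_{\text{Airy}}$ and to the functions $\Upsilon_{\pm\varpi}^{(i)}$, and bound the front and rear tails. Combining these expansions yields $V_{\text{EJ}}(-q^{-\eta_{\text{EJ};\varpi}\EuScript{T}+\gamma_{\text{EJ}}\EuScript{T}^{1/3}r})=\gamma_{\text{EJ}}\EuScript{T}^{1/3}\chi_\varpi(r)+O(1)$, and the bounded-convergence argument from the end of the proof of Theorem \ref{theorem: baik rains limit H} applied to \eqref{stationary q laplace expo jump} identifies the limit with $F_\varpi(r)$.

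The main obstacle will be establishing the global steep descent/ascent properties of $g_{\text{EJ}}$ analogous to Propositions \ref{prop contour C}, \ref{prop contour D}. Fortunately, under the scaling \eqref{scaling exponential jump} the complicated product over $k$ appearing in $F(z)$ of Proposition \ref{fredholm determinant} collapses into the single scalar expression $\polygamma_0(z)\mathpzc{K}(\mathbf{x})$, so that $g_{\text{EJ}}$ depends on the spatial inhomogeneity only through $\mathpzc{K}(\mathbf{x})$. The perturbative strategy of Appendix \ref{appendix contours} should carry over: at $q=0$ one has $\polygamma_0(z)=-\log(1-z)$, and one can exhibit explicit contours through and around $\varsigma_{\text{EJ}}$ on which $\mathfrak{Re}\{g_{\text{EJ}}\}$ satisfies the required monotonicity; a continuity argument then extends the property to a neighborhood of $q=0$, justifying the smallness hypothesis on $q$ in the statement. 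The front-tail domination used in Proposition \ref{prop Upsilon}, which in the Higher Spin Six Vertex case required $2a/(1+\sigma)<q^{-1}\mathpzc{d}$, here reduces to arranging $q\sup_{w\in D}|w|/\mathsf{v}_0<1$ for a suitable choice of $D$, which is again achieved for $q$ sufficiently small.
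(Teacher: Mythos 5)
Your proposal follows exactly the route the paper implies for this theorem (the paper omits the proof but points to Sections 5--6 and the scaling of Definition 7.12): decouple the $q$-Poisson shift $\mathsf{m}$ via Lemma \ref{lemma shifting argument}, analytically continue across $v=\mathsf{v}_0$ via a $\Phi^{(1)}\Psi^{(1)}$ decomposition and the Ramanujan ${}_1\psi_1$ summation, identify the phase function $g_{\text{EJ}}(z)=-\eta_{\text{EJ};\varpi}\log z+\kappa_{\text{EJ};\varpi}z-\polygamma_0(z)\mathpzc{K}(\mathbf{x})$ whose double critical point near $\mathsf{v}_0$ matches Definition 7.12, and replay the saddle-point and bounded-convergence machinery of Section \ref{subsection the baik rains limit}. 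Your derivative computations (vanishing of $g_{\text{EJ}}',g_{\text{EJ}}''$ at $\mathsf{v}_0$ for $\varpi=0$, the value of $g_{\text{EJ}}'''$, and the matching $\gamma_{\text{EJ}}^3=-\tfrac12 g_{\text{EJ}}'''(\mathsf{v}_0)\mathsf{v}_0^3$) are all correct.

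Two small corrections. First, with the paper's convention \eqref{digamma psi} one has $\polygamma_0(z)=\sum_{k\ge 1}z^k/(1-q^k)$, so at $q=0$ it is $\polygamma_0(z)=z/(1-z)$, \emph{not} $-\log(1-z)$; the latter is $\sum_{k\ge 1}z^k/(k(1-q^k))|_{q=0}=\log\tfrac{1}{(z;q)_\infty}|_{q=0}$, i.e.\ the ``$h_{-1}$''-type antiderivative one differentiates to obtain $\polygamma_0$. The perturbative strategy of Appendix \ref{appendix contours} still carries over --- one exhibits explicit circular contours for the rational function $-\eta\log z+\kappa z-\mathpzc{K}\,z/(1-z)$ and then invokes continuity in $q$ near $0$ --- but the explicit $q=0$ model is the one just written, not the logarithm. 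Second, the paper's condition $2a/(1+\sigma)<q^{-1}\mathpzc{d}$ (and its EJ analogue $q\sup_{w\in D}|w|/\mathsf{v}_0<1$) is used to control the \emph{rear} tails $\nu\to-\infty$ in Propositions \ref{prop Upsilon}--\ref{prop Upsilon 2}, not the front tails; you have the right inequality but the opposite label.
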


\appendix

\section{\texorpdfstring{Preliminaries on $q$-deformed quantities}.} \label{appendix qfunctions}

Along the course of the paper we largely made use of $q$-deformed quantities, such as $q$-Pochhammer symbols and $q$-hypergeometric series. 
The reader might consider these as fairly common and established notions, but, for the sake of completeness, we still like to dedicate this appendix to recall their definitions.

Assuming $q$ is a parameter in the interval $[0,1)$, we define the $q$-Pochhammer symbol

\begin{equation}\label{q Pochhammer}
(z;q)_n = \begin{cases}
(1-z)(1-zq) \cdots (1-zq^{n-1}), \qquad \qquad &\text{if } n \in \mathbb{Z}_{>0},\\
1, &\text{if } n=0,\\
(1-zq^{n})^{-1} (1-zq^{n-1})^{-1} \cdots (1-zq)^{-1}, & \text{if } n \in \mathbb{Z}_{<0},
\end{cases}
\end{equation}
for every meaningful $z \in \mathbb{C}$. When $n$ is positive, the $q$-pochammer symbol (A.1) is a polynomial in $z$ and it admits the expansion
\begin{equation} \label{q Pochammer expansion}
(z;q)_n=\sum_{k=0}^n (-z)^k q^{\binom{k}{2}} \binom{n}{k}_q,
\end{equation}
where we introduced the $q$-binomial
\begin{equation}
\binom{n}{k}_q = \frac{(q;q)_n}{(q;q)_k (q;q)_{n-k}}.
\end{equation}
When we let the integer $n$ grow to $+ \infty$, we see that the product in the left hand side of \eqref{q Pochhammer} is convergent and hence we can define

\begin{equation} \label{q pochammer infinite}
(z;q)_{\infty} = \prod_{j \geq 0} (1- z q^{j}).
\end{equation}
An important result concerning $q$-Pochhammer symbols is the summation identity
\begin{equation} \label{q binomial summation}
\sum_{k \geq 0} \frac{(a;q)_k}{(q;q)_k}z^k = \frac{(za;q)_\infty}{(z;q)_\infty} \qquad \qquad \text{for } a\in \mathbb{C}, |z|<1,
\end{equation}
which can be found in \cite{IsmailClassicalAndQuantum}, Theorem 12.2.5. and it is usually called $q$-binomial theorem.  A slightly more general version of summation \eqref{q binomial summation} is the so called $q$-Gauss summation (\cite{IsmailClassicalAndQuantum}, Theorem 12.2.4)
\begin{equation} \label{q Gauss sum}
   \sum_{n \geq 0} \left( \frac{c}{a b} \right)^n \frac{(a,b;q)_n}{(c,q;q)_n} = \frac{(c/a, c/b;q)_\infty}{(c, c/(ab);q)_\infty}, \qquad \text{for } |c/(ab)|<1, \qquad \text{or }b\in q^{\mathbb{Z}_{<0}} . 
\end{equation}

The $q$-hypergeometric series
\begin{equation} \label{q hypergeometric series}
\setlength\arraycolsep{1pt}
{}_{r+1} \phi_{r} \left(\begin{matrix}&a_1,& a_2, & \dots, & a_{r+1} &\\&b_1,&b_2, &\cdots ,& b_{r}&\end{matrix} \Big| q, z \right) = \sum_{k \geq 0} \frac{ (a_1 , \dots , a_{r+1};q)_k }{ (b_1, \dots , b_{r},q;q)_k }z^k,
\end{equation}
is defined for generic parameters $a_1, \dots, a_{r+1} \in \mathbb{C}$, $b_1, \dots, b_{r} \in \mathbb{C} \setminus q^{\mathbb{Z}_{<0}}$ and $|z|<1$. In the case when at least one of the $a_j$ is of the form $q^{-k}$, for some non-negative integer $k$, the $q$-hypergeometric series \eqref{q hypergeometric series} becomes a finite sum and its definition holds also for more general complex numbers $z$. The regularized terminating $q$-hypergeometric function is also defined as 
\begin{equation} \label{regularized q hypergeometric series}
\setlength\arraycolsep{1pt}
{}_{r+1} \bar{\phi}_{r} \left(\begin{matrix}&q^{-n},& a_1, & \dots, & a_r &\\&b_1,&b_2, &\cdots ,& b_r &\end{matrix} \Big| q, z \right) = \sum_{k = 0}^n z^k \frac{(q^{-n};q)_k}{(q;q)_k} \prod_{j=1}^r (a_j;q)_k (q^k b_j ;q)_{n-k}.
\end{equation}
In Section \ref{About the initial conditions} we used the $q$-analog of the Chu-Vandermonde identity (\cite{IsmailClassicalAndQuantum}, (12.2.17)) that we report as
\begin{equation}
    {}_{2} \bar{\phi}_{1} \left(\begin{matrix}q^{-n}, a\\c \end{matrix} \Big| q, q \right) = \frac{(c/a;q)_n}{(c;q)_n}a^n.
\end{equation}
In the paper we also made use of functions $\polygamma_j$, defined as
\begin{equation} \label{digamma psi}
\polygamma_j(z) = \sum_{k\geq 1} \frac{k^j z^k}{1-q^k}.
\end{equation}
They are related to the more classical $q$-polygamma function \cite{weissteinQPolygamma}
\begin{equation*}
\bm{\psi}_q(\theta) = - \log (1-q) + \log(q) \sum_{n \geq 0} \frac{q^{n+\theta}}{1- q^{n + \theta}},
\end{equation*}
since 
\begin{equation} \label{real polygamma}
\polygamma_0(z) = \frac{1}{\log(q)}\left[ \log(1-q) + \bm{\psi}_q\left( \frac{\log(z)}{\log(q)} \right)  \right]
\end{equation}
and 
\begin{equation}
\frac{d}{dz} \polygamma_j(z) = \frac{1}{z} \polygamma_{j+1}(z).
\end{equation}
The inverse of the infinite $q$-Pochhammer symbol \eqref{q pochammer infinite} is ofter called $q$-exponential and through it one can define a $q$-deformed notion of the common Laplace transform. For a given $f \in \ell^1(\mathbb{Z})$ the function 
\begin{equation} \label{q-Laplace Appendix}
\tilde{f}(\zeta) = \sum_{n \in \mathbb{Z}} \frac{f(n)}{(q^n \zeta;q)_\infty} \qquad \text{for }\zeta \in \mathbb{C} \setminus q^{\mathbb{Z}}
\end{equation}
is the $q$-Laplace transform of $f$. As for the usual Laplace transform, the operation $f \mapsto \tilde{f}$ admits an inverse. This is discussed, for example, in \cite{q-TASEPtheta} and we do not report the exact form of the inverse $q$-Laplace transform as we do not explicitly make use of it during this paper.
\section{\texorpdfstring{Bounds for $\phi_l, \psi_l, \Phi_x, \Psi_x$}.}\label{appendix  bounds}
We collect here some useful bounds for the quantities $\phi_l, \psi_l, \Phi_x, \Psi_x$ defined in \cref{phi,psi,Phi,Psi}.
Terms $\Phi_x,\Psi_x$ can be further decomposed as 
\begin{gather*}
\Phi_x(n) = \Phi_x^{(1)}(n) + \Phi_x^{(2)}(n),\\
\Psi_x(n) = \Psi_x^{(1)}(n) + \Psi_x^{(2)}(n),
\end{gather*}
obtained separating from the integration \eqref{Phi} (resp. \eqref{Psi}) the contribution of pole $w= \mathpzc{d}$ (resp. $z= v$) from that of other poles. Their exact expression was given in \cref{Phi 1,Phi 2,Psi 1,Psi 2}.
\begin{Prop} \label{useful inequalities}
Let $v<\mathpzc{d}$. Then, for all fixed $x$, there exist constants $\Gamma_1, \Gamma_2, \Gamma_3, \Gamma_4>0$, such that
\begin{equation}
|\phi_l(n)|,|\psi_l(n)|,|\Phi_x(n)|,|\Phi^{(2)}_x(n)| < \Gamma_1 e^{-\Gamma_2 |n|} \qquad \text{for all }n \in \mathbb{Z} \label{geometric bound phi}
\end{equation}
and
\begin{equation}
|\Psi_x(n)| < 
\begin{cases}
\Gamma_1 e^{-\Gamma_2 |n|} \qquad \text{if }n \in \mathbb{Z}_{<0}\\
\Gamma_3 e^{-\Gamma_4 |n|} \qquad \text{if }n \in \mathbb{Z}_{\geq 0}. \label{inequality Psi_x}
\end{cases}
\end{equation}
Moreover $\Gamma_1,\Gamma_2$ can be chosen so that their relative bounds also hold for $v,\mathpzc{d}$ in the region \eqref{condition v d analytical continuation} (in this case the parameter $b$ appearing in the definition of $\tau$ \eqref{tau} satisfies $qv< b<\mathpzc{d}$).
\end{Prop}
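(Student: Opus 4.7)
The proof strategy I would adopt is a direct sup estimate of each contour integral on a circle whose radius is chosen once for each sign of $n$, so that the explicit power $|w|^{\pm(n+\text{const})}$ (resp.~$|z|^{\pm(n+\text{const})}$) from the integrand dominates the factor $b^n$ or $c^n$ coming out of $\tau$. Under the placement conditions \eqref{eq: placements Xi S} and the constraint $v<b<\mathpzc{d}<c$ on $\tau$, the intervals of admissible radii are all non-empty, so the whole argument is essentially bookkeeping.

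For $\phi_l$, $\Phi_x$ and $\Phi_x^{(2)}$ I would take $D$ to be the circle $|w|=R_D$ with
\begin{equation*}
R_D\in\bigl(\sup_{i\geq 2}\xi_i s_i,\;\min\{\mathpzc{d}/q,\,c\}\bigr);
\end{equation*}
this interval is non-empty because $q\sup_i\xi_i s_i<\mathpzc{d}$ and $\sup_i\xi_i s_i<c$, and the circle encloses $\mathpzc{d}$ together with every $\xi_k s_k$ while avoiding the remaining poles of $1/F(w)$, which are located at $\mathpzc{d} q^{-k-1}$, $\xi_k s_k q^{-j-1}$ and $\xi_k q^{-j}/s_k$ for $j,k\geq 0$. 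On $D$ the non-monomial part of each integrand is bounded by a finite constant $M_D=M_D(x,R_D)$, so for $n\geq 0$
\begin{equation*}
|\phi_l(n)|\leq b^n R_D^{-(x+n-l+1)}M_D=(b/R_D)^n\cdot R_D^{-(x-l+1)}M_D,
\end{equation*}
which decays exponentially because $b<\mathpzc{d}<R_D$; for $n<0$ the analogous estimate is $(R_D/c)^{|n|}$ times a constant, which decays because $R_D<c$. The argument for $\Phi_x$ and $\Phi_x^{(2)}$ is the same, with the extra factor $1/(w-\mathpzc{d})$ in $\Phi_x$ controlled by $1/(R_D-\mathpzc{d})$.

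For $\psi_l$ and $\Psi_x$ I would split on the sign of $n$. For $n\geq 0$ take $C$ to be the circle $|z|=r_+$ with $r_+\in(qv,b)$ in the case of $\psi_l$ and $r_+\in(v,b)$ in the case of $\Psi_x$; note that the integrand of $\psi_l$ has no pole at $z=v$ (only at $vq,vq^2,\ldots$), so the contour need not enclose $v$ and may be deformed freely through that point. For $n<0$ take $C$ to be the circle $|z|=r_-$ with $r_-\in(c,\inf_i\xi_i/s_i)$; the deformation from the originally defined small $C$ to this larger contour crosses no singularity of the integrand, since all interior poles lie at $|z|\leq v<c$ and the poles of $F$ at $\xi_k q^{-j}/s_k$ stay strictly outside $|z|=r_-$. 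The resulting bounds are $(r_+/b)^n$ and $(c/r_-)^{|n|}$ times constants depending on $x$ but not on $n$, yielding \eqref{geometric bound phi} for $\psi_l$ and \eqref{inequality Psi_x} for $\Psi_x$. The bound for $\Psi_x$ with $n\geq 0$ genuinely requires $v<b$, hence $v<\mathpzc{d}$, and this is why the constants $\Gamma_3,\Gamma_4$ are not guaranteed in the extended region.

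The uniformity assertion for $\Gamma_1,\Gamma_2$ over the region \eqref{condition v d analytical continuation} is the only place where extra care is needed: choose $b$ and the radii $r_+,r_-,R_D$ as continuous functions of $(v,\mathpzc{d})$ on a compact subregion of $\{qv<\mathpzc{d}<v/q\}$ and take the supremum of the integrand sup-norms over that subregion. The intervals $(qv,b)$, $(c,\inf_i\xi_i/s_i)$ and $(\sup_i\xi_i s_i,\min\{\mathpzc{d}/q,c\})$ remain non-empty throughout, and the bound for $\Psi_x$ at $n<0$ is the only one involving $v$ critically that still survives because only the large contour $C_-$ is used there. The main obstacle is therefore not any single estimate but the accurate accounting, case by case, of which singularities the deformation from the originally defined contours may cross and which it must encircle; once this is verified separately for $\psi_l$ and $\Psi_x$, the exponential bounds follow mechanically from the ratios $b/R_D$, $R_D/c$, $r_+/b$ and $c/r_-$ all being strictly less than $1$.
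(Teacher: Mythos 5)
Your estimates for $\psi_l$ and $\Psi_x$ coincide with the paper's argument: deform $C$ to a circle of radius $r_+\in(qv,b)$ (resp.\ $(v,b)$) when $n\geq 0$ and to a circle of radius $r_-\in(c,\inf_i\xi_i/s_i)$ when $n<0$, both deformations being legitimate because the original $C$ already encircles the origin together with the poles $vq^k$. However, your argument for $\phi_l$, $\Phi_x$, $\Phi^{(2)}_x$ has a genuine gap. The contour $D$ appearing in \eqref{phi}, \eqref{Phi}, \eqref{Phi 2} is a small closed curve encircling $\{\mathpzc{d},\xi_2 s_2,\dots,\xi_x s_x\}$ only; it has winding number zero about $w=0$. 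Replacing it by the circle $|w|=R_D$ centred at the origin, which winds once around $0$, is not a valid deformation: the integrand is singular at $w=0$, from the pole of $w^{-(x+n-l+1)}$ (resp.\ $w^{-(n+1)}$) compounded with the essential singularity of $(qv/w;q)_\infty$ (and, when $\wp\neq 0$, the accumulating poles of $(v\wp/w;q)_\infty^{-1}$ coming from $1/F$). Your catalogue of singularities to avoid lists only poles of $1/F$ at $\mathpzc{d}q^{-k-1}$, $\xi_k s_k q^{-j-1}$, $\xi_k q^{-j}/s_k$ and never mentions $w=0$. Consequently the chain of inequalities controls $\tau(n)\int_{|w|=R_D}$, not $\phi_l(n)=\tau(n)\int_D$, and the two differ by a contribution at $w=0$ which is left unestimated.

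The paper sidesteps the issue altogether: $\int_D$ is a finite sum of residues at the poles actually enclosed by $D$, namely $\xi_2 s_2,\dots,\xi_{l+1}s_{l+1}$ for $\phi_l$, and additionally $\mathpzc{d}$ for $\Phi_x$. Each residue carries a monomial $(\xi_j s_j)^{-(x+n-l+1)}$ or $\mathpzc{d}^{-(n+1)}$, and since every such base point lies in the open annulus $b<|\cdot|<c$, multiplying by $\tau(n)$ immediately gives geometric decay on both sides. If you prefer a sup-norm bound over residue accounting, you would need to choose a contour of the right shape: e.g.\ a circle centred on the positive real axis whose distance to the origin lies in $(b,\mathpzc{d})$ on the inner side and in $(\sup_i\xi_i s_i,c)$ on the outer side — such a circle exists because $b+\sup_i\xi_i s_i-\mathpzc{d}<c$ under \eqref{eq: placements Xi S} — and which therefore never winds around $w=0$; on that contour $|w|$ is bounded below by something $>b$ and above by something $<c$, and your bookkeeping then runs correctly. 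Your treatment of uniformity and of the $v<\mathpzc{d}$ versus $qv<\mathpzc{d}<v/q$ distinction for $\Psi_x$ is fine.
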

\begin{proof}
We start with the terms $\phi_l,\Phi_x(n),\Phi_x^{(2)}$. Evaluating the complex integrals as sums of residues it is straightforward to get the inequalities
\begin{equation*}
|\phi_l(n)|,|\Phi_x(n)|,|\Phi_x^{(2)}(n)|\leq \text{const.} \tau(n) \left( \frac{\mathbbm{1}_{n\geq 0}}{|\mathpzc{d}|^{|n|}} + \mathbbm{1}_{n<0} \max_{i\geq 2}(|\xi_i s_i|)^{|n|}\right)
\leq \Gamma_1 e^{-\Gamma_2 |n|},
\end{equation*}
for some constants $\Gamma_1, \Gamma_2$ depending on the integrand functions but not on $n$. 

To obtain a similar bound for the term $\psi_l(n)$ we distinguish two cases. When $n$ is positive we take the contour $C$ to be a circle of radius $r_+$ so that $qv < r_+ < b$. On the other hand, when $n$ is negative we take $C$ to be a circle of radius $r_-$ strictly bigger than $c$, not containing any of the numbers $\xi_i/s_i$ (we remark that the definition itself of $\tau(n)$ and of numbers $b,c$ is tailor-made for these conditions to be possible). With this choices we easily get 
\begin{equation*}
|\psi_l(n)| \leq \text{const.}\frac{1}{\tau(n)} \left( \mathbbm{1}_{n\geq 0} r_+^{|n|} + \mathbbm{1}_{n<0} \frac{1}{r_-^{|n|}} \right)
\leq \Gamma_1 e^{-\Gamma_2 |n|}.
\end{equation*}

An argument equivalent to that used for $\psi_l$ can be carried to show \eqref{inequality Psi_x}. The only difference here is that the radius $r_+$ has to be chosen so that $v < r_+ < b$ and hence we cannot extend this bound to the region $ qv <b \leq v $. 
\end{proof}

\begin{Prop} \label{useful inequalities 2}
Let $v$ satisfy \eqref{condition v} and $v<\mathpzc{d}$ or possibly \eqref{condition v d analytical continuation}. Then, for each $x$, there exist constants $\Gamma_1, \Gamma_2>0$ such that
\begin{equation} \label{inequality phi Psi}
|\phi_l (n) \Psi_x (n) |,|\Phi_x^{(2)}(n) \Psi_x (n) | < \Gamma_1 e^{-\Gamma_2 |n|}.
\end{equation}

\end{Prop}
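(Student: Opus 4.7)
The case $n<0$ is essentially free: Proposition~\ref{useful inequalities} supplies an exponential bound $\Gamma_1 e^{-\Gamma_2|n|}$ for each of $\phi_l(n)$, $\Phi_x^{(2)}(n)$ and $\Psi_x(n)$ in the entire range considered (including the extended region), so multiplying two such bounds immediately gives \eqref{inequality phi Psi}. Likewise, if $v<\mathpzc{d}$ and $n\geq 0$, Proposition~\ref{useful inequalities} gives the exponential bound $\Gamma_3 e^{-\Gamma_4 n}$ on $\Psi_x(n)$ and the product bound follows by multiplication. The only delicate regime is therefore $n\geq 0$ under condition \eqref{condition v d analytical continuation}, where $v\geq\mathpzc{d}$ is possible and the bound on $\Psi_x(n)$ from Proposition~\ref{useful inequalities} is not available.

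In this regime I would split $\Psi_x=\Psi_x^{(1)}+\Psi_x^{(2)}$ according to \eqref{Psi1 + Psi2}-\eqref{Psi 2}, and control the two products $\phi_l\cdot\Psi_x^{(j)}$ separately. For $\Psi_x^{(2)}$, the integration contour $C_1$ encloses only the cluster $\{vq^j:j\geq 1\}$; choosing a circle of radius $\rho$ with $qv<\rho<b$ (which is possible because \eqref{condition v d analytical continuation} gives $qv<b<\mathpzc{d}$) and recalling $\tau(n)=b^n$ for $n\geq 0$, a direct modulus estimate yields $|\Psi_x^{(2)}(n)|\leq C(\rho/b)^n$ with $\rho/b<1$. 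Combined with the bound $|\phi_l(n)|\leq \Gamma_1 e^{-\Gamma_2 n}$ from Proposition~\ref{useful inequalities} (which, as noted there, remains valid in the extended region), the product $\phi_l(n)\Psi_x^{(2)}(n)$ decays exponentially.

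For the piece $\phi_l(n)\Psi_x^{(1)}(n)$ I would use the explicit closed form \eqref{Psi 1}, which gives
\[
\Psi_x^{(1)}(n)=\frac{v^n}{\tau(n)}\cdot c_x(v), \qquad c_x(v)=\frac{\prod_{k=2}^{x}(v-\xi_k s_k)\,F(v)}{(q;q)_\infty},
\]
so that $\Psi_x^{(1)}(n)=(v/b)^n c_x(v)$ for $n\geq 0$. On the other hand, evaluating $\phi_l(n)$ by residues at the finitely many poles $\{\xi_{k+1}s_{k+1}\}_{k=1}^{l}$ inside $D$ yields
\[
|\phi_l(n)|\leq C_l\,\tau(n)\,\bigl(\min_{1\leq k\leq l}\xi_{k+1}s_{k+1}\bigr)^{-n-(x-l+1)}\leq \tilde C_l\Bigl(\tfrac{b}{\inf_{k\geq 2}\xi_k s_k}\Bigr)^{n},
\]
so the $b^n$ factors cancel and
\[
|\phi_l(n)\Psi_x^{(1)}(n)|\leq \tilde C_l|c_x(v)|\Bigl(\tfrac{v}{\inf_{k\geq 2}\xi_k s_k}\Bigr)^{n}.
\]
Condition \eqref{condition v}, assumed in the statement, guarantees $v<\inf_k\xi_k s_k$, so this last ratio is strictly less than $1$ and the piece decays exponentially, completing the bound on $\phi_l(n)\Psi_x(n)$.

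The treatment of $\Phi_x^{(2)}(n)\Psi_x(n)$ is entirely parallel: $\Phi_x^{(2)}$ is defined in \eqref{Phi 2} by an integral over a contour $D_1$ enclosing only $\{\xi_k s_k\}_{k=2}^{x}$, so a residue expansion gives the same type of bound $|\Phi_x^{(2)}(n)|\leq C\bigl(b/\inf_k\xi_k s_k\bigr)^n$, and the same split of $\Psi_x$ into $\Psi_x^{(1)}+\Psi_x^{(2)}$ delivers the conclusion. The only real content of the proof is the observation that the troublesome factor $v^n$ in $\Psi_x^{(1)}$ is compensated by the $b^{-n}$ growth encoded in $\tau(n)^{-1}$, and that the residue analysis of $\phi_l$ or $\Phi_x^{(2)}$ replaces the weak bound $(b/\mathpzc{d})^n$ by the stronger $(b/\inf_k\xi_k s_k)^n$, whose combination with $\Psi_x^{(1)}$ is controlled by condition \eqref{condition v}. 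The main potential obstacle --- that $v$ may exceed $\mathpzc{d}$, breaking the naive product estimate --- is thus resolved by exploiting that \eqref{condition v} survives inside \eqref{condition v d analytical continuation}, so the decay is driven by $\inf_k\xi_k s_k$ rather than by $\mathpzc{d}$.
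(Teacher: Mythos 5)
Your proof is correct and rests on the same two observations that drive the paper's argument: the cancellation of $\tau(n)$ between $\phi_l$ (resp.\ $\Phi_x^{(2)}$), which carries a factor $\tau(n)$, and $\Psi_x$, which carries $\tau(n)^{-1}$; and the fact that condition \eqref{condition v}, $v<\inf_k\xi_k s_k$, remains available throughout the extended region \eqref{condition v d analytical continuation} and therefore supplies the decay rate, with $\inf_k\xi_k s_k$ rather than $\mathpzc{d}$ doing the work. The organizational difference is that the paper does not split $\Psi_x$: it deforms the contour $C$ in \eqref{Psi} to a circle of radius $v+\epsilon$ (still enclosing the full cluster $\{vq^k\}_{k\geq 0}$) and bounds the entire product $|\phi_l(n)\Psi_x(n)|$ in one step by a constant times $(v+\epsilon)^n/(\min_k\xi_k s_k)^n$, then takes $\epsilon$ small enough. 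You instead decompose $\Psi_x=\Psi_x^{(1)}+\Psi_x^{(2)}$ as in \eqref{Psi1 + Psi2}, dispense with $\Psi_x^{(2)}$ (whose poles $\{vq^k\}_{k\geq 1}$ sit strictly below $b$, so it is already exponentially small and can be multiplied crudely against the $\phi_l$ bound from Proposition~\ref{useful inequalities}), and balance the dangerous $v^n$ growth carried by $\Psi_x^{(1)}$ against a residue estimate $|\phi_l(n)|\lesssim(b/\inf_k\xi_k s_k)^n$ that is sharper than the $(b/\mathpzc{d})^n$ recorded in Proposition~\ref{useful inequalities}. Both routes are valid and give the same conclusion; the paper's is shorter, while your split makes more transparent that $\Psi_x^{(1)}$ is the sole term for which condition \eqref{condition v} is genuinely needed.
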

\begin{proof}
From Proposition \ref{useful inequalities} we see that we only have to prove \eqref{inequality phi Psi} for positive $n$'s. When this is the case we see directly from the integral expression \eqref{Psi} and \eqref{phi} that we can bound both $|\phi_l(n) \Psi_x(n)|$ and $|\Phi^{(2)}_x(n) \Psi_x(n)|$ with some quantity proportional to
\begin{equation} \label{dummy estimate}
\frac{|v + \epsilon|^n}{\min_{k \geq 2} |\xi_k s_k|^n },
\end{equation}
by simply taking the $C$ contour as a circle of radius $v + \epsilon$, for $\epsilon$ being sufficiently small. Due to the condition 
$$
v < \min_{k \geq 2}|\xi_k s_k|,
$$
we see that $\epsilon$ can be chosen so that \eqref{dummy estimate} decays to zero and this completes the proof.
\end{proof}
\begin{Prop} \label{useful inequalities 3}
Let $v, \mathpzc{d}$ satisfy \eqref{condition v d analytical continuation}. Then, for each fixed $x$, there exist constants $\Gamma_1, Gamma_2>0$ such that
\begin{equation*}
|f(n) \Phi_x^{(1)}(n) \Psi_x^{(2)} (n) | < \Gamma_1 e^{-\Gamma_2 |n|}.
\end{equation*}
\end{Prop}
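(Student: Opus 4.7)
The plan is to exploit a crucial cancellation of the $\tau(n)$ factors in the product $\Phi_x^{(1)}(n)\Psi_x^{(2)}(n)$. Combining \eqref{Phi 1} and \eqref{Psi 2}, the $\tau(n)$ appearing in $\Phi_x^{(1)}$ and the $1/\tau(n)$ appearing in $\Psi_x^{(2)}$ cancel, so that
\[
\Phi_x^{(1)}(n)\Psi_x^{(2)}(n) \;=\; \frac{C_x}{\mathpzc{d}^{\,n+1}} \int_{C_1} z^{n-1}\, H(z)\, \frac{dz}{2\pi \mathrm{i}},
\]
where $C_x$ is an $n$-independent constant and $H(z) = (v/z;q)_\infty^{-1} \prod_{k=2}^x (z-\xi_k s_k)\, F(z)$. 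The poles of $H$ are located at $\{vq^k\}_{k\geq 0}$ and at $\{\xi_k/(s_k q^j)\}_{k\geq 2,\, j\geq 0}$, and the proof reduces to estimating the resulting contour integral after a suitable deformation of $C_1$.

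First I would treat $n\geq 0$: deform $C_1$ to a circle of radius $r_+$ with $qv < r_+ < \min(v,\mathpzc{d})$, an interval which is nonempty by \eqref{condition v d analytical continuation} (which forces $qv<\mathpzc{d}$) and by $qv<v$. No pole of $H$ is crossed in the deformation (the closest exterior pole is $z=v$), so the integrand is uniformly bounded on $C_1$ and one gets $|\Phi_x^{(1)}(n)\Psi_x^{(2)}(n)| \leq C\,(r_+/\mathpzc{d})^n$, which decays geometrically. Multiplied by $|f(n)|$, which is bounded for $n\geq 0$, this yields the required estimate on the positive side.

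For $n<0$ I would instead deform $C_1$ outward to a circle of radius $r_-$ with $\max(v,\mathpzc{d}) < r_- < \inf_k \xi_k/s_k$. Such an interval is nonempty thanks to \eqref{condition v} and \eqref{eq: placements Xi S}, so the deformation reaches beyond $\mathpzc{d}$ without crossing the outer family of poles. Only the pole at $z=v$ can be encountered, and this happens exactly when $v<\mathpzc{d}$. The integral over the enlarged contour is bounded by a constant times $r_-^{\,n-1}$ and, combined with $\mathpzc{d}^{-n-1}$, contributes a term of order $(\mathpzc{d}/r_-)^{|n|}$, which is geometrically small. If $v\geq\mathpzc{d}$ no residue is picked up and the bound follows. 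If $v<\mathpzc{d}$, the residue theorem produces an extra contribution equal precisely to $-\Phi_x^{(1)}(n)\Psi_x^{(1)}(n)$, i.e.\ a constant times $(v/\mathpzc{d})^n = (\mathpzc{d}/v)^{|n|}$, which grows at rate at most $(1/q)^{|n|}$ in view of $\mathpzc{d}<v/q$. However $|f(n)|\leq C|\zeta|\,q^{|n|}$ for $n\to -\infty$ (since $f(n) = 1/(1-q^n/\zeta)$ and $|q^n/\zeta|\to\infty$), so after multiplication by $f(n)$ this term is dominated by $(q\mathpzc{d}/v)^{|n|}$, which decays exponentially because $q\mathpzc{d}<v$, the other half of \eqref{condition v d analytical continuation}.

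The main obstacle is the handling of the residue contribution in the regime $v<\mathpzc{d}$ with $n$ negative: on its own this term is not geometrically small, and the estimate crucially relies on the asymptotic $|f(n)|\sim |\zeta|q^{|n|}$ as $n\to -\infty$ together with the sharp inequality $q\mathpzc{d}<v$. This explains why the hypothesis is precisely the two-sided bound \eqref{condition v d analytical continuation}: it is the minimal quantitative requirement ensuring that both $qv/\mathpzc{d}<1$ (for the contour bound on the $n\geq 0$ side) and $q\mathpzc{d}/v<1$ (for the residue term on the $n<0$ side) hold simultaneously. Taking $\Gamma_2$ to be the minimum of $-\log(r_+/\mathpzc{d})$, $-\log(\mathpzc{d}/r_-)$ and $-\log(q\mathpzc{d}/v)$ then yields the asserted exponential bound.
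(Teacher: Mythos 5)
Your argument is sound and reaches the right conclusion, and your opening observation — that the key structural fact is the cancellation of the $\tau(n)$ factors in $\Phi_x^{(1)}\Psi_x^{(2)}$ — is exactly the point (the paper uses it implicitly, you make it explicit). For $n\geq 0$ you and the paper proceed identically: push the contour in to radius $qv+\epsilon$, get the rate $(qv+\epsilon)/\mathpzc{d}<1$, which decays precisely because $qv<\mathpzc{d}$. For $n<0$ the routes diverge: the paper stops the contour at radius $v-\epsilon$ (so no residue is crossed) and lets $|f(n)|\lesssim q^{|n|}$ beat the single rate $(\mathpzc{d}/(v-\epsilon))^{|n|}$, while you push the contour past $\max(v,\mathpzc{d})$, picking up a residue at $z=v$ and splitting the estimate into a self-decaying bulk term $(\mathpzc{d}/r_-)^{|n|}$ and a residue term $\propto(\mathpzc{d}/v)^{|n|}$ that still needs $f(n)$. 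Both routes work; the paper's is marginally simpler since there is only one term to track and no need to have $r_-$ clear $\mathpzc{d}$. One line of your $n<0$ case is, however, wrong: with $r_->\max(v,\mathpzc{d})\geq v$, the outward deformation \emph{always} crosses the simple pole of $(v/z;q)_\infty^{-1}$ at $z=v$, not only when $v<\mathpzc{d}$; the case split is spurious. It is also harmless — when $v\geq\mathpzc{d}$ the residue term $(\mathpzc{d}/v)^{|n|}\leq1$ is trivially bounded, and when $v<\mathpzc{d}$ you control it via $q\mathpzc{d}<v$ — so a single unconditional statement (``the residue at $z=v$ is picked up and contributes a term of order $(\mathpzc{d}/v)^{|n|}$, which decays after multiplication by $|f(n)|\lesssim q^{|n|}$ since $q\mathpzc{d}<v$'') covers both regimes and should replace the erroneous dichotomy.
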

\begin{proof}
We use the integral expression \eqref{Psi 2}. When $n$ is positive we take the integration contour $C_1$ to be a circle of radius $qv + \epsilon$. A bound we can easily obtain is
\begin{equation*}
|f(n)\Phi_x^{(1)}(n) \Psi_x^{(2)}(n)|< \text{const.} \frac{(qv + \epsilon)^n}{\mathpzc{d}^n}  \qquad \text{for } n\geq 0.
\end{equation*}
On the other hand, when $n$ is negative we chose the contour $C_1$ as a circle of radius $v - \epsilon$ to get a bound like
\begin{equation*}
|f(n)\Phi_x^{(1)}(n) \Psi_x^{(2)}(n)|< \text{const.} \frac{1}{1-q^n/\zeta} \frac{(v - \epsilon)^n}{\mathpzc{d}^n} \qquad \text{for } n<0.
\end{equation*}
In both cases condition \eqref{condition v d analytical continuation} allows us to select $\epsilon$ small enough to guarantee exponential decay in $|n|$. 
\end{proof}

\section{Construction of contours} \label{appendix contours}
Here we discuss the construction of the steep descent contour $C$ and that of the steep ascent contour $D$ which were used in the asymptotic analysis of the Stationary Higher Spin Six Vertex Model in section \ref{section time asymptotics}. 

\begin{Prop}\label{prop contour C}
Consider fixed real numbers 
\begin{equation*}
0<v<\varsigma, \qquad 0<q<1
\end{equation*}
and assume that
\begin{equation*}
\varsigma < \inf_{k\geq 2}\{ \xi_k s_k \} \leq \sup_{k\geq 2}\{ \xi_k s_k \} < \infty \qquad \text{and} \qquad 0 \leq s_k^2 <1, \qquad \text{for all } k\geq 2.
\end{equation*}
Take also a number $\rho < \varsigma$ and define the contour
\begin{equation*}
C_{\rho} = \{ \rho e^{\mathrm{i} \vartheta} | \vartheta \in [0,2 \pi) \}.
\end{equation*}
Then, for $\rho$ sufficiently close to $\varsigma$ we have 
\begin{equation} \label{derivative real g}
\frac{d}{d\vartheta} \mathfrak{Re}\{g(\rho e^{\mathrm{i} \vartheta})\} < 0 \qquad \text{for } 0<\vartheta<\pi,
\end{equation}
where $g$ is given in \eqref{g}.
\end{Prop}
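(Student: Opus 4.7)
The plan is to parametrize $z = \rho e^{\mathrm{i}\vartheta}$ and use the identity
\[
\frac{d}{d\vartheta}\mathfrak{Re}\{g(z)\} = -\mathfrak{Im}\{zg'(z)\},
\]
so that \eqref{derivative real g} is equivalent to showing $\mathfrak{Im}\{zg'(z)\}>0$ for $\vartheta\in(0,\pi)$. From $zg'(z) = -\eta + \kappa a_0(z) - h_0(z)$ and the series expansions
\[
a_0(z) = \sum_{j=0}^{J-1}\frac{-uzq^j}{1-uzq^j}, \qquad h_0(z) = \frac{1}{x}\sum_{y=2}^x\sum_{i\ge 0}\left[\frac{\alpha_{y,i}z}{1-\alpha_{y,i}z} - \frac{\beta_{y,i}z}{1-\beta_{y,i}z}\right],
\]
with $\alpha_{y,i}=q^i/(\xi_y s_y)$ and $\beta_{y,i}=s_yq^i/\xi_y$, I would combine the elementary identity $\mathfrak{Im}\{bz/(1-bz)\} = b\,\mathfrak{Im}(z)/|1-bz|^2$ (real $b$) with
\[
\frac{\alpha}{|1-\alpha z|^2} - \frac{\beta}{|1-\beta z|^2} = \frac{(\alpha-\beta)(1-\alpha\beta|z|^2)}{|1-\alpha z|^2|1-\beta z|^2}
\]
to extract the common factor $\mathfrak{Im}(z)=\rho\sin\vartheta$ and write $\mathfrak{Im}\{zg'(z)\} = \rho\sin\vartheta\cdot\mathcal{E}(z)$, where
\[
\mathcal{E}(z) := \sum_{j=0}^{J-1}\frac{-\kappa uq^j}{|1-uzq^j|^2} - \frac{1}{x}\sum_{y=2}^x\sum_{i\ge 0}\frac{(\alpha_{y,i}-\beta_{y,i})(1-\alpha_{y,i}\beta_{y,i}\rho^2)}{|1-\alpha_{y,i}z|^2|1-\beta_{y,i}z|^2}.
\]
Under the hypotheses ($u<0$, $s_y<1$, $\rho<\varsigma<\inf_y\xi_y s_y\le\xi_y$) each summand of each of the two sums is strictly positive; in particular $\kappa\approx\kappa_0 = h_1(\mathpzc{d})/a_1(\mathpzc{d})>0$, since both $a_1$ and $h_1$ at $\mathpzc{d}$ admit analogous positive-integrand representations. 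The question thus reduces to $\mathcal{E}(\rho e^{\mathrm{i}\vartheta})>0$ on $[0,\pi]$.

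Next I would apply a term-by-term monotonicity argument. For fixed $\rho$, as $\vartheta$ ranges over $[0,\pi]$ the quantity $|1-uzq^j|^2 = 1 - 2uq^j\rho\cos\vartheta + u^2q^{2j}\rho^2$ has coefficient $-2uq^j>0$ in $\cos\vartheta$ and is therefore decreasing in $\vartheta$; hence each summand of the first sum in $\mathcal{E}$ is monotonically increasing. Symmetrically, $|1-\alpha_{y,i}z|^2$ and $|1-\beta_{y,i}z|^2$ both increase with $\vartheta$, so each summand of the second (subtracted) sum is monotonically decreasing. Consequently $\vartheta\mapsto\mathcal{E}(\rho e^{\mathrm{i}\vartheta})$ is monotonically non-decreasing on $[0,\pi]$, and it is enough to verify the endpoint inequality $\mathcal{E}(\rho)>0$ at $\vartheta=0$.

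At $z=\rho$ the first sum collapses to $\kappa a_1(\rho)/\rho$ and the second sum to $h_1(\rho)/\rho$ (using $a_1(\rho)=\sum_j (-u\rho q^j)/(1-u\rho q^j)^2$ and the analogous identity for $h_1$), so $\mathcal{E}(\rho) = (\kappa a_1(\rho) - h_1(\rho))/\rho$. By Proposition~\ref{prop g properties}, the choice of $\kappa=\kappa_\varpi$ makes $g'(\varsigma),g''(\varsigma)=O(1/x)$, whence $\varsigma g''(\varsigma)+g'(\varsigma) = (\kappa a_1(\varsigma)-h_1(\varsigma))/\varsigma = O(1/x)$. The recursion $\rho\frac{d}{d\rho}[\kappa a_k(\rho)-h_k(\rho)] = \kappa a_{k+1}(\rho)-h_{k+1}(\rho)$ together with $\kappa a_2(\varsigma)-h_2(\varsigma) = -2\gamma^3+o(1)$ (cf.~\eqref{eq: k0 eta0 gamma0}) and the positivity $\gamma>0$ yields
\[
\frac{d}{d\rho}\bigl[\kappa a_1(\rho)-h_1(\rho)\bigr]_{\rho=\varsigma} = -\frac{2\gamma^3}{\varsigma} + o(1) < 0.
\]
Taylor expanding gives $\kappa a_1(\rho)-h_1(\rho) = \tfrac{2\gamma^3}{\varsigma}(\varsigma-\rho) + o(\varsigma-\rho)$, which is strictly positive for $\rho<\varsigma$ sufficiently close to $\varsigma$; combined with the monotonicity this yields $\mathcal{E}(z)>0$ on the whole upper semicircle and completes the proof.

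The main technical subtlety is the simultaneous vanishing, at $\vartheta=0$ and $\rho=\varsigma$, of both the prefactor $\sin\vartheta$ and the boundary value $\mathcal{E}(\varsigma)=0$; the latter is forced precisely by the approximate double-critical-point structure of $g$ at $\varsigma$. Extracting the correct sign therefore requires going to first order in $\varsigma-\rho$, which is exactly the role of the hypothesis "$\rho$ sufficiently close to $\varsigma$": one must ensure that the leading linear term $\tfrac{2\gamma^3}{\varsigma}(\varsigma-\rho)$ dominates both the Taylor remainder and the $O(1/x)$ and $O(1/x^{1/3})$ corrections arising from $\kappa_\varpi-\kappa_0$ and $\varsigma-\mathpzc{d}$.
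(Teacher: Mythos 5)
Your proof is correct and follows the same overall strategy as the paper: factor the $\vartheta$-derivative as $\sin\vartheta$ times a $\vartheta$-dependent quantity, show that quantity is monotone in $\vartheta$ over $(0,\pi)$, and reduce everything to a sign check at $\vartheta=0$, which is then supplied by the approximate double-critical-point structure of $g$ at $\varsigma$. Where you differ from the paper is only in the level of explicitness, and in each case your version is the more careful one. The paper computes $\frac{d}{d\vartheta}\mathfrak{Re}\{g(\rho e^{\mathrm{i}\vartheta})\}$ directly and simply asserts that $\frac{a}{1+a^2-2a\cos\vartheta}-\frac{a\sigma}{1+a^2\sigma^2-2a\sigma\cos\vartheta}$ is decreasing, with no justification; your identity
\[
\frac{\alpha}{|1-\alpha z|^2}-\frac{\beta}{|1-\beta z|^2}=\frac{(\alpha-\beta)(1-\alpha\beta|z|^2)}{|1-\alpha z|^2|1-\beta z|^2}
\]
is exactly what one needs to make that claim (and the underlying positivity) immediate, since it isolates a $\vartheta$-independent numerator over a manifestly increasing denominator. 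The paper also dispatches the endpoint with the phrase ``in a neighborhood of $\vartheta=0$, the derivative $\ldots$ is negative by construction,'' whereas you make this precise by identifying $\mathcal{E}(\rho)=(\kappa a_1(\rho)-h_1(\rho))/\rho = g'(\rho)+\rho g''(\rho)$, invoking Proposition~\ref{prop g properties} to see that this vanishes to order $1/x$ at $\varsigma$ with strictly negative $\rho$-derivative $\approx -2\gamma^3/\varsigma$, and Taylor expanding. You also correctly note the quantitative caveat the paper glosses over: for the sign extraction to work, $\varsigma-\rho$ must dominate the $O(1/x)$ and $O(x^{-1/3})$ corrections coming from $g'(\varsigma),g''(\varsigma)$ and from $\kappa_\varpi-\kappa_0$, which is consistent with how the proposition is actually invoked in Lemmas~\ref{convergence on moderately large sets} and~\ref{decay tails} (where $\varsigma-\rho$ is of order $1$ or $x^{-1/3}$). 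One small remark: your positivity argument uses $\kappa>0$, which the paper never states but which does hold ($a_1,h_1>0$ by the same positive-termwise representations you cite); making that explicit, as you do, closes a genuine gap in the paper's exposition.
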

\begin{Remark}
The result of Proposition \ref{prop contour C} implies that $C_\rho$ is a steep descent contour for $\mathfrak{Re}(g)$ and in particular 
\begin{enumerate}
\item $\max_{z \in C_\rho} \mathfrak{Re}\{g(z)\} = g(\rho)$;
\item $\max_{z \in C_\rho} |z| = \rho$.
\end{enumerate}
This easily follows from \eqref{derivative real g} and from the fact that $g(\overline{z}) = \overline{g(z)}$, which implies that $\mathfrak{Re}(g)$ is symmetric with respect to the real axis.
\end{Remark}
\begin{proof}
Evaluating the derivative we have
\begin{equation} \label{derivative real part C}
\begin{split}
\frac{d}{d\vartheta} \mathfrak{Re}\{g(\rho e^{\mathrm{i} \vartheta})\} =& \sin \vartheta \kappa \left( \sum_{i=0}^{J-1} \frac{ q^i u \rho }{1 + q^{2i} u^2 \rho^2 - 2 q^i u \rho \cos \vartheta} \right)\\
&+ \sin \vartheta \frac{1}{x} \sum_{k=2}^x \sum_{j\geq 0} \left( \frac{\frac{q^j \rho}{\xi_k s_k} }{ 1 + \left( \frac{q^j \rho}{\xi_k s_k} \right)^2 -2\frac{q^j \rho}{\xi_k s_k} \cos\vartheta }  - \frac{\frac{q^j s_k^2 \rho}{\xi_k s_k} }{ 1 + \left( \frac{q^j s_k^2 \rho}{\xi_k s_k} \right)^2 -2\frac{q^j s_k^2 \rho}{\xi_k s_k} \cos\vartheta }  \right).
\end{split}
\end{equation}
Each term 
$$
\frac{ q^i u \rho }{1 + q^{2i} u^2 \rho^2 - 2 q^i u \rho \cos \vartheta},
$$
has a maximum in $\vartheta = 0$ due to the fact that $u$ and $\rho$ have opposite sign, and so does each single one of the summands in the double summation in \eqref{derivative real part C},
since the generic function
$$
\frac{ a }{ 1 + a^2 -2 a \cos\vartheta }  - \frac{ a \sigma}{ 1 + a^2 \sigma^2 -2 a \sigma \cos\vartheta }
$$
is decreasing in $0<\vartheta < \pi$, provided that $0<a, \sigma<1$.
Now, if $\rho$ is taken sufficiently close to the critical point $\varsigma$, in a neighborhood of $\vartheta=0$, the derivative of $\mathfrak{Re}\{g(\rho e^{\mathrm{i} \vartheta})\}$ is negative by construction and, thanks to considerations we just made, it stays negative along the whole half circle. 
\end{proof}

The construction of an explicit steepest ascent contour $D$ for a general choice of parameters $q, \Xi, \mathbf{S}$ becomes more complicated. Therefore we use the next Proposition both to exhibit a contour in a rather simple setting and to implicitly deduce conditions on $q, \Xi, \mathbf{S}$ under which our arguments of Section \ref{section time asymptotics} are perfectly well posed.  

\begin{Prop} \label{prop contour D}
For each choice of
\begin{equation*}
0<\varsigma <a, \qquad u<0, \qquad 0<\sigma <1,
\end{equation*}
there exist constants $R_a, R_\sigma, R_q>0$, such that for each choice of parameters $\{ \xi_k \}_{k\geq 2}, \{ s_k \}_{k \geq 2}$, $q$ satisfying
\begin{equation*}
\varsigma<\inf_{k\geq 2}\{\xi_k s_k\}, \qquad |\xi_k s_k - a|<R_a, \qquad |s_k^2 - \sigma|<R_\sigma, \qquad q<R_q,
\end{equation*}
we are able to construct a complex contour $D$ encircling the set $\{\xi_k s_k\}_{k \geq 2}$, for which
\begin{enumerate}
\item $\min_{z \in D} \mathfrak{Re}\{g(z)\} = g(\varsigma)$;
\item $\min_{z \in D} |z| = \varsigma$,
\end{enumerate}
where $g$ is given in \eqref{g}.
\end{Prop}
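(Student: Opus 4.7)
The plan is perturbative: first establish the conclusion in the "baseline" regime $q=0$, $\xi_k s_k = a$, $s_k^2 = \sigma$ for every $k \geq 2$, and then extend by continuity to the full range of allowed parameters. In the baseline regime, all $q$-Pochhammer symbols in $g$ collapse to single linear factors and $g$ simplifies to the elementary function
$$g_0(z) = -\eta_0 \log z + \kappa_0 \log(1-zu) - \log(1-z\sigma/a) + \log(1-z/a),$$
with $\eta_0, \kappa_0$ uniquely determined by the double critical point condition $g_0'(\varsigma) = g_0''(\varsigma) = 0$. Since $g_0'''(\varsigma) = -2\gamma^3/\varsigma^3 < 0$ by Proposition \ref{prop g properties}, the local structure of $\mathfrak{Re}(g_0)$ near $\varsigma$ is that of a degenerate cubic saddle: the level set $\{\mathfrak{Re}(g_0) = g_0(\varsigma)\}$ splits a small punctured neighborhood of $\varsigma$ into six sectors of angular width $\pi/3$, alternating between ascent and descent for $\mathfrak{Re}(g_0)$. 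The two ascent sectors on the side $\mathfrak{Re}(z) > \varsigma$ are symmetric about the real axis and have boundaries tangent to the vertical line through $\varsigma$, so a contour that passes through $\varsigma$ tangentially to the vertical direction immediately enters the ascent region while automatically realizing the tangency $\min_{z \in D}|z| = \varsigma$.

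For a concrete global contour, I would take $D_0$ to be the circle
$$D_0 = \{c + (c-\varsigma)e^{\mathrm{i}\vartheta} : \vartheta \in [0, 2\pi)\},$$
with center $c$ on the real axis in the nonempty interval $((a+\varsigma)/2, (a/\sigma + \varsigma)/2)$. This circle passes through $\varsigma$ at $\vartheta = \pi$, realizes $\min|z| = \varsigma$, and encloses only $a$ among the singularities of $g_0$ (the choice of $c$ puts $a$ inside while keeping $a/\sigma$ outside, and the singularities at $0$ and $1/u$ are excluded since the leftmost point of the circle is $\varsigma > 0$). The steep-ascent property then reduces, in parallel with the proof of Proposition \ref{prop contour C}, to the one-variable inequality
$$\frac{d}{d\vartheta} \mathfrak{Re}\bigl(g_0(c + (c-\varsigma)e^{\mathrm{i}\vartheta})\bigr) > 0, \qquad \vartheta \in (0, \pi).$$
The main obstacle is verifying this global monotonicity. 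Locally at $\vartheta = \pi$ the first three $\vartheta$-derivatives of $\mathfrak{Re}(g_0(z(\vartheta)))$ vanish and the fourth one equals $-\tfrac{g_0'''(\varsigma)}{4}(c-\varsigma)^3 + O((c-\varsigma)^4) > 0$ for $c$ chosen sufficiently close to $(a+\varsigma)/2$, securing local ascent; the absence of further critical points on $(0, \pi)$ must then be checked by direct computation, using the monotonicity and convexity properties of the four elementary rational summands in $g_0'(z) \cdot \mathrm{i}(c-\varsigma)e^{\mathrm{i}\vartheta}$ together with the geometric hypotheses $0 < \varsigma < a$, $u < 0$, $0 < \sigma < 1$. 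This is conceptually the hardest step of the proof.

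Once $D_0$ is in hand, the perturbation step is standard. Fix a compact tubular neighborhood $\mathcal{U}$ of $D_0$ disjoint from the fixed singularities $\{0, a, a/\sigma, 1/u\}$. The map $(q, \{\xi_k\}, \{s_k\}) \mapsto g$ is analytic on $\mathcal{U}$ and converges to $g_0$ uniformly in the $C^3(\mathcal{U})$ topology as $(q, R_a, R_\sigma) \to (0, 0, 0)$. The perturbed critical point $\varsigma$ of $g$ lies within $O(R_a + R_\sigma + R_q)$ of the baseline one, and the strict inequality for $\frac{d}{d\vartheta}\mathfrak{Re}(g_0)$ on each compact subarc $\vartheta \in [\delta, \pi-\delta]$ is an open condition preserved under sufficiently small $C^1$-perturbations, while the behaviour near $\vartheta = \pi$ is controlled by continuity of the relevant Taylor coefficients. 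A small translation of $D_0$ absorbs the shift in the critical point and restores condition~(2). Finally $D_0$ continues to encircle the entire set $\{\xi_k s_k\}_{k \geq 2} \subset \{|z - a| < R_a\}$, while the spurious singularities $\xi_k s_k q^{-j}$ and $\xi_k/(s_k q^j)$ for $j \geq 1$, together with $(u q^j)^{-1}$ for $j = 0, \dots, J-1$, all lie at distance bounded away from $D_0$ provided $R_q$ is chosen small enough.
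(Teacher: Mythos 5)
Your perturbative framing and the continuity step at the end match the paper's strategy, and your choice of a circle through $\varsigma$ enclosing $a$ but not $a/\sigma$ is geometrically in the right place (the paper's circle does lie in your proposed interval of centers). However, there is a genuine gap at exactly the point you yourself flag as ``the hardest step'': you never prove the global monotonicity of $\mathfrak{Re}\,g_0$ along your circle, you only say it ``must be checked by direct computation.'' For $D$, unlike for $C$, a term-by-term argument à la Proposition~\ref{prop contour C} does not go through: there the circle was centered at the origin and each summand of the angular derivative was individually signed, whereas on a circle around a point $c>0$ the individual terms in $g_0'(z)\cdot \mathrm{i}(c-\varsigma)e^{\mathrm{i}\vartheta}$ change sign and cancellation between them is essential. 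The paper resolves this not by picking an arbitrary circle in the allowed range, but by choosing $D$ to be precisely the Apollonius circle $\{\mathfrak{Re}\log((a-z)/(a-\sigma z))=\log((a-\varsigma)/(a-\sigma\varsigma))\}$, i.e.\ the level curve of one of the constituent terms of $g$. On this particular circle, after substituting the explicit values of $\eta,\kappa$, the angular derivative factors as $\sin\vartheta\,(1+\cos\vartheta)/P(\cos\vartheta)$ with $P$ a degree-two polynomial, so the sign is immediate. That factorization is the content of the proof, and it is specific to the Apollonius choice; a generic circle with center somewhere in $((a+\varsigma)/2,\,(a/\sigma+\varsigma)/2)$ does not enjoy it.

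Two smaller issues with the local analysis near $\vartheta=\pi$: the fourth $\vartheta$-derivative of $\mathfrak{Re}(g_0\circ z)$ at $\vartheta=\pi$ is $g_0^{(4)}(\varsigma)(c-\varsigma)^4 - 6\,g_0'''(\varsigma)(c-\varsigma)^3$ (not $-\tfrac{1}{4}g_0'''(\varsigma)(c-\varsigma)^3$), and in any case the conclusion ``$>0$'' only follows when $c-\varsigma$ is small, so it cannot simultaneously be used ``for $c$ close to $(a+\varsigma)/2$'' where $c-\varsigma = (a-\varsigma)/2 + o(1)$ is of order one. For such $c$ one needs the exact sign of $g_0^{(4)}(\varsigma)(c-\varsigma) - 6\,g_0'''(\varsigma)$, which your Taylor expansion does not control. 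These are precisely the places where the paper's explicit algebraic identity replaces what you have left implicit.
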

\begin{proof}
To show this result we essentially make use of a continuity argument. We start studying the case when 
\begin{equation*}
q=0, \qquad \xi_k, s_k=a \qquad s_k^2=\sigma \qquad \text{for all } k \geq 2.
\end{equation*}
With this choice of parameters the function $g$ becomes 
\begin{equation} \label{g q=0}
g(z)=- \eta \log(z)+ \kappa \log( 1- u z) + \log\left( \frac{a-z}{a- \sigma z} \right) + \mathcal{O}(x^{-1}),
\end{equation}
where we can neglect the contribution of the $\mathcal{O}(x^{-1})$ term as we are interested in this result only in the limiting case of $x \rightarrow \infty$. We define the contour $D$ to be the level curve
\begin{equation*}
D=\left\{ z:\ \mathfrak{Re}\left\{ \log\left( \frac{a-z}{a- \sigma z} \right) \right\} = \log\left( \frac{a-\varsigma}{a- \sigma \varsigma} \right) \right\},
\end{equation*}
which is a circle and admit the parametrization
\begin{equation*}
\left\{ \varsigma + \rho + \rho e^{\mathrm{i} \vartheta} |\ \vartheta \in [0, 2 \pi) \right\},
\end{equation*}
with the radius $\rho$ being
\begin{equation*}
\rho= \frac{a^2 - a\varsigma - a \varsigma \sigma + \varsigma^2 \sigma}{a+ a\sigma -2 \varsigma \sigma}.
\end{equation*}
We also report that the leftmost and rightmost extremes of the contour $D$ are respectively $\varsigma$ and $\varsigma + 2 \rho$ and one can easily find that the latter satisfies the inequality 
\begin{equation} \label{inequality D}
\varsigma + 2 \rho \leq   \frac{2 a}{1 + \sigma}.
\end{equation}
Along the curve $D$ we are able to calculate
\begin{equation*}
\frac{d}{d \vartheta} \mathfrak{Re} \left\{ g(\varsigma + \rho + \rho e^{\mathrm{i} \vartheta}) \right\}
\end{equation*}
and to analytically show that its only critical points are $\theta \in \mathbb{Z} \pi$. More specifically, substituting in \eqref{g q=0} the correct expressions of coefficients $\eta, \kappa$ given in \eqref{eq: k0 eta0 gamma0}
\begin{gather}
\eta=\frac{a \varsigma^2 (1- \sigma) (a- a^2 u + a \sigma-2 \varsigma \sigma + \varsigma^2 u \sigma) }{(a- \varsigma)^2 (a- \varsigma \sigma)^2},\\
\kappa=- \frac{a(1-\varsigma u)^2 (1-\sigma)(a^2 - \varsigma^2 \sigma)}{u (a- \varsigma)^2 (a-\varsigma \sigma)^2},
\end{gather}
we get
\begin{equation*}
\frac{d}{d \vartheta} \mathfrak{Re} \left\{ g(\varsigma + \rho + \rho e^{\mathrm{i} \vartheta}) \right\}= \sin\vartheta ( 1+ \cos \vartheta) \frac{1}{P(\cos \vartheta)}.
\end{equation*}
In the last expression $P$ is a polynomial of degree two in the argument and we see that zeros are only achieved on the real axis for $\theta=k\pi$ for $k \in \mathbb{Z}$. We can at this point readily verify that, along $D$ the real part of $g$ assumes a minimum at $z=\varsigma$ and a maximum at $z=\varsigma + 2 \rho$ as the function
\begin{equation*}
-\eta \log(y) + \kappa \log(1-u y)
\end{equation*}
is increasing for $y>\varsigma$, and one can check this by direct inspection of its first derivative, by making use of expressions \eqref{eq: k0 eta0 gamma0} for $\eta$ and $\kappa$.

We can now use the fact that $g$ is continuous in the parameters $\Xi, \mathbf{S}, q$ for $z$ belonging to $D$ and the fact that, by construction, it will always have a critical point in $z=\varsigma$, to state the existence of neighborhoods respectively of $a, \sigma$ and 0 in which every choice of $\xi_k s_k , s^2_k$ and $q$ will preserve the steepest ascent properties 1 and 2.
\end{proof}
\printbibliography
\end{document}